%% file: main.tex
\documentclass{lmcs}

\input{style}

\begin{document}

\title{Parameterized Quantum Circuit Semantics Through Enriched Categories}

\author{Neil J. Ross}[a]
\address{Dalhousie University, Halifax, NS, Canada}
\email{neil.jr.ross@dal.ca}

\author{Scott Wesley}[b]
\address{Dalhousie University, Halifax, NS, Canada}
\email{scott.wesley@dal.ca}

\thanks{Scott Wesley is supported by a Killam Predoctoral Scholarship Level 2.}

\begin{abstract}
    It is well-known that combinatorial circuits are modeled mathematically by string diagrams in monoidal categories.
    Given a gate set $\Sigma$, the circuits over $\Sigma$ can be thought of as string diagrams in the free monoidal category generated by $\Sigma$.
    In this model, circuit semantics are then given by monoidal functors out of this free category.
    For quantum circuits, this functor is often valued in the category of unitary matrices.
    This model suffices for concrete quantum circuits, but fails to describe parameterized families of quantum circuits, such as those which arise in the analysis of ansatz circuits.
    In this paper, we introduce an approach to parameterized circuit semantics, which is based on enriched category theory.
    We first introduce an abstract categorical construction, and use this to gain new insights on controlled operations and quantum communication.
    We then study the special cases of Cartesian monoidal parameters and monoidal closed parameters, both endowing the parameterized semantics with useful constructions.
    We conclude by showing that the monoidal closed case can be used to unify two perspectives on quantum control.
\end{abstract}

\maketitle

\input{sections/introduction}
\input{sections/background}
\input{sections/motivation}
\input{sections/construction}
\input{sections/cart}
\input{sections/closed}
\input{sections/conclusion}

\section*{Acknowledgments}

We would like to thank Aaron Fairbanks for his feedback on the technical contributions of this paper, and Peter Selinger for his feedback on the presentation of this work.

\bibliographystyle{alphaurl}
\bibliography{references}

\end{document}

%% file: style.tex
\usepackage{xcolor}
\usepackage{braket}
\usepackage{amssymb}
\usepackage{graphicx}
\usepackage[export]{adjustbox}
\usepackage{xspace}
\usepackage{mathabx}
\usepackage{subcaption}
\usepackage{bbm}

\usepackage{hyperref}
\usepackage[capitalise]{cleveref}
\crefname{section}{Sec.}{Secs.}
\crefname{defi}{Def.}{Defs.}
\crefname{exa}{Ex.}{Exs.}
\crefname{prop}{Prop.}{Props.}
\crefname{thm}{Thm.}{Thms.}
\crefname{cor}{Cor.}{Cors.}
\crefname{lem}{Lem.}{Lems.}

\usepackage{tikz}
\usetikzlibrary{quantikz2}

\usepackage{stmaryrd}
\DeclarePairedDelimiter\sem{\llbracket}{\rrbracket}

\newcommand{\ev}{\mathsf{ev}}
\newcommand{\app}{\mathsf{app}}

%% file: sections/introduction.tex
\section{Introduction}

A combinatorial circuit is any circuit that maps inputs to outputs without retaining internal state.
These circuits have many important application, such as in complexity theory~\cite{Vollmer1982}, hardware design~\cite{HarrisWeste2010}, and quantum program compilation~\cite{CrossJavadiAbhari2022}.
This wide range of applications has motivated the formal mathematical study of combinatorial circuits.
One early approach to this problem was to view combinatorial circuits as string diagrams in a monoidal category~\cite{Hotz1965}.
This approach has seen wide acceptance in the field of quantum program design and analysis~\cite{HeunenVicary2019}.
In this model, each gate is a morphism $G: n \to m$ with $n$ input wires and $m$ output wires.
If the number of outputs from a gate $G$ matches the number of inputs to a gate $H$, then $H \circ G$ denotes the circuit obtained by connecting the outputs of $G$ to the inputs of $H$.
Given any two gates $G$ and $H$, $G \otimes H$ denotes the circuit obtained by placing $G$ and $H$ side-by-side without any of their inputs connecting.
In other words, $( \circ )$ denotes sequential composition and $( \otimes )$ denotes parallel composition.
Given a set $\Sigma$ of gates, $F( \Sigma )$ denotes the family of circuits obtained from the gates in $\Sigma$.
These circuits form a \emph{free monoidal category}.
Their semantic interpretation is then given by a \emph{monoidal functor} $\sem{-}: F( \Sigma ) \to \mathcal{D}$ where $\mathcal{D}$ denotes the semantic domain.
It is easy to show that $\sem{-}$ is fully determined by where it sends the gates in $\Sigma$, as is expected of combinatorial circuits.
This perspective on combinatorial circuits has had many successes, such as providing decision procedures for the equality of certain classes of classical circuits~\cite{Lafont2003}, and optimization procedures for certain classes of quantum circuits~(see~\cite{Backens2021} for an introduction).

In most quantum circuit semantics, $\mathcal{D}$ is taken to be the category of linear transformations between some class of spaces.
However, with the advent of quantum machine learning, this approach is no longer sufficient.
As in classical machine learning, a quantum machine learning model is a tensor network (i.e., a circuit) with one or more free parameters from some continuous space~(see~\cite{HugginsPatil2019}).
To see why this is challenging, consider a simple quantum machine learning model with two parameters $\{ p_1, p_2 \}$ which can be used to instantiate Pauli-$Z$ rotations denoted by the gate $R_Z( - ): 1 \to 1$.
For example, $R_Z( p_1 )$ and $R_Z( p_1 + p_2 )$ are both instantiations of the gate $R_Z( - )$.
Following the standard semantics, $\sem{R_Z( p_1 )}$ and $\sem{R_Z( p_1 + p_2)}$ should correspond to the following continuous maps with domain $\mathbb{R}^2$.
\begin{align*}
    \sem{R_Z( p_1 )}( x, y ) &=
    \begin{bmatrix}
        e^{-ix/2} & 0 \\
        0 & e^{ix/2}
    \end{bmatrix}
    &
    \sem{R_Z( p_1 + p_2 )}( x, y ) &=
    \begin{bmatrix}
        e^{-i(x+y)/2} & 0 \\
        0 & e^{i(x+y)/2}
    \end{bmatrix}
\end{align*}
One would then expect that $R_Z( p_1 )$ and $R_Z( p_1 + p_2 )$ compose as follows.
\begin{equation*}
    \sem{R_Z( p_1 ) \star R_Z( p_1 + p_2 )}( x, y )
    =
    \begin{bmatrix}
        e^{-ix/2} & 0 \\
        0 & e^{ix/2}
    \end{bmatrix}
    \begin{bmatrix}
        e^{-i(x+y)/2} & 0 \\
        0 & e^{i(x+y)/2}
    \end{bmatrix}
\end{equation*}
It should be noted that this new composition is neither the composition of continuous functions, nor the composition of linear functions.
Instead, this composition is best understood as: (1) copying the parameters; (2) providing a copy of the parameters to each function; (3) composing the resulting linear maps.
This composition has appeared ad-hoc in many prior works~\cite{JeandelPerdrix2018,MillerBakewell2020,XuLi2022,PehamBurgholzer2023,HongHuang2024}.
In an extended abstract~\cite{Wesley2025}, we noted that vector spaces are enriched over topological spaces, and that this enrichment explains why the construction is well-defined.
However, the underlying properties of enriched categories which make this construction possible have yet to be studied.

One alternative approach to parameterized quantum circuits is that of linear dependent type theory.
In~\cite{FuKishida2022}, it was shown that set-theoretic parameters could be introduced to a quantum programming language without violating the linearity of the language.
Moreover, the parameters could be used to parameterize not only the family of morphisms, but also their domains and codomains.
However, these parameters were introduced freely, meaning that the set of parameters would have no internal structure.
This means that the semantic model could not enforce that every quantum circuit is a continuous function, as required by quantum machine learning.
Moreover, when the parameter spaces are taken to be topological spaces, then the topology of the spaces can be used to encode additional functional requirements through the type system.
For example, if a gate $G( p )$ is parameterized by the space $\mathbb{R} / (2\pi \mathbb{Z} )$, then $\sem{G( p )}( x )$ must have period of at most $2\pi$, which cannot be enforced through the types in a purely set-theoretic model of parameters.

More fundamentally, the model of linear dependent type theory given in~\cite{FuKishida2022} does not account for the new sequential composition found in the motivating example.
In principle, the parameters could be copied and passed along each time two parameterized gates are composed.
However, given a fixed set of named parameters, there should never exist a case in which the composition is performed without first copying the parameters.
For this reason, the present paper looks for a new notion of parameterization which encodes the natural choices for sequential and parallel composition.
Since the parameterized families of maps go from parameter objects in some category $\mathcal{V}$ to hom-sets in some category $\mathcal{C}$, then it is clear that $\mathcal{C}$ must be enriched over $\mathcal{V}$.
Given this enrichment, it can then be asked what is necessary for the new sequential and parallel composition to be well-defined.
These questions turn out to be interesting not only from the perspective of parameterized quantum circuits, but also from the perspective of enriched category theory.

This new notion of parameterization also provides new insights into traditionally well-understood constructions in quantum computing.
For example, controlled operations can be understood as an instance of this new notion of parameterization.
Intuitively, the control qubits can be thought of as linear parameters, which are used to multiplex between the operations performed on the target qubits.
Since the control qubits are not modified during this procedure, then in some sense they are never consumed, and will persist after the controlled operation is finished.
Alternatively, we can think of the control qubit being copied in some linear way, with only the copy being consumed during the operation.
This new point-of-view is not only insightful, but allows for many well-known circuit relations to be re-obtained in a more structured manner.
This construction is in line with the comonoidal view of quantum control found in~\cite{HeunenVicary2019}.

The rest of this paper is structured as follows:
\cref{Sect:Background} briefly reviews the background required to read this paper;
\cref{Sect:Motivation} provides motivation for the paper through the study of quantum circuits with parameterized rotations (e.g., ansatz circuits);
\cref{Sect:Construction} outlines a generalized construction for parameterized categories, which is shown to have nice categorical properties and immediate applications to quantum information;
\cref{Sect:CartMon} studies the special case of parameterization over Cartesian monoidal categories, in which all objects in $\mathcal{V}$ can be seen as classical parameters (e.g., they can be copied and discarded);
\cref{Sect:Closed} studies the special case of self-enrichment as a generalization of control functors.

%% file: sections/background.tex
\section{Background}
\label{Sect:Background}

We assume familiarity with ordinary category theory~\cite{MacLane2010}, enriched category theory~\cite{Kelly1982}, and string diagrams~\cite{Selinger_2010}.
In this paper, we adopt the following notation.
\begin{itemize}
\item \textbf{MonCat} is the $2$-category of monoidal categories, lax monoidal functors, and monoidal natural transformations.
\item \textbf{BrMonCat} is the $2$-category of braided monoidal categories, lax monoidal functors, and monoidal natural transformations.
\item For a monoidal category $\mathcal{V}$, $\mathbf{Comon}( \mathcal{V} )$ is the category of comonoids in $\mathcal{V}$, and $\mathbf{CComon}( \mathcal{V} )$ is the sub-category of cocommutative comonoids in $\mathcal{V}$.
\item If $\mathcal{V}$ is a monoidal closed category with $f \in \mathcal{V}( X, Y )$, then we write $[f]$ for the image of $f$ under the isomorphism $\mathcal{V}( X, Y ) \cong \mathcal{V}( \mathbb{I}, [ X, Y ] )$ and $\app_{X,Y}: [ X, Y ] \otimes X \to Y$ for the corresponding evaluation map.
\item If $\mathcal{V}$ is a monoidal category, then $\mathcal{V}\mathbf{Cat}$ is the $2$-category of $\mathcal{V}$-categories, $\mathcal{V}$-natural transformations, and $\mathcal{V}$-functors.
\end{itemize}
We briefly review the key concepts used in this paper.

\vspace{1em}
\noindent
\textbf{Categories of Comonoids}.
Let $( \mathcal{V}, \otimes, \mathbb{I}, \alpha, \lambda, \rho )$ be a monoidal category.
We recall the following results from~\cite{Porst01062008}.
If $\mathcal{V}$ is braided monoidal with braiding $\beta$, then $\mathbf{Comon}( \mathcal{V} )$ is a monoidal category with $( P, d, e ) \otimes ( Q, \delta, \epsilon ) := ( P \otimes Q, ( 1 \otimes \beta_{P,Q} \otimes 1 ) \circ d \otimes \delta, e \otimes \epsilon )$ and monoidal unit $( \mathbb{I}, \lambda_{\mathbb{I}}^{-1}, 1_{\mathbb{I}} )$.
The unitors and associator are inherited from $\mathcal{V}$.
Moreover, if $\mathcal{V}$ is symmetric monoidal, then $\mathbf{Comon}( \mathcal{V} )$ is also a symmetric monoidal category.

\vspace{1em}
\noindent
\textbf{Cartesian Monoidal Categories}.
A Cartesian monoidal category is a monoidal category $\mathcal{M} = ( \mathcal{V}, \otimes, \mathbb{I}, \alpha, \lambda, \rho )$ in which $\otimes$ is the categorical product and $\mathbb{I}$ is the terminal object.
It follows from~\cite{Fox1976} that if $\mathcal{M}$ is symmetric monoidal, then $\mathcal{M}$ is Cartesian monoidal if and only if $\mathcal{M}$ is isomorphic to $\mathbf{Comon}( \mathcal{V} )$ as a symmetric monoidal category~\cite{Arkor2025}.
Consequently, each object in $\mathcal{M}$ admits a unique comonoid structure.
Theorems of this form are referred to as \emph{Fox's Theorems}.
A similar characterization is given in~\cite{HeunenVicary2019}.
The category $\mathcal{M}$ is said to have \emph{uniform deleting} if there exists a natural transformation $e: \mathcal{V} \Rightarrow \mathbb{I}$ satisfying $e_{\mathbb{I}} = \lambda_{\mathbb{I}}$ and $e_{A \otimes B} = \lambda_{\mathbb{I}} \circ ( e_A \otimes e_B )$ for each $A \in \mathcal{V}_0$ and $B \in \mathcal{V}_0$.
The category $\mathcal{M}$ is said to have \emph{uniform copying} if $\mathcal{M}$ is symmetric monoidal and there exists a natural transformation $\Delta: \mathcal{V} \Rightarrow \mathcal{V} \otimes \mathcal{V}$ such that $\Delta_{\mathbb{I}} = \rho_{\mathbb{I}}$ and the following equation of diagrams holds for each $A \in \mathcal{V}_0$ and $B \in \mathcal{V}_0$ (of course, $\lambda_{\mathbb{I}} = \rho_{\mathbb{I}}$~\cite{Kelly1964}).
\begin{equation}
    \label{Eqn:MonoidProd}
    \Delta_{P \otimes Q}
    =
    \includegraphics[valign=c,scale=0.85]{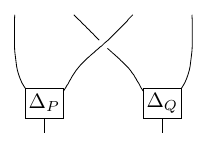}
\end{equation}

\begin{prop}[\cite{HeunenVicary2019}]
    \label{Prop:Cartesian}
    A monoidal category $\mathcal{M}$ is Cartesian monoidal if and only if $\mathcal{M}$ has uniform deleting $e: \mathcal{V} \to \mathbb{I}$ and uniform copying $\Delta: \mathcal{V} \to \mathcal{V} \otimes \mathcal{V}$ such that $( A, \Delta_A, e_A )$ is a comonoid object in $\mathcal{M}$ for each $A \in \mathcal{V}_0$.
\end{prop}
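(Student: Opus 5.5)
We prove the two directions separately. Throughout, uniform copying presupposes a symmetric structure $\sigma$; in the forward direction we take $\sigma$ to be the canonical symmetry of the product.

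\textbf{($\Rightarrow$)} Suppose $\otimes$ is the product and $\mathbb{I}$ is terminal. Define $e_A : A \to \mathbb{I}$ as the unique map to the terminal object, and $\Delta_A := \langle 1_A, 1_A \rangle$.
\begin{itemize}
\item Naturality of $e$ and the conditions $e_{\mathbb{I}} = \lambda_{\mathbb{I}}$ and $e_{A\otimes B} = \lambda_{\mathbb{I}}\circ(e_A\otimes e_B)$ all hold because both sides are maps into a terminal object.
\item Naturality of $\Delta$ holds because $\langle f,f\rangle = (f\times f)\circ\langle 1,1\rangle = \langle 1,1\rangle\circ f$.
\item The condition $\Delta_{\mathbb{I}} = \rho_{\mathbb{I}}$ again holds because $\mathbb{I}$ is terminal.
\item Equation~\eqref{Eqn:MonoidProd} is checked by postcomposing both sides with the four projections of $(A\times B)\times(A\times B)$; each gives the corresponding projection of $A \times B$.
\item Coassociativity and counitality of $(A,\Delta_A,e_A)$ are checked the same way, by postcomposing with projections. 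For instance, $\pi_1\circ(1\times e_A)\circ\Delta_A$, transported along $\rho$, is $1_A$.
\end{itemize}

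\textbf{($\Leftarrow$)} This is the substantive direction. Assume $e$ and $\Delta$ are as in the statement.

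\emph{$\mathbb{I}$ is terminal.} Take any $f : A \to \mathbb{I}$. Naturality of $e$ gives $e_{\mathbb{I}}\circ f = e_A$. Since $e_{\mathbb{I}} = \lambda_{\mathbb{I}}$ is an isomorphism, $f = \lambda_{\mathbb{I}}^{-1}\circ e_A$ is determined. Existence is witnessed by the same formula.

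\emph{$\otimes$ is the product.} Define the projections
\[
\pi_1 := \rho_A\circ(1\otimes e_B) : A \otimes B \to A,\qquad \pi_2 := \lambda_B\circ(e_A\otimes 1) : A \otimes B \to B.
\]
For $f : X\to A$ and $g : X\to B$, define the pairing $\langle f,g\rangle := (f\otimes g)\circ\Delta_X$.

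\emph{Pairing followed by projection.} We have
\[
\pi_1\circ\langle f,g\rangle = \rho_A\circ(f\otimes e_B g)\circ\Delta_X = \rho_A\circ(f\otimes e_X)\circ\Delta_X ,
\]
using naturality of $e$. By naturality of $\rho$ this equals $f\circ\rho_X\circ(1\otimes e_X)\circ\Delta_X$, which is $f$ by counitality. The case of $\pi_2$ is symmetric.

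\emph{Uniqueness.} We show that every $h : X\to A\otimes B$ satisfies $h = \langle \pi_1 h, \pi_2 h\rangle$. By naturality of $\Delta$,
\[
(\pi_1 h\otimes \pi_2 h)\circ\Delta_X = (\pi_1\otimes\pi_2)\circ\Delta_{A\otimes B}\circ h .
\]
It therefore suffices to prove $(\pi_1\otimes\pi_2)\circ\Delta_{A\otimes B} = 1_{A\otimes B}$. Expand $\Delta_{A\otimes B}$ using Equation~\eqref{Eqn:MonoidProd}, i.e. $(1\otimes\sigma\otimes 1)\circ(\Delta_A\otimes\Delta_B)$. Applying $e_B$ to the first block and $e_A$ to the second leaves $e_B$ and $e_A$ on the two strands routed through the symmetry. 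Naturality of $\sigma$ slides these deletions past the crossing, and the crossing collapses against the unit. What remains is $(\rho\circ(1\otimes e)\circ\Delta_A)\otimes(\lambda\circ(e\otimes 1)\circ\Delta_B)$, which equals $1\otimes 1$ by counitality.

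\textbf{Main obstacle.} This last string-diagram computation is the delicate step. It requires the coherence facts $\sigma_{\mathbb{I},X}$ versus $\lambda,\rho$ and $\lambda_{\mathbb{I}}=\rho_{\mathbb{I}}$; I would discharge these via Mac Lane coherence for symmetric monoidal categories. Coassociativity and the normalisation conditions on $e_{\mathbb{I}}$ and $\Delta_{\mathbb{I}}$ are only needed to make the units and the comonoid structure compatible.
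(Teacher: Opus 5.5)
The paper gives no proof of this proposition; it cites it from Heunen--Vicary, and your argument is correct and is essentially the standard proof found there: naturality of deletion forces $\mathbb{I}$ to be terminal, projections built from deletion and pairings built from copying exhibit $A \otimes B$ as a product, and uniqueness reduces to $(\pi_1 \otimes \pi_2) \circ \Delta_{A \otimes B} = 1_{A \otimes B}$. One small repair: the normalisations $e_{\mathbb{I}} = \lambda_{\mathbb{I}}$ and $\Delta_{\mathbb{I}} = \rho_{\mathbb{I}}$, which you copied from the paper, are ill-typed and should read $e_{\mathbb{I}} = 1_{\mathbb{I}}$ and $\Delta_{\mathbb{I}} = \rho_{\mathbb{I}}^{-1}$, so your terminality step should give $f = e_A$ directly (your $\lambda_{\mathbb{I}}^{-1} \circ e_A$ lands in $\mathbb{I} \otimes \mathbb{I}$, not $\mathbb{I}$).
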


\vspace{1em}
\noindent
\textbf{Dagger Categories}.
A \emph{dagger structure}~\cite{Selinger2007} on a category $\mathcal{C}$ is a functor $\dagger: \mathcal{C} \to \mathcal{C}^{\text{op}}$ such that $X^\dagger = X$ for each $X \in \mathcal{C}_0$ and $( f^\dagger )^\dagger = f$ for each $f \in \mathcal{C}_1$.
If $( \dagger )$ is dagger structure on $\mathcal{C}$ and $( \ddagger )$ is a dagger structure on $\mathcal{D}$, then $F: \mathcal{C} \to \mathcal{D}$ is a dagger-functor when $F( f^\dagger ) = F( f )^\ddagger$ for all $f \in \mathcal{C}( X, Y )$.
A \emph{monoidal dagger structure}~\cite{Selinger2007,Selinger_2010} on a monoidal category $( \mathcal{V}, \otimes, \mathbb{I}, \alpha, \lambda, \rho )$ is a dagger structure $( \dagger )$ on $\mathcal{V}$ such that the following properties hold.
\begin{itemize}
\item If $f \in \mathcal{V}_1$ and $g \in \mathcal{V}_1$, then $( f \otimes g )^\dagger = f^\dagger \otimes g^\dagger$.
\item If $X \in \mathcal{C}_0$, $Y \in \mathcal{C}_0$, and $Z \in \mathcal{C}_0$, then $\alpha_{X,Y,Z}^\dagger = \alpha_{X,Y,Z}^{-1}$.
\item If $X \in \mathcal{C}_0$, then $\lambda_X^\dagger = \lambda_X^{-1}$ and $\rho_X^\dagger = \rho_X^{-1}$.
\end{itemize}
A \emph{vertical reflection} of a string diagram reflects the diagram vertically while replacing each morphism $f$ with its mirror image $f^\dagger$.
If $\dagger$ is a monoidal dagger structure, then vertical reflections respect the equality of string diagrams~\cite{Selinger_2010}.
In this case $( P, m ,e )$ is a monoid in $\mathcal{V}$ if and only if $( P, m^\dagger, e^\dagger )$ is a comonoid in $\mathcal{V}$.
In a monoidal closed category with a monoidal dagger structure, $\app_{Y,X} \circ ( [f^\dagger] \otimes 1_Y ) = ( [f]^\dagger \otimes 1_X ) \circ \app_{X,Y}^\dagger$.

\vspace{1em}
\noindent
\textbf{Enriched Categories}.
We write $\mathcal{C}^{\mathcal{V}}$ for a category enriched over $\mathcal{V}$ and $\mathcal{C}$ for the underlying category.
If $( F, \eta, \epsilon ): \mathcal{V} \to \mathcal{W}$ is a lax monoidal functor, then there exists a \emph{base change} functor $F_{*}: \mathcal{V}\mathbf{Cat} \to \mathcal{W}\mathbf{Cat}$~\cite{Kelly1982}.
If $\mathcal{C}^{\mathcal{V}}$ has enriched composition $M$ and enriched identity $1_X$, then $F_*( \mathcal{C}^{\mathcal{V}} )$ will have enriched composition $F( M ) \circ \eta$ and enriched identity $1_X \circ \epsilon$.
We will write $( \mathcal{C}^{\mathcal{V}}, \mathbb{I}, \otimes^{\mathcal{V}}, \alpha^{\mathcal{V}}, \lambda^{\mathcal{V}}, \rho^{\mathcal{V}} )$ for a $\mathcal{V}$-enriched monoidal category and $( \mathcal{C}, \mathbb{I}, \otimes, \alpha, \lambda, \rho )$ for the underlying monoidal category.
Since $\mathcal{V}$-enriched monoidal categories are pseudomonoids in $\mathcal{V}\mathbf{Cat}$, $F_*$ also sends $\mathcal{V}$-enriched monoidal categories to $\mathcal{W}$-enriched monoidal categories, provided $F$ is braided~\cite{Cruttwell2008}.
When $\mathcal{V}$ is symmetric monoidal, it is straight-forward to show that $F_*$ also preserves braided (resp. symmetric) pseudomonoids in $\mathcal{V}\mathbf{Cat}$, corresponding to $\mathcal{V}$-enriched braided (resp. symmetric) monoidal categories (see the supplementary material for more details).
When $\mathcal{V}$ is monoidal closed, it can be shown that $\mathcal{V}$ is the underlying category of some $\mathcal{V}$-enriched category whose hom-objects are the internal homs and whose composition is defined using the hom-tensor adjunction~\cite{Kelly1982}.
Moreover, if $\mathcal{V}$ is braided monoidal closed, then the monoidal product on $\mathcal{V}$ also translates to the $\mathcal{V}$-enriched setting (see~\cite{10.1093/imrn/rnx217} for a special case).
Throughout this paper, we will reason about enriched categories via string diagrams, as in~\cite{Wesley2024} and~\cite{10.1093/imrn/rnx217}.

\begin{figure}[t]
    \begin{subfigure}[b]{0.5\textwidth}
        \centering
        \begin{equation*}
            \includegraphics[valign=c,scale=0.85]{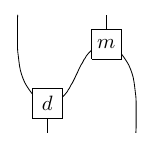}
            =
            \includegraphics[valign=c,scale=0.85]{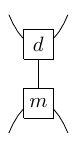}
            =
            \includegraphics[valign=c,scale=0.85]{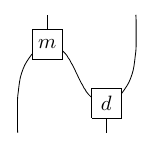}
        \end{equation*}
        \caption{The extended Frobenius law.}
        \label{Fig:Frob:Law}
    \end{subfigure}
    \begin{subfigure}[b]{0.22\textwidth}
        \centering
        \includegraphics[valign=c,scale=0.85]{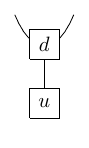}
        \caption{The unit $\eta$.}
        \label{Fig:Frob:Unit}
    \end{subfigure}
    \begin{subfigure}[b]{0.22\textwidth}
        \centering
        \includegraphics[valign=c,scale=0.85]{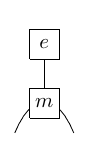}
        \caption{The counit $\epsilon$.}
        \label{Fig:Frob:Counit}
    \end{subfigure}
    \caption{Structure of a Frobenius algebra $A = ( P, m, d, u, e )$ with $P \vdash P$.}
        \label{Fig:Frob}
\end{figure}

\vspace{1em}
\noindent
\textbf{Controlled Operations}.
We write $\{ \ket{0}, \ket{1}, \ldots, \ket{n-1} \}$ for the standard basis in $\mathbb{C}^n$, $I_n$ for the $n \times n$ identity matrix, and $\bra{j}$ for the dual basis vector to $\ket{j}$.
Given a unitary transformation $U: \mathbb{C}^n \to \mathbb{C}^n$, we write $C( U ) = \ket{0} \bra{0} \otimes I_n + \ket{1} \bra{1} \otimes U$ for the \emph{standard controlled version of $U$}~\cite{NielsenChuang2010}.
This definition can be stated more generally in terms of special commutative $\dagger$-Frobenius structures, as outlined in~\cite{HeunenVicary2019}.
Recall that a \emph{Frobenius structure} in a braided monoidal category $\mathcal{V}$ is an object $P \in \mathcal{V}_0$ equipped with a monoid structure $( P, m, u )$ and a comonoid structure $( P, d, e )$ satisfying the equation in~\cref{Fig:Frob:Law}.
It follows that if $P$ admits a Frobenius structure, then $P$ is self-adjoint as in~\cref{Fig:Frob}.
Moreover, if $P$ is self-adjoint, then $( P, 1_P \otimes \eta \otimes 1_P, \epsilon )$ is a comonoid, referred to as the \emph{pair-of-pants} comonoid in~\cite{HeunenVicary2019}.
It follows from this adjunction that the comonoid $( P, d, e )$ is uniquely determined by the choice of monoid $( P, m, u )$ and the choice of counit $e$.

\begin{prop}[\cite{HeunenVicary2019}]
    \label{Prop:NonDegenForm}
    If $( P, m, u )$ is a monoid in $\mathcal{V}$, then there exists a bijective correspondence between the following structures.
    \begin{enumerate}
    \item Comonoids $( P, d, e )$ making $( P, m, d, u, e )$ a Frobenius algebra.
    \item Morphisms $e: P \to \mathbb{I}$ making $e \circ m$ the counit of an adjunction $P \vdash P$.
    \end{enumerate}
    Given such an $e: P \to \mathbb{I}$ and unit $\eta$, the comultiplication is defined as follows.
    \begin{equation*}
        \includegraphics[valign=c,scale=0.85]{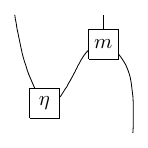}
        =
        d
        =
        \includegraphics[valign=c,scale=0.85]{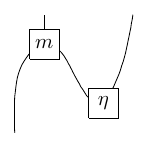}
    \end{equation*}
\end{prop}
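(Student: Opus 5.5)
We establish the correspondence in \cref{Prop:NonDegenForm} by constructing maps in both directions and checking that they are mutually inverse. The Frobenius law of \cref{Fig:Frob:Law} does most of the work, together with the snake equations of an adjunction $P \vdash P$.

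\textbf{From (1) to (2).} Suppose $( P, d, e )$ makes $( P, m, d, u, e )$ a Frobenius algebra. We take $\epsilon := e \circ m$ and $\eta := d \circ u$, as in \cref{Fig:Frob:Unit,Fig:Frob:Counit}, and verify the two snake equations. For the first, the composite $( \epsilon \otimes 1_P ) \circ ( 1_P \otimes \eta )$ equals $( e \otimes 1 ) \circ ( m \otimes 1 ) \circ ( 1 \otimes d ) \circ ( 1 \otimes u )$. The Frobenius law rewrites $( m \otimes 1 ) \circ ( 1 \otimes d )$ as $d \circ m$, so the composite becomes $( e \otimes 1 ) \circ d \circ m \circ ( 1 \otimes u )$. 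This reduces to $1_P$ by the counit law for $d$ and the unit law for $m$. The other snake equation follows symmetrically, using the other half of the Frobenius law. Hence $e$ satisfies condition (2).

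\textbf{From (2) to (1).} Suppose $e$ is given and $\epsilon = e \circ m$ is the counit of an adjunction with some unit $\eta$; the unit is uniquely determined by $\epsilon$. We define $d := ( m \otimes 1_P ) \circ ( 1_P \otimes \eta )$, and check that this agrees with the mirrored definition $( 1_P \otimes m ) \circ ( \eta \otimes 1_P )$ from the displayed equation. To prove this agreement, we transpose both expressions along the adjunction and apply associativity of $m$: each becomes $\epsilon \circ ( m \otimes 1 ) = e \circ m \circ ( m \otimes 1 )$, up to bending wires. We then verify the remaining properties.
\begin{itemize}
\item Counit: $( e \otimes 1 ) \circ d = ( \epsilon \otimes 1 ) \circ ( 1 \otimes \eta ) = 1_P$ by the snake equation, and the other side uses the mirrored form of $d$.
\item Frobenius law: $( m \otimes 1 ) \circ ( 1 \otimes d ) = ( m \otimes 1 ) \circ ( 1 \otimes m \otimes 1 ) \circ ( 1 \otimes 1 \otimes \eta )$, which equals $d \circ m$ by associativity. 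The other equality again uses the mirrored form.
\item Coassociativity: this is obtained from the Frobenius law and associativity by expressing both bracketings as $( m \otimes 1 \otimes 1 ) \circ ( 1 \otimes \eta \otimes \eta )$-type composites, or more simply by writing $( d \otimes 1 ) \circ d = ( d \otimes 1 ) \circ ( m \otimes 1 ) \circ ( 1 \otimes \eta )$ and applying the Frobenius law to move $m$ past $d$.
\end{itemize}

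\textbf{Bijectivity.} Going from (2) to (1) and back returns the same $e$ by construction. Going from (1) to (2) and back returns a comultiplication $d' = ( m \otimes 1 ) \circ ( 1 \otimes d \circ u )$. The Frobenius law gives $d' = d \circ m \circ ( 1 \otimes u ) = d$, so the original comonoid is recovered.

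The step I expect to be the main obstacle is the equality of the two expressions for $d$, equivalently the symmetric half of the Frobenius law. It relies on $\eta$ serving for an adjunction on both sides, which requires careful wire-bending with the snake equations. I would handle it first, by showing that both composites are the mate of $e \circ m \circ ( m \otimes 1 )$. Once that is in place, the remaining steps are routine diagram rewrites.
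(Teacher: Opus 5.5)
Your proposal is correct. The paper gives no proof of this proposition and only cites Heunen--Vicary; your argument is the standard direct verification, and you handle the key step correctly: $(m \otimes 1_P) \circ (1_P \otimes \eta)$ and $(1_P \otimes m) \circ (\eta \otimes 1_P)$ agree because both have transpose $m$ under $(1_P \otimes \epsilon) \circ (- \otimes 1_P)$, which is injective by the snake equations. For coassociativity, write out your second route explicitly: $(d \otimes 1) \circ d = ((d \circ m) \otimes 1) \circ (1 \otimes \eta) = (1 \otimes m \otimes 1) \circ (d \otimes \eta) = (1 \otimes d) \circ d$, which uses the mirrored Frobenius half $d \circ m = (1 \otimes m) \circ (d \otimes 1)$ and interchange.
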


If $( P, m, u )$ is commutative and $( P, d, e )$ is cocommutative, then $( P, m, d, u, e )$ is a commutative Frobenius algebra.
If $( \dagger )$ is a dagger-structure for $\mathcal{V}$ and $( P, d, e ) = ( P, m^\dagger, u^\dagger )$, then $A$ is $\dagger$-Frobenius algebra.
If $d \circ m = 1_P$, then $A$ is a special Frobenius algebra.
This yields a categorical characterization of orthogonal bases.

\begin{defi}[Copyable State~\cite{Coecke2012}]
    The \emph{copyable states} of a Frobenius algebra $( P, m, d, u, e )$ in $\mathcal{V}$ are the generalized elements $x \in \mathcal{V}( \mathbb{I}, P )$ such that $d \circ x = x \otimes x \circ \lambda_X^{-1}$.
\end{defi}

\begin{prop}[\cite{Coecke2012}]
    \label{Prop:BasisComonoid}
    Let $\mathcal{V} = \mathbb{C}\mathbf{FVect}$ be the category of finite-dimensional complex vector spaces, $\dagger: \mathcal{V} \to \mathcal{V}^\text{op}$ be the conjugate transpose functor, and $A = ( P, m, d, u, e )$ be a Frobenius algebra in $\mathcal{V}$.
    Then the following statements are equivalent.
    \begin{enumerate}
    \item $A$ is a special commutative $\dagger$-Frobenius algebra.
    \item The copyable states of $A$ form an orthonormal basis for $P$.
    \end{enumerate}
    Moreover, if $\mathcal{B} = \{ \ket{b_1}, \ket{b_2}, \ldots, \ket{b_n} \}$ is an orthonormal basis for $P$, then $\mathcal{B}$ consists of the copyable states for a special commutative $\dagger$-Frobenius algebra with comultiplication defined linearly by $d: \ket{b_j} \mapsto \ket{b_j} \otimes \ket{b_j}$ with $e = \sum_{j=1}^n \bra{b_j}$.
\end{prop}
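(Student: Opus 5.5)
We prove the two directions separately, working in $\mathbb{C}\mathbf{FVect}$ with the conjugate-transpose dagger.

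\textbf{(1) $\Rightarrow$ (2).} Suppose $A$ is a special commutative $\dagger$-Frobenius algebra. Since $d = m^\dagger$, the multiplication $m$ is determined by $d$.
\begin{itemize}
\item Transport $m$ along the isomorphism $P \cong P^*$ induced by the counit $e \circ m$ of the self-duality (\cref{Prop:NonDegenForm}). This makes $P$ a finite-dimensional commutative algebra.
\item The Frobenius law together with $d = m^\dagger$ makes this a commutative $C^*$-algebra: the involution is obtained from the dagger and the self-duality.
\item Specialness $m \circ m^\dagger = 1_P$ forces the algebra to be semisimple. Hence it is isomorphic to $\mathbb{C}^n$ with pointwise multiplication, with minimal idempotents $p_1, \ldots, p_n$.
\item Dualizing via $\dagger$, the comultiplication satisfies $d(\ket{b_j}) = \ket{b_j} \otimes \ket{b_j}$ on suitable vectors $\ket{b_j}$. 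These vectors are therefore copyable.
\end{itemize}
To see that the $\ket{b_j}$ are orthonormal, take copyable states $x, y$.
\begin{itemize}
\item Coassociativity and the counit law give $e \circ x = 1$ for every nonzero copyable $x$.
\item Compute $x^\dagger y = x^\dagger m\, d\, y$ using specialness. Since $m = d^\dagger$, this equals $(d x)^\dagger (d y) = (x^\dagger y)^2$.
\item Hence $x^\dagger y \in \{0, 1\}$, which gives $\braket{x|x} = 1$ and, for $x \neq y$, $\braket{x|y} = 0$.
\end{itemize}
Copyable states are pairwise orthogonal, so they are linearly independent and there are at most $\dim P$ of them. The semisimple decomposition supplies exactly $\dim P$ of them, so they form a basis.

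\textbf{(2) $\Rightarrow$ (1).} Suppose the copyable states form an orthonormal basis $\{\ket{b_j}\}$.
\begin{itemize}
\item Copyability determines $d$ on this basis: $d\ket{b_j} = \ket{b_j} \otimes \ket{b_j}$.
\item The counit law determines $e = \sum_j \bra{b_j}$.
\item By \cref{Prop:NonDegenForm}, the monoid is uniquely determined by the comonoid together with the self-duality.
\item Orthonormality makes $m = d^\dagger$, namely $\ket{b_i} \otimes \ket{b_j} \mapsto \delta_{ij} \ket{b_i}$, and $u = e^\dagger$.
\item Specialness and commutativity are then immediate checks on basis vectors.
\end{itemize}

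\textbf{The ``moreover'' clause.} Given an orthonormal basis, define $d, e, m, u$ as above. The axioms to check are coassociativity, counitality, the Frobenius law of \cref{Fig:Frob:Law}, $d = m^\dagger$, and $m \circ d = 1$. Each is a direct calculation on basis tensors $\ket{b_i} \otimes \ket{b_j} \otimes \ket{b_k}$, where every term collapses via Kronecker deltas. The copyable states are exactly the $\ket{b_j}$: if $x = \sum_j x_j \ket{b_j}$ satisfies $d x = x \otimes x$, then $x_i x_j = \delta_{ij} x_i$. Thus $x$ is zero or a single basis vector (excluding the zero state if required).

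\textbf{Main obstacle.} The hard step is existence in (1) $\Rightarrow$ (2): proving there are enough copyable states, i.e.\ semisimplicity. I would try to avoid $C^*$-theory. The plan is to show that the operators $L_a = m(a \otimes -)$ are normal, using the $\dagger$-Frobenius law to get $L_a^\dagger = L_{a^*}$, and that they pairwise commute by commutativity. Simultaneously diagonalizing this commuting family of normal operators gives orthogonal idempotents. After normalizing, these idempotents are the copyable states.
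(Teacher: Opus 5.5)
The paper gives no proof of this proposition; it cites Coecke--Pavlovi\'c--Vicary. Your direction (1)$\Rightarrow$(2) follows essentially their route: $L_a^\dagger = L_{a^*}$ makes $(P,m,u)$ a finite-dimensional commutative $C^*$-algebra $\cong \mathbb{C}^n$, and $\braket{x|y} = (dx)^\dagger(dy) = \braket{x|y}^2$ then fixes the norms. That direction is fine up to small details, such as using Cauchy--Schwarz to pass from $\braket{x|y}=1$ to $x=y$. Your ``moreover'' computation is also fine.

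The genuine gap is in (2)$\Rightarrow$(1), at the step ``orthonormality makes $m = d^\dagger$ and $u = e^\dagger$.'' Orthonormality of the copyable states fixes the comonoid $(P,d,e)$ completely. However, by the dual of \cref{Prop:NonDegenForm}, the monoid is then determined by the unit $u$. Nothing in (2) constrains $u$ beyond requiring $d\circ u$ to be nondegenerate.

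Concretely, write $u = \sum_j c_j \ket{b_j}$ with every $c_j \neq 0$, and set $m(\ket{b_i}\otimes\ket{b_j}) = \delta_{ij}\, c_i^{-1}\ket{b_i}$. Associativity, unitality and the Frobenius law all reduce to Kronecker-delta checks, so this is a commutative Frobenius algebra. Its nonzero copyable states are exactly the orthonormal basis $\{\ket{b_j}\}$. Yet $m\circ d = \sum_j c_j^{-1}\ket{b_j}\bra{b_j}$ and $m^\dagger\ket{b_j} = \overline{c_j}^{-1}\ket{b_j}\otimes\ket{b_j}$. For example, on $\mathbb{C}^2$ with $u = 2\ket{0}+\ket{1}$, the algebra is neither special nor $\dagger$-Frobenius, while (2) holds.

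So (2)$\Rightarrow$(1) is false for an arbitrary Frobenius algebra as the statement is literally phrased, and no argument can close that step. What is true, and what Coecke--Pavlovi\'c--Vicary actually prove, is the version in which $A$ is assumed to be $\dagger$-Frobenius. Equivalently here, you may additionally assume specialness, or $u = e^\dagger$; each of these forces every $c_j = 1$. With $d = m^\dagger$ as a hypothesis, $m = d^\dagger = \sum_j \ket{b_j}(\bra{b_j}\otimes\bra{b_j})$ is forced, and your basis-vector checks of specialness and commutativity then go through.
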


As in linear algebra, it is possible to consider modules of monoid objects.
The free modules of a special commutative $\dagger$-Frobenius algebra $A$ correspond to measuring the monoid object with respect to the defining basis of $A$.
The module homomorphisms of these free modules then correspond to generalized controlled operations with respect to the defining basis.
In~\cite{HeunenVicary2019}, controlled operations are generalized further, to include module homomorphisms of all $\dagger$-modules over special commutative $\dagger$-Frobenius algebras.

\begin{defi}[Module]
    Let $M = ( P, m, u )$ be a monoid in $\mathcal{V}$.
    An $M$-module is an object $X \in \mathcal{V}_0$ with a choice of morphism $\mu \in \mathcal{V}( P \otimes X, X )$ satisfying the following equations.
    \begin{align*}
        \includegraphics[valign=c,scale=0.85]{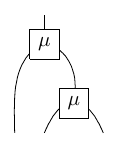}
        &=
        \includegraphics[valign=c,scale=0.85]{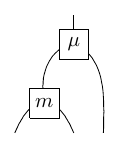}
        &
        \includegraphics[valign=c,scale=0.85]{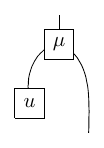}
        &=
        1_X
    \end{align*}
    If there exists some object $Y \in \mathcal{V}_0$ such that $X = P \otimes Y$ and $\mu = m \otimes 1_Y$, then $( X, \mu )$ is said to be a free $M$-module.
    The dual notion to an $M$-module is an $M$-comodule.
\end{defi}

\begin{defi}[Module Homomorphism]
    Let $M = ( P, m, u )$ be a monoid in $\mathcal{V}$ with $M$-modules $( X, \mu )$ and $( Y, \nu )$.
    An $M$-module homomorphism from $( X, \mu )$ to $( Y, \nu )$ is a morphism $f \in \mathcal{V}( X, Y )$ such that $f \circ \nu = \mu \circ ( 1_P \otimes f )$.
\end{defi}

More recently, control functors have been introduced, which generalize the endofunctor that maps each unitary $U$ to its controlled version $C( U )$~\cite{Delorme2026}.
In this paper we consider a generalization of control functors to braided monoidal categories.
Let $( \mathcal{C}, \otimes, \mathbb{I}, \beta, \alpha, \lambda, \rho )$ be a braided monoidal category with sub-category $\mathcal{C}_{\mathbf{Endo}}$ of endomorphisms.
A \emph{control functor} is an ordinary functor $C: \mathcal{C}_{\mathbf{Endo}} \to \mathcal{C}_{\mathbf{Endo}}$ equipped with an object $P \in \mathcal{C}_0$ such that $F_0( X ) = P \otimes X$ subject to the following axioms.
\begin{itemize}
\item \textbf{Tensor Axiom}.
      If $f \in \mathcal{C}( X, X )$ and $Z \in \mathcal{C}_0$, then $C( f \otimes 1_Z ) = C( f ) \otimes 1_Z$.
\item \textbf{Target Symmetry Axiom}.
      If $X, Y, Z, W \in \mathcal{V}_0$, $f \in \mathcal{C}( X \otimes Y \otimes Z \otimes W, X \otimes Y \otimes Z \otimes W )$ and $\sigma = 1_X \otimes \beta_{Y,Z} \otimes 1_W$, then $C( \sigma^{-1} \circ f \circ \sigma ) = ( 1_P \otimes \sigma^{-1} ) \circ C( f ) \circ ( 1_P \otimes \sigma )$.
\item \textbf{Control Symmetry Axiom}.
      If $f \in \mathcal{C}( X, X )$, then $( \beta_{P,P} \otimes 1_X ) \circ C( f ) = ( \beta_{P,P} \otimes 1_X ) \circ C( f )$.
\end{itemize}
A control functor is \emph{$( u, v )$-pointed} if there exists $u \in \mathcal{C}( \mathbb{I}, P )$ and $v \in \mathcal{C}( \mathbb{I}, P )$ which satisfy the \textbf{Activation Axiom}: $C( f ) \circ ( u \otimes 1_X ) = u \otimes 1_X$ and $C( f ) \circ ( v \otimes 1_X ) = v \otimes f$ for each $f \in \mathcal{C}( X, X )$.
A control functor is \emph{conguated} if it satisfies the \textbf{Conjugation Axiom}: $C( h \circ g \circ f ) = ( 1_P \otimes h ) \circ C( g ) \circ ( 1_P \otimes f )$ for all $f \in \mathcal{C}( X, Y )$,  $g \in \mathcal{C}( Y, Y )$, and $h \in \mathcal{C}( Y, X )$ satisfying $h \circ f = 1_X$.
A $\dagger$-control functor is a control functor that is also a dagger-endofunctor.

%% file: sections/motivation.tex
\section{Motivating Example: Quantum Machine Learning}
\label{Sect:Motivation}

To better illustrate our enriched approach to constructing parameterized quantum circuits, we will consider the problem of quantum machine learning.
We note that this discussion is not limited to quantum machine learning, and in fact can be generalized to a wider class of problems known as variational quantum algorithms~\cite{CerezoArrasmith2021}.
In quantum machine learning, neural networks are specified by tensor diagrams, which can then be converted into quantum circuits~\cite{HugginsPatil2019}.
Certain nodes in the neural network are labeled by weights (i.e., real numbers), which must be learned by a training algorithm.
When converting a neural network to a quantum circuit, each weighted node is replaced by a sub-circuit in which the weight appears as the angle provided to a rotation gate.
Then the corresponding circuit can be thought of as a family of ordinary quantum circuits, parameterized by a choice of weights for the neural network.

In prior work~\cite{JeandelPerdrix2018,XuLi2022,PehamBurgholzer2023,HongHuang2024}, these circuits have been modeled using square matrices whose entries are continuous functions.
Then sequential and parallel composition are interpreted as matrix multiplication and the Kronecker tensor product, respectively.
It is not hard to prove that this ad-hoc construction yields a symmetric monoidal category.
With even more care, it can be shown that this new category is enriched in \emph{nice} topological spaces.
However, these specialized proofs fail to identify the structures underlying this construction, and how they can be generalized to other semantic models.
For example, Theorem~4.2 in~\cite{PehamBurgholzer2023} considers the case where the matrix entries are analytic functions, which does not follow immediately from the case of continuous functions.

The goal of this section is to identify the underlying structures which make this example work, so that they can be generalized in~\cref{Sect:Construction}.
This section begins by identifying the properties that all such parameterized families of morphisms ought to satisfy.
It is then shown how to construct such parameterized families using the uniform copying and deletion morphisms in the category of topological spaces (as in~\cite{Wesley2025}).
This is used to motivate a more general construction which works for parameter objects from any monoidal category, using properties of base change functors.

\newcommand{\RealAmp}{\texttt{RA}\xspace}
\begin{figure}[t]
    \begin{subfigure}[b]{0.48\textwidth}
        \centering
        \begin{equation*}
            \RealAmp = \input{circs/ra}
        \end{equation*}
        \caption{Parameterized family.}
        \label{Fig:Circuits:Family}
    \end{subfigure}
    \begin{subfigure}[b]{0.48\textwidth}
        \centering
        \begin{equation*}
            \RealAmp_C = \input{circs/ra_c}
        \end{equation*}
        \caption{Instantiation.}
        \label{Fig:Circuits:Instance}
    \end{subfigure}
    \caption{A two-layer real amplitude ansatz circuit acting on three qubits.}
    \label{Fig:Circuits}
\end{figure}

To understand the properties that parameterized families of circuits ought to satisfy, it will help to consider a concrete example.
In quantum machine learning, weighted convolution layers are often replaced by instantiations of the \emph{real amplitude ansatz circuit}~\cite{AbbasSutter2021,ArthurDate2022,DekelFrankel2023,KannoNakamura2024,YoffeEntin2024}.
A two-layer instantiation acting on three qubits is illustrated by \RealAmp in~\cref{Fig:Circuits:Family}.
Let $\Sigma_1$ denotes the PROP signature for Clifford circuits, $\Sigma_2$ denote the PROP signature for Clifford+$R_Z$ circuits, $\mathcal{C}$ denote the PROP category of unitary maps, $\mathcal{D}$ denote the new category whose morphisms are parameterized families of unitary maps, and the functor $\sem{-}_1: F( \Sigma_1 ) \to \mathcal{C}$ denote the standard semantic interpretation for $F( \Sigma_1 )$.
To interpret \RealAmp, the functor $\sem{-}_1: F( \Sigma_1 ) \to \mathcal{C}$ must be extended to a new functor $\sem{-}_2: F( \Sigma_2 ) \to \mathcal{D}$.
Since $F( \Sigma_1 )$ is a free PROP category, then $\mathcal{D}$ ought to be a symmetric monoidal category and $\sem{-}_2$ ought to be a braided monoidal functor, so that no semantic information is lost when interpreting \RealAmp.
However, simply requiring that $\sem{-}_2$ extend $\sem{-}_1$ as a braided monoidal functor fails to capture many important properties of parameterized quantum circuits.
For example, assume that the values $( \pi, \pi/2, \pi/3, \pi/4, \pi/5, \pi/6 )$ have been learned for the parameters $( \theta_1, \theta_2, \theta_3, \theta_4, \theta_5, \theta_6 )$.
Then the concrete convolution layer described by $\RealAmp$ would be the unitary matrix $\sem{\RealAmp}_2( \pi, \pi/2, \pi/3, \pi/4, \pi/5, \pi/6 )$.
This should correspond to the parameter-free circuit $\RealAmp_C$ found in~\cref{Fig:Circuits:Instance} obtained by substituting $( \pi, \pi/2, \pi/3, \pi/4, \pi/5, \pi/6 )$ for $( \theta_1, \theta_2, \theta_3, \theta_4, \theta_5, \theta_6 )$ in the circuit $\RealAmp$.

Moreover, the circuit $\RealAmp_C$ should be well-defined, since only this concrete circuit can be compiled to low-lever hardware instructions.
The obvious interpretation for each $R_Z( \pi/k )$ in $\RealAmp_C$ is the unitary matrix $\sem{R_Z( \theta_k )}_2( \pi, \pi/2, \pi/3, \pi/4, \pi/5, \pi/6 )$.
One would hope that $\sem{\RealAmp_C}_1 = \sem{\RealAmp}_2( \pi, \pi/2, \pi/3, \pi/4, \pi/5, \pi/6 )$.
This is only true if $\sem{G}_1 = \sem{G}_2( \theta )$ for all $G \in \Sigma_1$ and $\theta \in \mathbb{R}^6$.
Moreover, this substitution should respect sequential and parallel composition.
In other words, evaluation at $\theta \in \mathbb{R}^6$ should correspond to some braided monoidal functor $\ev_\theta: \mathcal{D} \to \mathcal{C}$.
Putting these requirements together, $\mathcal{D}$ should be a symmetric monoidal category which contains $\mathcal{C}$ as a subcategory, such that evaluating the parameterized morphisms in $\mathcal{D}$ at any parameter in $\mathbb{R}^6$ is a retraction onto $\mathcal{C}$.
More abstractly, the following diagram should commute, where the first square is obviously true, the second square captures the inclusion of $\mathcal{C}$ into $\mathcal{D}$, and the final triangle captures the retraction of $\mathcal{C}$ back onto itself through evaluation at $\theta \in \mathbb{R}^6$.
\begin{center}
    \includegraphics[scale=0.96]{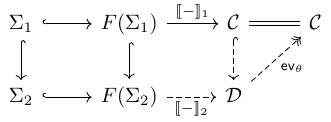}
\end{center}

First, the category $\mathcal{D}$ must be defined.
Since $\mathcal{D}$ is simply the \emph{parameterized} version of $\mathcal{C}$, then it makes sense that they have the same objects.
This means that $\mathcal{D}_0 = \mathcal{C}_0$.
Now, consider an arbitrary morphism $f \in \mathcal{D}( n, m )$ in the parameterized category.
As described above, $f$ should be a continuous map of the form $f: \mathbb{R}^6 \to \mathcal{C}( n, m )$ where $\mathbb{R}^6$ is viewed as a topological space under the product topology.
For this to make sense, $\mathcal{C}( n, m )$ must also be a topological space.
This means that $\mathcal{C}$ must be enriched in topological spaces, and consequently $f \in \mathbf{Top}( P, \mathcal{C}^{\mathbf{Top}}( n, m ) )$ where $P = \mathbb{R}^6$.
Since finite-dimensional vector spaces are enriched over topological spaces, then this is well-defined.

For $\mathcal{D}$ to be a category, the hom-sets must admit an associative unital composition operation $( - ) \star ( - )$.
Given two maps $f \in \mathcal{D}( n, m )$ and $g \in \mathcal{D}( m, t )$, the Cartesian product on \textbf{Top} can be used to obtain a new morphism $g \times f \in \mathbf{Top}( P \times P, \mathcal{C}^{\mathbf{Top}}( m, t ) \times \mathcal{C}^{\mathbf{Top}}( n, m ) )$.
Then using the enriched composition map, $M_{n,m,t} \in \mathbf{Top}( \mathcal{C}^{\mathbf{Top}}( m, t ) \times \mathcal{C}^{\mathbf{Top}}( n, m ), \mathcal{C}^{\mathbf{Top}}( n, t ) )$ it is possible to obtain a morphism in $\mathbf{Top}( P \times P, \mathcal{C}^{\mathbf{Top}}( n, t ) )$.
This morphism corresponds to the following string diagram in $\mathbf{Top}$.
\begin{equation*}
    M \circ ( g \times f ) \circ ( - )
    =
    \includegraphics[valign=c,scale=0.84]{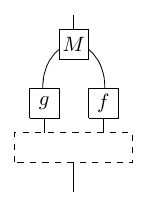}
\end{equation*}
Then to define $( - ) \star ( - )$, it suffices to find a morphism $d \in \mathbf{Top}( P, P \times P )$ which fills in the hole and makes the corresponding operation both associative and unital.
For the operation to be associative, the following equation must hold.
\begin{equation*}
    \includegraphics[valign=c,scale=0.84]{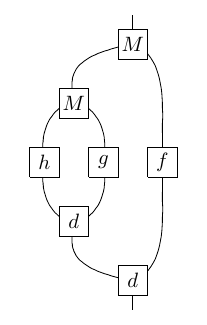}
    =
    ( h \star g ) \star f
    =
    h \star ( g \star f )
    =
    \includegraphics[valign=c,scale=0.84]{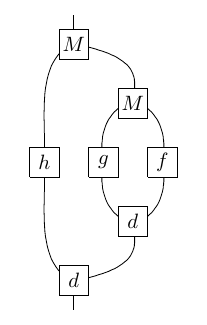}
\end{equation*}
It is suggested by these diagrams that $M$ should be associative and $d$ should be coassociative.
Since $M$ is already associative by definition, then it suffices to assume that $d$ is coassociative.
In this case, the following derivation holds.
\begin{center}
    \includegraphics[valign=c,scale=0.84]{figs/motivation/comp_assoc_lhs.pdf}
    =
    \includegraphics[valign=c,scale=0.84]{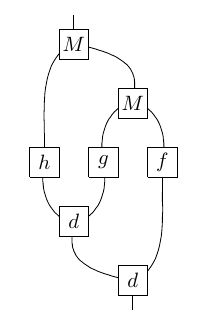}
    =
    \includegraphics[valign=c,scale=0.84]{figs/motivation/comp_assoc_rhs.pdf}
\end{center}
It remains to be shown that $( - ) \star ( - )$ is unital.
The first step is to find an identity morphism for each object $X \in \mathcal{C}_0$.
Since $\mathcal{C}$ is \textbf{Top}-enriched, then there exists a morphism $1_X \in \mathbf{Top}( \mathbb{I}, \mathcal{C}^{\mathbf{Top}}( X, X ) )$ such that $M \circ ( 1_X \times ( - ) )$ and $M \circ ( ( - ) \times 1_X )$ are the identity.
Ideally, this identity would lift to a morphism in $\mathcal{D}$ by composition with some morphism $e \in \mathbf{Top}( P, \mathbb{I} )$.
If $i_X := 1_X \circ e$ was the identity morphism for $X \in \mathcal{D}_0$, then $i_Y \star f = f$ and $f \star i_X$ for each $f \in \mathcal{D}( X, Y )$.
Similar to the case of associativity, it can be shown that if $e$ is the counit to $( P, d )$, then both equations hold.
In summary, if $\mathcal{C}$ is $\mathcal{D}$-enriched and $( P, d, e )$ is a comonoid in $\mathcal{C}$, then $\mathcal{D}$ is a category with composition given by $( - ) \star ( - )$.

It is not hard to check that $( - ) \star ( - )$ agrees with the parameterized matrix composition described at the beginning of this section.
For example, consider the following two parameterized morphisms.
\begin{align*}
    f( \theta ) &= \cos( \theta_1 ) I + i \sin( \theta_1 ) Z = \exp( i\theta_1 Z )
    &
    g( \theta ) &= \cos( \theta_2 ) I + i \sin( \theta_2 ) Z = \exp( i\theta_2 Z )
\end{align*}
Then $f$ and $g$ are $Z$-rotations by angles $\theta_1$ and $\theta_2$, respectively.
Then for each choice of $\theta \in \mathbb{R}^6$, $M( g( \theta ), f( \theta ) ) = M( \exp( i\theta_2 Z ), \exp( i\theta_1 Z ) ) = \exp( i (\theta_1 + \theta_2) Z )$.
As expected, this is also the result of $g \star f$.
\begin{equation*}
    ( g \star f )( \theta )
    =
    ( M \circ ( g \times f ) \circ d )( \theta )
    =
    M \circ ( g \times f )( \theta, \theta )
    =
    M( g( \theta ), f( \theta ) )
    =
    \exp( i ( \theta_1 + \theta_2 ) Z )
\end{equation*}
Moreover, this construction exposes the underlying structures which give rise to $\mathcal{D}$.
Since $( P, d, e )$ is coassociative and counital, then $( P, d, e )$ is a comonoid object in \textbf{Top}.
Since \textbf{Top} is Cartesian monoidal, then $( P, d, e )$ is the unique comonoid on $P$ defined by its copying and deletion morphisms.
This means that pre-composition by $d$ is the same as copying the input parameter, and post-composition by $M$ is the same as composing the morphisms when fixed at the given parameter.

Since $\mathcal{C}$ is also \textbf{Top}-enriched as a symmetric monoidal category, then the monoidal product $\otimes$ on $\mathcal{C}$ lifts to a monoidal product $\boxtimes$ on $\mathcal{D}$ in an analogous way.
In particular, $f \boxtimes g := \otimes \circ ( f \times g ) \circ d$.
If $( \alpha, \lambda, \rho, \beta )$ denotes the associator, unitors, and braiding for $\otimes$, then $( \alpha \circ e, \lambda \circ e, \rho \circ e, \beta \circ e )$ provide a canonical choice of associator, unitors, and braiding for $\boxtimes$.
One can show through tedious calculations (see~\cite{Wesley2025}) that $\boxtimes$ is a bifunctor with $( \alpha \circ e, \lambda \circ e, \rho \circ e, \beta \circ e )$ all natural transformations satisfying the coherence conditions for a symmetric monoidal category.
However, these calculations are less than ideal, since they disregard the enriched structure on $\mathcal{C}$.
A better approach is to show that $\mathcal{D}$ is obtained from $\mathcal{C}$ through a base change which preserves the symmetric monoidal structure on $\mathcal{C}$.

\begin{figure}[t]
    \begin{subfigure}[b]{0.13\textwidth}
        \centering
        \includegraphics[valign=c,scale=0.84]{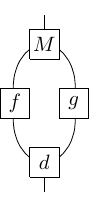}
        \caption{$g \star f$}
        \label{Fig:SetFunctors:Comp}
    \end{subfigure}
    \begin{subfigure}[b]{0.13\textwidth}
        \centering
        \includegraphics[valign=c,scale=0.84]{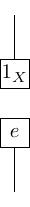}
        \caption{$i_X$}
        \label{Fig:SetFunctors:Id}
    \end{subfigure}
    \begin{subfigure}[b]{0.13\textwidth}
        \centering
        \includegraphics[valign=c,scale=0.84]{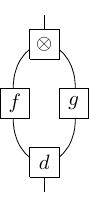}
        \caption{$f \boxtimes g$}
        \label{Fig:SetFunctors:PComp}
    \end{subfigure}
    \begin{subfigure}[b]{0.13\textwidth}
        \centering
        \includegraphics[valign=c,scale=0.84]{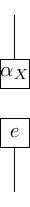}
        \caption{$a_X$}
        \label{Fig:SetFunctors:Assoc}
    \end{subfigure}
    \begin{subfigure}[b]{0.13\textwidth}
        \centering
        \includegraphics[valign=c,scale=0.84]{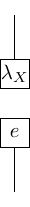}
        \caption{$l_X$}
        \label{Fig:SetFunctors:LUnitor}
    \end{subfigure}
    \begin{subfigure}[b]{0.13\textwidth}
        \centering
        \includegraphics[valign=c,scale=0.84]{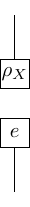}
        \caption{$r_X$}
        \label{Fig:SetFunctors:RUnitor}
    \end{subfigure}
    \begin{subfigure}[b]{0.13\textwidth}
        \centering
        \includegraphics[valign=c,scale=0.84]{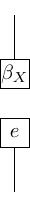}
        \caption{$b_X$}
        \label{Fig:SetFunctors:Braid}
    \end{subfigure}
    \caption{The category $( \mathcal{D}, \star, i )$ with monoidal structure $( \mathcal{D}, \boxtimes, \mathbb{I}, a, l, r )$ and braiding $\beta$.}
    \label{Fig:SetFunctors}
\end{figure}

Since each hom-object $\mathcal{C}( X, Y )$ in $\mathcal{C}$ is mapped to a new hom-object $\mathbf{Top}( P, \mathcal{C}^{\mathbf{Top}}( X, Y ) )$ in $\mathcal{D}$, then the base change functor should be $F( - ) = \mathbf{Top}( P, - )$.
Each choice of braided lax-monoidal structure on $F( - )$ determines a new base change from $\mathbf{TopCat}$ to $\mathbf{Cat}$.
This corresponds to picking a natural transformation $\eta: F( - ) \times_{\mathbf{Set}} F( - ) \to F( ( - ) \times ( - ) )$ and morphism $\epsilon: \{ \star \} \to F( \mathbb{I} )$ such that the coherence conditions hold.
Following through the base change construction, we find that the new composition in $F_*( \mathcal{C} )$ is $M \circ \eta$ and the new identities in $F_*( \mathcal{C} )$ are $1_X \circ \epsilon$.
Moreover, taking the product of topological spaces and then pre-composing with $d$ induces a natural transformation from $F( - ) \times_{\mathbf{Set}} F( - ) = \mathbf{Top}( P, - ) \times_{\mathbf{Set}} \mathbf{Top}( P, - )$ to $F( ( - ) \times ( - ) ) = \mathbf{Top}( P, ( - ) \times ( - ) )$.
This means that $\eta = \mathbf{Top}( d, - ) \circ ( \times )$ and $\epsilon = e$ work.
These morphisms are subject to the four coherence conditions for a braided lax-monoidal functor.
Since $\times_{\mathbf{Set}}$ is the Cartesian product on \textbf{Set} and $\{ \star \}$ is a separator for \textbf{Set}, then it suffices to interpret the coherence conditions as string diagrams in \textbf{Top}, as follows.
\begin{align*}
    \includegraphics[valign=c,scale=0.9]{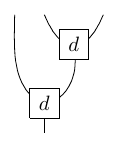}
    &=
    \includegraphics[valign=c,scale=0.9]{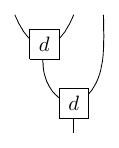}
    &
    \includegraphics[valign=c,scale=0.9]{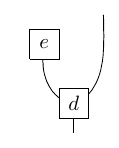}
    &=
    1_P
    =
    \includegraphics[valign=c,scale=0.9]{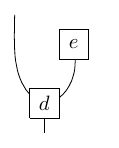}
    &
    \includegraphics[valign=c,scale=0.9]{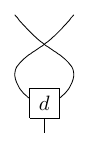}
    &=
    \includegraphics[valign=c,scale=0.9]{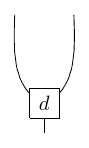}
\end{align*}
Since $( P, d, e )$ is a cocommutative comonoid, then these equations necessarily hold.
In conclusion, $\mathcal{D} = F_*( \mathcal{C} )$ admits the monoidal structure given in~\cref{Fig:SetFunctors}.

Given this definition of $\mathcal{D}$, it is now possible to interpret the parameterized rotations and the identity morphisms in $\RealAmp$.
To evaluate the controlled \texttt{not} gates, the inclusion of $\mathcal{C}$ into $\mathcal{D}$ must also be defined.
Since $1_X$ maps to $i_X$ via pre-composition by $e$, then it is reasonable to expect that all of $\mathcal{C}$ includes into $\mathcal{D}$ via pre-composition by $e$.
The section $j: \mathcal{C} \hookrightarrow \mathcal{D}$ induced by pre-composition with $e$ should be accompanied by a retraction $\ev_\theta$ for each choice of parameter $\theta \in \mathbb{R}^6$.
Since the parameters in $P$ are in correspondence with the morphisms in $\textbf{Top}( \mathbb{I}, P )$, then $\theta$ can be thought of as a morphism $\theta: \mathbb{I} \to P$.
Under this correspondence, evaluation of $g \in \mathcal{D}( X, Y )$ at $\theta$ should yield the morphism $g \circ \theta$.
In other words, evaluation at $\theta$ corresponds to pre-composition by $\theta$.
Since $\mathbb{I}$ is terminal in \textbf{Top}, then for each $f \in \mathcal{C}( X, Y )$, the evaluation of the inclusion of $f$ is $\ev_\theta( j( f ) ) = ( f \circ e ) \circ \theta = f \circ ( e \circ \theta ) = f \circ 1_{\mathbb{I}} = f$.
It remains to be shown that $j$ and $\ev_\theta$ are functorial.
It suffices to note that the underlying category $\mathcal{C}$ is also obtained via the hom-functor $\mathbf{Top}( \mathbb{I}, - )$ and the canonical comonoid on $\mathbb{I}$.
Since copying is natural in \textbf{Top}, then pre-composition by both $e$ and $\theta$ must induce braided lax-monoidal transformations between the two hom-functors.
It follows by enriched base change that both $j: \mathcal{C} \to \mathcal{D}$ and $\ev_\theta: \mathcal{D} \to \mathcal{C}$ are braided monoidal functors.
Consequently $\sem{\texttt{cnot}}_2 := j( \sem{\texttt{cnot}}_1 )$ and the following equation holds.
\begin{equation*}
    \sem{\RealAmp}_2
    =
    \left( \bigboxtimes_{j=4}^6 \exp( i \theta_j Z ) \right)
    \star
    \left( j\left( \sem{\texttt{cnot}}_1  \right) \boxtimes i_1 \right)
    \star
    \left( i_1 \boxtimes j\left( \sem{\texttt{cnot}}_1 \right) \right)
    \star
    \left( \bigboxtimes_{j=1}^3 \exp( i \theta_j Z ) \right)
\end{equation*}

Of course, there are many other properties we may wish to ask of $\mathcal{D}$.
For example, the functions illustrated from $\mathcal{D}$ have all been continuous, so it is natural to ask whether $\mathcal{D}$ admits non-trivial enrichment in topological spaces.
It turns out that this is not necessarily true since \textbf{Top} contains pathological counterexamples, but in the case of nice topological spaces this is true.
This leads to a more general construction of $\mathcal{D}$, which works for enrichment over any monoidal category $\mathcal{V}$.
However, the nice properties discussed in this section will still require that $\mathcal{V}$ is Cartesian closed.

%% file: circs/ra.tex
\scalebox{0.73}{\begin{quantikz}[row sep=0.5em]
    & \gate{R_Z( \theta_1 )} &          & \ctrl{1} & \gate{R_Z( \theta_4 )} & \\
    & \gate{R_Z( \theta_2 )} & \ctrl{1} & \targ{}  & \gate{R_Z( \theta_5 )} & \\
    & \gate{R_Z( \theta_3 )} & \targ{}  &          & \gate{R_Z( \theta_6 )} & \\
\end{quantikz}}

%% file: circs/ra_c.tex
\scalebox{0.73}{\begin{quantikz}[row sep=0.5em]
    & \gate{R_Z( \pi/1 )} &          & \ctrl{1} & \gate{R_Z( \pi/4 )} & \\
    & \gate{R_Z( \pi/2 )} & \ctrl{1} & \targ{}  & \gate{R_Z( \pi/5 )} & \\
    & \gate{R_Z( \pi/3 )} & \targ{}  &          & \gate{R_Z( \pi/6 )} & \\
\end{quantikz}}

%% file: sections/construction.tex
\section{The General Construction}
\label{Sect:Construction}

In the previous section, it was shown that parameterized families of ansatz circuits can be realized through comonoids on parameter objects and base change functors.
The construction followed the methodology in~\cite{Wesley2025}, which requires that $\mathcal{V}$ is Cartesian monoidal and always produces a locally small category.
This section provides a more general construction which no longer requires that $\mathcal{V}$ be Cartesian monoidal, and can produce categories which are enriched over some $\mathcal{W}$.
To this end, the notion of a \emph{parameterization functor} $\Omega_{(\mathcal{V},\mathcal{W})}$ is introduced in \cref{Def:Params}, which characterizes the base change functors of interest as those in the image of $\Omega_{(\mathcal{V},\mathcal{W})}$.
By allowing for $\mathcal{V}$ to be a $\mathcal{W}$-enriched monoidal category, the resulting base change functors can yield $\mathcal{W}$-enriched categories, as opposed to only locally small categories.
Note that in the motivating example, $\mathcal{V}$ was locally small so $\mathcal{W} = \mathbf{Set}$.
In \cref{Thm:Reparam}, it is shown that this assignment is a faithful functor.
Under the assumption that the monoidal unit in $\mathcal{W}$ is a separator, such as when $\mathcal{W} = \mathbf{Set}$, the functor is in fact fully faithful.
This means that each natural transformation between the base change functors of interest must arise from pre-composition by a unique comonoid homomorphism.
It will be shown later in this section that each natural transformation can be thought of as reparameterization with respect to its unique comonoid homomorphism.
To extend this construction to $\mathcal{V}$-enriched braided monoidal categories, it is then shown that $\Omega_{(\mathcal{V},\mathcal{W})}$ restricts to a full and faithful functor from cocommutative comonoids to lax braided monoidal functors.
This extends immediately to $\mathcal{V}$-enriched symmetric monoidal categories, since being symmetric is a property.
To verify that these base change functors truly generalize the motivating example, it is then shown in~\cref{Cor:UnderlyingShape} that the underlying category always exhibits the structures illustrated in \cref{Fig:SetFunctors}.

\begin{nota}
    In this section, $\mathcal{W} \in \mathbf{BrMonCat}_0$ with composition $( \bullet )$, identity $u$, and monoidal unit $\mathbb{J}$.
    Then $\mathcal{V}^{\mathcal{W}} \in \mathcal{W}\mathbf{Cat}_0$ with composition $\circ^{\mathcal{W}}$, identity $1^{\mathcal{W}}$, and monoidal data $( \mathcal{V}^{\mathcal{W}}, \mathbb{I}, \odot^{\mathcal{W}}, a^{\mathcal{W}}, \ell^{\mathcal{W}}, r^{\mathcal{W}} )$.
    Then $\mathcal{C}^{\mathcal{V}} \in \mathcal{V}\mathbf{Cat}_0$ with composition $M^{\mathcal{V}}$ and identity $I^{\mathcal{V}}$.
\end{nota}

\begin{defi}
    \label{Def:Params}
    The \emph{$\mathcal{W}$-parameterizations} for $\mathcal{V}^{\mathcal{W}}$ are determined by the assignment $\Omega_{(\mathcal{V}, \mathcal{W})}: \mathbf{Comon}^{\text{op}}( \mathcal{V} ) \to \mathbf{MonCat}( \mathcal{V}, \mathcal{W} )$ such that $\Omega_{(\mathcal{V}, \mathcal{W})}: ( P, d, e ) \mapsto ( \mathcal{V}^{\mathcal{W}}( P, - ), \eta, e )$ where $\eta$ denotes the following family of diagrams in $\mathcal{W}$ and $\Omega_{(\mathcal{V}, \mathcal{W})}: f \mapsto \mathcal{V}^{\mathcal{W}}( f, - )$.
    \begin{equation}
        \label{Eqn:ParamEta}
        \eta
        :=
        \includegraphics[valign=c,scale=0.84]{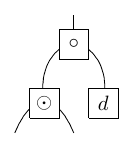}
    \end{equation}
\end{defi}

\begin{lem}
    \label{Lem:ParamEqs}
    If $P \in \mathcal{V}_0$, $d \in \mathcal{V}( P, P \odot P )$, $e \in \mathcal{V}( P, \mathbb{I} )$, $F = \mathcal{V}^{\mathcal{W}}( P, - )$, and $\eta$ is as defined in \cref{Eqn:ParamEta}, then the following equations hold.

    \noindent
    \begin{minipage}{.5\linewidth}
    \begin{equation}
        \label{Eqn:CompositorLemma1}
        \includegraphics[valign=c,scale=0.84]{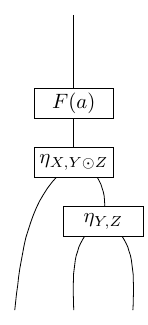}
        =
        \includegraphics[valign=c,scale=0.84]{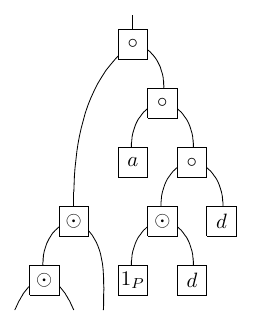}
    \end{equation}
    \end{minipage}%
    \hfill
    \begin{minipage}{.44\linewidth}
    \begin{equation}
        \label{Eqn:UnitorLemma1}
        \includegraphics[valign=c,scale=0.84]{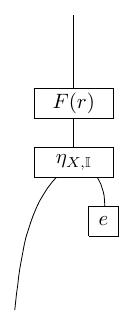}
        =
        \includegraphics[valign=c,scale=0.84]{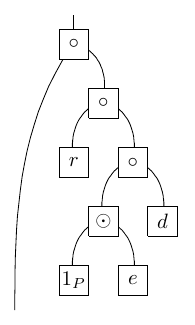}
    \end{equation}
    \end{minipage}

    \noindent
    \begin{minipage}{.5\linewidth}
    \begin{equation}
        \label{Eqn:CompositorLemma2}
        \includegraphics[valign=c,scale=0.84]{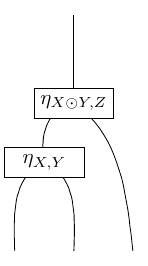}
        =
        \includegraphics[valign=c,scale=0.84]{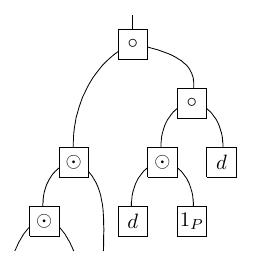}
    \end{equation}
    \end{minipage}%
    \hfill
    \begin{minipage}{.44\linewidth}
    \begin{equation}
        \label{Eqn:UnitorLemma2}
        \includegraphics[valign=c,scale=0.84]{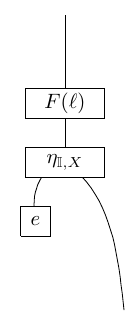}
        =
        \includegraphics[valign=c,scale=0.84]{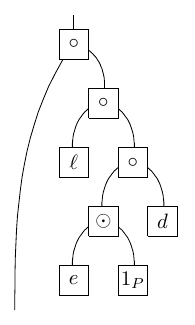}
    \end{equation}
    \end{minipage}

    \noindent
    Further assume that $Q \in \mathcal{V}_0$, $f \in \mathcal{V}( Q, P )$, $\delta \in \mathcal{V}( Q, Q \odot Q )$, and $\mu$ is as defined in \cref{Eqn:ParamEta}.
    
    \noindent
    \begin{minipage}{.56\linewidth}
    \begin{equation}
        \label{Eqn:NatTxLemma1}
        \includegraphics[valign=c,scale=0.84]{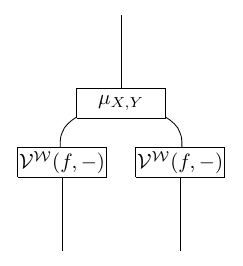}
        =
        \includegraphics[valign=c,scale=0.84]{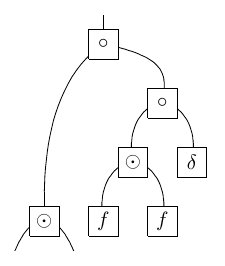}
    \end{equation}
    \end{minipage}%
    \hfill
    \begin{minipage}{.42\linewidth}
    \begin{equation}
        \label{Eqn:NatTxLemma2}
        \includegraphics[valign=c,scale=0.84]{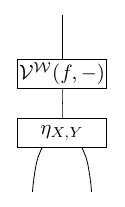}
        =
        \includegraphics[valign=c,scale=0.84]{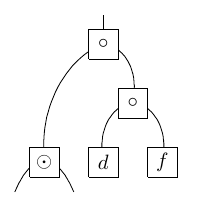}
    \end{equation}
    \end{minipage}

    \noindent
    If $\mathcal{V}^{\mathcal{W}}$ admits a $\mathcal{W}$-enriched braiding $b$, then the following equation also holds.
    \begin{align}
        \label{Eqn:BraidLemma}
        \includegraphics[valign=c,scale=0.84]{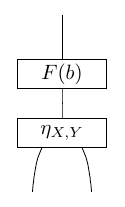}
        &=
        \includegraphics[valign=c,scale=0.84]{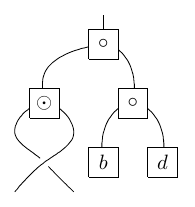}
    \end{align}
\end{lem}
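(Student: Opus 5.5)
The plan is to unfold $\eta$ and reduce every equation in the statement to three kinds of facts: the $\mathcal{W}$-functoriality of the enriched tensor $\odot^{\mathcal{W}}$, the associativity and unitality of the enriched composition $\circ^{\mathcal{W}}$, and the $\mathcal{W}$-naturality of the structure maps $a^{\mathcal{W}}, \ell^{\mathcal{W}}, r^{\mathcal{W}}$ (and $b$, when it is present). By \cref{Eqn:ParamEta}, the component $\eta_{X,Y}$ is the composite of two maps. The first is the hom-action of the enriched tensor,
\[
\odot^{\mathcal{W}}: \mathcal{V}^{\mathcal{W}}( P, X ) \otimes \mathcal{V}^{\mathcal{W}}( P, Y ) \to \mathcal{V}^{\mathcal{W}}( P \odot P, X \odot Y ).
\]
The second is enriched precomposition with $d$, namely $\circ^{\mathcal{W}} \circ ( 1 \otimes d )$, which lands in $\mathcal{V}^{\mathcal{W}}( P, X \odot Y )$. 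The unit is $e$, regarded as a generalized element $\mathbb{J} \to \mathcal{V}^{\mathcal{W}}( P, \mathbb{I} )$.

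With this description, every equation becomes a string-diagram rewrite in $\mathcal{W}$. The $\mathcal{W}$-functoriality of $\odot^{\mathcal{W}}$ lets a composite of two tensors be rewritten as a tensor of composites. This interchange is the single tool that pushes precompositions by $d$ (or by $e$, $f$, $\delta$) through an application of $\odot^{\mathcal{W}}$.

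I would prove the equations in the following order.

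1. \cref{Eqn:CompositorLemma1} and \cref{Eqn:CompositorLemma2}. I read these as saying that two nested applications of $\eta$ equal a single ternary tensor precomposed with $( d \odot 1 ) \circ d$, respectively $( 1 \odot d ) \circ d$, transported along $a^{\mathcal{W}}$. For the inner $\eta$, I would use associativity of $\circ^{\mathcal{W}}$ and then interchange, so that its precomposition by $d$ migrates outside the outer tensor as $d \odot 1$. I would then invoke $\mathcal{W}$-naturality of $a^{\mathcal{W}}$ to identify the two bracketings of the ternary tensor. These equations do not assume coassociativity; they only express both sides in a normal form. This is why the hypotheses on $d$ and $e$ are bare morphisms rather than a comonoid structure.

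2. \cref{Eqn:UnitorLemma1} and \cref{Eqn:UnitorLemma2}. These go the same way. Interchange turns tensoring with the element $e$ into precomposition by $e \odot 1$, respectively $1 \odot e$. Naturality of $\ell^{\mathcal{W}}$, respectively $r^{\mathcal{W}}$, then relates the resulting map to $F( \ell )$.

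3. \cref{Eqn:NatTxLemma1} and \cref{Eqn:NatTxLemma2}. Here $\mathcal{V}^{\mathcal{W}}( f, - )$ is enriched precomposition by $f$. Associativity of $\circ^{\mathcal{W}}$ moves $f$ past the precomposition by $d$ in $\eta$, producing the composite $d \circ f$. On the other side, $\mu$ with precomposition $f \odot f$ gives, after interchange, $( f \odot f ) \circ \delta$. The diagrams should therefore record exactly these two normal forms, and equality after this reduction is pure bookkeeping. The unit case \cref{Eqn:NatTxLemma2} similarly reduces to $e \circ f$.

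4. \cref{Eqn:BraidLemma}. I would use $\mathcal{W}$-naturality of the enriched braiding $b$, which asserts that $b \circ^{\mathcal{W}} \odot^{\mathcal{W}}$ equals $\odot^{\mathcal{W}} \circ \beta^{\mathcal{W}}$ followed by precomposition with $b_{P,P}$. Combining this with associativity of $\circ^{\mathcal{W}}$ turns the precomposition into $b_{P,P} \circ d$.

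The main obstacle is the interchange step in items 1 and 3. Expressing that $\odot^{\mathcal{W}}$ is a $\mathcal{W}$-functor out of the enriched product $\mathcal{V}^{\mathcal{W}} \otimes \mathcal{V}^{\mathcal{W}}$ requires the braiding of $\mathcal{W}$ to shuffle hom-objects into the right positions before composing. Keeping track of these symmetries in the string diagrams is the delicate part. I would first state one auxiliary identity, namely that precomposition by $g \odot h$ after $\odot^{\mathcal{W}}$ equals $\odot^{\mathcal{W}}$ after precomposing by $g$ and by $h$ separately, with generalized elements $g, h$. I would derive it once from the $\mathcal{W}$-functor axiom together with the unit law for $1^{\mathcal{W}}$, and then reuse it in each case.
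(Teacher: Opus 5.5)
Your proposal is correct and takes essentially the paper's approach. The paper's proof says only that the equations are trivial for $\mathbf{Set}$-enrichment and follow by lifting that reasoning to string diagrams in $\mathcal{W}$; your reduction to the enriched interchange identity for $\odot^{\mathcal{W}}$, associativity and unitality of $\circ^{\mathcal{W}}$, and $\mathcal{W}$-naturality of $a^{\mathcal{W}}$, $\ell^{\mathcal{W}}$, $r^{\mathcal{W}}$ and $b$ is exactly that lifting. One relabelling: in \cref{Thm:Reparam} and \cref{Cor:FullFunctor}, \cref{Eqn:NatTxLemma2} is used as the $\mu$-side compositor normal form $(f \odot f) \circ \delta$, which your item 3 already derives, not as a unit equation; the counit condition $e_P \circ f = e_Q$ is checked directly in those proofs, so nothing is missing.
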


\begin{proof}
    In the case of \textbf{Set}-enriched categories, the proof is trivial.
    For $\mathcal{V}$-enriched categories, the proof follows by lifting the set-theoretic reasoning to the diagrammatic setting.
    All derivations proceed in the obvious way, and can be found in the supplementary material.
\end{proof}

\begin{thm}
    \label{Thm:Reparam}
    The assignment $\Omega_{(\mathcal{V}, \mathcal{W})}$ is a faithful functor.
    Moreover, if $\mathcal{V}^{\mathcal{W}}$ admits a $\mathcal{W}$-enriched braiding, then $\Omega_{(\mathcal{V}, \mathcal{W})}$ restricts to a faithful functor $\mathbf{CComon}^{\text{op}}( \mathcal{V} ) \to \mathbf{BrCat}( \mathcal{V}, \mathcal{W} )$.
\end{thm}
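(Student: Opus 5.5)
We verify that $\Omega_{(\mathcal{V},\mathcal{W})}$ is well defined on objects and on morphisms, then check functoriality, then faithfulness, and finally the braided refinement. Each step reduces to one of the identities collected in \cref{Lem:ParamEqs}.

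\textbf{Objects.} Fix a comonoid $( P, d, e )$ and set $F = \mathcal{V}^{\mathcal{W}}( P, - )$. We first check that $\eta$ from \cref{Eqn:ParamEta} is $\mathcal{W}$-natural in both arguments. This holds because $\eta$ is built from $\odot^{\mathcal{W}}$ and $\mathcal{W}$-enriched precomposition with $d$, and both are natural in the codomain variables. Next come the three lax monoidal coherence axioms.
\begin{itemize}
\item Associativity: \cref{Eqn:CompositorLemma1} and \cref{Eqn:CompositorLemma2} rewrite both sides of the hexagon as a precomposition with a twice-iterated comultiplication, composed with $a^{\mathcal{W}}$. Coassociativity of $d$ identifies the two.
\item Left and right unitality: \cref{Eqn:UnitorLemma1} and \cref{Eqn:UnitorLemma2} reduce these to precomposition with $( e \odot 1 ) \circ d$ and $( 1 \odot e ) \circ d$ followed by the unitors $\ell^{\mathcal{W}}$ and $r^{\mathcal{W}}$. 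Counitality of $( P, d, e )$ then closes both axioms.
\end{itemize}

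\textbf{Morphisms.} Let $f : ( Q, \delta, \epsilon ) \to ( P, d, e )$ be a comonoid homomorphism. We must show that $\mathcal{V}^{\mathcal{W}}( f, - )$ is a monoidal natural transformation from $\Omega( P )$ to $\Omega( Q )$, which is the contravariance required. Ordinary naturality follows from the fact that pre- and post-composition commute in an enriched category. Compatibility with $\eta$ and $\mu$ is exactly \cref{Eqn:NatTxLemma1} combined with $\delta$-compatibility $( f \odot f ) \circ \delta = d \circ f$. Compatibility with the units is \cref{Eqn:NatTxLemma2} combined with $e \circ f = \epsilon$. Functoriality, namely $\mathcal{V}^{\mathcal{W}}( 1_P, - ) = 1$ and $\mathcal{V}^{\mathcal{W}}( g \circ f, - ) = \mathcal{V}^{\mathcal{W}}( f, - ) \circ \mathcal{V}^{\mathcal{W}}( g, - )$, is the standard functoriality of the enriched representable in its contravariant variable.

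\textbf{Faithfulness.} Suppose $\mathcal{V}^{\mathcal{W}}( f, - ) = \mathcal{V}^{\mathcal{W}}( g, - )$ as transformations. Then their underlying ordinary transformations $\mathcal{V}( P, - ) \Rightarrow \mathcal{V}( Q, - )$ agree. Evaluating the component at $P$ on $1_P$ gives $f = g$, which is a Yoneda-style argument. The one point needing care is passing from the $\mathcal{W}$-enriched component to an actual element. We do this by applying the underlying-set functor $\mathcal{W}( \mathbb{J}, - )$ and using the enriched identity $1^{\mathcal{W}}_P : \mathbb{J} \to \mathcal{V}^{\mathcal{W}}( P, P )$; the composite $\mathcal{V}^{\mathcal{W}}( f, P ) \bullet 1^{\mathcal{W}}_P$ is the name of $f$.

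\textbf{Braided case.} Assume $\mathcal{V}^{\mathcal{W}}$ has a $\mathcal{W}$-enriched braiding $b$ and $d$ is cocommutative. The extra axiom for a braided lax monoidal functor is $F( b ) \bullet \eta = \eta \bullet \beta^{\mathcal{W}}$. \cref{Eqn:BraidLemma} rewrites the left side as precomposition with $b_{P,P} \circ d$ composed with the swap, and cocommutativity $b_{P,P} \circ d = d$ finishes the argument. Morphisms and faithfulness are unchanged because $\mathbf{CComon}$ is a full subcategory.

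The main obstacle is the bookkeeping in the associativity hexagon: $\eta$ is iterated, and the enriched composition and $a^{\mathcal{W}}$ must be aligned. I would rely entirely on the two compositor identities from \cref{Lem:ParamEqs} to normalize both sides before invoking coassociativity.
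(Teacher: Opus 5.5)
Your proposal is correct and follows essentially the same route as the paper: coassociativity and counitality combined with the compositor and unitor identities of \cref{Lem:ParamEqs} for objects, the comonoid-homomorphism conditions for morphisms, a Yoneda-style evaluation for faithfulness, and cocommutativity with \cref{Eqn:BraidLemma} for the braided restriction. One small bookkeeping note: in the paper, \cref{Eqn:NatTxLemma1} and \cref{Eqn:NatTxLemma2} rewrite the two sides of the compositor square, so both are used there, linked by $d \circ f = ( f \odot f ) \circ \delta$. Unit compatibility, by contrast, is simply $\mathcal{V}^{\mathcal{W}}( f, \mathbb{I} ) \bullet e = e \circ f = \epsilon$, with no lemma needed.
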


\begin{proof}
    It must first be shown that this assignment is well-typed.
    \begin{itemize}
    \item \textbf{Objects}.
          Let $M = ( P, d, e ) \in \mathbf{Comon}( \mathcal{V} )_0$.
          Since enriched hom-functors are covariant in the second coordinate, then $\mathcal{V}^{\mathcal{W}}( P, - )$ is a functor from $\mathcal{V}$ to $\mathcal{W}$.
          It remains to be shown that $\Omega_{(\mathcal{V},\mathcal{W})}( M ) = ( \mathcal{V}^{\mathcal{W}}( P, - ), \eta, e )$ is a lax monoidal functor.
          Since $( P, d, e )$ is a comonoid, then the following equation holds by the coassociativity of $d$, together with \cref{Eqn:CompositorLemma1} and \cref{Eqn:CompositorLemma2} of \cref{Lem:ParamEqs}.
          \begin{equation*}
              \includegraphics[valign=c,scale=0.84]{figs/construction/compositor_lemma_1_lhs.pdf}
              =
              \includegraphics[valign=c,scale=0.84]{figs/construction/compositor_lemma_1_rhs.pdf}
              =
              \includegraphics[valign=c,scale=0.84]{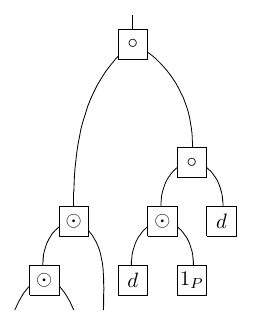}
              =
              \includegraphics[valign=c,,scale=0.84]{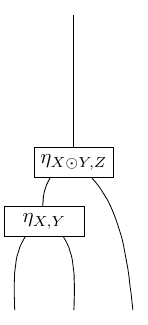}
          \end{equation*}
          Likewise, the following equation holds by the right counitality of $d$, together with \cref{Eqn:UnitorLemma1} of \cref{Lem:ParamEqs}.
          \begin{equation*}
              \includegraphics[valign=c,scale=0.84]{figs/construction/runitor_lemma_lhs.pdf}
              =
              \includegraphics[valign=c,scale=0.84]{figs/construction/runitor_lemma_rhs.pdf}
              =\;\;
              \includegraphics[valign=c,scale=0.84]{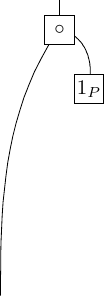}
              =\;\;
              1_{\mathcal{V}^{\mathcal{W}}( P, X )}
          \end{equation*}
          Likewise, the following also holds by the left counitality of $d$ and \cref{Eqn:UnitorLemma2} of \cref{Lem:ParamEqs}.
          \begin{equation*}
              \includegraphics[valign=c,scale=0.84]{figs/construction/lunitor_lemma_lhs.pdf}
              =
              \includegraphics[valign=c,scale=0.84]{figs/construction/lunitor_lemma_rhs.pdf}
              =\;\;
              \includegraphics[valign=c,scale=0.84]{figs/construction/obj_unitor_t2.pdf}
              =\;\;
              1_{\mathcal{V}^{\mathcal{W}}( P, X )}
          \end{equation*}
          Then $( \mathcal{V}^{\mathcal{W}}( P, - ), \eta, e )$ is associative and unital.
          Hence $\Omega_{(\mathcal{V},\mathcal{W})}( M )$ is lax monoidal.
          Since $M$ was arbitrary, then the assignment $\Omega_{(\mathcal{V},\mathcal{W})}$ is well-typed with respect to objects.
    \item \textbf{Morphisms}.
          Let $M = ( P, d_P, e_P ) \in \mathbf{Comon}( \mathcal{V} )_0$ and $N = ( Q, d_Q, e_Q ) \in \mathbf{Comon}( \mathcal{V} )_0$.
          Then by definition $\Omega_{(\mathcal{V},\mathcal{W})}( M ) = ( \mathcal{V}^{\mathcal{W}}( P, - ), \eta, e_P )$ and $\Omega_{(\mathcal{V},\mathcal{W})} = ( \mathcal{V}^{\mathcal{W}}( Q, - ), \mu, e_Q )$.
          Pick some comonoid homomorphism $f \in \mathbf{Comon}^{\text{op}}( \mathcal{V} )( M, N )$.
          Since $f \in \mathcal{V}^{\text{op}}( P, Q ) = \mathcal{V}( Q, P )$, then $\mathcal{V}^{\mathcal{W}}( f, - ): \mathcal{V}( P, - ) \Rightarrow \mathcal{V}^{\mathcal{W}}( Q, - )$.
          This means that $\Omega_{(\mathcal{V},\mathcal{W})}( f )$ is a natural transformation between the underlying functors of $\Omega_{(\mathcal{V},\mathcal{W})}( M )$ and $\Omega_{(\mathcal{V},\mathcal{W})}( N )$.
          It remains to be shown that $\Omega_{(\mathcal{V},\mathcal{W})}( f )$ respects the lax monoidal data.
          Since $f$ respects comultiplication, then the following equation holds by \cref{Eqn:NatTxLemma1} and \cref{Eqn:NatTxLemma2} of \cref{Lem:ParamEqs}.
          \begin{equation*}
              \includegraphics[valign=c,scale=0.84]{figs/construction/nat_tx_lemma_2_lhs.pdf}
              =
              \includegraphics[valign=c,scale=0.84]{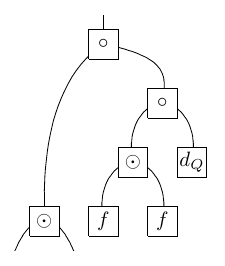}
              =
              \includegraphics[valign=c,scale=0.84]{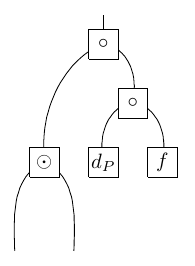}
              =
              \includegraphics[valign=c,scale=0.84]{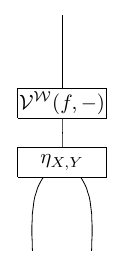}
          \end{equation*}
          Then $\mathcal{V}^{\mathcal{W}}( f, - )$ is compatible with the compositors.
          Moreover, since $f$ respects counits, then $\mathcal{V}^{\mathcal{W}}( f, - ) \bullet e_P = e_P \circ f = e_Q$.
          This means that $\mathcal{V}^{\mathcal{W}}( f, - )$ is also compatible with the unitors.
          Then $\mathcal{V}^{\mathcal{W}}( f, - ) \in \mathbf{MonCat}( \mathcal{V}, \mathcal{W} )\left( \Omega_{(\mathcal{V},\mathcal{W})}( M ), \Omega_{(\mathcal{V},\mathcal{W})}( N ) \right)$.
          Since $f$, $M$, and $N$ were arbitrary, then the assignment $\Omega_{(\mathcal{V},\mathcal{W})}$ is also well-typed with respect to morphisms.
    \end{itemize}
    Next, it must be shown that the assignment is a faithful functor.
    \begin{itemize}
    \item \textbf{Preserves Identities}.
          Let $M = ( P, d, e ) \in \mathbf{Comon}( \mathcal{V} )_0$.
          Recall that $\mathcal{V}^{\mathcal{W}}( 1_P, - )$ is the identity natural transformation on $\mathcal{V}^{\mathcal{W}}( P, - )$.
          Since $\Omega_{(\mathcal{V},\mathcal{V})}( M )$ is a lax monoidal functor with underlying functor $\mathcal{V}^{\mathcal{W}}( P, - )$ and $\Omega_{(\mathcal{V},\mathcal{V})}( 1_M ) = \mathcal{V}^{\mathcal{W}}( 1_P, - )$, then $\Omega_{(\mathcal{V},\mathcal{V})}^{\mathcal{W}}( 1_P )$ is the identity natural transformation on $\Omega_{(\mathcal{V},\mathcal{V})}( M )$.
          This identity lifts to the lax monoidal functor $\Omega_{(\mathcal{V},\mathcal{V})}( M )$.
          Since $M$ was arbitrary, then $\Omega_{(\mathcal{V},\mathcal{W})}$ preserves identities.
    \item \textbf{Preserves Composition}.
          Let $M = ( P, d_P, e_p )$, $N = ( Q, d_Q, e_Q )$ and $S = ( R, d_R, e_R )$ be comonoids in $\mathbf{Comon}( \mathcal{V} )_0$.
          If $f \in \mathbf{Comon}( \mathcal{V} )^{\text{op}}( M, N )$ and $g \in \mathbf{Comon}( \mathcal{V} )^{\text{op}}( N, S )$, then $f \in \mathcal{V}^{\text{op}}( P, Q ) = \mathcal{V}( Q, P )$ and $g \in \mathcal{V}^{\text{op}}( Q, R ) = \mathcal{V}( R, Q )$.
          Since enriched hom-functors are contravariant in the first component, then the following equation holds.
          \begin{equation*}
              \Omega_{(\mathcal{V},\mathcal{W})}( g ) \circ \Omega_{(\mathcal{V},\mathcal{W})}( f )
              =
              \mathcal{V}^{\mathcal{W}}( g, - ) \circ \mathcal{V}^{\mathcal{W}}( f, - )
              =
              \mathcal{V}^{\mathcal{W}}( f \circ g, - )
              =
              \Omega_{(\mathcal{V},\mathcal{W})}( g \circ^{\text{op}} f )
          \end{equation*}
          Since $f$ and $g$ were arbitrary, then $\Omega_{(\mathcal{V},\mathcal{W})}$ respects composition.
    \item \textbf{Faithfulness}.
           Assume that $f \in \mathbf{Comon}^{\text{op}}( \mathcal{V} )( M, N )$ and $g \in \mathbf{Comon}^{\text{op}}( \mathcal{V} )( M, N )$ with $\Omega_{(\mathcal{V},\mathcal{W})}( f, - ) = \Omega_{(\mathcal{V},\mathcal{W})}( g, - )$.
           There exists $P \in \mathcal{V}_0$ and $Q \in \mathcal{V}_0$ such that $f, g \in \mathcal{V}( P, Q )$.
           Then $\mathcal{V}( f, X )( k ) = k \circ f = \mathcal{V}^{\mathcal{W}}( f, X ) \bullet k = \mathcal{V}^{\mathcal{W}}( g, X ) \bullet k = k \circ g = \mathcal{V}( f, X )( k )$ for each $k \in \mathcal{V}( N, X )$.
           Then $\mathcal{V}( f, - ) = \mathcal{V}( g, - )$.
           Since contravariant hom-functors are faithful, then $f = g$.
           Since $f$ and $g$ were arbitrary, then $\Omega_{(\mathcal{V},\mathcal{W})}$ is faithful.
    \end{itemize}
    Now assume that $\mathcal{V}^{\mathcal{W}}$ is a $\mathcal{W}$-enriched braided monoidal category with enriched braiding $b$.
    It must be shown that $\Omega_{(\mathcal{V},\mathcal{W})}$ restricts to a functor $\mathbf{CComon}^{\text{op}}( \mathcal{V} ) \to \mathbf{BrCat}( \mathcal{V}, \mathcal{W} )$.
    Since $\mathbf{BrCat}( \mathcal{V}, \mathcal{W} )$ is a full subcategory of $\mathbf{MonCat}( \mathcal{V}, \mathcal{W} )$, then it suffices to show that $\Omega_{(\mathcal{V},\mathcal{W})}$ sends cocommutative comonoids to lax braided monoidal functors.
    \begin{itemize}
    \item \textbf{Objects Restrict}.
          Let $M = ( P, d, e ) \in \mathbf{CComon}( \mathcal{V} )_0$.
          Since $\Omega_{(\mathcal{V},\mathcal{W})}$ is functorial, then $\Omega_{(\mathcal{V},\mathcal{W})}( M ) = ( \mathcal{V}^{\mathcal{W}}( P, - ), \eta, e )$ is a lax monoidal functor.
          It remains to be shown that $\Omega_{(\mathcal{V},\mathcal{W})}( M )$ is a lax braided monoidal functor.
          Since $d$ is cocommutative, then the following equation holds by \cref{Eqn:BraidLemma} from \cref{Lem:ParamEqs}.
          \begin{equation*}
              \includegraphics[valign=c,scale=0.84]{figs/construction/braid_lemma_lhs.pdf}
              =
              \includegraphics[valign=c,scale=0.84]{figs/construction/braid_lemma_rhs.pdf}
              =
              \includegraphics[valign=c,scale=0.84]{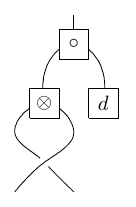}
              =
              \includegraphics[valign=c,scale=0.84]{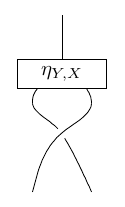}
          \end{equation*}
          Then $( \mathcal{V}^{\mathcal{W}}( P, - ), \eta, e )$ is commutative.
          This means that $\Omega_{(\mathcal{V},\mathcal{W})}( M ) \in \mathbf{BrCat}( \mathcal{V}, \mathcal{W} )$.
          Since $M$ was arbitrary, then the assignment $\Omega_{(\mathcal{V},\mathcal{W})}$ restricts to $\mathbf{Comon}( \mathcal{V} )^{\text{op}} \rightarrow \mathbf{BrCat}( \mathcal{V}, \mathcal{W} )$.
    \end{itemize}
    Since restrictions preserve faithfulness, then this restriction is also faithful.
\end{proof}

\begin{cor}
    \label{Cor:FullFunctor}
    If $\mathcal{V}$ is locally small and the monoidal unit in $\mathcal{W}$ is a separator, then $\Omega_{(\mathcal{V}, \mathcal{W})}$ is full.
    Moreover, if $\mathcal{V}^{\mathcal{W}}$ admits a $\mathcal{W}$-enriched braiding, then $\Omega_{(\mathcal{V}, \mathcal{W})}$ restricts to a full functor $\mathbf{CComon}^{\text{op}}( \mathcal{V} ) \to \mathbf{BrCat}( \mathcal{V}, \mathcal{W} )$.
\end{cor}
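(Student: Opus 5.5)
Given comonoids $M = ( P, d_P, e_P )$ and $N = ( Q, d_Q, e_Q )$, take a monoidal natural transformation $\theta: \Omega_{(\mathcal{V},\mathcal{W})}( M ) \Rightarrow \Omega_{(\mathcal{V},\mathcal{W})}( N )$. The goal is a comonoid homomorphism $f \in \mathcal{V}( Q, P )$ with $\theta = \mathcal{V}^{\mathcal{W}}( f, - )$. The approach is an enriched Yoneda argument: the separator hypothesis on $\mathbb{J}$ lets us reduce to an ordinary Yoneda argument on generalized elements.

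\textbf{Step 1: extract $f$.} For each $X$, the component $\theta_X \in \mathcal{W}( \mathcal{V}^{\mathcal{W}}( P, X ), \mathcal{V}^{\mathcal{W}}( Q, X ) )$ induces, by post-composition, a function
\[
\mathcal{W}( \mathbb{J}, \theta_X ): \mathcal{V}( P, X ) \to \mathcal{V}( Q, X ),
\]
using $\mathcal{W}( \mathbb{J}, \mathcal{V}^{\mathcal{W}}( A, B ) ) \cong \mathcal{V}( A, B )$. Naturality of $\theta$ in $X$ makes this a natural transformation $\mathcal{V}( P, - ) \Rightarrow \mathcal{V}( Q, - )$ of ordinary functors; local smallness of $\mathcal{V}$ is used here. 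By the ordinary Yoneda lemma it equals $\mathcal{V}( f, - )$ with $f := \theta_P \bullet 1_P$, that is, $f \in \mathcal{V}( Q, P )$.

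\textbf{Step 2: recover $\theta$.} For every generalized element $k: \mathbb{J} \to \mathcal{V}^{\mathcal{W}}( P, X )$ we have
\[
\theta_X \bullet k = k \circ f = \mathcal{V}^{\mathcal{W}}( f, X ) \bullet k .
\]
Since $\mathbb{J}$ is a separator, this gives $\theta_X = \mathcal{V}^{\mathcal{W}}( f, X )$, and hence $\theta = \Omega_{(\mathcal{V},\mathcal{W})}( f )$.

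\textbf{Step 3: $f$ is a comonoid homomorphism.} This is the main obstacle, since we must read the homomorphism equations back off the monoidal-transformation axioms.

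For the counit, compatibility of $\theta$ with unitors gives $\theta_{\mathbb{I}} \bullet e_P = e_Q$. Viewing both sides as generalized elements of $\mathcal{V}^{\mathcal{W}}( Q, \mathbb{I} )$, this says $e_P \circ f = e_Q$.

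For comultiplication, compatibility with compositors says
\[
\theta_{X \odot Y} \bullet \eta_{X,Y} = \mu_{X,Y} \bullet ( \theta_X \otimes \theta_Y ).
\]
Specialize to $X = Y = P$ and precompose with $1_P \otimes 1_P$, realized as a generalized element of $\mathcal{V}^{\mathcal{W}}( P, P ) \otimes \mathcal{V}^{\mathcal{W}}( P, P )$ via $\mathbb{J} \cong \mathbb{J} \otimes \mathbb{J}$.
\begin{itemize}
\item By the definition of $\eta$ in \cref{Eqn:ParamEta}, the left side yields $( 1_P \odot 1_P ) \circ d_P \circ f = d_P \circ f$.
\item The right side yields $( f \odot f ) \circ d_Q$, by \cref{Eqn:NatTxLemma1} and \cref{Eqn:NatTxLemma2} of \cref{Lem:ParamEqs} read in reverse, together with Step 2.
\end{itemize}
Equating the two sides gives $d_P \circ f = ( f \odot f ) \circ d_Q$. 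The delicate part is checking that $\eta$ evaluated on a generalized element pair $( a, b )$ really is $( a \odot b ) \circ d$. I would verify this first from the diagram of $\eta$ using the enriched functoriality of $\odot^{\mathcal{W}}$.

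\textbf{Braided case.} Since $\mathbf{BrCat}( \mathcal{V}, \mathcal{W} )$ is full in $\mathbf{MonCat}( \mathcal{V}, \mathcal{W} )$ and $\mathbf{CComon}( \mathcal{V} )$ is full in $\mathbf{Comon}( \mathcal{V} )$, the same $f$ works. The restriction from \cref{Thm:Reparam} is therefore full.
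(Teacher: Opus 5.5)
Your proposal is correct and follows the paper's proof essentially step for step. You pass to underlying hom-sets to get an ordinary natural transformation $\mathcal{V}(P,-) \Rightarrow \mathcal{V}(Q,-)$, extract $f \in \mathcal{V}(Q,P)$ by the Yoneda lemma, use that $\mathbb{J}$ is a separator to conclude $\theta = \mathcal{V}^{\mathcal{W}}(f,-)$, read off $e_P \circ f = e_Q$ and $d_P \circ f = (f \odot f) \circ d_Q$ from the unit and compositor axioms evaluated at identities (via the equations of \cref{Lem:ParamEqs}), and reduce the braided case to the fullness of $\mathbf{CComon}(\mathcal{V})$ in $\mathbf{Comon}(\mathcal{V})$ and of $\mathbf{BrCat}(\mathcal{V},\mathcal{W})$ in $\mathbf{MonCat}(\mathcal{V},\mathcal{W})$, all exactly as the paper does.
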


\begin{proof}
    By \cref{Thm:Reparam}, $\Omega_{(\mathcal{V}, \mathcal{W})}$ is a functor.
    The goal is to show that $\Omega_{(\mathcal{V},\mathcal{W})}$ is full by reduction to the classical Yoneda embedding.
    Let $M = ( P, d_P, e_P ) \in \mathbf{Comon}( \mathcal{V} )_0$ and $N = ( Q, d_Q, e_Q ) \in \mathbf{Comon}( \mathcal{V} )_0$.
    Pick some $\rho \in \mathbf{MonCat}( \mathcal{V}, \mathcal{W} )\left( \Omega_{(\mathcal{V},\mathcal{W})}( M ), \Omega_{(\mathcal{V},\mathcal{W})}( N ) \right)$, which is a natural transformation of the form $\rho: \mathcal{V}^{\mathcal{W}}( P, - ) \Rightarrow \mathcal{V}^{\mathcal{W}}( Q, - )$.
    Then define a transformation $\gamma: \mathcal{V}( P, - ) \Rightarrow \mathcal{V}( Q, - )$ such that $\gamma_X: f \mapsto \rho_X \bullet f$.
    This is well defined since $\rho_X: \mathcal{V}^{\mathcal{W}}( P, X ) \to \mathcal{V}^{\mathcal{W}}( Q, X )$ and $f \in \mathcal{V}( P, X ) = \mathcal{W}( \mathbb{J}, \mathcal{V}^{\mathcal{W}}( P, X ) )$.
    It will be shown that $\gamma$ is natural.
    Let $g \in \mathcal{V}( X, Y )$.
    Notice that if $h \in \mathcal{V}( Y, Z )$, then $(\mathcal{V}( Y, g ))( h ) = \mathcal{V}^{\mathcal{W}}( Y, g ) \bullet h$ by definition.
    Since $\rho$ is natural, the following equation holds for all $h \in \mathcal{V}( P, X )$.
    \begin{equation*}
        ( \mathcal{V}( Q, g ) \circ \gamma_X )( h )
        =
        \mathcal{V}^{\mathcal{W}}( Q, g ) \bullet \rho_X \bullet h
        =
        \rho_Y \bullet \mathcal{V}^{\mathcal{W}}( P, g ) \bullet h
        =
        ( \gamma_Y \circ \mathcal{V}( P, g ) )( h )
    \end{equation*}
    Then $\mathcal{V}( Q, g ) \circ \gamma_X = \gamma_Y \circ \mathcal{V}( P, g )$.
    Since $g$ was arbitrary, then $\gamma$ is natural.
    Then by the Yoneda embedding, there exists an $f \in \mathcal{V}( Q, P )$ such that $\mathcal{V}( f, - ) = \gamma$.
    It remains to be shown that $\rho_X = \mathcal{V}^{\mathcal{W}}( f, X )$.
    Let $X \in \mathcal{V}_0$.
    Then $\mathcal{V}^{\mathcal{W}}( f, X ) \bullet g = \mathcal{V}( f, X )( g ) = \gamma_X( g ) = \rho_X \bullet g$ for each $g \in \mathcal{V}( P, X )$.
    Since $\mathcal{V}( P, X ) = \mathcal{W}( \mathbb{J}, \mathcal{V}^{\mathcal{W}}( P, X ) )$ and $\mathbb{J}$ is a separator for $\mathcal{W}$, then $\mathcal{V}^{\mathcal{W}}( f, X ) = \rho_X$.
    Since $X$ was arbitrary, then $\mathcal{V}^{\mathcal{W}}( f, - ) = \rho$.
    It remains to be shown that $f \in \mathbf{Comon}( \mathcal{V} )^{\text{op}}( M, N )$.
    Let $( \mathcal{V}^{\mathcal{W}}( P, - ), \eta, e_P ) = \Omega_{(\mathcal{V},\mathcal{W})}( M )$ and $( \mathcal{V}^{\mathcal{W}}( Q, - ), \mu, e_Q ) = \Omega_{(\mathcal{V},\mathcal{W})}( N )$.
    By \cref{Eqn:NatTxLemma1} of \cref{Lem:ParamEqs}, the following equation holds where $W = P \odot P$.
    \begin{equation*}
        \includegraphics[valign=c,scale=0.84]{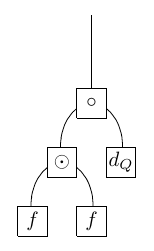}
        =
        \includegraphics[valign=c,scale=0.84]{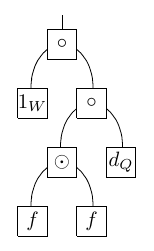}
        =
        \includegraphics[valign=c,scale=0.84]{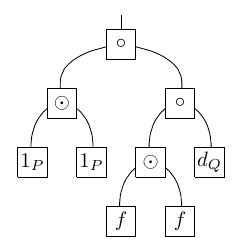}
        =
        \includegraphics[valign=c,scale=0.84]{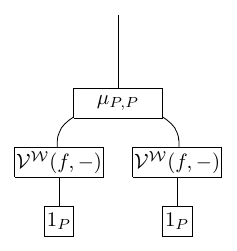}
    \end{equation*}
    By \cref{Eqn:NatTxLemma2} of \cref{Lem:ParamEqs}, the following equation also holds.
    \begin{equation*}
        \includegraphics[valign=c,scale=0.84]{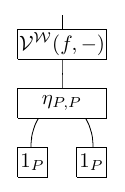}
        =
        \includegraphics[valign=c,scale=0.84]{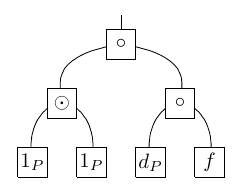}
        =
        \includegraphics[valign=c,scale=0.84]{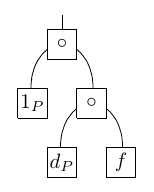}
        =
        \includegraphics[valign=c,scale=0.84]{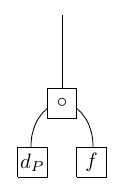}
    \end{equation*}
    Since $\mathcal{V}^{\mathcal{W}}( f, - )$ is compatible with the compositor, then the following equation also holds.
    \begin{equation*}
        \includegraphics[valign=c,scale=0.84]{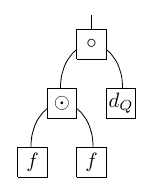}
        =
        \includegraphics[valign=c,scale=0.84]{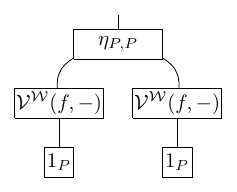}
        =
        \includegraphics[valign=c,scale=0.84]{figs/construction/full_compositor_t8.pdf}
        =
        \includegraphics[valign=c,scale=0.84]{figs/construction/full_compositor_t7.pdf}
    \end{equation*}
    In other words, $f$ respects comultiplication.
    Since $\mathcal{V}^{\mathcal{W}}( f, - )$ is compatible with the unitor, then also $e_P \circ f = \mathcal{V}^{\mathcal{W}}( f, - ) \bullet e_P = e_Q$.
    In other words, $f$ respects counits.
    Since $f$ respects comultiplication and conuits, then $f$ is a comonoid homomorphism from $N$ to $M$.
    Then $f \in \mathbf{Comon}( \mathcal{V} )^{\text{op}}( M, N )$ and $\rho = \Omega_{(\mathcal{V},\mathcal{W})}( f )$.
    Since $\rho$, $M$, and $N$ were arbitrary, then $\Omega_{(\mathcal{V},\mathcal{W})}$ is a full functor.
    Since $\mathbf{CComon}( \mathcal{V}, \mathcal{W} )$ is a full subcategory of $\mathbf{Comon}( \mathcal{V}, \mathcal{W} )$, then this restriction of $\Omega_{(\mathcal{V},\mathcal{W})}$ is also a full functor.
\end{proof}

\begin{cor}
    \label{Cor:UnderlyingShape}
    Let $( P, d, e ) = M \in \mathbf{Comon}( \mathcal{V} )_0$ and $F = \Omega_{( \mathcal{V}, \mathcal{W} )}( M )$.
    \begin{enumerate}
    \item If $\mathcal{C}^{\mathcal{V}} \in \mathcal{V}\mathbf{Cat}_0$, then $U( F_*( \mathcal{C}^{\mathcal{V}} ) )$ has the categorical structure depicted in~\cref{Fig:SetFunctors} with each diagram interpreted in $\mathcal{V}$.
    \item If $\mathcal{V}$ is admits a $\mathcal{W}$-enriched braiding and $\mathcal{C}^{\mathcal{V}} \in \mathcal{V}\mathbf{MonCat}_0$, then $U( F_*( \mathcal{C}^{\mathcal{V}} ) )$ admits the monoidal structure depicted in~\cref{Fig:SetFunctors} with each diagram interpreted in $\mathcal{V}$.
    \item If $\mathcal{V}$ is admits a $\mathcal{W}$-enriched symmetry, $\mathcal{C}^{\mathcal{V}} \in \mathcal{V}\mathbf{BrCat}_0$ and $M \in \mathbf{CComon}( \mathcal{V} )_0$, then $U( F_*( \mathcal{C}^{\mathcal{V}} ) )$ admits the braiding depicted in~\cref{Fig:SetFunctors} as a diagram in $\mathcal{V}$.
    \end{enumerate}
\end{cor}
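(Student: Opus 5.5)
The proof unfolds the base change construction recalled in \cref{Sect:Background} and then reads each resulting $\mathcal{W}$-morphism back into $\mathcal{V}$ via $U = \mathcal{W}( \mathbb{J}, - )$. Throughout, a morphism of $U( F_*( \mathcal{C}^{\mathcal{V}} ) )$ from $X$ to $Y$ is an element of $\mathcal{W}( \mathbb{J}, \mathcal{V}^{\mathcal{W}}( P, \mathcal{C}^{\mathcal{V}}( X, Y ) ) ) = \mathcal{V}( P, \mathcal{C}^{\mathcal{V}}( X, Y ) )$. So every piece of structure computed in $\mathcal{W}$ is a generalized element that must be identified with a concrete diagram in $\mathcal{V}$.

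For part (1), recall from the background that the composition of $F_*( \mathcal{C}^{\mathcal{V}} )$ is $F( M^{\mathcal{V}} ) \bullet \eta$ and the identity is $F( I^{\mathcal{V}}_X ) \bullet e$. Precompose with a pair $( g, f )$ of generalized elements $\mathbb{J} \to F( \mathcal{C}^{\mathcal{V}}( Y, Z ) ) \otimes F( \mathcal{C}^{\mathcal{V}}( X, Y ) )$. By definition \cref{Eqn:ParamEta}, $\eta \bullet ( g \otimes f )$ is the $\mathcal{W}$-enriched tensor $g \odot f$ followed by enriched precomposition with $d$. The key observation is that for elements $\mathbb{J} \to \mathcal{V}^{\mathcal{W}}( A, B )$, enriched composition $\circ^{\mathcal{W}}$ and the enriched tensor $\odot^{\mathcal{W}}$ reduce to ordinary composition and $\odot$ in $\mathcal{V}$. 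This is the standard compatibility of the underlying-category functor with enriched composition and with the enriched monoidal structure. Hence $\eta \bullet ( g \otimes f ) = ( g \odot f ) \circ d$, and postcomposing with $F( M^{\mathcal{V}} ) = \mathcal{V}^{\mathcal{W}}( P, M^{\mathcal{V}} )$ gives $M^{\mathcal{V}} \circ ( g \odot f ) \circ d$, which is \cref{Fig:SetFunctors:Comp}. Similarly, the identity becomes $I^{\mathcal{V}}_X \circ e$, which is \cref{Fig:SetFunctors:Id}.

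For parts (2) and (3), use that $F$ is lax monoidal by \cref{Thm:Reparam}, and lax braided when $M$ is cocommutative by the same theorem. Then $F_*$ sends the pseudomonoid $\mathcal{C}^{\mathcal{V}}$ to a (braided) pseudomonoid, as recalled in \cref{Sect:Background}; this is where the braiding hypothesis on $\mathcal{V}$ is needed. The tensor functor of $F_*( \mathcal{C}^{\mathcal{V}} )$ is the base change of $\otimes^{\mathcal{V}}$. On morphisms its hom-component is $F( \otimes^{\mathcal{V}} ) \bullet \eta$, and the same element-wise unfolding as in part (1) gives $\otimes^{\mathcal{V}} \circ ( f \odot g ) \circ d$, which is \cref{Fig:SetFunctors:PComp}. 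The associator, unitors and braiding of $F_*( \mathcal{C}^{\mathcal{V}} )$ are the images of $\alpha^{\mathcal{V}}, \lambda^{\mathcal{V}}, \rho^{\mathcal{V}}$ and the braiding under the identity-component of base change, $\mathbb{J} \to F( \mathbb{I} ) \to F( \mathcal{C}^{\mathcal{V}}( -, - ) )$. These are exactly those structure maps precomposed with $e$, matching \cref{Fig:SetFunctors:Assoc,Fig:SetFunctors:LUnitor,Fig:SetFunctors:RUnitor,Fig:SetFunctors:Braid}.

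The main obstacle is the translation step itself. We must show rigorously that the base-changed $\mathcal{W}$-diagram, evaluated on global elements of $\mathcal{W}$, coincides with the stated $\mathcal{V}$-diagram; this is needed especially for the $\odot^{\mathcal{W}}$ component of $\eta$. I would first check this with \cref{Eqn:CompositorLemma1,Eqn:UnitorLemma1} of \cref{Lem:ParamEqs}, which already express $\eta$ and $e$ against elements. If those lemmas do not suffice, I would fall back on the $\mathcal{W}$-functoriality of $\odot^{\mathcal{W}}$ restricted to underlying categories.
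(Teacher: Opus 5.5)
\textbf{Overall.} Your element-wise unfolding of the base change is the same argument the paper gives. The paper also computes $g \star f = \mathcal{V}^{\mathcal{W}}(P, M^{\mathcal{V}}) \bullet ((g \odot f) \circ d) = M^{\mathcal{V}} \circ (g \odot f) \circ d$ and $i_X = I^{\mathcal{V}}_X \circ e$, and then says that $\boxtimes$ and the coherence data follow ``analogously''. The translation step you call the main obstacle is just the fact that $\mathcal{W}(\mathbb{J}, -)$ turns $\circ^{\mathcal{W}}$ and $\odot^{\mathcal{W}}$ into $\circ$ and $\odot$. The paper uses this without comment, and \cref{Lem:ParamEqs} is not needed for it. Parts (1) and (3) are fine as you have them.

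\textbf{The gap is in part (2).} There you claim that $F_*$ sends the pseudomonoid $\mathcal{C}^{\mathcal{V}}$ to a pseudomonoid, using only that $F$ is lax monoidal. The fact recalled in \cref{Sect:Background} needs $F$ to be \emph{braided}. The reason is that the comparison $F_*(\mathcal{A}) \otimes F_*(\mathcal{B}) \to F_*(\mathcal{A} \otimes \mathcal{B})$ respects composition only if $\eta$ is compatible with the middle-four interchange. By \cref{Thm:Reparam}, together with its converse proved right after \cref{Thm:UnderlyingDCat}, $F$ is braided exactly when $M$ is cocommutative, and part (2) does not assume this.

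\textbf{A counterexample.} This step actually fails, not merely lacks justification. Take $\mathcal{W} = \mathbf{Set}$, $\mathcal{V} = \mathbb{C}\mathbf{FVect}$ and $\mathcal{C}^{\mathcal{V}} = \mathcal{V}^{\mathcal{V}}$. Use the pair-of-pants comonoid on $P = \mathbb{C}^2 \otimes \mathbb{C}^2$, with $d\colon \ket{ab} \mapsto \sum_k \ket{ak} \otimes \ket{kb}$ and $e\colon \ket{ab} \mapsto \delta_{ab}$. Identify each $f \in \mathcal{V}(P, [\mathbb{C}, \mathbb{C}])$ with the matrix $F = (f\ket{ij})_{ij}$.
\begin{itemize}
\item The composite $g \star f$ corresponds to $GF$, and $f \boxtimes g$ corresponds to $FG$.
\item Hence $(i \star f) \boxtimes (g \star i) = FG$, while $(i \boxtimes g) \star (f \boxtimes i) = GF$. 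These differ whenever $FG \neq GF$, so $\boxtimes$ is not a bifunctor.
\item Also, the endomorphism monoid of the unit $\mathbb{C}$ is $M_2(\mathbb{C})$, which no monoidal category allows, by Eckmann--Hilton.
\end{itemize}

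\textbf{Fix.} The paper's proof has the same lacuna, because it only identifies the data and never checks that the result is monoidal. Part (2) needs the extra hypothesis that $M$ is cocommutative, as in part (3). With that hypothesis your argument is complete.
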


\begin{proof}
    Let $\mathcal{D}^{\mathcal{W}} = F_*( \mathcal{C}^{\mathcal{V}} )$.
    It must be shown that the composition and identities in $U( \mathcal{D}^{\mathcal{W}} )$ are the same as those depicted in~\cref{Fig:SetFunctors}.
    \begin{itemize}
    \item \textbf{Sequential Composition}.
          If $X, Y, Z \in \mathcal{C}_0$,
          then the sequential composition of $\mathcal{D}^{\mathcal{W}}( X, Y )$ with $\mathcal{D}^{\mathcal{W}}( Y, Z )$ is given by the morphism $\mathcal{V}^{\mathcal{W}}( P, M_{X,Y,Z}^{\mathcal{V}} ) \bullet \eta_{S,T}$ where $S = \mathcal{V}^{\mathcal{W}}( P, \mathcal{C}^{\mathcal{V}}( Y, Z ) )$ and $T = \mathcal{V}^{\mathcal{W}}( P, \mathcal{C}^{\mathcal{V}}( X, Y ) )$.
          Then $g \star f = \mathcal{V}^{\mathcal{W}}( P, M_{X,Y,Z}^{\mathcal{V}} ) \bullet ( ( g \odot f ) \circ d ) =  M_{X,Y,Z}^{\mathcal{V}} \circ ( g \odot f ) \circ d$ for each $f \in U( \mathcal{D}^{\mathcal{W}} )( X, Y )$ and $g \in U( \mathcal{D}^{\mathcal{W}} )( Y, Z )$.
          This is the diagram depicted in \cref{Fig:SetFunctors:Comp}.
    \item \textbf{Identities}.
          If $X \in \mathcal{C}_0$, then the identity on $X$ in $\mathcal{D}^{\mathcal{W}}$ is defined by $\mathcal{V}^{\mathcal{W}}( P, 1_X^{\mathcal{V}} ) \bullet e = 1_X^{\mathcal{V}} \circ e$.
          Then the identity on $X$ in $U( \mathcal{D}^{\mathcal{W}} )$ is the diagram depicted in \cref{Fig:SetFunctors:Id}.
    \end{itemize}
    Further assume that $\mathcal{V}$ admits a $\mathcal{W}$-enriched braiding and $\mathcal{C}^{\mathcal{V}}$ admits a $\mathcal{V}$-monoidal structure given by the data $( \mathcal{C}^{\mathcal{V}}, \mathbb{K}, \otimes^{\mathcal{V}}, \alpha^{\mathcal{V}}, \lambda^{\mathcal{V}}, \rho^{\mathcal{V}} )$.
    It must be shown that the monoidal structure on $U( F_*( \mathcal{C}^{\mathcal{V}}, \mathbb{K}, \otimes^{\mathcal{V}}, \alpha^{\mathcal{V}}, \lambda^{\mathcal{V}}, \rho^{\mathcal{V}} ) )$ is the same as depicted in~\cref{Fig:SetFunctors}.
    The parallel composition follows analogously to sequential composition, and the coherence data follows analogously to the identities.
    Finally, assume that $\mathcal{V}$ is symmetric monoidal, $M$ is cocommutative, and $( \mathcal{C}^{\mathcal{V}}, \mathbb{K}, \otimes^{\mathcal{V}}, \alpha^{\mathcal{V}}, \lambda^{\mathcal{V}}, \rho^{\mathcal{V}} )$ admits a $\mathcal{V}$-enriched braiding $\beta^{\mathcal{V}}$.
    It follows analogously to the identities that $U( F_*( \mathcal{C}^{\mathcal{V}}, \mathbb{K}, \otimes^{\mathcal{V}}, \alpha^{\mathcal{V}}, \lambda^{\mathcal{V}}, \rho^{\mathcal{V}}, \beta^{\mathcal{V}} ) )$ admits the braiding depicted in \cref{Fig:SetFunctors}.
\end{proof}

It turns out that the underlying category functor can also understood in terms of parameterization.
For each $\mathcal{V}$-enriched category $\mathcal{C}^{\mathcal{V}}$, the underlying composition in $\mathcal{C}$ is given by $M \circ ( ( - ) \otimes ( - ) ) \circ \lambda_{\mathbb{I}}^{-1}$.
This coincides with the parameterized composition induced by the comonoid $T = ( \mathbb{I}, \lambda_{\mathbb{I}}^{-1}, 1_{\mathbb{I}} )$ in $\mathcal{V}$ with $\mathcal{W} = \mathbf{Set}$.
Since $1_{\mathbb{I}}$ is the counit in $T$, then the identities in $\mathcal{C}$ also arise through this parameterization.
These observations generalize to $\mathcal{V}$-enriched monoidal and $\mathcal{V}$-enriched braided monoidal categories, as shown in~\cref{Cor:UnderlyingFunctor}.
So far this discussion has been restricted to the case where $\mathcal{W} = \mathbf{Set}$, but the same construction works when $\Omega_{(\mathcal{V},\mathbf{Set})}( T )$ is replaced by $\Omega_{(\mathcal{V},\mathcal{W})}( T )$.
This motivates the \emph{underlying $\mathcal{W}$-category of $\mathcal{V}^{\mathcal{W}}$}, as defined in~\cref{Def:UnderlyingDCat}.
It is shown in~\cref{Thm:UnderlyingDCat}, that this new definition is compatible with the standard underlying category functor.

\begin{cor}
    \label{Cor:UnderlyingFunctor}
    The underlying category functor, the underlying monoidal category functor, and the underlying braided monoidal category functor for $\mathcal{V}$ are all induced by $F = \Omega_{(\mathcal{V},\mathbf{Set})}( T )$ where $T = ( \mathbb{I}, \lambda_{\mathbb{I}}^{-1}, 1_{\mathbb{I}} )$.
\end{cor}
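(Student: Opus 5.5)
The plan is to show that the lax (braided) monoidal functor $F = \Omega_{(\mathcal{V},\mathbf{Set})}( T ) = ( \mathcal{V}( \mathbb{I}, - ), \eta, 1_{\mathbb{I}} )$ coincides, as a lax monoidal functor, with the standard lax monoidal structure on $\mathcal{V}( \mathbb{I}, - )$ used to define the underlying category functor. Once that is established, the three claims follow from the base change machinery of \cref{Sect:Background}. The underlying category functor is $(\mathcal{V}(\mathbb{I},-))_*$. This base change sends $\mathcal{V}$-enriched monoidal categories (pseudomonoids in $\mathcal{V}\mathbf{Cat}$) to monoidal categories, and braided pseudomonoids to braided ones, provided the lax monoidal functor is braided.

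\textbf{Step 1.} I first check that $T$ is a cocommutative comonoid. Coassociativity and counitality of $( \mathbb{I}, \lambda_{\mathbb{I}}^{-1}, 1_{\mathbb{I}} )$ are coherence facts. They reduce to $\lambda_{\mathbb{I}} = \rho_{\mathbb{I}}$ and the triangle and pentagon identities, which are available from~\cite{Kelly1964}. Cocommutativity, $\beta_{\mathbb{I},\mathbb{I}} \circ \lambda_{\mathbb{I}}^{-1} = \lambda_{\mathbb{I}}^{-1}$, follows from the standard fact $\beta_{\mathbb{I},\mathbb{I}} = 1$ in any braided monoidal category. Hence \cref{Thm:Reparam} applies, and $F$ is lax monoidal, and lax braided when $\mathcal{V}$ is braided.

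\textbf{Step 2.} Next I unfold \cref{Eqn:ParamEta} with $\mathcal{W} = \mathbf{Set}$ and $P = \mathbb{I}$. For $f \in \mathcal{V}( \mathbb{I}, X )$ and $g \in \mathcal{V}( \mathbb{I}, Y )$ this gives $\eta( f, g ) = ( f \otimes g ) \circ \lambda_{\mathbb{I}}^{-1}$, and the unit component is $\ast \mapsto 1_{\mathbb{I}}$. These are exactly the usual compositor and unitor of $\mathcal{V}( \mathbb{I}, - )$ from~\cite{Kelly1982}. Consequently the two lax monoidal functors agree on the nose, and so do their base change 2-functors. On morphisms, the composition in $F_*( \mathcal{C}^{\mathcal{V}} )$ is then $M \circ ( g \otimes f ) \circ \lambda_{\mathbb{I}}^{-1}$, matching \cref{Cor:UnderlyingShape}, and the identities are $1_X \circ 1_{\mathbb{I}} = 1_X$.

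\textbf{Step 3.} For the monoidal and braided cases I invoke~\cite{Cruttwell2008} and the symmetric/braided extension mentioned in \cref{Sect:Background}. Base change along a braided lax monoidal functor preserves pseudomonoids and braided pseudomonoids. Combined with the identification in Step 2, this shows that the underlying monoidal and braided monoidal category functors are $F_*$, restricted to those structures.

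The main obstacle is Step 2. The specific string diagram in \cref{Eqn:ParamEta} must be shown to reduce in $\mathbf{Set}$ to the formula $( f \otimes g ) \circ \lambda_{\mathbb{I}}^{-1}$. One has to be careful that the compositor uses the enriched tensor on $\mathcal{V}^{\mathbf{Set}}$, which is just the ordinary functor $\otimes$ on hom-sets. My first approach is to evaluate $\eta$ on a pair of elements, using that in $\mathbf{Set}$ the singleton is a separator, so componentwise equality suffices.
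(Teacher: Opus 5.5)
Your proposal is correct, but it argues at a different level from the paper.

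\textbf{The paper's route.} The paper never compares lax monoidal functors. It takes an arbitrary $\mathcal{C}^{\mathcal{V}}$ and reads off from \cref{Cor:UnderlyingShape} the data of $F_*(\mathcal{C}^{\mathcal{V}})$:
\begin{itemize}
\item the hom-sets $\mathcal{V}(\mathbb{I},\mathcal{C}^{\mathcal{V}}(X,Y))$;
\item the identities $I_X \circ 1_{\mathbb{I}}$;
\item the composite $M \circ (g \otimes f) \circ \lambda_{\mathbb{I}}^{-1}$;
\item the tensor, associator and unitors;
\item the braiding, using that $T$ is cocommutative.
\end{itemize}
It then checks each item against the underlying structure on $\mathcal{C}$.

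\textbf{Your route.} You identify $F$ itself, on the nose, with Kelly's $\mathcal{V}(\mathbb{I},-)$ carrying its canonical compositor $(f,g)\mapsto (f\otimes g)\circ\lambda_{\mathbb{I}}^{-1}$ and unit $1_{\mathbb{I}}$. You then use that the underlying-category 2-functor is, by definition, base change along that lax monoidal functor.

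\textbf{What each buys.} Your route gives more, in three ways:
\begin{itemize}
\item It yields equality of the base-change 2-functors. So agreement on $\mathcal{V}$-functors and $\mathcal{V}$-natural transformations follows at once; the paper's objectwise check leaves this implicit, even though the statement speaks of functors.
\item Agreement of the monoidal and braided structures also follows directly, by functoriality of base change in the lax monoidal functor.
\item It actually verifies that $T$ is a cocommutative comonoid, via $\beta_{\mathbb{I},\mathbb{I}} = 1$; the paper only asserts this.
\end{itemize}
The paper's route is more elementary. It needs only the explicit formulas for the underlying category, not the identification $(-)_0 = \mathcal{V}(\mathbb{I},-)_*$.

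\textbf{One hypothesis to state.} As in the paper, your braided case should explicitly assume that $\mathcal{V}$ is symmetric. That is the setting in which base change is known to preserve braided pseudomonoids.
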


\begin{proof}
    By \cref{Cor:UnderlyingShape}, the following claims hold.
    \begin{itemize}
    \item For each $X \in \mathcal{C}_0$ and $Y \in \mathcal{C}_0$, $F_*( \mathcal{C}^{\mathcal{V}} )( X, Y ) = \mathcal{V}\left( \mathbb{I}, \mathcal{C}^{\mathcal{V}}( X, Y ) \right) = \mathcal{C}( X, Y )$.
    \item For each $X \in \mathcal{C}_0$, $i_X = I^{\mathcal{V}}_X \circ 1_{\mathbb{I}} = 1^{\mathcal{V}}_X = I_X$.
    \item If $f \in F_*( \mathcal{C}^{\mathcal{V}} )( X, Y )$ and $g \in F_*( \mathcal{C}^{\mathcal{V}} )( Y, Z )$, then $g \star f = M^{\mathcal{V}} \circ ( g \odot f ) \circ \lambda_{\mathbb{I}}^{-1} = M( g, f )$.
    \end{itemize}
    Then $F_*( \mathcal{C}^{\mathcal{V}} ) = \mathcal{C}$.
    This means that the underlying category for $\mathcal{C}^{\mathcal{V}}$ is induced by $F$.
    Further assume that $\mathcal{V}$ admits a $\mathcal{W}$-enriched braiding and $\mathcal{C}^{\mathcal{V}}$ admits a $\mathcal{V}$-monoidal structure given by the data $( \mathcal{C}^{\mathcal{V}}, \mathbb{K}, \otimes^{\mathcal{V}}, A^{\mathcal{V}}, L^{\mathcal{V}}, R^{\mathcal{V}} )$.
    Then by \cref{Cor:UnderlyingShape}, the following claims hold.
    \begin{itemize}
    \item If $f \in F_*( \mathcal{C}^{\mathcal{V}} )( X, Y )$ and $g \in F_*( \mathcal{C}^{\mathcal{V}}( Z, W )$, then $f \boxtimes g = \otimes^{\mathcal{V}} \circ ( f \times g ) \circ \lambda_{\mathbb{I}}^{-1} = \otimes( f, g )$.
    \item If $X \in \mathcal{C}_0$, $Y \in \mathcal{C}_0$, $Z \in \mathcal{C}_0$, and $W \in \mathcal{C}_0$, then $a_{X,Y,Z,W} = A_{X,Y,Z,W}^{\mathcal{V}} \circ 1_{\mathbb{I}} = A_{X,Y,Z,W}$.
    \item If $X \in \mathcal{C}_0$, then $\ell_X = L_X^{\mathcal{V}} \circ 1_{\mathbb{I}} = L_X$ and $r_X = R_X^{\mathcal{V}} \circ 1_{\mathbb{I}} = R_X$.
    \end{itemize}
    Then $F_*( \mathcal{C}^{\mathcal{V}}, \mathbb{K}, \otimes^{\mathcal{V}}, A^{\mathcal{V}}, L^{\mathcal{V}}, R^{\mathcal{V}} ) = ( \mathcal{C}^{\mathcal{V}}, \mathbb{K}, \otimes, A, L, R )$, and therefore the underlying monoidal category for $\mathcal{C}^{\mathcal{V}}$ is induced by $F$.
    Further assume that $\mathcal{V}$ is symmetric monoidal and $( \mathcal{C}^{\mathcal{V}}, \mathbb{K}, \otimes^{\mathcal{V}}, A^{\mathcal{V}}, L^{\mathcal{V}}, R^{\mathcal{V}} )$ admits a $\mathcal{V}$-enriched braiding $B^{\mathcal{V}}$.
    Since $T$ is cocommutative, then by \cref{Cor:UnderlyingShape}, $b_{X,Y} = B^{\mathcal{V}}_{X,Y} \circ 1_{\mathbb{I}} = B_{X,Y}$ for each $X \in \mathcal{V}_0$ and $Y \in \mathcal{V}_0$.
    This means that the underlying braided monoidal category for $\mathcal{C}^{\mathcal{V}}$ is induced by $F$.
    Since $\mathcal{C}^{\mathcal{V}}$ was arbitrary, then this concludes the proof.
\end{proof}

\begin{defi}
    \label{Def:UnderlyingDCat}
    Let $T = ( \mathbb{I}, \lambda_{\mathbb{I}}^{-1}, 1_{\mathbb{I}} )$ and $F = \Omega_{(\mathcal{V},\mathcal{W})}( T )$.
    The \emph{underlying $\mathcal{W}$-enriched category of $\mathcal{C}^{\mathcal{V}}$} is $\mathcal{C}^{\mathcal{V}:\mathcal{W}} := F_*( \mathcal{C}^{\mathcal{W}} )$.
    This definition extends to $\mathcal{V}$-enriched monoidal and braided monoidal categories in the obvious way.
\end{defi}

\begin{thm}
    \label{Thm:UnderlyingDCat}
    If $\mathcal{D}^{\mathcal{W}} := \mathcal{C}^{\mathcal{V}:\mathcal{W}}$, then $\mathcal{C} = \mathcal{D}$.
    If $\mathcal{V}$ admits a $\mathcal{W}$-enriched braiding and $\mathcal{C}^{\mathcal{V}}$ is $\mathcal{V}$-enriched monoidal, then this extends to the monoidal structure.
    If the $\mathcal{W}$-enriched braiding is a symmetry and $\mathcal{C}^{\mathcal{V}}$ admits a $\mathcal{V}$-enriched braiding, then this extends to the braiding on $\mathcal{C}^{\mathcal{V}}$.
\end{thm}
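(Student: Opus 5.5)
We read $\mathcal{C}^{\mathcal{V}:\mathcal{W}} = F_*( \mathcal{C}^{\mathcal{V}} )$ with $F = \Omega_{(\mathcal{V},\mathcal{W})}( T )$ and $T = ( \mathbb{I}, \lambda_{\mathbb{I}}^{-1}, 1_{\mathbb{I}} )$; the superscript $\mathcal{W}$ in \cref{Def:UnderlyingDCat} should be $\mathcal{V}$. The plan is to compute the underlying category $\mathcal{D} = U( \mathcal{D}^{\mathcal{W}} )$ directly from \cref{Cor:UnderlyingShape}, then compare it with the description of $\mathcal{C}$ obtained in \cref{Cor:UnderlyingFunctor}.

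\textbf{Hom-sets.} For objects $X, Y$ we have $\mathcal{D}^{\mathcal{W}}( X, Y ) = \mathcal{V}^{\mathcal{W}}( \mathbb{I}, \mathcal{C}^{\mathcal{V}}( X, Y ) )$. Hence
\[
\mathcal{D}( X, Y ) = \mathcal{W}\bigl( \mathbb{J}, \mathcal{V}^{\mathcal{W}}( \mathbb{I}, \mathcal{C}^{\mathcal{V}}( X, Y ) ) \bigr) = \mathcal{V}\bigl( \mathbb{I}, \mathcal{C}^{\mathcal{V}}( X, Y ) \bigr) = \mathcal{C}( X, Y ).
\]
The middle equality holds because $\mathcal{V}$ is the underlying category of $\mathcal{V}^{\mathcal{W}}$. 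The objects agree by construction of base change.

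\textbf{Category structure.} Part (1) of \cref{Cor:UnderlyingShape} says the composition in $\mathcal{D}$ is the $\mathcal{V}$-diagram $M^{\mathcal{V}} \circ ( g \odot f ) \circ \lambda_{\mathbb{I}}^{-1}$, and the identity is $I^{\mathcal{V}}_X \circ 1_{\mathbb{I}}$. These are exactly the formulas identified in the proof of \cref{Cor:UnderlyingFunctor} as the underlying composition $M( g, f )$ and the identity $I_X$ of $\mathcal{C}$. So $\mathcal{D} = \mathcal{C}$ as categories.

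\textbf{Monoidal structure and braiding.} Assume $\mathcal{V}$ has a $\mathcal{W}$-enriched braiding and $\mathcal{C}^{\mathcal{V}}$ is $\mathcal{V}$-monoidal. Since $T$ is the monoidal unit of $\mathbf{Comon}( \mathcal{V} )$ and is trivially a comonoid, $F$ is lax monoidal by \cref{Thm:Reparam}, so $F_*$ applies to pseudomonoids. Part (2) of \cref{Cor:UnderlyingShape} then gives $f \boxtimes g = \otimes^{\mathcal{V}} \circ ( f \odot g ) \circ \lambda_{\mathbb{I}}^{-1}$, with coherence data $A^{\mathcal{V}} \circ 1_{\mathbb{I}}$, $L^{\mathcal{V}} \circ 1_{\mathbb{I}}$ and $R^{\mathcal{V}} \circ 1_{\mathbb{I}}$. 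These coincide with the underlying monoidal data computed in \cref{Cor:UnderlyingFunctor}. For the braiding, $T$ is cocommutative because $\lambda_{\mathbb{I}} = \rho_{\mathbb{I}}$ and $\beta_{\mathbb{I},\mathbb{I}} = 1$. Part (3) of \cref{Cor:UnderlyingShape} therefore applies when the $\mathcal{W}$-braiding is a symmetry, and gives $b_{X,Y} = B^{\mathcal{V}}_{X,Y} \circ 1_{\mathbb{I}} = B_{X,Y}$.

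\textbf{Main obstacle.} The substantive step is the identification $\mathcal{W}( \mathbb{J}, \mathcal{V}^{\mathcal{W}}( A, B ) ) = \mathcal{V}( A, B )$, carried out compatibly with composition. It must hold strictly enough that $U$ applied to $\mathcal{V}^{\mathcal{W}}( \mathbb{I}, M^{\mathcal{V}} ) \bullet \eta$ becomes the $\mathcal{V}$-composite $M^{\mathcal{V}} \circ ( - \odot - ) \circ \lambda_{\mathbb{I}}^{-1}$. Here I would unfold $\eta$ from \cref{Eqn:ParamEta} as the enriched tensor $\odot^{\mathcal{W}}$ followed by $\mathcal{V}^{\mathcal{W}}( d, - )$. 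Then I would check that precomposing with generalized elements $\mathbb{J} \to -$ turns $\odot^{\mathcal{W}}$ into $\odot$ and $\circ^{\mathcal{W}}$ into $\circ$, which is just the definition of the underlying monoidal category of $\mathcal{V}^{\mathcal{W}}$. This is the same reasoning already used implicitly in \cref{Cor:UnderlyingShape}, so it can be cited rather than repeated.
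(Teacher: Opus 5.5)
Your proposal is correct and follows essentially the same route as the paper. It uses the same hom-set computation $\mathcal{W}(\mathbb{J},\mathcal{V}^{\mathcal{W}}(\mathbb{I},\mathcal{C}^{\mathcal{V}}(X,Y))) = \mathcal{C}(X,Y)$, and then identifies the composite and identities with $M^{\mathcal{V}}\circ(g\odot f)\circ\lambda_{\mathbb{I}}^{-1}$ and $I^{\mathcal{V}}_X\circ 1_{\mathbb{I}}$ (and, analogously, the tensor, coherence data and braiding); the only difference is that you cite \cref{Cor:UnderlyingShape} and \cref{Cor:UnderlyingFunctor} where the paper redoes that diagram computation inline.

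One small refinement: applying $F_*$ to a monoidal $\mathcal{V}$-category already requires $F$ to be braided. You only invoke the cocommutativity of $T$ in the braiding step, but it is needed in the monoidal step as well.
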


\begin{proof}
    Let $T = ( \mathbb{I}, \lambda^{-1}_{\mathbb{I}}, 1_\mathbb{I} )$ and $F = \Omega_{(\mathcal{V},\mathcal{W})}( T )$ such that $\mathcal{D}^{\mathcal{W}} = F_*( \mathcal{C}^{\mathcal{V}} )$.
    Then by definition, $\mathcal{C}_0 = \mathcal{D}_0$.
    Next, let $X \in \mathcal{C}_0$ and $Y \in \mathcal{C}_0$.
    Then by definition, $\mathcal{D}( X, Y ) = \mathcal{W}( \mathbb{J}, \mathcal{D}^{\mathcal{W}}( X, Y ) ) = \mathcal{W}( \mathbb{J}, F( \mathcal{C}^\mathcal{W}( X, Y ) ) ) = \mathcal{W}( \mathbb{J}, \mathcal{V}^{\mathcal{W}}( \mathbb{I}, \mathcal{C}^{\mathcal{V}}( X, Y ) ) ) = \mathcal{V}( \mathbb{I}, \mathcal{C}^{\mathcal{V}}( X, Y ) ) = \mathcal{C}( X, Y )$.
    Since $X$ and $Y$ were arbitrary, then $\mathcal{C}$ and $\mathcal{D}$ have the same hom-sets as well.
    It remains to be shown that $\mathcal{C}$ and $\mathcal{D}$ agree on composition and identities.
    \begin{itemize}
    \item \textbf{Sequential Composition}.
          Let $f \in \mathcal{C}( X, Y )$ and $g \in \mathcal{C}( Y, Z )$.
          The sequential composition of $f$ and $g$ in $\mathcal{D}$ corresponds to the following diagram interpreted as a moprhism in $\mathcal{W}$.
          \begin{equation*}
              \includegraphics[valign=c,scale=0.84]{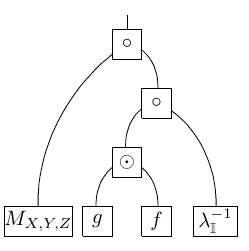}
              \in
              \mathcal{W}( \mathbb{J}, \mathcal{V}^{\mathcal{W}}( X, Z ) )
          \end{equation*}
          This is equal to the sequential composition $M^{\mathcal{V}}_{X,Y,Z} \circ ( g \odot f ) \circ \lambda_{\mathbb{I}}^{-1}$ of $f$ with $g$ in $\mathcal{C}$.
          Since $f$ and $g$ were arbitrary, then the sequential compositions agree.
    \item \textbf{Identities}.
          Let $X \in \mathcal{C}_0$.
          The identity on $X$ in $\mathcal{D}$ is $I^{\mathcal{V}}_X \circ u_\mathbb{J} = I^{\mathcal{V}}_X \in \mathcal{W}( \mathbb{J}, \mathcal{V}^{\mathcal{W}}( X, X ) )$, which equals the identity on $X$ in $\mathcal{C}$.
          Since $X$ was arbitrary, then the identities agree.
    \end{itemize}
    Then $\mathcal{C} = \mathcal{D}$.
    Further assume that $\mathcal{V}^{\mathcal{W}}$ is $\mathcal{W}$-enriched braided monoidal and $\mathcal{C}^{\mathcal{V}}$ admits a $\mathcal{V}$-monoidal structure $( \mathcal{C}^{\mathcal{V}}, \mathbb{K}, \otimes^{\mathcal{V}}, A^{\mathcal{V}}, L^{\mathcal{V}}, R^{\mathcal{V}} )$.
    By definition, $\mathbb{K}$ is the monoidal unit in both $\mathcal{C}$ and $\mathcal{D}$ when viewed as monoidal categories.
    It remains to be shown that $( \mathcal{C}, \mathbb{K}, \otimes, A, L, R ) = ( \mathcal{D}, \mathbb{K}, \boxtimes, a, \ell, r )$.
    The parallel composition follows analogously to sequential composition, and the coherence data follows analogously to the identities.
    Then $( \mathcal{C}, \mathbb{K}, \otimes, A, L, R ) = ( \mathcal{D}, \mathbb{K}, \boxtimes, a, \ell, r )$.
    Further assume that $\mathcal{V}$ is symmetric monoidal and $( \mathcal{C}^{\mathcal{V}}, \mathbb{K}, \otimes^{\mathcal{V}}, A^{\mathcal{V}}, L^{\mathcal{V}}, R^{\mathcal{V}} )$ admits a $\mathcal{V}$-enriched braiding $B^{\mathcal{V}}$.
    Since $T$ is cocommutative, the braiding in $\mathcal{C}$ can be shown to agree with the braiding in $\mathcal{D}$ by an argument analogous to the identities.
\end{proof}

It is clear from~\cref{Cor:UnderlyingFunctor}, that $\mathcal{V}^{\mathcal{W}}$-parameterizations are not limited to describing families of ansatz circuits.
However, the definition of a $\mathcal{V}^{\mathcal{W}}$-parameterization assumes that the choice of $d: P \odot P \to P$ and $e: P \to \mathbb{I}$ define a comonoid $( P, d, e )$.
This was motivated by the example in~\cref{Sect:Motivation}, which showed that if $\mathcal{W} = \mathbf{Set}$, then the coherence conditions for the lax monoidal functor are equivalent to requiring that $( P, d, e )$ is a comonoid.
However, this proof relied implicitly on the fact that $\{ * \}$ was a separator for $\mathbf{Set}$.
A natural question is whether this equivalence continues to hold for a general $\mathcal{W}$, where the monoidal unit need not be a separator.
The following theorem answers this question in the affirmative.

\begin{thm}
    Let $P \in \mathcal{V}_0$ and $F = \mathcal{V}^{\mathcal{W}}( P, - )$.
    If $d \in \mathcal{V}( P \odot P, P )$ and $( F, \eta, \epsilon )$ is a lax monoidal functor with $\eta$ defined in terms of $d$ as in \cref{Eqn:ParamEta}, then $( P, d, \epsilon )$ is a comonoid.
    Moreover, if $\mathcal{V}^{\mathcal{W}}$ admits a $\mathcal{W}$-enriched braiding and $( F, \eta, \epsilon )$ is lax braided monoidal functor, then $( P, d, \epsilon )$ is cocommutative comonoid.
\end{thm}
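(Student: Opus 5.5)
The plan is a Yoneda-style argument: specialise each coherence axiom of the lax monoidal functor $( F, \eta, \epsilon )$ to the object $P$ itself and feed in the generic element $1_P$. A separator hypothesis on $\mathbb{J}$ should not be needed. The reason is that the comonoid axioms are equations between morphisms of $\mathcal{V}$, and under $\mathcal{V}( P, X ) = \mathcal{W}( \mathbb{J}, \mathcal{V}^{\mathcal{W}}( P, X ) )$ these are equations between global elements in $\mathcal{W}$. Such equations follow directly from equations of morphisms in $\mathcal{W}$ by precomposition, without any need to test on elements.

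First I fix the data. I read $d$ as an element of $\mathcal{V}( P, P \odot P )$, as in \cref{Def:Params}. Since $\epsilon \in \mathcal{W}( \mathbb{J}, \mathcal{V}^{\mathcal{W}}( P, \mathbb{I} ) ) = \mathcal{V}( P, \mathbb{I} )$, I identify $\epsilon$ with a morphism $e: P \to \mathbb{I}$. The basic computation I need is the following. For global elements $f: \mathbb{J} \to \mathcal{V}^{\mathcal{W}}( P, X )$ and $g: \mathbb{J} \to \mathcal{V}^{\mathcal{W}}( P, Y )$, the composite $\eta_{X,Y} \bullet ( f \otimes g ) \bullet \lambda_{\mathbb{J}}^{-1}$ equals $( f \odot g ) \circ d$, viewed as an element of $\mathcal{V}( P, X \odot Y )$. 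This is exactly what the paper uses in the proof of \cref{Cor:UnderlyingShape}. It should follow from unfolding \cref{Eqn:ParamEta}: $\eta$ is the enriched tensor $\odot^{\mathcal{W}}$ followed by enriched precomposition with $d$, and both of these act on global elements as the underlying operations. Similarly, $\mathcal{V}^{\mathcal{W}}( P, h ) \bullet f = h \circ f$.

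Next I derive the comonoid axioms. For coassociativity, I take the associativity axiom of $( F, \eta, \epsilon )$ at $X = Y = Z = P$ and precompose both sides with $1_P \otimes 1_P \otimes 1_P$, together with the appropriate unitors and associator of $\mathbb{J}$. Applying the basic computation twice, the side with $\eta \bullet ( \eta \otimes 1 )$ becomes $a \circ ( ( 1_P \odot 1_P ) \circ d \odot 1_P ) \circ d$, that is, $a \circ ( d \odot 1_P ) \circ d$. The other side becomes $( 1_P \odot d ) \circ d$. This is the content of \cref{Eqn:CompositorLemma1} and \cref{Eqn:CompositorLemma2} of \cref{Lem:ParamEqs} specialised to identity inputs, so equality of the two sides is exactly coassociativity. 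For counitality, I take the left unit axiom at $X = P$ and precompose with $1_P$, obtaining $\ell \circ ( e \odot 1_P ) \circ d = 1_P$. The right unit axiom gives $r \circ ( 1_P \odot e ) \circ d = 1_P$ in the same way, via \cref{Eqn:UnitorLemma1} and \cref{Eqn:UnitorLemma2}. Hence $( P, d, e )$ is a comonoid. For the braided part, I take the braided-functor axiom $F( b_{P,P} ) \bullet \eta_{P,P} = \eta_{P,P} \bullet \beta$ and precompose with $1_P \otimes 1_P$. Using \cref{Eqn:BraidLemma}, the left side gives $b \circ d$ and the right side gives $d$, which is cocommutativity.

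I expect the main obstacle to be justifying the basic computation rigorously in the enriched setting. In particular, the braiding of $\mathcal{W}$ that appears inside $\eta$ must act trivially once every input is a global element out of $\mathbb{J}$, and the unitors of $\mathbb{J}$ must cancel correctly. I would first try to obtain this directly from the \cref{Lem:ParamEqs} identities by precomposing them with unit elements, using naturality of the braiding and $\beta_{\mathbb{J},\mathbb{J}} = 1$ modulo unitors. If that fails, I would fall back on the diagrammatic calculus for enriched categories cited in the background and check the identity by hand.
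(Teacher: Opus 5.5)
Your proposal is correct and follows essentially the paper's own route. Like the paper, you instantiate the associativity, unit, and braiding axioms of $(F,\eta,\epsilon)$ at $P$, evaluate them on the identity element via the identities of \cref{Lem:ParamEqs}, and read off coassociativity, counitality, and cocommutativity, with no separator hypothesis needed. The braiding step you flag as the main obstacle is harmless: in any braided monoidal category the braiding on $\mathbb{J}\otimes\mathbb{J}$ is the identity, because $\lambda_{\mathbb{J}}\circ\beta_{\mathbb{J},\mathbb{J}} = \rho_{\mathbb{J}} = \lambda_{\mathbb{J}}$ and $\lambda_{\mathbb{J}}$ is invertible.
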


\begin{proof}
    Assume that $d \in \mathcal{V}( P \odot P, P )$ and $( F, \eta, \epsilon )$ is a lax monoidal functor with $\eta$ defined as in \cref{Eqn:ParamEta}.
    First, it must be shown that $d$ is coassociative.
    By \cref{Eqn:CompositorLemma1} of \cref{Lem:ParamEqs}, the following equation holds where $W = P \odot ( P \odot P )$.
    \begin{equation*}
        \includegraphics[valign=c,scale=0.84]{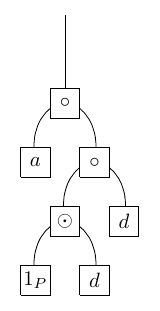}
        =
        \includegraphics[valign=c,scale=0.84]{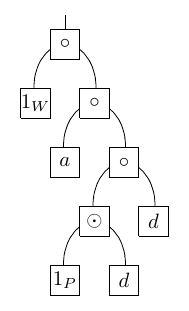}
        =
        \includegraphics[valign=c,scale=0.84]{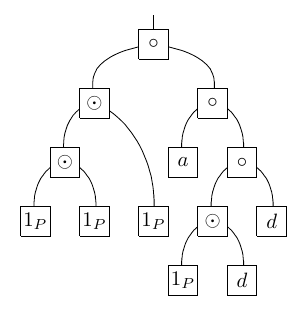}
        =
        \includegraphics[valign=c,scale=0.84]{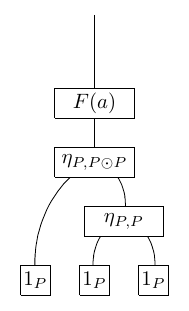}
    \end{equation*}
    By \cref{Eqn:CompositorLemma2} of \cref{Lem:ParamEqs}, the following equation also holds where $X = ( P \odot P ) \odot P$.
    \begin{equation*}
        \includegraphics[valign=c,scale=0.84]{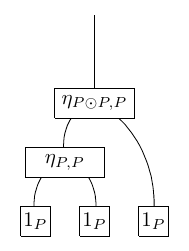}
        =
        \includegraphics[valign=c,scale=0.84]{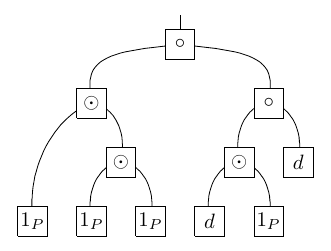}
        =
        \includegraphics[valign=c,scale=0.84]{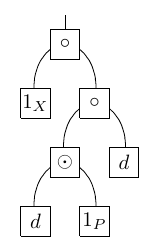}
        =
        \includegraphics[valign=c,scale=0.84]{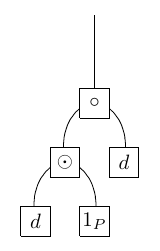}
    \end{equation*}
    These equations can be combined with the associativity of $\eta$ to obtain the following equation.
    \begin{equation*}
        \includegraphics[valign=c,scale=0.82]{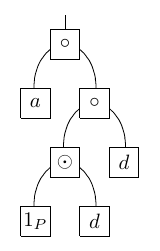}
        =
        \includegraphics[valign=c,scale=0.82]{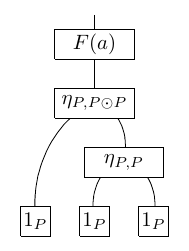}
        =
        \includegraphics[valign=c,scale=0.82]{figs/construction/comon_coassoc_t5.pdf}
        =
        \includegraphics[valign=c,scale=0.82]{figs/construction/comon_coassoc_t8.pdf}
    \end{equation*}
    This means that $d$ is coassociative.
    Next, it must be shown that $\epsilon$ is a right counit for $d$.
    Since $\epsilon$ is a right unit of $\eta$, then the following equation holds by \cref{Eqn:UnitorLemma1} of \cref{Lem:ParamEqs}
    \begin{equation*}
        1_P
        =
        \includegraphics[valign=c,scale=0.82]{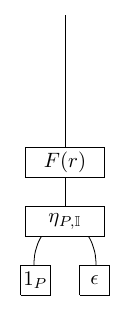}
        =
        \includegraphics[valign=c,scale=0.82]{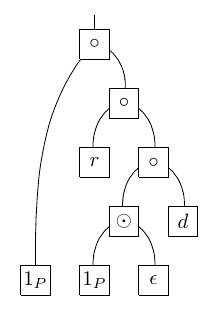}
        =
        \includegraphics[valign=c,scale=0.82]{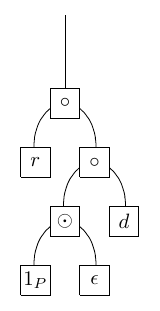}
    \end{equation*}
    Then $\epsilon$ is a right counit for $d$.
    Finally, it must be shown that $\epsilon$ is a right counit for $d$.
    Since $\epsilon$ is a left unit of $\eta$, then the following equation holds by \cref{Eqn:UnitorLemma2} of \cref{Lem:ParamEqs}
    \begin{equation*}
        1_P
        =
        \includegraphics[valign=c,scale=0.82]{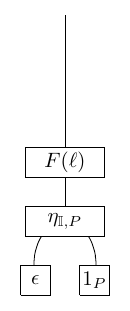}
        =
        \includegraphics[valign=c,scale=0.82]{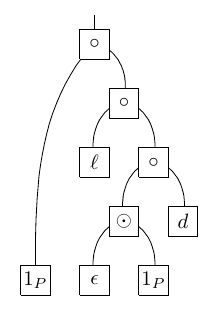}
        =
        \includegraphics[valign=c,scale=0.82]{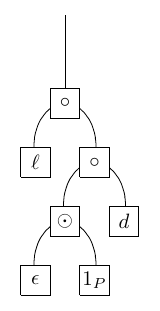}
    \end{equation*}
    Then $\epsilon$ is also a left counit for $d$.
    This means that $d$ is coassociative with counit $\epsilon$.
    In other words, $( P, d, \epsilon )$ is a comonoid.
    Now further assume that $\mathcal{V}^{\mathcal{W}}$ admits a $\mathcal{W}$-enriched braiding $b$ and $( F, \eta, \epsilon )$ is a lax braided monoidal functor.
    It must be shown that $d$ is cocommutative.
    The following equation holds by \cref{Eqn:BraidLemma} of \cref{Lem:ParamEqs}.
    \begin{equation*}
        \includegraphics[valign=c,scale=0.84]{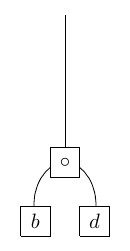}
        =
        \includegraphics[valign=c,scale=0.84]{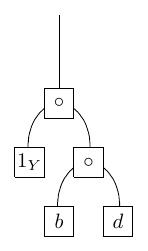}
        =
        \includegraphics[valign=c,scale=0.84]{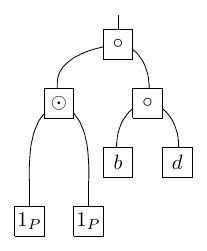}
        =
        \includegraphics[valign=c,scale=0.84]{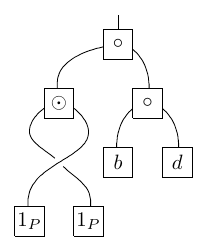}
        =
        \includegraphics[valign=c,scale=0.84]{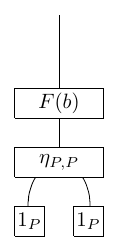}
    \end{equation*}
    The following equation also holds by definition, where $Y = P \odot P$.
    \begin{equation*}
        \includegraphics[valign=c,scale=0.82]{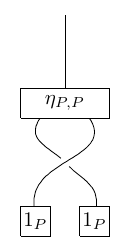}
        =
        \includegraphics[valign=c,scale=0.82]{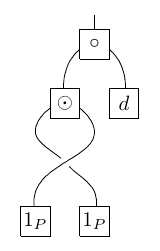}
        =
        \includegraphics[valign=c,scale=0.82]{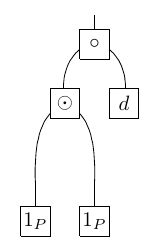}
        =
        \includegraphics[valign=c,scale=0.82]{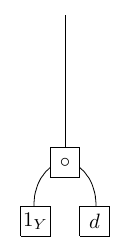}
        =
        d
    \end{equation*}
    Since $( F, \eta, \epsilon )$ is a lax braided monoidal functor, then the following equation holds.
    \begin{equation*}
        \includegraphics[valign=c,scale=0.84]{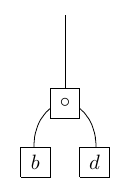}
        =
        \includegraphics[valign=c,scale=0.84]{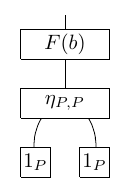}
        =
        \includegraphics[valign=c,scale=0.84]{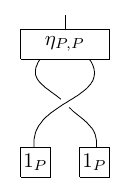}
        =
        d
    \end{equation*}
    Then $d$ is cocommutative.
    In other words, $( P, d, \epsilon )$ is a cocommutative comonoid.
\end{proof}

The remainder of this section will consider the natural transformations between the parameterization functors.
Assume that $\mathcal{V}^{\mathcal{W}}$ is a $\mathcal{W}$-enriched monoidal category, with $M$ and $N$ comonoid objects in $\mathcal{V}$.
It follows from~\cref{Thm:Reparam} that every natural transformation $\eta: F \Rightarrow G$ with $F = \Omega_{(\mathcal{V},\mathcal{W})}( M )$ and $G = \Omega_{(\mathcal{V},\mathcal{W})}( N )$ will correspond to pre-composition by some comonoid homomorphism $\tau \in \mathbf{Comon}( \mathcal{V} )^{\text{op}}( M, N )$.
Given a $\mathcal{V}$-enriched category $\mathcal{C}^{\mathcal{V}}$, the natural transformation $\eta$ will induce a functor $\eta_*( \mathcal{C}^{\mathcal{V}} ): F_*( \mathcal{C}^{\mathcal{V}} ) \to G_*( \mathcal{C}^{\mathcal{V}} )$ which acts by pre-composing each parameterized morphism $f \in \mathcal{V}( M, \mathcal{C}^{\mathcal{V}}( X, Y ) )$ with $\tau: N \to M$.
Clearly, these natural transformations capture some notion of reparameterization.
As an illustration, \cref{Ex:Reparam} shows how the correct choice of comonoid homomorphism can encode the fact that the $Z$-rotations from~\cref{Sect:Motivation} are periodic functions.

\begin{exa}[Z-Rotations Are Periodic]
    \label{Ex:Reparam}
    An abstract rotation type~\cite{Wesley2024} is analogous to an abstract data type, in that an abstract rotation type is a symbolic rotation gate $R( - )$ together with a list of abstract specifications it satisfies.
    One specification discussed in~\cite{Wesley2024} was $p$-periodicity, which states that that $R( \theta ) = R( \theta + p )$ for all $\theta \in \mathbb{R}$.
    The notion of $p$-periodicity can be understood topologically as saying that $R( \theta )$ factors through the circle.
    For concreteness, the circle $S = [0,2\pi] / \{ 0 = 2\pi \}$ will be considered with canonical projection $\tau: \mathbb{R} \to S$.
    It follows that a continuous function of type $f: \mathbb{R} \to \mathbb{C}\mathbf{Vect}( n, m )$ is $(2\pi)$-period if and only if there exists a continuous function of type $g: S \to \mathbb{C}\mathbf{Vect}( n, m )$ satisfying $f = g \circ \tau$.
    Since \textbf{Top} is Cartesian monoidal, then $S$ admits a unique comonoid structure $M$.
    If $F = \Omega_{(\textbf{Top},\textbf{Set})}( M )$, then $\mathcal{D} = F_*( \mathbb{C}\textbf{FVect} )$ denotes the category of ($2\pi$)-periodic rotations.
    Since $R_Z( - )$ is $(2\pi)$-periodic, then there exists a morphism $R( - ) \in \mathcal{D}( 2, 2 )$ such that $R_Z( \theta ) = R( \tau( \theta ) )$.
    Moreover, $\eta := \Omega_{(\textbf{Top},\textbf{Set})}( \tau )$ induces a functor $\eta_*( \mathbb{C}\mathbf{FVect} )$ which would send $R( - )$ to $R_Z( - )$.
    In other words, $\eta_*( \mathbb{C}\mathbf{FVect} )$ picks out $(2\pi)$-periodic rotations.
\end{exa}

Two special examples of reparameterization are the inclusion of constant morphisms from an underlying category into a category of parameterized morphisms, and the evaluation of a parameterized morphism at a given parameter.
For a choice of comonoid $M$ with $F = \Omega_{(\mathcal{V},\mathcal{W})}( M )$, the functor which picks out constant morphisms $j: \mathcal{C}^{\mathcal{V}:\mathcal{W}} \to F_*( \mathcal{C}^{\mathcal{V}} )$ should be induced by a natural transformation $\eta: \Omega_{(\mathcal{V},\mathcal{W})}( T ) \rightarrow F$.
An obvious choice for $\eta$ is pre-composition by the counit of $M$, as outlined in the motivating example.
It is shown in \cref{Cor:UniqueConstants} that this choice of $\eta$ is unique provided that $\mathcal{V}$ is locally small and the monoidal unit in $\mathcal{W}$ is a separator.
In the case where $\mathcal{W} = \mathbf{Set}$, this means that the \emph{constant} morphisms in $F_*( \mathcal{C}^{\mathcal{V}} )$ are precisely those obtained from $\mathcal{C}$ via pre-composition with the counit of $M$.
Dually, evaluation of a morphism in $F_*( \mathcal{C}^{\mathcal{V}} )$ at a point in $M$ should be given by a comonoid homomorphism $\theta \in \mathbf{Comon}^{\text{op}}( M, T )$.
As one would expect, if $M$ is a comonoid on $P$ then $\theta$ is a generalized element of $P$, since the morphisms in $\mathbf{Comon}^{\text{op}}( M, T )$ are precisely the morphisms in $\mathcal{V}( \mathbf{I}, P )$ which respect the comonoid structure of $M$.
The $\mathcal{W}$-functors these induce under $\Omega_{(\mathcal{V},\mathcal{W})}$ are the evaluation functors $\ev_\theta: F_*( \mathcal{C}^{\mathcal{V}} ) \to \mathcal{C}^{\mathcal{V}:\mathcal{W}}$.
When $\mathbf{Comon}^{\text{op}}( M, T )$ is inhabited by a morphism $\theta$, then $\ev_\theta$ is always a retraction to $j$ as proven in~\cref{Thm:Retract}.
However, $\mathbf{Comon}^{\text{op}}( M, T )$ need not be inhabited by any generalized elements, as illustrated in \cref{Ex:NoParams}.

\begin{cor}
    \label{Cor:UniqueConstants}
    Let $T = ( \mathbb{I}, \lambda_{\mathbb{I}}^{-1}, 1_{\mathbb{I}} )$ and $M = ( P, d, e ) \in \mathbf{Comon}( \mathcal{V} )_0$.
    \begin{enumerate}
    \item $e \in \mathbf{Comon}( \mathcal{V} )^{\text{op}}( T, M )$.
    \item If $M$ is cocommutative, then $e \in \mathbf{CComon}( \mathcal{V} )^{\text{op}}( T, M )$.
    \item If $\mathcal{V}$ is locally small and the monoidal unit in $\mathcal{W}$ is a separator, then $\Omega_{(\mathcal{V},\mathcal{W})}( e )$ is the unique natural transformation from $\Omega_{(\mathcal{V},\mathcal{W})}( T )$ to $\Omega_{(\mathcal{V},\mathcal{W})}( M )$
    \end{enumerate}
\end{cor}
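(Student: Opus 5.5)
The plan is to verify (1) and (2) directly from the comonoid axioms, and then obtain (3) from \cref{Cor:FullFunctor} together with a uniqueness argument in $\mathbf{Comon}(\mathcal{V})$.

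For (1), a morphism in $\mathbf{Comon}(\mathcal{V})^{\text{op}}(T, M)$ is a comonoid homomorphism $M \to T$ in $\mathcal{V}$, that is, a map $P \to \mathbb{I}$. We take $e$ itself and check two equations.
\begin{itemize}
\item \textbf{Counit.} Preservation of counits reads $1_{\mathbb{I}} \circ e = e$, which is trivial.
\item \textbf{Comultiplication.} Preservation of comultiplication reads $\lambda_{\mathbb{I}}^{-1} \circ e = (e \odot e) \circ d$. By counitality, $(e \odot 1_P) \circ d = \lambda_P^{-1}$. Post-composing with $1_{\mathbb{I}} \odot e$ and using naturality of $\lambda^{-1}$ gives
\begin{equation*}
(e \odot e) \circ d = (1_{\mathbb{I}} \odot e) \circ \lambda_P^{-1} = \lambda_{\mathbb{I}}^{-1} \circ e .
\end{equation*}
\end{itemize}
For (2), $\mathbf{CComon}(\mathcal{V})$ is a full subcategory of $\mathbf{Comon}(\mathcal{V})$, and $T$ is cocommutative because $\beta_{\mathbb{I},\mathbb{I}} \circ \lambda_{\mathbb{I}}^{-1} = \lambda_{\mathbb{I}}^{-1}$ (the braiding on the unit is trivial up to the unitors, using $\lambda_{\mathbb{I}} = \rho_{\mathbb{I}}$). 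Hence (2) follows immediately from (1).

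For (3), let $\rho \colon \Omega_{(\mathcal{V},\mathcal{W})}(T) \Rightarrow \Omega_{(\mathcal{V},\mathcal{W})}(M)$ be a monoidal natural transformation. By \cref{Cor:FullFunctor}, $\rho = \Omega_{(\mathcal{V},\mathcal{W})}(f)$ for some comonoid homomorphism $f \colon M \to T$ in $\mathcal{V}$, so $f \colon P \to \mathbb{I}$. Preservation of counits forces $1_{\mathbb{I}} \circ f = e$, so $f = e$. Therefore $\rho = \Omega_{(\mathcal{V},\mathcal{W})}(e)$, and by (1) this natural transformation exists. In other words, $T$ is terminal in $\mathbf{Comon}(\mathcal{V})$, and full faithfulness transports this to uniqueness among monoidal natural transformations.

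The main subtlety is the phrase ``natural transformation'' in (3). Uniqueness must be read in $\mathbf{MonCat}(\mathcal{V},\mathcal{W})$, i.e.\ among monoidal natural transformations, which is the setting of \cref{Cor:FullFunctor}; for bare natural transformations the counit condition would not be available. I would make this reading explicit. The only other care needed is checking the unitor naturality step in (1) at the level of $\mathcal{V}$.
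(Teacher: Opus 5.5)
Your proposal is correct and follows the paper's proof essentially step for step. You prove (1) by checking preservation of the counit and comultiplication, (2) from the cocommutativity of $T$, and (3) by invoking \cref{Cor:FullFunctor} and observing that counit preservation forces $f = 1_{\mathbb{I}} \circ f = e$. You also supply details the paper leaves implicit, and they are sound: the unitor-naturality derivation of $(e \odot e) \circ d = \lambda_{\mathbb{I}}^{-1} \circ e$, the check $\beta_{\mathbb{I},\mathbb{I}} \circ \lambda_{\mathbb{I}}^{-1} = \lambda_{\mathbb{I}}^{-1}$, and the point that uniqueness is meant among monoidal natural transformations, which is exactly the setting of \cref{Cor:FullFunctor}.
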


\begin{proof}
    The proof proceeds as follows.
    \begin{enumerate}
    \item Since $( P, d, e )$ is a comonoid, then $( e \otimes e ) \circ d = \lambda_{\mathbb{I}}^{-1} \circ e$.
          Then $e$ maps the comultiplication of $M$ to the comultiplication of $T$.
          Also, $1_\mathbb{I} \circ e = e$ by definition.
          Then $e$ maps the counit of $M$ to the counit of $T$ as well.
          Then $e \in \mathbf{Comon}( \mathcal{V} )( M, T ) = \mathbf{Comon}( \mathcal{V} )^{\text{op}}( T, M )$.
    \item Assume that $M$ is a cocommutative comonoid.
          Since $T$ is also a cocommutative comonoid, then $e \in \mathbf{Comon}( \mathcal{V} )( M, T ) = \mathbf{CComon}( \mathcal{V} )( M, T ) = \mathbf{CComon}( \mathcal{V} )^{\text{op}}( T, M )$.
    \item Assume that $\mathcal{V}$ is locally small and the monoidal unit in $\mathcal{W}$ is a separator.
          Then $\Omega_{(\mathcal{V},\mathcal{W})}$ is a full functor by~\cref{Cor:FullFunctor}.
          Let $\rho: \Omega_{(\mathcal{V},\mathcal{W})}( T ) \Rightarrow \Omega_{(\mathcal{V},\mathcal{W})}( M )$.
          Since $\Omega_{(\mathcal{V},\mathcal{W})}$ is full, then there exists an $f \in \mathbf{Comon}( \mathcal{V} )^{\text{op}}( T, M ) = \mathbf{Comon}( \mathcal{V} )( M, T )$ such that $\rho = \Omega_{(\mathcal{V},\mathcal{W})}( f )$.
          Since $f$ is a comonoid homomorphism, then $f$ maps the counit of $M$ to the counit of $T$.
          This means that $f = 1_\mathbb{I} \circ f = e$, which implies that $\rho = \Omega_{(\mathcal{V},\mathcal{W})}( e )$.
          Since $\rho$ was arbitrary, then $\Omega_{(\mathcal{V},\mathcal{W})}( e )$ is the unique natural transformation from $\Omega_{(\mathcal{V},\mathcal{W})}( T )$ to $\Omega_{(\mathcal{V},\mathcal{W})}( M )$.
          \endproof
    \end{enumerate}
\end{proof}

\begin{thm}
    \label{Thm:Retract}
    Let $M \in \mathbf{Comon}( \mathcal{V} )_0$ and $F = \Omega_{(\mathcal{V},\mathcal{W})}( M )$.
    The following hold for each generalized element $\theta \in \mathbf{Comon}( \mathcal{V} )^{\text{op}}( M, T )$.
    \begin{enumerate}
    \item The $\mathcal{W}$-enriched functor $\ev_\theta: F_*( \mathcal{C}^{\mathcal{V}} ) \to \mathcal{C}^{\mathcal{V}:\mathcal{W}}$ is a retraction to $j: \mathcal{C}^{\mathcal{V}:\mathcal{W}} \to F_*( \mathcal{C}^{\mathcal{V}} )$.
    \item If $\mathcal{V}^{\mathcal{W}}$ admits a $\mathcal{W}$-enriched braiding and $\mathcal{C}^{\mathcal{V}}$ is $\mathcal{V}$-enriched monoidal, then the $\mathcal{W}$-enriched monoidal functor $\ev_\theta: F_*( \mathcal{C}^{\mathcal{V}} ) \to \mathcal{C}^{\mathcal{V}:\mathcal{W}}$ is a retraction to $j: \mathcal{C}^{\mathcal{V}:\mathcal{W}} \to F_*( \mathcal{C}^{\mathcal{V}} )$.
    \item If $\mathcal{V}^{\mathcal{W}}$ admits a $\mathcal{W}$-enriched symmetry, $\mathcal{C}^{\mathcal{V}}$ is $\mathcal{V}$-enriched braided monoidal, and $M$ is cocommutative, then the $\mathcal{W}$-enriched braided monoidal functor $\ev_\theta: F_*( \mathcal{C}^{\mathcal{V}} ) \to \mathcal{C}^{\mathcal{V}:\mathcal{W}}$ is a retraction to $j: \mathcal{C}^{\mathcal{V}:\mathcal{W}} \to F_*( \mathcal{C}^{\mathcal{V}} )$.
    \end{enumerate}
\end{thm}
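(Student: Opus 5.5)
The plan is to reduce everything to functoriality of $\Omega_{(\mathcal{V},\mathcal{W})}$ (\cref{Thm:Reparam}) together with 2-functoriality of base change. Write $e$ for the counit of $M = ( P, d, e )$. By \cref{Cor:UniqueConstants}, $e \in \mathbf{Comon}( \mathcal{V} )^{\text{op}}( T, M )$, and we take $j := \Omega_{(\mathcal{V},\mathcal{W})}( e )_*( \mathcal{C}^{\mathcal{V}} )$. Likewise $\ev_\theta := \Omega_{(\mathcal{V},\mathcal{W})}( \theta )_*( \mathcal{C}^{\mathcal{V}} )$, where $\theta \in \mathbf{Comon}( \mathcal{V} )^{\text{op}}( M, T ) = \mathbf{Comon}( \mathcal{V} )( T, M )$, so that $\theta: \mathbb{I} \to P$ is a comonoid homomorphism. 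The composite $\ev_\theta \circ j$ is induced by the composite natural transformation $\Omega_{(\mathcal{V},\mathcal{W})}( \theta ) \circ \Omega_{(\mathcal{V},\mathcal{W})}( e ) = \Omega_{(\mathcal{V},\mathcal{W})}( \theta \circ^{\text{op}} e )$, and the corresponding morphism in $\mathcal{V}$ is $e \circ \theta : \mathbb{I} \to \mathbb{I}$.

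The key step is the observation that $e \circ \theta = 1_{\mathbb{I}}$. This is immediate because $\theta$ is a comonoid homomorphism from $T$ to $M$, so it preserves counits: $e \circ \theta = 1_{\mathbb{I}}$, the counit of $T$. Equivalently, $e \circ \theta$ is an endomorphism of $T$ in $\mathbf{Comon}( \mathcal{V} )$. Hence $\Omega_{(\mathcal{V},\mathcal{W})}( e \circ \theta ) = \Omega_{(\mathcal{V},\mathcal{W})}( 1_T )$ is the identity monoidal transformation on $\Omega_{(\mathcal{V},\mathcal{W})}( T )$, since $\Omega_{(\mathcal{V},\mathcal{W})}$ preserves identities by \cref{Thm:Reparam}.

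Next we invoke the fact that base change $( - )_*$ is a 2-functor from lax monoidal functors and monoidal transformations to $\mathcal{W}\mathbf{Cat}$-valued functors and $\mathcal{W}$-functors. In particular, the $\mathcal{W}$-functor induced by a composite of monoidal transformations is the composite of the induced $\mathcal{W}$-functors, and identities go to identities. Concretely, on hom-objects the induced functor acts by $\mathcal{V}^{\mathcal{W}}( \tau, \mathcal{C}^{\mathcal{V}}( X, Y ) )$ and is the identity on objects. So $\ev_\theta \circ j$ acts on hom-objects by $\mathcal{V}^{\mathcal{W}}( e \circ \theta, - ) = \mathcal{V}^{\mathcal{W}}( 1_{\mathbb{I}}, - ) = 1$, which gives $\ev_\theta \circ j = 1_{\mathcal{C}^{\mathcal{V}:\mathcal{W}}}$ and proves part (1).

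For parts (2) and (3), the same argument is run in the monoidal and braided settings. First, base change along a braided lax monoidal functor sends $\mathcal{V}$-monoidal (resp. braided) categories to $\mathcal{W}$-monoidal (resp. braided) ones, as recalled in the background. Monoidal transformations between braided lax monoidal functors then induce monoidal (resp. braided) $\mathcal{W}$-functors, because the induced pseudomonoid maps are strict: they act as the identity on objects and by precomposition on the coherence data $\alpha \circ e$, and so on. In case (3), \cref{Thm:Reparam} and \cref{Cor:UniqueConstants}(2) place $e$ and $\theta$ in $\mathbf{CComon}$, since $T$ is cocommutative, so the functors land in $\mathbf{BrCat}( \mathcal{V}, \mathcal{W} )$. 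The retraction equation is then the same computation as before. The main obstacle I expect is verifying that $j$ and $\ev_\theta$ are genuinely strict monoidal and braided $\mathcal{W}$-functors, that is, that 2-functoriality of base change extends to pseudomonoids. I would handle this by checking directly that precomposition by a counit-preserving comonoid homomorphism carries $\otimes^{\mathcal{V}} \circ ( f \odot g ) \circ d$ to the corresponding expression for the other comonoid, using \cref{Eqn:NatTxLemma1} and \cref{Eqn:NatTxLemma2}.
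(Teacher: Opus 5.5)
Your proposal is correct and follows essentially the same route as the paper. Both define $j$ and $\ev_\theta$ as the base changes along $\Omega_{(\mathcal{V},\mathcal{W})}( e )$ and $\Omega_{(\mathcal{V},\mathcal{W})}( \theta )$, use counit preservation to get $e \circ \theta = 1_{\mathbb{I}}$, and conclude $\ev_\theta \circ j = 1$ from functoriality of $\Omega_{(\mathcal{V},\mathcal{W})}$ and of base change. Parts (2) and (3) are then handled ``by change of base'', with \cref{Cor:UniqueConstants} part (2) covering the cocommutative case; your extra check that the induced functors are strict monoidal and braided is a more explicit version of what the paper asserts without detail.
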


\begin{proof}
    Let $\theta \in \mathbf{Comon}( \mathcal{V} )^{\text{op}}( M, T ) = \mathbf{Comon}( \mathcal{V} )( T, M )$.
    Then $e \circ \theta \in \mathbf{Comon}( \mathcal{V} )( T, T )$ by \cref{Cor:UniqueConstants} part (1).
    Since $1_{\mathbb{I}}$ is the counit of $T$, then $1_{\mathbb{I}} \circ ( e \circ \theta ) = 1_{\mathbb{I}}$.
    Since $\Omega_{(\mathcal{V},\mathcal{W})}$ is contravariant, then $\Omega_{(\mathcal{V},\mathcal{W})}( \theta ) \circ \Omega_{(\mathcal{V},\mathcal{W})}( e ) = \Omega_{(\mathcal{V},\mathcal{W})}( 1_{\mathbb{I}} )$.
    There are three cases to consider.
    \begin{enumerate}
    \item Let $\mathcal{C}^{\mathcal{V}}$ be a $\mathcal{V}$-enriched category.
          Then $j = \left( \Omega_{(\mathcal{V},\mathcal{W})}( e ) \right)_*( \mathcal{C}^{\mathcal{V}} ): \mathcal{C}^{\mathcal{V}:\mathcal{W}} \rightarrow F_*( \mathcal{V}^{\mathcal{W}} )$ and $\ev_\theta = \left( \Omega_{(\mathcal{V},\mathcal{W})}( \theta ) \right)_*( \mathcal{C}^{\mathcal{V}} ): F_*( \mathcal{C}^{\mathcal{V}} ) \to \mathcal{C}^{\mathcal{V}:\mathcal{W}}$ are both $\mathcal{W}$-enriched functors by change of base.
          It follows that $\ev_\theta \circ j = \left( \Omega_{(\mathcal{V},\mathcal{W})}( 1_{\mathbf{I}} ) \right)_*( \mathcal{C}^{\mathcal{V}} ) = 1$.
          Therefore $\ev_\theta$ is a retraction to $j$.
    \item Further assume that $\mathcal{V}^{\mathcal{W}}$ admits a $\mathcal{W}$-enriched braiding and $\mathcal{C}^{\mathcal{V}}$ is a $\mathcal{V}$-enriched monoidal.
          Then $j$ and $\ev_\theta$ are $\mathcal{W}$-enriched monoidal functors by change of base.
    \item Further assume that $M$ is cocommutative, the $\mathcal{W}$-enriched braiding is invertible, and $\theta \in \mathbf{CComon}( \mathcal{V} )^{\text{op}}( M, T )$.
          Then by \cref{Cor:UniqueConstants} part (2), $e \in \mathbf{CComon}( \mathcal{V} )^{\text{op}}( T, M )$.
          Then $\mathcal{C}^{\mathcal{V}}$ may be a $\mathcal{V}$-enriched braided monoidal category, in which case $j$ and $\ev_\theta$ are $\mathcal{W}$-enriched braided monoidal functors by change of base.
          \endproof
    \end{enumerate}
\end{proof}

\begin{exa}[Comonoid Without Parameters]
    \label{Ex:NoParams}
    Recall that every meet-semilattice defines a Cartesian monoidal category.
    For example, the category $\mathbbm{2}$ is trivially a meet-semilattice, with enrichment over $\mathbbm{2}$ giving rise to posetal categories.
    However, enrichment over any meet-semilattice $\mathcal{V}$ will always yield an underlying posetal category $\mathcal{C}$ such that the hom-set $\mathcal{C}( X, Y )$ is inhabited if and only if $\mathcal{C}^{\mathcal{V}}( X, Y ) = \top$.
    It was shown in~\cite{Wesley2025} that parameterization could be used to weaken the requirement that $\mathcal{C}^{\mathcal{V}}( X, Y ) = \top$.
    In short, each object $P \in \mathcal{V}_0$ defines a posetal category $\mathcal{D}$ such that $\mathcal{D}( X, Y )$ is inhabited if and only if $\mathcal{C}^{\mathcal{V}}( X, Y ) \ge P$.
    Since $\mathcal{V}$ is Cartesian monoidal, then the category $\mathcal{D}$ is obtained via parameterization with the canonical comonoid structure $M$ on $P$.
    It is easy to show that the functor $j: \mathcal{C} \to \mathcal{D}$ is faithful.
    In spite of this, $\mathbf{Comon}^{\text{op}}( M, T )$ is not inhabited.
    This is because $P < \top$, which implies that $\mathcal{V}( \top, P ) = \varnothing$.
    Consequently, $j$ does not admit a retraction of the form $\ev_\theta$.
\end{exa}

To conclude this section, two examples from quantum information will be described in which $\mathcal{W} \ne \mathbf{Set}$ and $\mathcal{V}$ is not Cartesian monoidal.
These examples serve to justify the level of generality at which \cref{Def:Params} is stated.
It will be shown in \cref{Sect:CartMon} and \cref{Sect:Closed} how this general construction can be used to recover non-trivial results from quantum information, with an emphasis on control functors.

\begin{exa}[Generalized Controls]
    \label{Ex:GenCtrl}
    Let $k = 2^n$ for some $n \in \mathbb{N}$.
    Then the unit vectors in $P = \mathbb{C}^k$ form the state space of an $n$-qubit quantum system.
    Let $\{ \ket{0}, \ket{1}, \ldots, \ket{k - 1} \}$ denote the computational basis states in $P$.
    By~\cref{Prop:BasisComonoid}, there exists a unique commutative special $\dagger$-Frobenius structure $A = ( P, d^\dagger, d, e^\dagger, e )$ whose copyable states are precisely the basis vectors $\{ \ket{0}, \ket{1}, \ldots, \ket{k - 1} \}$.
    Recall that $d: \ket{j} \to \ket{j} \otimes \ket{j}$ and $e: \ket{j} \to 1$ for each $j \in \{ 0, 1, \ldots, k - 1 \}$.
    Since $A$ is commutative, then $d$ is necessarily cocommutative.
    Since $\mathcal{V} = \mathbb{C}\mathbf{FVect}$ is $\mathcal{V}$-enriched, then there exists a functor $F = \Omega_{\mathcal{V},\mathcal{V}}( M )$ which induces a new category $\mathcal{C}^{\mathcal{V}} = F_*( \mathcal{V}^{\mathcal{V}} )$.
    It can be shown that the morphisms in this category correspond to the generalized $n$-qubit control operations used in Quipper.
    To see this correspondence, first notice that if $V$ and $W$ are finite-dimensional vector spaces, then the hom-object $\mathcal{C}^{\mathcal{V}}( V, W )$ is defined to be the vector space $\mathcal{V}( P, [ V, W ] )$.
    If $f \in \mathcal{C}^{\mathcal{V}}( V, W )$, then by linearity there exists a family of matrices $\{ f_0, f_1, \ldots, f_{k-1} \} \in [ V, W ]$ such that $f: \ket{j} \mapsto f_j$.
    Then by the hom-tensor adjunction, $f$ corresponds to a map $g: P \otimes V \to W$ such that $g: \ket{j} \otimes v \mapsto f_j( v )$ for all $j \in \{ 0, 1, \ldots, k - 1 \}$.
    At first glance, the map $g$ appears to have a variation of generalized controls, with the exception that the control qubits are consumed by $g$.
    However, precomposing with $d \otimes 1_V$ yields a new morphism $h: P \otimes V \to P \otimes W$ such that $h: \ket{j} \otimes v \to \ket{j} \otimes f_j( v )$ for each $j \in \{ 0, 1, \ldots, k - 1 \}$.
    Note that since $\mathcal{V}$ is in fact compact closed, then $h$ corresponds to the following string diagram in $\mathcal{V}$.
    \begin{equation*}
        h
        =
        \includegraphics[valign=c,scale=0.84]{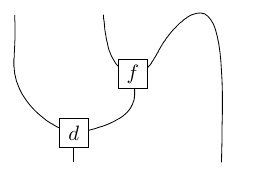}
    \end{equation*}
    Since $h$ has generalized controls, then it should follow that $h$ is an $A$-module homomorphism from the free $A$-module $P \otimes U$ to the free $A$-module $P \otimes V$.
    This follows from the Frobenius law and bifunctoriality, as illustrated in the following equation of diagrams in $\mathcal{V}$.
    \begin{equation*}
        \includegraphics[valign=c,scale=0.84]{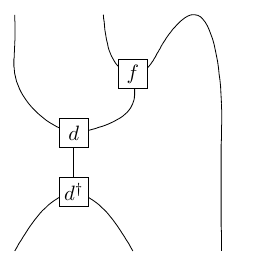}
        =
        \includegraphics[valign=c,scale=0.84]{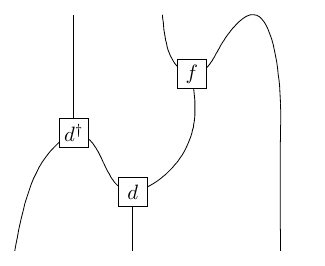}
    \end{equation*}
    Moreover, the composition $( \star )$ in $\mathcal{C}^{\mathcal{V}}$ is the expected composition of gates with generalized controls.
    For example, if $s \in \mathcal{C}^{\mathcal{V}}( W, L )$, then $s \star f: \ket{j} \mapsto s_j \circ f_j$ for each $j \in \{ 0, 1, \ldots, k - 1 \}$, as expected.
    Before concluding this example, it should be noted that $h$ is a controlled operation even if $\mathcal{V} \ne \mathbb{C}\mathbf{FVect}$.
    Indeed, $\mathcal{V}$ need not even be a linear category.
    Note that the construction of $h$ from $f$ can be performed in any compact closed category $\mathcal{V}$, and that the proof that $h$ is an $A$-module homomorphism holds given any Frobenius structure $A$.
    In the case where $\mathcal{V}$ is a dagger-compact closed category and $A$ is commutative special $\dagger$-Frobenius structure, then the $A$-module homomorphism $h$ defines a controlled operation in $\mathcal{V}$.
\end{exa}

\begin{exa}[Shared Entanglement]
    Following the setup of the previous example, let $A$ be the unique commutative special $\dagger$-Frobenius structure $A$ on associated with the computational basis $\{ \ket{0}, \ket{1} \}$.
    As shown in~\cref{Fig:Frob}, the Frobenius structures $A$ on $\mathbb{C}^2$ induces an adjunction between $\mathbb{C}^2$ and itself, with the following unit and counit.
    \begin{align*}
        \eta &:=
        \includegraphics[valign=c,scale=0.84]{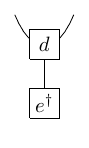}
        : 1 \mapsto \ket{00} + \ket{11}
        &
        \epsilon &:=
        \includegraphics[valign=c,scale=0.84]{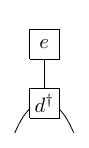}
        : \ket{ij} \mapsto \begin{cases}
            1 & i = j \\
            0 & \text{otherwise}
        \end{cases}
    \end{align*}
    Recall from~\cref{Sect:Background}, that the unit of this adjunction induces a comonoid $M = ( P, \delta, \epsilon )$ on $P = \mathbb{C}^2 \otimes \mathbb{C}^2$, commonly referred to as the pair-of-pants comonoid.
    Up to rescaling, the comultiplication $d$ sends $P$ to $P \otimes P$ by injecting a Bell-pair between the two qubits in $P$.
    When parameterizing with respect to $M$, these injected Bell-pairs recover a simple model of communicating quantum processes.
    Since $\mathcal{V} = \mathbb{C}\mathbf{FVect}$ is $\mathcal{V}$-enriched, then there exists a functor $F = \Omega_{\mathcal{V},\mathcal{V}}( M )$ which induces a new category $\mathcal{C}^{\mathcal{V}} = F_*( \mathcal{V}^{\mathcal{V}} )$.
    If $V$ and $W$ are finite-dimension vector spaces, then the hom-object $\mathcal{C}^{\mathcal{V}}( V, W )$ is defined to be the vector space $\mathcal{V}( P, [ V, W ] )$.
    If $f \in \mathcal{C}^{\mathcal{V}}( V, W )$, then by the hom-tensor adjunction, $f$ corresponds to a map of the shape $P \otimes V \to W$.
    Intuitively, $f$ is a quantum process which maps inputs of type $V$ to outputs of type $W$, while using a pair of qubits as a resource.
    When composing $f$ with another morphism $g \in \mathcal{C}^{\mathcal{V}}( W, L )$, it becomes clear that the resource qubits are used to inject Bell-pairs.
    This is illustrated in~\cref{Fig:Network:Comp}.

    \begin{figure}[t]
        \begin{subfigure}[b]{0.26\textwidth}
            \centering
            \includegraphics[valign=c,scale=0.84]{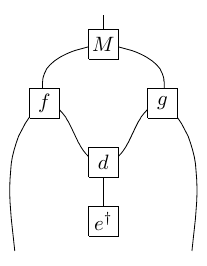}
            \caption{The composite $f \star g$.}
            \label{Fig:Network:Comp}
        \end{subfigure}
        \begin{subfigure}[b]{0.36\textwidth}
            \centering
            \includegraphics[valign=c,scale=0.84]{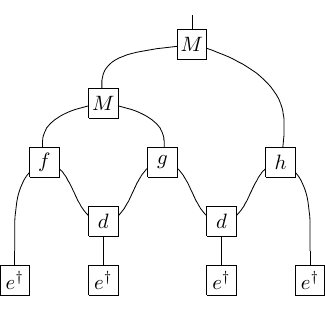}
            \caption{The line topology for $( f, g, h )$.}
            \label{Fig:Network:Line}
        \end{subfigure}
        \begin{subfigure}[b]{0.36\textwidth}
            \centering
            \includegraphics[valign=c,scale=0.84]{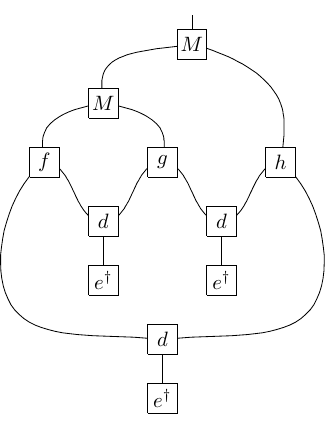}
            \caption{The ring topology for $( f, g, h )$.}
            \label{Fig:Network:Ring}
        \end{subfigure}
        \caption{Network topologies obtained from the pair-of-pants comonoids.}
        \label{Fig:Network}
    \end{figure}

    After composing two morphisms in $\mathcal{C}^{\mathcal{V}}$, a new morphism is obtained which also consumes two resource qubits.
    However, an executable process should be a element of $[ V, L ]$ as opposed to a morphism $\mathcal{V}( P, [ V, L ] )$.
    There are two canonical ways to obtain a generalized element of $[ V, L ]$ from a morphism in $\mathcal{V}( P, [ V, L ] )$.
    The first choice, denoted \texttt{line}, is to precompose with $e^\dagger \otimes e^\dagger$.
    This corresponds to injecting a uniform superposition in place of the two resource qubits, and can be thought of as composing the processes with a linear network topology.
    The second choice, denoted \texttt{ring} is to precompose with $\eta$.
    This corresponds to injecting a Bell-pair between the first and last process, and can be thought of as composing processes with a circular network topology.
    Both choices are illustrated in~\cref{Fig:Network}, for a network with three parallel processes $( f, g, h )$.

    It should be noted that $( P, \delta, \epsilon )$ is not a commutative comonoid.
    Intuitively, applying $\sigma_{P,P}$ to the output of $d$ moves the injected Bell-state from the inner-most resource qubits to the outer-most resource qubits.
    Since the morphisms in $\mathcal{V}( P, [ V, W ] )$ need not be symmetric with respect to their resource qubits, then there is no reason to expect $\mathcal{C}^{\mathcal{V}}$ to inherit the symmetry from $\mathcal{V}$.
    This is illustrated by the following equation of string diagrams.
    \begin{equation*}
        \sigma_{P,P} \circ \delta
        =
        \includegraphics[valign=c,scale=0.84]{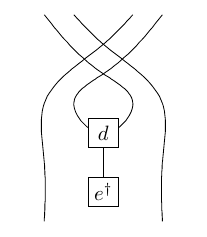}
        =
        \includegraphics[valign=c,scale=0.84]{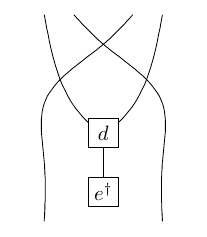}
        \ne
        \includegraphics[valign=c,scale=0.84]{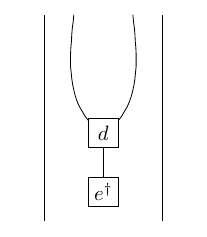}
        =
        \delta
    \end{equation*}
    To generalize beyond $\mathbb{C}\mathbf{FVect}$, assume that $X^*$ is the right-dual of $X \in \mathcal{V}_0$ with unit $\eta$ and counit $\epsilon$ in a category $\mathcal{C}^{\mathcal{V}}$.
    Then the pair-of-pants comonoid induces a parameterized category $\mathcal{C}^{\mathcal{V}}$ as described above.
    In this category, a ring network topology can be achieved without any further assumptions.
    Moreover, if $X = X^*$ and $( \eta, \epsilon )$ are induced by some Frobenius algebra $A$ on $X$, then a linear network topology is also achievable in some canonical way.
    Of course, if $A$ is not $\dagger$-Frobenius, then the unit of the algebra may not be equal to $e^\dagger$.
\end{exa}

%% file: sections/cart.tex
\section{Cartesian Monoidal Categories: A Nice Context for Parameterization}
\label{Sect:CartMon}
The goal of this section is to understand nice contexts for parameterization, such as the examples studied in \cref{Sect:Motivation}.
To this end, let $\mathcal{V}^{\mathcal{W}}$ be a braided monoidal category.
If $\mathcal{V}$ is a nice context for parameterization, then it would be reasonable to assume that each object $P \in \mathcal{V}_0$ can be used as the parameter object in at least one way.
This means that there must exists at least one comonoid $M_P = ( P, d_p, e_P )$ in $\mathcal{V}$ from which to obtain a parameterization $\Omega_{(\mathcal{V},\mathcal{W})}( M_P )$.
The comonoid associated with $\mathbb{I}$ should be $( \mathbb{I}, \lambda^{-1}, \mathbb{I} )$, since this is the comonoid used to obtain the underlying $\mathcal{W}$-categories.
It would also be nice if this choice of comonoids respected the monoidal structure on $\mathcal{V}$, in the sense that $M_{P \otimes Q} = M_P \otimes M_Q$.
This makes sense since $\mathcal{V}$ is braided monoidal, from which a monoidal structure can be induced on $\mathbf{Comon}( \mathcal{V} )$.
Now, consider the problem of finding a reparameterization from $\Omega_{(\mathcal{V},\mathcal{W})}( M_Q )$ to $\Omega_{(\mathcal{V},\mathcal{W})}( M_P )$.
In general, this is hard since not every morphism $f \in \mathcal{V}( P, Q )$ lifts to comonoid homomorphism from $M_P$ to $M_Q$.
In a nice context for reparameterization, such a lifting would always exist.
These assumptions can be summarized by requiring the existence of a monoidal section to the forgetful functor from $\mathbf{Comon}( \mathcal{V} )$ to $\mathcal{V}$.

\begin{defi}
    A braided monoidal category $\mathcal{V}$ is a \emph{nice context for parameterization} if there exists a strict monoidal section to the forgetful functor $U : \mathbf{Comon}( \mathcal{V} ) \to \mathcal{V}$.
\end{defi}

\begin{lem}
    \label{Lem:NiceParam}
    A braided monoidal category $\mathcal{V}$ is a nice context for parameterization if and only if the following conditions hold.
    \begin{enumerate}
    \item Each object $P \in \mathcal{V}_0$ admits a comonoid $M_P$.
    \item If $P \in \mathcal{V}_0$ and $Q \in \mathcal{V}_0$, then $M_{P \otimes Q} = M_P \otimes M_Q$.
    \item If $f \in \mathcal{V}( P, Q )$, then $f$ is a comonoid homomorphism from $M_P$ to $M_Q$.
    \item $M_{\mathbb{I}} = ( \mathbb{I}, \lambda^{-1}_{\mathbb{I}}, 1_{\mathbb{I}} )$.
    \end{enumerate}
\end{lem}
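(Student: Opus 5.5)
The proof unpacks the definition of a strict monoidal section $S : \mathcal{V} \to \mathbf{Comon}( \mathcal{V} )$ of the forgetful functor $U$, and shows that its data are exactly conditions (1)--(4). Throughout, I use the monoidal structure on $\mathbf{Comon}( \mathcal{V} )$ recalled in \cref{Sect:Background} (from~\cite{Porst01062008}). Its unit is $( \mathbb{I}, \lambda_{\mathbb{I}}^{-1}, 1_{\mathbb{I}} )$, its tensor is $( P, d, e ) \otimes ( Q, \delta, \epsilon ) = ( P \otimes Q, ( 1 \otimes \beta_{P,Q} \otimes 1 ) \circ d \otimes \delta, e \otimes \epsilon )$, and its associator and unitors are inherited from $\mathcal{V}$. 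Since $U$ forgets the comonoid structure and is strict monoidal, a functor $S$ with $U \circ S = 1_{\mathcal{V}}$ must send each $P$ to a comonoid whose carrier is $P$, and each $f$ to $f$ itself.

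\textbf{Forward direction.} Let $S$ be a strict monoidal section and put $M_P := S( P )$. Since $U( S( P ) ) = P$, the object $M_P$ is a comonoid on $P$, which gives (1). Since $U$ is faithful and $U( S( f ) ) = f$, the comonoid morphism $S( f )$ has underlying map $f$, so $f$ is a comonoid homomorphism $M_P \to M_Q$, which gives (3). Strict monoidality gives $S( P \otimes Q ) = S( P ) \otimes S( Q )$ and $S( \mathbb{I} ) = ( \mathbb{I}, \lambda_{\mathbb{I}}^{-1}, 1_{\mathbb{I}} )$, which are (2) and (4).

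\textbf{Converse.} Given data satisfying (1)--(4), define $S( P ) := M_P$ and $S( f ) := f$. Condition (3) makes $S( f )$ a morphism of $\mathbf{Comon}( \mathcal{V} )$. Functoriality is inherited from $\mathcal{V}$, because composition and identities in $\mathbf{Comon}( \mathcal{V} )$ are computed in $\mathcal{V}$; this also gives $U \circ S = 1_{\mathcal{V}}$. For strictness, (2) and (4) give equality on objects. On morphisms, $S( f \otimes g ) = f \otimes g = S( f ) \otimes S( g )$, since the tensor of comonoid morphisms is the underlying tensor. Finally, the structure isomorphisms $\alpha, \lambda, \rho$ of $\mathbf{Comon}( \mathcal{V} )$ are those of $\mathcal{V}$, so $S$ sends the associator and unitors of $\mathcal{V}$ to those of $\mathbf{Comon}( \mathcal{V} )$.

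\textbf{Main obstacle.} The only delicate point is the last one: checking that $S$ preserves the coherence isomorphisms on the nose. These maps are comonoid homomorphisms automatically, by (3) or by the cited result of~\cite{Porst01062008}. What needs care is that the comonoids at their domains and codomains agree strictly. For example, $M_{( P \otimes Q ) \otimes R}$ and $M_{P \otimes ( Q \otimes R )}$ are both obtained by iterating (2), and one must confirm that iterating (2) yields exactly the comonoid structures related by $\alpha$ as morphisms of $\mathbf{Comon}( \mathcal{V} )$.
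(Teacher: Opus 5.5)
Your proof is correct and follows the paper's argument essentially step for step: the forward direction reads (1)--(4) off from $U \circ S = 1_{\mathcal{V}}$ and strict monoidality, and the converse sets $S( P ) = M_P$, $S( f ) = f$ and checks strictness using the fact that the coherence isomorphisms of $\mathbf{Comon}( \mathcal{V} )$ are inherited from $\mathcal{V}$. The point you flag as the main obstacle is in fact immediate: applying (2) twice gives $M_{( P \otimes Q ) \otimes R} = ( M_P \otimes M_Q ) \otimes M_R$ and $M_{P \otimes ( Q \otimes R )} = M_P \otimes ( M_Q \otimes M_R )$ on the nose, and a morphism of $\mathbf{Comon}( \mathcal{V} )$ is determined by its underlying map, so $S( \alpha_{P,Q,R} )$ is exactly the associator of $\mathbf{Comon}( \mathcal{V} )$ (and likewise for the unitors, using (4)).
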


\begin{proof}
    First, assume that $\mathcal{V}$ is a nice context for parameterization.
    Then there exists a strict monoidal section $S: \mathcal{V} \to \mathbf{Comon}( \mathcal{V} )$ to the forgetful functor $U: \mathbf{Comon}( \mathcal{V} ) \to \mathcal{V}$.
    For each $P \in \mathcal{V}_0$, define $M_P = S_0( P )$.
    The four properties are satisfied as follows.
    \begin{enumerate}
    \item Let $P \in \mathcal{V}_0$.
          Since $S$ is a section to $U$, then $U_0( M_P ) = U_0( S_0( P ) ) = P$.
          This means that $M_P$ is a comonoid on $P$.
    \item Let $P \in \mathcal{V}_0$ and $Q \in \mathcal{V}_0$.
          Since $S$ is a strict monoidal functor, then by definition of a strict tensorator $M_{P \otimes Q} = S_0( P \otimes Q ) = S_0( P ) \otimes S_0( Q ) = M_P \otimes M_Q$.
    \item Let $f \in \mathcal{V}( P, Q )$.
          Then $S( f )$ is a comonoid homomorphism from $M_P$ to $M_Q$.
          This means that $U( S( f ) )$ lifts to a comonoid homomorphism from $M_P$ to $M_Q$.
          Since $S$ is a section to $U$, then $U( S( f ) ) = f$.
          Then $f$ is a comonoid homomorphism from $M_P$ to $M_Q$.
    \item Since $S$ is strict monoidal, then $M_{\mathbb{I}} = S_0( \mathbb{I} ) = ( \mathbb{I}, \lambda^{-1}_{\mathbb{I}}, 1_{\mathbb{I}} )$.
    \end{enumerate}
    This concludes the first direction of the proof.
    Next, assume that $\mathcal{V}$ is a braided monoidal category satisfying properties (1) through to (4).
    By property (1), there exists an assignment $S_0: \mathcal{V}_0 \to \mathbf{Comon}( \mathcal{V} )_0$ such that $S_0: P \mapsto M_P$.
    By property (3), this extends to an assignment on morphisms, such that $S_{P,Q}: f \mapsto f$.
    Let $S: \mathcal{V} \to \mathbf{Comon}( \mathcal{V} )$ denote the full assignment from $\mathcal{V}$ to $\mathbf{Comon}( \mathcal{V} )$.
    This is clearly a functor since function composition is the same in $\mathbf{Comon}( \mathcal{V} )$ as in $\mathcal{V}$.
    It can then be shown that $S$ is a section to $U: \mathbf{Comon}( \mathcal{V} ) \to \mathcal{V}$.
    \begin{itemize}
    \item \textbf{Objects}.
          Let $P \in \mathcal{V}_0$.
          Since $M_P$ is a comonoid on $P$, then $U_0( S_0( P ) ) = U_0( M_P ) = P$.
    \item \textbf{Morphisms}.
          If $f \in \mathcal{V}( P, Q )$, then $U( S( f ) ) = U( f ) = f$.
    \end{itemize}
    Then $U \circ S = 1_{\mathcal{V}}$.
    It remains to be shown that $S$ is a strict monoidal functor.
    \begin{itemize}
    \item \textbf{Preserves Products}.
          If $f \in \mathcal{V}( P, Q )$ and $g \in \mathcal{V}( R, S )$, then $S( f \otimes g ) = f \otimes g = S( f ) \otimes S( g )$.
    \item \textbf{Strict Coherence}.
          We first show that $S$ preserves the left unitor.
          If $\lambda$ is the left unitor for $\mathcal{V}$, then $\lambda$ is also the left unitor for $\mathbf{Comon}( \mathcal{V} )$ and $S( \lambda_P ) = \lambda_P$ for each $P \in \mathcal{V}_0$.
          A similar argument holds for the right unitor of $\mathcal{V}$ and the associator of $\mathcal{V}$.
    \item \textbf{Strict Tensorator}.
          By property (2), $S( P \otimes Q ) = M_{P \otimes Q} = M_P \otimes M_Q = S( P ) \otimes S( Q )$ for each $P \in \mathcal{V}_0$ and $Q \in \mathcal{V}_0$.
    \item \textbf{Strict Unitor}.
          By property (4), $S( \mathbb{I} ) = M_{\mathbb{I}} = ( \mathbb{I}, \lambda^{-1}_{\mathbb{I}}, 1_{\mathbb{I}} )$.
    \end{itemize}
    Then $S$ is a strict monoidal section to the forgetful functor.
\end{proof}

It turns out that a symmetric monoidal category $\mathcal{V}$ is a nice context for parameterization if and only if $\mathcal{V}$ is Cartesian monoidal.
This can be seen as a variation on Fox's theorem.
To prove the first direction, it suffices to prove that the strictness of $S: \mathcal{V} \to \mathbf{Comon}( \mathcal{V} )$ ensures that $\mathbb{I}$ is terminal, and that the comonoids chosen by $S$ can be used to define uniform copying and deleting.
To prove the second direction, it suffices to show that the uniform copying and deleting exhibited by Cartesian monodial categories pick out comonoids satisfying the properties in \cref{Lem:NiceParam}.

\begin{thm}
    \label{Thm:NiceParam}
    A symmetric monoidal category $\mathcal{V}$ is a nice context for parameterization if and only if $\mathcal{V}$ is Cartesian monoidal category.
\end{thm}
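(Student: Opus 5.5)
We prove the two directions separately, using \cref{Lem:NiceParam} to replace the existence of a strict monoidal section by conditions (1)--(4) on a family of comonoids $M_P = ( P, d_P, e_P )$. The bridge to Cartesian monoidality will be \cref{Prop:Cartesian}: $\mathcal{V}$ is Cartesian monoidal if and only if it has uniform deleting $e$ and uniform copying $\Delta$ such that each $( A, \Delta_A, e_A )$ is a comonoid.

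\textbf{($\Leftarrow$)}. Assume $\mathcal{V}$ is Cartesian monoidal and take the uniform $\Delta$ and $e$ from \cref{Prop:Cartesian}. Set $M_P = ( P, \Delta_P, e_P )$; this is a comonoid, which gives (1). Condition (4) follows from $\Delta_{\mathbb{I}} = \rho_{\mathbb{I}}^{-1} = \lambda_{\mathbb{I}}^{-1}$ (using the inverse convention implicit in the definition) and $e_{\mathbb{I}} = 1_{\mathbb{I}}$, since $\mathbb{I}$ is terminal. Condition (2) is exactly the uniformity equation \cref{Eqn:MonoidProd} for $\Delta$ together with the uniformity of $e$. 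These match the tensor of comonoids recalled from~\cite{Porst01062008}, whose comultiplication is $( 1 \otimes \beta \otimes 1 ) \circ ( d \otimes \delta )$. Condition (3) is naturality: $\Delta_Q \circ f = ( f \otimes f ) \circ \Delta_P$ and $e_Q \circ f = e_P$ say precisely that every $f$ is a comonoid homomorphism.

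\textbf{($\Rightarrow$)}. Assume (1)--(4) and define $\Delta_P := d_P$ and $e_P$ as the counit of $M_P$. By (3), $\Delta$ and $e$ are natural transformations $1 \Rightarrow ( - ) \otimes ( - )$ and $1 \Rightarrow \mathbb{I}$. By (4), $\Delta_{\mathbb{I}} = \lambda_{\mathbb{I}}^{-1}$ and $e_{\mathbb{I}} = 1_{\mathbb{I}}$. By (2), unfolding the comonoid tensor product gives exactly \cref{Eqn:MonoidProd} for $\Delta$, and $e_{P \otimes Q} = e_P \otimes e_Q$ composed with the unitor, which is uniform deleting. Each $( P, \Delta_P, e_P )$ is a comonoid by (1), so \cref{Prop:Cartesian} yields that $\mathcal{V}$ is Cartesian monoidal.

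The main obstacle is bookkeeping rather than mathematics. First, the paper's definition of uniform copying writes $\Delta_{\mathbb{I}} = \rho_{\mathbb{I}}$ where the typing requires $\rho_{\mathbb{I}}^{-1}$, and uniform deleting requires $e_{\mathbb{I}} = \lambda_{\mathbb{I}}$ as a map $\mathbb{I} \to \mathbb{I}$. We must check that the coherence $\lambda_{\mathbb{I}} = \rho_{\mathbb{I}}$ reconciles these with $1_{\mathbb{I}}$ and $\lambda_{\mathbb{I}}^{-1}$ in (4), reading the definitions up to the evident unitor isomorphisms. Second, we must confirm that the symmetric (not merely braided) hypothesis is what \cref{Prop:Cartesian} needs, since uniform copying was defined only in the symmetric case. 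If one prefers to avoid \cref{Prop:Cartesian} in the forward direction, a direct check also works: naturality of $e$ plus $e_{\mathbb{I}} = 1$ forces $\mathbb{I}$ to be terminal, because any $g: P \to \mathbb{I}$ satisfies $g = e_{\mathbb{I}} \circ g = e_P$. The projections are then built from $e$ and the unitors, and the pairing $\langle f, g \rangle = ( f \otimes g ) \circ \Delta$ is unique by counitality and naturality.
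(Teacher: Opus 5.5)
Your proposal is correct and takes essentially the same route as the paper. Both directions go through the conditions of \cref{Lem:NiceParam} and the uniform copying/deleting characterization of \cref{Prop:Cartesian}: naturality comes from every morphism being a comonoid homomorphism, the unit conditions from $M_{\mathbb{I}} = ( \mathbb{I}, \lambda_{\mathbb{I}}^{-1}, 1_{\mathbb{I}} )$, and uniformity from strict preservation of the tensor. The only difference is in the forward direction, and it is minor: the paper works with the section $S$ directly, first shows $\mathbb{I}$ is terminal (any $f : P \to \mathbb{I}$ is a comonoid homomorphism into $S_0( \mathbb{I} )$, which forces $f = e$), and then obtains uniform deleting by uniqueness, whereas you read uniform deleting off conditions (2)--(4) and get terminality as a by-product.
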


\begin{proof}
    First, assume that $\mathcal{V}$ is a nice context for parameterization.
    Then there exist a strict monoidal section $S: \mathcal{V} \to \mathbf{Comon}( \mathcal{V} )$ to the forgetful functor $U: \mathbf{Comon}( \mathcal{V} ) \to \mathcal{V}$.
    It must first be shown that $\mathbb{I}$ is terminal.
    To this end, let $P \in \mathcal{V}$ and $f \in \mathcal{V}( P, \mathbb{I} )$.
    Since $S$ is strict monoidal, then $S_0( \mathbb{I} ) = ( \mathbb{I}, \lambda_{\mathbb{I}}^{-1}, 1_{\mathbb{I}} )$.
    Then $f = U( S( f ) )$ is a comonoid homomorphism from $S_0( P ) = ( P, d, e )$ to $S_0( \mathbb{I} ) = ( \mathbb{I}, \lambda_{\mathbb{I}}^{-1}, 1_{\mathbb{I}} )$.
    Then by definition of a comonoid homomorphism, $e = 1_{\mathbb{I}} \circ f = f$.
    Since $f$ was arbitrary, then $\mathcal{V}( P, \mathbb{I} ) = \{ e \}$.
    Since $P$ was arbitrary, then $\mathbb{I}$ is terminal.
    Next, it must be shown that $\mathcal{V}$ has uniform copying and uniform deleting.
    \begin{itemize}
    \item \textbf{Uniform Copying}.
          For each $P \in \mathcal{V}_0$, let $\Delta_P$ denote the comultiplication in $S_0( P )$.
          Then $\Delta: \mathcal{V} \Rightarrow \mathcal{V} \otimes \mathcal{V}$ defines an unnatural transformation.
          To show that $\Delta$ is natural, let $f \in \mathcal{V}( P, Q )$ for some $P \in \mathcal{V}_0$ and $Q \in \mathcal{V}_0$.
          Then $S( f )$ is a comonoid homomorphism.
          In particular, $\Delta_Q \circ f = ( f \otimes f ) \circ \Delta_P$.
          Since $f$ was arbitrary, then $\Delta$ is natural.
          Since $S$ is a strict monoidal functor, then $\Delta_\mathbb{I} = \lambda_{\mathbb{I}}^{-1}$.
          It remains to be shown that $\Delta$ defines uniform copying on $\mathcal{V}$.
          Let $P \in \mathcal{V}_0$ and $Q \in \mathcal{V}_0$.
          Since $S$ is a strict monoidal functor, then $S( P \otimes Q ) = S( P ) \otimes S( Q )$.
          Since $\Delta_{P \otimes Q}$ is the comultiplication for $S( P ) \otimes S( Q )$, then \cref{Eqn:MonoidProd} holds in $\mathcal{V}$.
          Since $P$ and $Q$ were arbitrary, then $\Delta$ defines uniform copying on $\mathcal{V}$.
    \item \textbf{Uniform Deleting}.
          Since $\mathbb{I}$ is terminal, then there exists a unique natural transformation $\epsilon: \mathcal{V} \Rightarrow \mathbb{I}$.
          Moreover, if $P \in \mathcal{V}_0$ and $( P, d, e ) = S_0( P )$, then $\epsilon_P = e$ by uniqueness.
    \end{itemize}
    By construction, $( P, \Delta_P, \epsilon_P ) = S_0( P )$ is a comonoid for each $P \in \mathcal{V}_0$.
    Then $\mathcal{V}$ is Cartesian monoidal by \cref{Prop:Cartesian}.
    This concludes the first direction of the proof.
    Next, assume that $\mathcal{V}$ is Cartesian monoidal.
    Then $\mathcal{V}$ admits uniform copying $\Delta: \mathcal{V} \Rightarrow \mathcal{V} \otimes \mathcal{V}$ and uniform deleting $\epsilon: \mathcal{V} \Rightarrow \mathbb{I}$.
    Since $\Delta$ and $\epsilon$ are compatible, then each object $P \in \mathcal{V}_0$ admits a comonoid $( P, \Delta_P, \epsilon_P )$.
    These comonoids satisfy the properties of \cref{Lem:NiceParam}.
    \begin{enumerate}
    \item For each $P \in \mathcal{V}_0$, define $M_P = ( P, \Delta_P, \epsilon_P )$.
    \item Let $P \in \mathcal{V}_0$ and $Q \in \mathcal{V}_0$.
          Since $\mathcal{V}$ is Cartesian monoidal, then $\mathbb{I}$ is terminal.
          It follows by uniqueness that $\epsilon_{P \otimes Q} = \lambda_{\mathbb{I}}^{-1} \circ ( \epsilon_P \otimes \epsilon_Q )$.
          Moreover, since $\Delta$ defines uniform copying, then \cref{Eqn:MonoidProd} holds in $\mathcal{V}$.
          Then by definition, $M_{P \otimes Q} = M_P \otimes M_Q$.
    \item Let $f \in \mathcal{V}( P, Q )$.
          Since $\Delta$ is a natural transformation, then $( f \otimes f ) \circ \Delta_P = \Delta_Q \circ f$.
          Since $\mathcal{V}$ is Cartesian monoidal, then $\mathbb{I}$ is terminal.
          Then $\epsilon_P = \epsilon_Q \circ f$ by uniqueness.
          Then $f$ is a comonoid homomorphism from $M_P$ to $M_Q$.
    \item By definition, $\Delta_{\mathbb{I}} = \lambda_{\mathbb{I}}^{-1}$ and $\epsilon_{\mathbb{I}} = 1_\mathbb{I}$.
          Then $M_{\mathbb{I}} = ( \mathbb{I}, \lambda_{\mathbb{I}}^{-1}, 1_{\mathbb{I}} )$.
    \end{enumerate}
    Then $\mathcal{V}$ is a nice context for parameterization by \cref{Lem:NiceParam}.
\end{proof}

It turns out that even if $\mathcal{V}$ is a nice context for parameterization, it may not be the case that evaluation is possilbe.
It was shown in~\cref{Ex:NoParams} that every meet-semilattice with at least two elements defines a Cartesian monoidal category in which parameter objects never have elements.
In contrast, topological spaces define a category in which parameters always have elements.
Similar, the category of sub-lattices is also Cartesian monoidal and admits evaluation, giving rise to interesting examples from classical program analysis.
It should be noted that this is not the standard monoidal closed structure on \textbf{SupLat}, since \textbf{SupLat} is not closed as a Cartesian monoidal category.

\begin{exa}[Parameterized Rotations]
    \label{Ex:CartRot}
    Since $\mathbf{Top}$ is a Cartesian monoidal category, then by \cref{Thm:NiceParam} the category $\mathbf{Top}$ is a nice context for parameterization.
    This means that the categories described in the motivating example inherit many nice properties from $\mathbf{Top}$.
    To this end, let $\mathcal{V} = \mathbf{Top}$, $\mathcal{W} = \mathbf{Set}$, and $\mathcal{C}^{\mathcal{V}} = \mathbb{C}\mathbf{FVect}^{\mathcal{V}}$.
    Then for each $P \in \mathbf{Top}$, the category $\mathcal{C}_P := \Omega_{(\mathcal{V},\mathcal{W})}( M_P )( \mathcal{C}^{\mathcal{V}} )$ is the ordinary category whose objects are vector spaces with morphisms from $V$ to $W$ given by the continuous maps from $P$ into $\mathcal{C}^{\mathcal{V}}( V, W )$.
    Since $\mathcal{V}$ is a nice context for parameterization, then all moprhisms in $\mathcal{V}$ induce reparameterizations.
    In particular, the structural maps which make $\mathcal{V}$ Cartesian will also induce reparameterizations.
    \begin{itemize}
    \item \textbf{Symmetries}.
          Let $P = X \times Y$, $Q = Y \times X$, and $\sigma_{X,Y}$ denote the symmetry from $P$ to $Q$ which sends each tuple $( x, y )$ to $( y, x )$.
          Then $F = \Omega_{(\mathcal{V},\mathcal{W})}( \sigma_{X,Y} )( \mathcal{C} ): \mathcal{C}_Q \to \mathcal{C}_P$ is the functor which maps each morphsim $f \in \mathcal{C}_Q( V, W )$ to the morphism $( x, y ) \mapsto f( y, x )$ in $\mathcal{C}_P( V, W )$.
          This provides a framework to discuss whether a morphism is symmetric in its parameters.
          For example, if $X = Y$, then $f( x, y ) = f( y, x )$ if and only if $f = F( f )$.
    \item \textbf{Projections}.
          Let $P = X \times Y$.
          Then there exists a projection map $\pi_X: P \to X$ such that $\pi_X( x, y ) = x$.
          Then $F = \Omega_{(\mathcal{V},\mathcal{W})}( \pi_X )( \mathcal{C} ): \mathcal{C}_X \to \mathcal{C}_P$ is the functor which maps each morphism $f \in \mathcal{C}_X( V, W )$ to the morphism $( x, y ) \to f( x )$ in $\mathcal{C}_P( V, W )$.
          In other words, $F$ picks out the subcategory of $\mathcal{C}_P$ consisting of the parameterized morphisms in $\mathcal{C}_P$ which do not depend on $Y$.
    \item \textbf{Copying}.
          Let $X \in \mathcal{V}_0$ and $P = X \times X$.
          Then there exists a copying map $\Delta_X: X \to P$ such that $\Delta_X( x ) = ( x, x )$.
          Then $\Omega_{(\mathcal{V},\mathcal{W})}(\Delta_X )( \mathcal{C} ): \mathcal{C}_P \to \mathcal{C}_X$ is the functor which maps each morphsim $f \in \mathcal{C}_P( V, W )$ to the morphism $x \mapsto f( x, x )$ in $\mathcal{C}_X( V, W )$.
          In the case of parameterized quantum circuits, this provides a framework to interpret circuits in which a rotation gate depends on the same parameter more than once.
    \end{itemize}
    It should be noted that reparameterizations already appear in the existing work on parameterized quantum circuit analysis, though these results were not stated in the language of enriched category theory.
    For example, the authors of~\cite{RossWesley2025} consider circuits parameterized by $P = \mathbb{R}^k$.
    These circuits are then precomposed with a linear automorphism $f: P \to P$ to reduce parameterized equivalence checking with rational coefficients to parameterized equivalence checking with integer coefficients.
    Since $f$ is a linear map between finite-dimensional vector space, then clearly $f$ is continuous.
    This means that the reparameterization induced by $f$ is indeed the isomorphism $\Omega_{(\mathcal{V},\mathcal{W})}( f )( \mathcal{C} ): \mathcal{C}_P \to \mathcal{C}_P$.
\end{exa}

\begin{exa}[Discrete-Time Processes]
    Let $\mathcal{P}: \mathbf{Set} \to \mathbf{SupLat}$ denote the covariant power-set functor.
    It was shown in~\cite{Little2025}, that non-determinism in program semantics could be understood in terms of \textbf{SupLat}-enrichment.
    For example, assume that the semantics of a deterministic programming language are given by some locally small category $\mathcal{C}$.
    This means that $\mathcal{C}( X, Y )$ is the set of deterministic programs which convert inputs of type $X$ into outputs of type $Y$.
    Given two deterministic programs $f \in \mathcal{C}( X, Y )$ and $g \in \mathcal{C}( X, Y )$, a non-deterministic program is allowed to choose between $f$ and $g$, independent of the input data~\cite{10.1145/321420.321422}.
    Non-determinism is often used as a programming abstraction to model interactions with an external environment, such as requesting an input from a user.
    As described in~\cite{Little2025}, non-determinism can be introduced to $\mathcal{C}$ via change-of-base through the power-set functor.
    The interpretation of $\mathcal{D} = \mathcal{P}_*( \mathcal{C} )$ is as follows.
    \begin{itemize}
    \item \textbf{Objects}.
          Since $\mathcal{D}_0 = \mathcal{C}_0$, then the collection of types remains unchanged.
    \item \textbf{Morphisms}.
          If $\widetilde{f} \in \mathcal{D}( X, Y )$, then $\widetilde{f} \subseteq \mathcal{C}( X, Y )$.
          This means that $\widetilde{f}$ is a set of deterministic programs.
          Each $f \in \widetilde{f}$ is a possible outcome when $\widetilde{f}$ is executed.
          For example, $\widetilde{f} = \{ f, g \}$ is the non-deterministic program which chooses between executing either $f$ or $g$.
    \item \textbf{Determinism}.
          There is an inclusion $j: \mathcal{C}( X, Y ) \hookrightarrow \mathcal{D}( X, Y )$ such that $j: f \mapsto \{ f \}$.
          This inclusion maps each program in $\mathcal{C}( X, Y )$ to a singleton set, and can be seen as picking out the deterministic programs in $\mathcal{D}( X, Y )$.
    \item \textbf{Composition}.
          If $\widetilde{f} \in \mathcal{D}( X, Y )$ and $\widetilde{g} \in \mathcal{D}( Y, Z )$, then $\widetilde{g} \circ \widetilde{f} = \{ g \circ f : f \in \widetilde{f} \land g \in \widetilde{g} \}$.
          Then the composition of $\widetilde{f}$ with $\widetilde{g}$ is composition of all possible outcomes from $\widetilde{f}$ with all possible outcomes from $\widetilde{g}$.
    \end{itemize}
    An abstraction of discrete-time systems can then be obtained by parameterizing $\mathcal{D}$ by the extended natural numbers $L = \mathbb{N} \cup \{ \infty \}$.
    To see how this works, let $F = \Omega_{(\mathbf{SupLat},\mathbf{Set})}( L )$ and $\mathcal{D}_L = F( \mathcal{D} )$.
    Then each $\widetilde{f} \in \mathcal{D}_L( X, Y )$ is a monotonic function from $L$ to $\mathcal{D}( X, Y )$.
    Each $f \in \widetilde{f}( n )$ is a possible outcome from executing $\widetilde{f}$ at some time between $0$ and $n$.
    This can be used to model behaviours such as error recovery, as illustrated in the following toy model.
    In this example, assume that $a \in \mathcal{D}( X, Y )$ is the desired behaviour of a system, and $b \in \mathcal{D}( X, y )$ is a possible error.
    Further assume that $c \in \mathcal{D}( Y, Y )$ is an error-recovery protocol, such that $c \circ a = c \circ b = a$.
    We will model the system by some $\widetilde{f} \in \mathcal{D}_L( X, Y )$ such that $\widetilde{f}( n ) \subseteq \{ a, b \}$ for all $n \in L$, $\widetilde{f}( t ) = \{ a, b \}$ for some $t \in \mathbb{N}$.
    This means that $\widetilde{f}$ will fail after some unknown time $t$.
    If we assume that assume that $c$ is fault-tolerant (i.e., it will never fail), then we may lift $c$ to $\widetilde{g} = j( c ) \circ \epsilon_L$.
    That is, $\widetilde{g}( n ) = \{ c \}$ for all $n \in L$.
    Since $( \widetilde{g} \star \widetilde{f} )( n ) = \widetilde{g}( n ) \circ \widetilde{f}( n )$, then it can be proven that $( \widetilde{g} \star \widetilde{f} )( n ) = j( a ) = \{ a \}$ for all $n \in L$.
    In other words, the process $\widetilde{g} \star \widetilde{f}$ can recover from all of the faults encountered by $\widetilde{f}$.
\end{exa}

%% file: sections/closed.tex
\section{Monoidal Closed Enrichment and Parameterized Constructions}
\label{Sect:Closed}

Recall that if $\mathcal{V}$ is monoidal closed, then $\mathcal{V}$ is self-enriched.
In particular, the hom-tensor adjunction provides a canonical choice for the $\mathcal{V}$-enriched identity and composition maps.
Using these canonical choices, there then exists a canonical parameterizations of the form $\Omega_{(\mathcal{V},\mathcal{V})}( - )$.
In some sense, $\Omega_{(\mathcal{V},\mathcal{V})}( - )$ is the \emph{best} possible parameterization for a self-enriched category.
For example, \textbf{Top} can be viewed as a self-enriched category using the discrete topology, but this construction fails to satisfy the hom-tensor adjunction.
In contrast, when \textbf{Top} is restricted to nice topological spaces, it is possible to view \textbf{Top} as self-enriched without sacrificing the topological data encoded by the function spaces.
This can be used to further refine the semantics of parameterized quantum circuits, as illustrated in~\cref{Ex:NiceTop}.

\begin{exa}[Continuous Rotations]
    \label{Ex:NiceTop}
    Let $\mathcal{V}$ be the category of compactly generated topological spaces and continuous maps.
    This category is a nice context for parameterization, since $\mathcal{V}$ is a Cartesian closed category~(see~\cite{Steenrod1967}).
    As shown in \cref{Sect:CartMon}, the Cartesian monoidal structure on $\mathcal{V}$ ensures that each object $P \in \mathcal{V}_0$ admits a unique comonoid $M_P$.
    Since $[0,2\pi]$, $[0,2\pi] / \{ 0 = 2\pi \}$, and $\mathbb{R}^k$ are compactly generated topological spaces, then this subsumes the topological comonoids discussed thus far.
    Moreover, since $\mathcal{V}$ is closed, then $\mathcal{V}$ is self-enriched.
    This means that there exists a parameterization function $F = \Omega_{( \mathcal{V}, \mathcal{V} )}( M_P )$ for each $P \in \mathcal{V}_0$.
    Next, notice that the enrichment of $\mathcal{C} = \mathbb{C}\mathbf{FVect}$ over \textbf{Top} also makes sense over $\mathcal{V}$, since every vector space is a metric space, and every metric space is a compactly generated topological space.
    Then for each $P \in \mathcal{V}_0$, $\mathcal{C}_P = \Omega_{(\mathcal{V},\mathcal{V})}( M_P )( \mathcal{C} )$ is a well-defined $\mathcal{V}$-enriched category.
    An immediate consequence of this construction is that each hom-object $\mathcal{C}_P( V, W )$ now enjoys a compact-open topology.
    A more interesting consequence is that each pair of objects $P$ and $Q$ in $\mathcal{V}_0$ admit an evaluation map $\app_{P,Q}: [ Q, P ] \otimes Q \to P$.
    Since $\mathcal{V}$ is Cartesian monoidal, then $\app_{P,Q}$ is also a comonoid homomorphism.
    This induces a reparameterization $F = \Omega_{( \mathcal{V}, \mathcal{V} )}( \app_{P,Q} )( \mathcal{C} ): \mathcal{C}_P \to \mathcal{C}_R$ where $R = [ Q, P ] \otimes Q$.
    While this reparameterization may look complicated upon first glance, it recovers a fairly intuitive notion.
    In particular, if $f \in \mathcal{C}_P( V, W )$, then $F( f ) \in \mathcal{C}_R( V, W )$ is the parameterized function which takes a pair $( \varphi, q )$ and then evaluated $f$ at $\varphi( q )$.
    In other words, $F( f )( \varphi, q ) = f( \varphi( q ) )$.
\end{exa}

Of course, $\mathcal{V}$ need not be Cartesian closed for for this parameterization to be well-defined.
For each $M \in \mathbf{Comon}( \mathcal{V} )_0$, let $\Lambda_M( \mathcal{V} )$ denote the category $F_*( \mathcal{V} )$ where $F = \Omega_{(\mathcal{V},\mathcal{V})}( M )( \mathcal{V} )$.
When $\mathcal{V}$ is Cartesian closed, the only comonoids are the diagonal comonoids.
In contrast, a strictly non-monoidal category such as $\mathbb{C}\mathbf{FVect}$ can admit many interesting comonoids on the same object, such as the linear copying comonoid and the pair-of-pants comonoid.
We saw in \cref{Ex:GenCtrl} that the monoidal closed structure on $\mathcal{V} = \mathbb{C}\mathbf{FVect}$ allowed for the parameterized morphisms in $\Lambda_M( \mathcal{V} )$ to be rewritten as generalized controls in $\mathcal{V}$.
This construction did not depend on the Frobenius algebra associated with $M$, and can be generalized to any parameterization of the form $\Lambda_M( \mathcal{V} )$.
From a categorical point-of-view, this construction can be described as a generally non-monoidal functor $\Xi_M: \Lambda_M( \mathcal{V} ) \to \mathcal{V}$.
This means that each comonoid $M \in \mathbf{Comon}( \mathcal{V} )_0$ defines an internal construction in $\mathcal{V}$.

\begin{nota}
    In this section, $\mathcal{V}$ will denote a monoidal closed category with monoidal product $\otimes$, $\mathcal{U}$ a subcategory of $\mathcal{V}$, and $M = ( P, d, e ) \in \mathbf{Comon}( \mathcal{V} )_0$.
\end{nota}

\begin{defi}
    \label{Def:ClosedIncl}
    The \emph{parameterized $M$-construction} is the assignment $\Xi_M: \Lambda_M( \mathcal{V} ) \to \mathcal{V}$ with $\left( \Xi_M \right)_0( X ) = P \otimes X$ and $\left( \Xi_M \right)_{X,Y}: \mathcal{V}( P, [ X, Y ] ) \to \mathcal{V}( P \otimes X, P \otimes Y )$ defined as follows.
    \begin{equation*}
        \left( \Xi_M \right)_{X,Y}:
        f
        \mapsto
        ( 1_P \otimes \app_{X,Y} ) \circ ( 1_P \otimes f \otimes 1_X ) \circ ( d \otimes 1_X ) =
        \includegraphics[valign=c,scale=0.84]{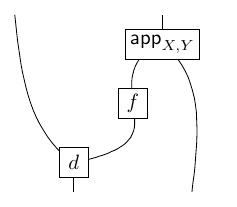}
    \end{equation*}
\end{defi}

\begin{lem}
    \label{Lem:XiFunc}
    If $f \in \mathcal{V}( X, Y )$, then $\Xi_M( j( f ) ) = 1_P \otimes f$.
\end{lem}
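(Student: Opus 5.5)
First I will unfold what $j(f)$ is. By the discussion preceding \cref{Cor:UniqueConstants}, $j$ is induced by $\Omega_{(\mathcal{V},\mathcal{V})}( e )$, so it sends a morphism $f \in \mathcal{V}( X, Y )$ of the underlying category to a parameterized morphism obtained by precomposition with the counit. Under the identification $\mathcal{V}( X, Y ) \cong \mathcal{V}( \mathbb{I}, [X,Y] )$, $f$ corresponds to its name $[f]$, and hence $j( f ) = [f] \circ e \in \mathcal{V}( P, [X,Y] )$, as in \cref{Fig:SetFunctors:Id} and \cref{Cor:UnderlyingShape}. I will state this identification explicitly, since the lemma is really a statement about $[f] \circ e$.

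Next I will substitute into \cref{Def:ClosedIncl}:
\begin{equation*}
\Xi_M( j( f ) ) = ( 1_P \otimes \app_{X,Y} ) \circ ( 1_P \otimes ( [f] \circ e ) \otimes 1_X ) \circ ( d \otimes 1_X ).
\end{equation*}
Using bifunctoriality of $\otimes$, I will factor the middle term as $( 1_P \otimes [f] \otimes 1_X ) \circ ( 1_P \otimes e \otimes 1_X )$. Then I will slide $e$ down to meet $d$: the composite $( 1_P \otimes e ) \circ d$ equals $\rho_P^{-1}$ by the right counit law of the comonoid $M$. Diagrammatically, the $e$ wire caps off the right leg of $d$, leaving a single $P$ wire. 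I will write the argument as a short string-diagram chain and handle the unitors by coherence, leaving them implicit as the paper does.

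What remains is $( 1_P \otimes \app_{X,Y} ) \circ ( 1_P \otimes [f] \otimes 1_X )$ with an inserted unit, which is $1_P \otimes ( \app_{X,Y} \circ ( [f] \otimes 1_X ) \circ \lambda_X^{-1} )$. The inner composite is $f$ by the defining property of the name $[f]$ under the hom-tensor adjunction. This gives $1_P \otimes f$.

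The only delicate point is the first step. I need to confirm that the paper's $j$ in the self-enriched setting $\Lambda_M( \mathcal{V} )$ is exactly precomposition of the name $[f]$ with $e$. In particular, I need that the underlying category of $\mathcal{V}^{\mathcal{V}}$ via $\Omega( T )$ is identified with $\mathcal{V}$ by taking names, which \cref{Thm:UnderlyingDCat} provides. The unitor bookkeeping is routine.
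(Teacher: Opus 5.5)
Your proposal is correct and matches the paper's own string-diagram proof. Both substitute $j(f) = [f] \circ e$ into \cref{Def:ClosedIncl}, cancel $e$ against $d$ by right counitality, and recognize $\app_{X,Y} \circ ([f] \otimes 1_X) \circ \lambda_X^{-1} = f$. The paper also states the identification $j(f) = [f] \circ e$ explicitly in the proof of \cref{Lem:DagId}, so the step you flagged as delicate is the intended reading.
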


\begin{proof}
    Let $f \in \mathcal{V}( X, Y )$.
    Then the following equation holds.
    \begin{equation*}
        \Xi_M( j( f ) )
        =
        \includegraphics[valign=c,scale=0.84]{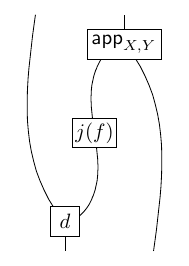}
        =
        \includegraphics[valign=c,scale=0.84]{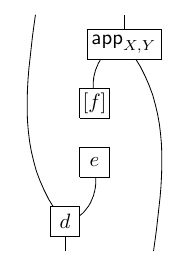}
        =
        \includegraphics[valign=c,scale=0.84]{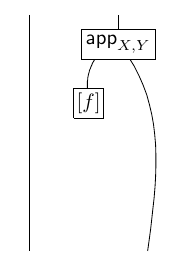}
        =
        \includegraphics[valign=c,scale=0.84]{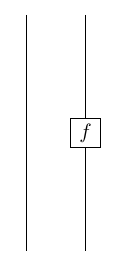}
        =
        1_P \otimes f
    \end{equation*}
    Since $f$ was arbitrary, then $\Xi_M( j( f ) ) = 1_P \otimes f$ for each $f \in \mathcal{V}( X, Y )$.
\end{proof}

\begin{thm}
    \label{Thm:XiFunc}
    The parameterized $M$-construction is a functor.
\end{thm}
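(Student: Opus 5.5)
The plan is to verify the two functor axioms directly, using \cref{Lem:XiFunc} for identities and a string-diagram calculation for composition. This calculation combines coassociativity of $d$ with the defining property of the internal composition of the self-enriched $\mathcal{V}$. Recall from \cref{Cor:UnderlyingShape} that in $\Lambda_M( \mathcal{V} )$ the identity on $X$ is $i_X = [1_X] \circ e$, and that for $f \in \mathcal{V}( P, [X,Y] )$ and $g \in \mathcal{V}( P, [Y,Z] )$ the composite is $g \star f = M_{X,Y,Z} \circ ( g \otimes f ) \circ d$. Here $M_{X,Y,Z}: [Y,Z] \otimes [X,Y] \to [X,Z]$ is the internal composition obtained from the hom-tensor adjunction.

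\textbf{Identities.} The identity $i_X$ is exactly the inclusion $j( 1_X )$ of the identity of $\mathcal{V}$ into $\Lambda_M( \mathcal{V} )$, since $j$ acts by precomposition with $e$. Hence \cref{Lem:XiFunc} gives $\Xi_M( i_X ) = 1_P \otimes 1_X = 1_{P \otimes X}$ immediately.

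\textbf{Composition.} The key input is the characterising equation of the internal composition. By construction via the adjunction, up to the associator,
\begin{equation*}
    \app_{X,Z} \circ ( M_{X,Y,Z} \otimes 1_X ) = \app_{Y,Z} \circ ( 1_{[Y,Z]} \otimes \app_{X,Y} ).
\end{equation*}
I would then expand $\Xi_M( g ) \circ \Xi_M( f )$ as the string diagram
\begin{equation*}
    ( 1_P \otimes \app_{Y,Z} ) ( 1_P \otimes g \otimes 1_Y ) ( d \otimes 1_Y ) ( 1_P \otimes \app_{X,Y} ) ( 1_P \otimes f \otimes 1_X ) ( d \otimes 1_X ).
\end{equation*}
The calculation proceeds in three steps.
\begin{enumerate}
\item By bifunctoriality (interchange), slide the box $\app_{X,Y} \circ ( f \otimes 1_X )$ upward past $d$ and $g$, since they act on disjoint wires. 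This leaves the $P$-wire split as $( d \otimes 1_P ) \circ d$, with the outer output kept, the middle output feeding $g$, and the inner output feeding $f$.
\item Apply coassociativity to rewrite $( d \otimes 1_P ) \circ d$ as $( 1_P \otimes d ) \circ d$, up to the associator. The remaining output is now the first leg of $d$, and the second leg is split by $d$ into the two copies feeding $g$ and $f$, in that order.
\item Recognise the lower part as $( g \otimes f ) \circ d$ followed by $\app_{Y,Z} \circ ( 1 \otimes \app_{X,Y} )$. Replace this by $\app_{X,Z} \circ ( M_{X,Y,Z} \otimes 1_X )$ using the displayed equation. The result is precisely $( 1_P \otimes \app_{X,Z} ) ( 1_P \otimes ( g \star f ) \otimes 1_X ) ( d \otimes 1_X ) = \Xi_M( g \star f )$.
\end{enumerate}
Note that no braiding and no cocommutativity are needed, because the wires emerging from the iterated comultiplication already appear in the order $g, f$ that matches $g \otimes f$.

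The main obstacle is bookkeeping rather than conceptual. One must check that the wire ordering after coassociativity lines up with the ordering $[Y,Z] \otimes [X,Y]$ expected by $M$. One must also confirm that the convention for $\app$ (internal hom on the left) makes the displayed characterisation of $M$ hold on the nose, up to associators. If the adjunction convention differed, I would first restate the defining equation of $M$ in the paper's convention $\app_{X,Y}: [X,Y] \otimes X \to Y$ before running the diagrammatic argument.
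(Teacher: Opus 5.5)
Your proof is correct and takes essentially the same route as the paper. Identities follow from \cref{Lem:XiFunc} via $i_X = j(1_X)$, and composition comes from unfolding $g \star f = M \circ (g \otimes f) \circ d$, trading $\app_{X,Z} \circ (M \otimes 1_X)$ for $\app_{Y,Z} \circ (1 \otimes \app_{X,Y})$, applying coassociativity, and deforming the diagram; the paper simply runs this chain starting from $\Xi_M(g \star f)$ rather than from $\Xi_M(g) \circ \Xi_M(f)$, and, as you note, no braiding or cocommutativity is used.
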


\begin{proof}
    It must be shown that $\Xi_M$ preserves identities and composition.
    \begin{itemize}
    \item \textbf{Identities}.
          Let $X \in \mathcal{V}_0$.
          The $\Xi_M( i_X ) = \Xi_M( j( 1_X ) ) = 1_P \otimes 1_X = 1_{P \otimes X}$  by \cref{Lem:XiFunc}.
          Since $( \Xi_M )_0( X ) = P \otimes X$, then $\Xi_M$ maps the identity on $X$ to the identity on $( \Xi_M )_0( X )$.
          Since $X$ was arbitrary, then $\Xi_M$ preserves identities.
    \item \textbf{Composition}.
          Let $f \in \Lambda_M( \mathcal{V} )( X, Y )$ and $g \in \Lambda_M( \mathcal{V} )( Y, Z )$.
          Then the following holds.
          \begin{equation*}
            \Xi_M( g \star f )
            =
            \includegraphics[valign=c,scale=0.84]{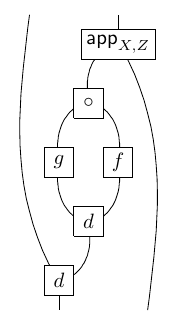}
            =
            \includegraphics[valign=c,scale=0.84]{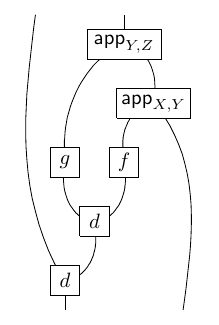}
            =
            \includegraphics[valign=c,scale=0.84]{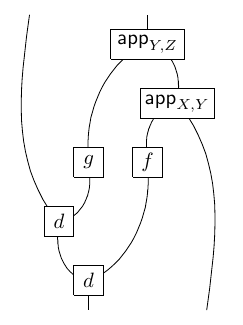}
          \end{equation*}
          It then follows by a deformation of the right-hand side that $\Xi_M( g \star f ) = \Xi_M( g ) \circ \Xi_M( f )$.
          Since $f$ and $g$ were arbitrary, then $\Xi_M$ respects composition.
          \endproof
    \end{itemize}
\end{proof}

The functor $\Xi_M$ enjoys many algebraic properties which are usually associated with controlled gates in quantum computation.
For example, applying $\Xi_M$ to the image of $j$ is the same as tensoring with $P$, since the parameter $P$ is unused.
Of particular interest is the fact that if $\mathcal{V}$ is symmetric monoidal, then $\Xi_M( f \boxtimes g )$ can always be rewritten in terms of $\Xi_M( f )$ and $\Xi_M( g )$, even though $\Xi_M( f \boxtimes g )$ is typically not monoidal.
These algebraic properties are summarized by the following two theorems.
All relations are depicted in \cref{Fig:XiCircs}.

\begin{figure}[t]
    \begin{equation*}
        \scalebox{0.73}{\input{circs/xij_lhs}}
        \overset{(1)}{=}
        \scalebox{0.73}{\input{circs/xij_rhs}}
        \qquad
        \scalebox{0.77}{\input{circs/xicomm_lhs}}
        \overset{(2)}{=}
        \scalebox{0.73}{\input{circs/xicomm_rhs}}
    \end{equation*}
    \begin{equation*}
        \scalebox{0.73}{\input{circs/xicopy_lhs}}
        \overset{(3)}{=}
        \scalebox{0.73}{\input{circs/xicopy_rhs}}
        \qquad
        \scalebox{0.73}{\input{circs/xitensor_lhs}}
        \overset{(4)}{=}
        \scalebox{0.73}{\input{circs/xitensor_rhs}}
    \end{equation*}
    \caption{Circuit equations satisfied by $\Xi_M$. Relations (1) through to (3) correspond to the properties in \cref{Thm:XiProps}, whereas relation (4) is \cref{Thm:XiTensor}.}
    \label{Fig:XiCircs}
\end{figure}

\begin{thm}
    \label{Thm:XiProps}
    The parameterized $M$-construction satisfy the following properties.
    \begin{enumerate}
    \item $\Xi_M \circ j = P \otimes ( - )$.
    \item If $f \in \mathcal{V}( X, Y )$, $g \in \Lambda_M( \mathcal{V} )( Y, Z )$, $h \in \mathcal{V}( Z, W )$, and $k \in \Lambda_M( \mathcal{V} )( X, W )$ satisfy the equation $j( h ) \star g \star j( f ) = k$, then $( 1_P \otimes h ) \circ \Xi_M( g ) \circ ( 1_P \otimes f ) = \Xi_M( k )$.
    \item If $d$ copies $v \in \mathcal{V}( \mathbb{I}, P )$ and $f \in \Lambda_M( \mathcal{V} )( X, Y )$, then $\Xi_M( f ) \circ ( v \otimes 1_X ) = v \otimes f_v$ where $f_v = \app_{X,Y} \circ ( ( f \circ v ) \otimes 1_X ) \circ \rho_X^{-1}$.
    \end{enumerate}
\end{thm}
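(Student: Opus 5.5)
The three properties can be handled separately, with (1) and (2) following formally from results already proved and (3) requiring a short diagrammatic computation.

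For property (1), the object parts agree by definition, since $(\Xi_M)_0(X) = P \otimes X$. On morphisms, \cref{Lem:XiFunc} gives $\Xi_M(j(f)) = 1_P \otimes f$ for every $f \in \mathcal{V}(X, Y)$. Together these say $\Xi_M \circ j = P \otimes (-)$ as functors.

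For property (2), I would argue purely formally from functoriality. By \cref{Thm:XiFunc}, $\Xi_M$ preserves composition, so applying it to $k = j(h) \star g \star j(f)$ gives
\[
\Xi_M(k) = \Xi_M(j(h)) \circ \Xi_M(g) \circ \Xi_M(j(f)).
\]
Substituting $\Xi_M(j(h)) = 1_P \otimes h$ and $\Xi_M(j(f)) = 1_P \otimes f$ by \cref{Lem:XiFunc} yields the claim. No diagrammatic work is needed here.

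For property (3), I would unfold \cref{Def:ClosedIncl}:
\[
\Xi_M(f) \circ (v \otimes 1_X) = (1_P \otimes \app_{X,Y}) \circ (1_P \otimes f \otimes 1_X) \circ (d \otimes 1_X) \circ (v \otimes 1_X).
\]
The steps, in order, are:
\begin{itemize}
\item By interchange, $(d \otimes 1_X) \circ (v \otimes 1_X) = (d \circ v) \otimes 1_X$.
\item The copying hypothesis replaces $d \circ v$ with $(v \otimes v) \circ \lambda_{\mathbb{I}}^{-1}$.
\item Bifunctoriality slides the second copy of $v$ through $f$, producing $(f \circ v) \otimes 1_X$ on the right-hand factor, followed by $\app_{X,Y}$.
\item The first copy of $v$ remains alone on the $P$ wire.
\item The result is $v \otimes \bigl(\app_{X,Y} \circ ((f \circ v) \otimes 1_X)\bigr)$, precomposed with the unitor isomorphism $\mathbb{I} \otimes X \cong \mathbb{I} \otimes (\mathbb{I} \otimes X)$ coming from $\lambda_{\mathbb{I}}^{-1}$.
\end{itemize}

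The only real obstacle is the unitor bookkeeping. Once the $\mathbb{I}$ wires are erased, the string diagram is immediate, but matching the stated $f_v = \app_{X,Y} \circ ((f \circ v) \otimes 1_X) \circ \rho_X^{-1}$ requires care. Strictly speaking, $(f \circ v) \otimes 1_X$ has domain $\mathbb{I} \otimes X$, so the inserted coherence map should be $\lambda_X^{-1}$. I would treat this by invoking coherence, in particular $\lambda_{\mathbb{I}} = \rho_{\mathbb{I}}$~\cite{Kelly1964}, and reading the equation up to the canonical isomorphisms, exactly as in the string-diagram calculus used throughout the paper. I would draw the diagram chain in the same style as the proof of \cref{Lem:XiFunc}.
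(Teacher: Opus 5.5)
Your proposal is correct and follows the paper's proof essentially step for step: (1) comes from the definition of $\Xi_M$ on objects plus \cref{Lem:XiFunc}, (2) from functoriality (\cref{Thm:XiFunc}) combined with part (1), and (3) from replacing $d \circ v$ by two copies of $v$ and sliding one copy into $f$. Your remark that the coherence map in $f_v$ should strictly be $\lambda_X^{-1}$ rather than $\rho_X^{-1}$ correctly identifies a typing slip in the statement, which is harmless up to canonical isomorphism.
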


\begin{proof}
    There are three cases to consider.
    \begin{enumerate}
    \item By definition, $( \Xi_M \circ j )_0 ( X ) = \Xi_M( X )_0 = P \otimes X$ for each $X \in \mathcal{V}_0$.
          Then $\Xi_M \circ j$ and $P \otimes ( - )$ agree on objects.
          It remains to be shown that $\Xi_M \circ j$ and $P \otimes ( - )$ agree on morphisms.
          Let $f \in \mathcal{V}( X, Y )$.
          Then $\Xi_M( j( f ) ) = 1_P \otimes f = P \otimes ( f )$ by \cref{Lem:XiFunc}.
          Since $f$ was arbitrary, then $\Xi_M \circ j = P \otimes ( - )$.
    \item The following equation holds.
          {\small\begin{align*}
              ( 1_P \otimes h ) \circ \Xi_M( g ) \circ ( 1_P \otimes f )
              &=
              \Xi_M( j( h ) ) \circ \Xi_M( g ) \circ \Xi_M( j( f ) )
              && \emph{(by part (1))}
              \\
              &=
              \Xi_M( j( h ) \star g \star j( f ) )
              && \emph{(by \cref{Thm:XiFunc})}
              \\
              &= \Xi_M( k )
              && \emph{(by assumption)}
          \end{align*}}
    \item Since $d$ copies $v$, then the following equation holds.
          \begin{equation*}
            \includegraphics[valign=c,scale=0.84]{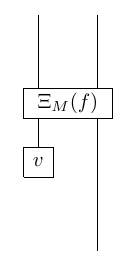}
            =
            \includegraphics[valign=c,scale=0.84]{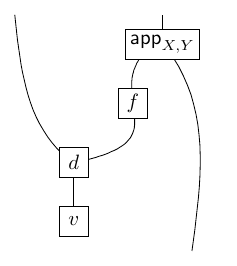}
            =
            \includegraphics[valign=c,scale=0.84]{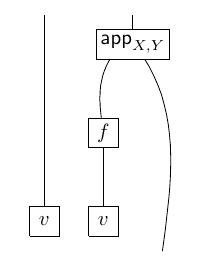}
          \end{equation*}
          Then $\Xi_M( f ) \circ ( v \otimes 1_X ) = v \otimes f_v$ where $f_v = \app_{X,Y} \circ ( ( f \circ v ) \otimes 1_X ) \circ \rho_X^{-1}$.
          \endproof
    \end{enumerate}
\end{proof}

\begin{thm}
    \label{Thm:XiTensor}
    Assume that $\mathcal{V}$ admits a symmetry $\beta$ and $M$ is cocommutative.
    If $f \in \Lambda_M( \mathcal{V} )( X, X' )$ and $g \in \Lambda_M( \mathcal{V} )( Y, Y' )$, then $\Xi_M( f \boxtimes g ) = ( 1_P \otimes \beta_{X',Y'} ) \circ ( \Xi_M( g ) \otimes 1_{X'} ) \circ ( 1_P \otimes \beta_{X',Y} ) \circ ( \Xi_M( f ) \otimes 1_Y )$.
\end{thm}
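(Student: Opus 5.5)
The plan is a direct diagrammatic computation in $\mathcal{V}$: expand both sides into string diagrams built only from $d$, $f$, $g$, $\app$, the braiding, and the enriched tensor of the self-enriched $\mathcal{V}$, and then match them using coassociativity and cocommutativity of $d$.

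First I will unfold the left-hand side. By \cref{Cor:UnderlyingShape}, $f \boxtimes g = \otimes^{\mathcal{V}}_{(X,Y),(X',Y')} \circ ( f \otimes g ) \circ d$, where $\otimes^{\mathcal{V}} : [X,X'] \otimes [Y,Y'] \to [X \otimes Y, X' \otimes Y']$ is the enriched tensor of the self-enriched $\mathcal{V}$. This map is defined via the hom-tensor adjunction as the transpose of
\begin{equation*}
(\app_{X,X'} \otimes \app_{Y,Y'}) \circ (1 \otimes \beta_{[Y,Y'],X} \otimes 1).
\end{equation*}
Hence $\app_{X \otimes Y, X' \otimes Y'} \circ (\otimes^{\mathcal{V}} \otimes 1) = (\app_{X,X'} \otimes \app_{Y,Y'}) \circ (1 \otimes \beta_{[Y,Y'],X} \otimes 1)$, and I will state this explicitly as a lemma. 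Substituting into \cref{Def:ClosedIncl} gives
\begin{equation*}
\Xi_M(f \boxtimes g) = (1_P \otimes \app \otimes \app) \circ (1_P \otimes 1 \otimes \beta \otimes 1) \circ (1_P \otimes f \otimes g \otimes 1_{X \otimes Y}) \circ ((1_P \otimes d) \circ d \otimes 1_{X \otimes Y}).
\end{equation*}
Here I have already used coassociativity to rewrite $(1_P \otimes d) \circ d$ as the threefold copy: one copy of $P$ survives, one feeds $f$, and one feeds $g$.

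Next I will unfold the right-hand side. $\Xi_M(f) \otimes 1_Y$ copies $P$ and evaluates $f$ on $X$. The braid $1_P \otimes \beta_{X',Y}$ moves $Y$ next to $P$. Then $\Xi_M(g) \otimes 1_{X'}$ copies the surviving $P$ again and evaluates $g$ on $Y$, and the final braid restores the order $X' \otimes Y'$. As a diagram this is a nested copy $(d \otimes 1) \circ d$, in which $f$ receives the rightmost branch of the first split and $g$ receives the right branch of the second split. By naturality of the braiding I can slide the $\app$ boxes and the $f$ box past the crossings, so that the two braids on $X', Y$ combine with the crossing of the $g$-wire over $X$.

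The main obstacle is matching the parameter routing. On the right-hand side the leftmost copy of $P$ survives while the order of the $f$ and $g$ branches relative to the nesting differs from the left-hand side. I will handle this in two steps: use coassociativity to pass between $(d \otimes 1) \circ d$ and $(1 \otimes d) \circ d$, then apply cocommutativity $\beta_{P,P} \circ d = d$ on the inner split to swap the $f$ and $g$ branches. Symmetry, rather than mere braiding, is needed at this point so that the crossings introduced by the cocommutativity move and by sliding wires cancel with the crossing inside $\otimes^{\mathcal{V}}$, regardless of over/under orientation. After that, both diagrams should be equal up to deformation. I expect the bookkeeping of these crossings to be the delicate part, so I will draw the intermediate diagram explicitly.
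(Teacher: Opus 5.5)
Your proof is correct, but it takes a different route from the paper's.

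\textbf{Your route.} You expand both sides completely and match them in a single computation. You substitute the enriched tensor (the transpose of $(\app\otimes\app)\circ(1\otimes\beta\otimes 1)$). Coassociativity then aligns the nested copies $(d\otimes 1)\circ d$ and $(1\otimes d)\circ d$. Cocommutativity swaps the $f$- and $g$-branches, and symmetric monoidal coherence removes the remaining crossings.

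\textbf{The paper's route.} The paper does only one diagrammatic computation, for the special case $g=i_Y$. Since $i_Y=j(1_Y)$ discards its copy of the parameter, counitality alone gives $\Xi_M(f\boxtimes i_Y)=\Xi_M(f)\otimes 1_Y$. The general case is then assembled structurally from four ingredients:
\begin{itemize}
\item the decomposition $f\boxtimes g=(i_{X'}\boxtimes g)\star(f\boxtimes i_Y)$;
\item naturality of the braiding $b$ in $\Lambda_M(\mathcal{V})$;
\item functoriality of $\Xi_M$ (\cref{Thm:XiFunc});
\item $\Xi_M(j(\beta))=1_P\otimes\beta$ (\cref{Thm:XiProps}).
\end{itemize}
In the paper, cocommutativity is hidden in the assertion that $\Lambda_M(\mathcal{V})$ is a braided monoidal category obtained by base change. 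In particular, the interchange step that evaluates $f$ before $g$ is exactly where the two parameter branches get exchanged.

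\textbf{What each buys.} Your version shows that this branch exchange is the only use of cocommutativity. It also does not rely on the base-change monoidal structure of $\Lambda_M(\mathcal{V})$ or on \cref{Thm:XiFunc}. The paper's version avoids the crossing bookkeeping you flag as delicate, by confining the explicit computation to a case with one trivial factor and reusing structure already established.

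\textbf{Two small points.} First, on the left-hand side, $(1_P\otimes d)\circ d$ appears verbatim after substituting $f\boxtimes g$ into $\Xi_M$. Coassociativity is needed only when you compare with the right-hand side, not at that stage. Second, the final braiding in the statement is mistyped: it should go $Y'\otimes X'\to X'\otimes Y'$. The paper's proof produces $\beta^{-1}_{X',Y'}$, and this is what you implicitly use.
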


\begin{proof}
    We first prove the special case in which $g = i_Y$.
    Let $f \in F_*( \mathcal{V} )( X, X' )$ and $Y \in \mathcal{V}_0$.
    Then the following equation holds.
    \begin{equation*}
        \Xi_M( f \boxtimes i_Y )
        =
        \includegraphics[valign=c,scale=0.84]{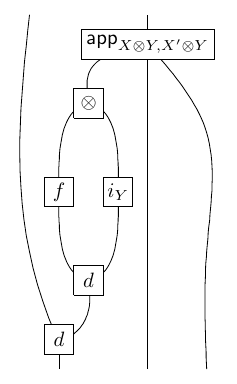}
        =
        \includegraphics[valign=c,scale=0.84]{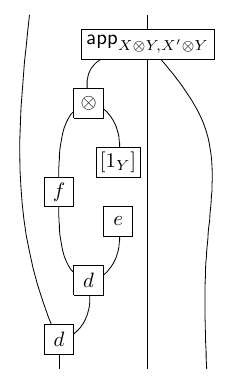}
        =
        \includegraphics[valign=c,scale=0.84]{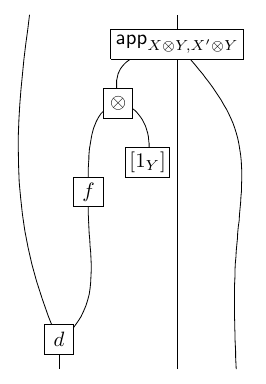}
    \end{equation*}
    The enriched tensor product can then be expanded as follows.
    \begin{equation*}
        \Xi_M( f \boxtimes i_Y )
        =
        \includegraphics[valign=c,scale=0.84]{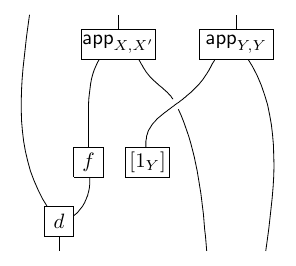}
        =
        \includegraphics[valign=c,scale=0.84]{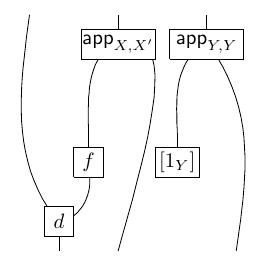}
        =
        \includegraphics[valign=c,scale=0.84]{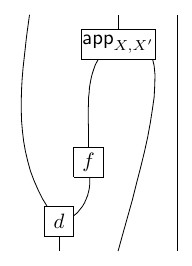}
    \end{equation*}
    Then $\Xi_M( f \boxtimes i_Y ) = \Xi_M( f ) \otimes 1_Y$.
    Since $f$ and $Y$ were arbitrary, then this special case of the theorem is established.
    Using the special case, it is then possible to prove the general case.
    Let $f \in \Lambda_M( \mathcal{V} )( X, X' )$ and $g \in \Lambda_M( \mathcal{V} )( Y, Y' )$.
    Then the following equation holds.
    {\par\small\begin{align*}
        \Xi_M( f \boxtimes g )
        &= \Xi_M( ( i_{X'} \boxtimes g ) \star ( f \boxtimes i_Y ) )
        \\
        &= \Xi_M( b_{X',Y'}^{-1} \star ( g \boxtimes i_{X'} ) \star b_{X',Y} \star ( f \boxtimes i_Y ) )
        \\
        &= \Xi_M( b_{X',Y'}^{-1} ) \circ \Xi_M( g \boxtimes i_{X'} ) \circ \Xi_M( b_{X',Y}) \circ \Xi_M( f \boxtimes i_Y )
            && \textit{(by \cref{Thm:XiFunc})}
            \\
        &= \Xi_M( b_{X',Y'}^{-1} ) \circ ( \Xi_M( g ) \otimes 1_{X'} ) \circ \Xi_M( b_{X',Y} ) \circ ( \Xi_M( f ) \otimes 1_Y )
            && \textit{(by special case)}
            \\
        &= \Xi_M( j( \beta_{X',Y'}^{-1} ) ) \circ ( \Xi_M( g ) \otimes 1_{X'} ) \circ \Xi_M( j( \beta_{X',Y} ) ) \circ ( \Xi_M( f ) \otimes 1_Y )
        \\
        &= ( 1_P \otimes \beta_{X',Y'}^{-1} ) \circ ( \Xi_M( g ) \otimes 1_{X'} ) \circ ( 1_P \otimes \beta_{X',Y} ) \circ ( \Xi_M( f ) \otimes 1_Y )
            && \textit{(by \cref{Thm:XiProps} (1))}
    \end{align*}}%
    Since $f$ and $g$ were arbitrary, then this concludes the proof.
\end{proof}

\subsection{Parameterized Constructions and Control Functors}
The circuit diagrams obtains from \cref{Thm:XiProps} and \cref{Thm:XiTensor} are of particular interest, as they seemingly recover many of the properties expected of controlled operations, despite lacking the Frobenius structure required by controlled operations.
This is best illustrated by appeal to the axioms of a control functor.
For example, equation (2) can be seen as a generalization of the \textbf{Conjugation Axiom} satisfied by a $\dagger$-control functor, equation (3) can be seen as a generalization of the \textbf{Activation Axiom} satisfied by a pointed control functor, and equation (4) can be seen as a generalization of the \textbf{Tensor Axiom} satisfied by any control functor.
Of course, $\Xi_M$ is not a control functor, since $\Xi_M$ is not an endofunctor.
However, $\Xi_M$ can be composed with any other functor $G: \mathcal{V} \to \Lambda_M( \mathcal{V} )$ to obtain an endofunctor on $\mathcal{V}$.
It is not hard to see that $\Xi_M$ always composes with $j$ to obtain the trivial control functor $P \otimes ( - ) = \Xi_M \circ j: \mathcal{V} \to \mathcal{V}$.
In contrast, it is not possible to extend the standard control functor used in quantum computing to all of $\mathbb{C}\mathbf{Vect}$.

This motivates the study of comomoids in a symmetric monoidal closed category $\mathcal{V}$ which induce nontrivial endofunctors on subcategories of $\mathcal{V}$.
It should be noted that this generalization to subcategories is already present in the literature on control functors.
For example, the control functors considered in~\cite{Delorme2026} are restricted to the full subcategory on endomorphisms, and the control functions in~\cite{FuKishida2022} are defined with respect to a \emph{controlled subcategory} of endomorphisms.
To better understand the control functors obtained through parameterized constructions, we will first study the endofunctors that factor through $\Lambda_M( \mathcal{V} )$.
First, it is shown that these are precisely the endofunctors which give rise to $M$-comodule homomorphisms, in the same sense as standard control functors in \cref{Ex:GenCtrl}.

\begin{defi}
    A functor $F: \mathcal{U} \to \mathcal{U}$ is \emph{$M$-compatible} if for each $f \in \mathcal{U}( X, Y )$, the morphism $F( f )$ is an $M$-comodule homomorphism from $d \otimes 1_{X}$ to $d \otimes 1_{Y}$.
\end{defi}

\begin{lem}
    \label{Thm:CompDef}
    If $F: \mathcal{U} \to \mathcal{U}$ is an endofunctor and there exists a functor $G: \mathcal{U} \to \Lambda_M( \mathcal{V} )$ satisfying $F = \Xi_M \circ G$, then the following equation holds holds for each $f \in \mathcal{U}( X, Y )$.
    \begin{equation*}
        \label{Eqn:CompDef}
        \includegraphics[valign=c,scale=0.84]{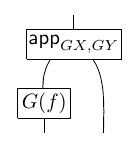}
        =
        \includegraphics[valign=c,scale=0.84]{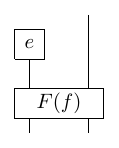}
    \end{equation*}
\end{lem}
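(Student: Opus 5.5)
The equation in \cref{Thm:CompDef} is hidden behind an image, so I read it from the preceding definition of $M$-compatibility. I take it to assert that $F(f)$ is an $M$-comodule homomorphism for the free comodules $d \otimes 1_X$ and $d \otimes 1_Y$:
\begin{equation*}
    ( d \otimes 1_Y ) \circ F( f ) = ( 1_P \otimes F( f ) ) \circ ( d \otimes 1_X ),
\end{equation*}
up to associators. The plan is to reduce this to a statement about $\Xi_M$ alone. Since $F = \Xi_M \circ G$, we have $F( f ) = \Xi_M( G( f ) )$, where $G( f ) \in \Lambda_M( \mathcal{V} )( G_0 X, G_0 Y ) = \mathcal{V}( P, [ G_0 X, G_0 Y ] )$. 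It therefore suffices to prove the following: for every parameterized morphism $g \in \mathcal{V}( P, [ A, B ] )$, the morphism $\Xi_M( g )$ satisfies the comodule-homomorphism equation with respect to $d \otimes 1_A$ and $d \otimes 1_B$. Functoriality of $G$ is not needed, and nothing depends on $\mathcal{U}$ beyond the typing. I also note that, for the equation to typecheck, $F_0(X) = P \otimes G_0(X)$ should be read with $G_0(X) = X$, or with $X$ replaced by $G_0(X)$ throughout.

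Next I unfold \cref{Def:ClosedIncl}:
\begin{equation*}
    \Xi_M( g ) = ( 1_P \otimes \app_{A,B} ) \circ ( 1_P \otimes g \otimes 1_A ) \circ ( d \otimes 1_A ).
\end{equation*}
I then compute both sides as string diagrams.

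The left-hand side is $( d \otimes 1_B ) \circ \Xi_M( g )$. The outer $d$ sits on the output $P$-wire of the copy that bypasses $g$ and $\app$. By interchange (bifunctoriality of $\otimes$), that $d$ slides down past $g$ and $\app$, which act only on the other branch. The result is $d$ applied to the left output of the first $d$, followed by $( 1_{P\otimes P} \otimes \app ) \circ ( 1_{P\otimes P} \otimes g \otimes 1_A )$.

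The right-hand side is $( 1_P \otimes \Xi_M( g ) ) \circ ( d \otimes 1_A )$. This produces $d$ followed by $d$ applied to the right output, and then the same $g$ and $\app$ on the rightmost branch.

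Coassociativity of $d$, namely $( d \otimes 1 ) \circ d = \alpha \circ ( 1 \otimes d ) \circ d$, identifies these two diagrams up to the associator, which completes the argument. Cocommutativity is not needed.

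The main obstacle is bookkeeping. The paper's diagrams suppress associators and unitors, so the symbolic version must insert $\alpha_{P,P,B}$ and $\alpha_{P,P,A}$ carefully and use their naturality to move $g$ and $\app$ past them. In the diagrammatic style already used for \cref{Thm:XiFunc}, the argument is just two interchange steps and one use of coassociativity. I would present it as a three-step string-diagram chain, citing coassociativity of $M$ at the middle step.
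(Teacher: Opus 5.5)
\textbf{Gap: the proposal proves a different equation from the one this lemma states.} Your guess was that the hidden equation is the comodule-homomorphism condition from the definition of $M$-compatibility. The rest of the paper indicates it is not. That property is proved from scratch, by its own coassociativity chain, in the $\Leftarrow$ direction of \cref{Thm:CompFunc}. The paper would not re-derive it there if this lemma already stated it.

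The lemma is instead invoked in \cref{Thm:ConjAction} and \cref{Thm:TensorAction}, in chains that end with a conclusion about $G(h\circ g\circ f)$ or $G(f\otimes 1_Y)$ ``by uniqueness'' under the hom--tensor adjunction. In the $\Rightarrow$ direction of \cref{Thm:CompFunc}, $G(f)$ is \emph{defined} as the transpose of $F(f)$ with its control output discarded by $e$. So the lemma expresses $G(f)$ in terms of $F(f)$:
\[
\lambda_Y \circ (e \otimes 1_Y) \circ F(f) \;=\; \app_{X,Y} \circ (G(f) \otimes 1_X),
\]
with $X,Y$ read as $G_0X, G_0Y$ if $G$ is not identity-on-objects. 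In words, discarding the control wire of $F(f)$ yields the uncurried form of $G(f)$. Your comodule equation never mentions $G$, so it cannot supply those later uniqueness steps.

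\textbf{Missing step: the counit law, not coassociativity.} Substitute
\[
F(f)=\Xi_M(G(f))=(1_P\otimes\app_{X,Y})\circ(1_P\otimes G(f)\otimes 1_X)\circ(d\otimes 1_X)
\]
and cap the kept $P$-output with $e$. By interchange, $e$ slides down onto the first output of $d$. Left counitality, $\lambda_P\circ(e\otimes 1_P)\circ d=1_P$, then removes both $e$ and $d$, leaving $\app_{X,Y}\circ(G(f)\otimes 1_X)$. This is the paper's two-step diagram chain. Your coassociativity computation is correct, and it is essentially the paper's argument for the $\Leftarrow$ direction of \cref{Thm:CompFunc}, but it does not establish this lemma.
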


\begin{proof}
    Let $f \in \mathcal{U}( X, Y )$.
    Since $F( f ) = \Xi_M( G( f ) )$, then the following equation holds.
    \begin{equation*}
        \includegraphics[valign=c,scale=0.84]{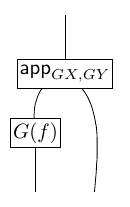}
        =
        \includegraphics[valign=c,scale=0.84]{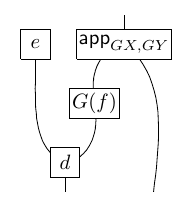}
        =
        \includegraphics[valign=c,scale=0.84]{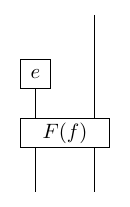}
    \end{equation*}
    Since $f$ was arbitrary, then this concludes the proof.
    \endproof
\end{proof}

\begin{thm}
    \label{Thm:CompFunc}
    A functor $F: \mathcal{U} \to \mathcal{U}$ is $M$-compatible if and only if there exists an identity-on-objects functor $G: \mathcal{U} \to \Lambda_M( \mathcal{V} )$ satisfying $F = \Xi_M \circ G$.
\end{thm}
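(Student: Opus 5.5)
The reverse direction follows from \cref{Thm:CompDef}. For the forward direction I would define $G$ by currying $F(f)$ after applying the counit, and prove functoriality of $G$ by showing that $\Xi_M$ is faithful. Throughout I read $M$-compatibility as requiring $F_0( X ) = P \otimes X$ (needed for the comodule condition to type-check), together with the equation
\begin{equation*}
    ( d \otimes 1_Y ) \circ F( f ) = ( 1_P \otimes F( f ) ) \circ ( d \otimes 1_X ),
\end{equation*}
with the associator suppressed.

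\emph{Reverse direction.} Suppose $G: \mathcal{U} \to \Lambda_M( \mathcal{V} )$ is identity-on-objects and $F = \Xi_M \circ G$. Then $F_0( X ) = P \otimes X$ by \cref{Def:ClosedIncl}. \cref{Thm:CompDef} gives, for each $f$, exactly the diagrammatic equation saying that $F( f ) = \Xi_M( G( f ) )$ intertwines $d \otimes 1_X$ and $d \otimes 1_Y$. If the figure in \cref{Thm:CompDef} is not literally this form, I would unfold $\Xi_M( G( f ) )$ and use coassociativity of $d$ to slide one copy of $d$ past the application node. Hence $F$ is $M$-compatible.

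\emph{Forward direction, construction.} Let $F$ be $M$-compatible. For $f \in \mathcal{U}( X, Y )$, define $G( f ) \in \mathcal{V}( P, [ X, Y ] )$ to be the transpose under the hom-tensor adjunction of
\begin{equation*}
    \lambda_Y \circ ( e \otimes 1_Y ) \circ F( f ) : P \otimes X \to Y,
\end{equation*}
and set $G_0( X ) = X$. By the universal property of $\app$, we have $\app_{X,Y} \circ ( G( f ) \otimes 1_X ) = \lambda_Y \circ ( e \otimes 1_Y ) \circ F( f )$. Substituting this into the definition of $\Xi_M$ gives
\begin{equation*}
    \Xi_M( G( f ) ) = ( 1_P \otimes \lambda_Y ( e \otimes 1_Y ) ) \circ ( 1_P \otimes F( f ) ) \circ ( d \otimes 1_X ).
\end{equation*}
By $M$-compatibility this equals $( 1_P \otimes \lambda_Y ( e \otimes 1_Y ) ) \circ ( d \otimes 1_Y ) \circ F( f )$. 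By counitality of $( P, d, e )$ this reduces to $F( f )$. So $\Xi_M \circ G = F$ on morphisms and on objects.

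\emph{Forward direction, functoriality of $G$.} This is the step I expect to be the main obstacle, since it is not a direct computation; I would handle it by showing that $\Xi_M$ is faithful. For any $h \in \Lambda_M( \mathcal{V} )( X, Y )$, postcomposing $\Xi_M( h )$ with $\lambda_Y \circ ( e \otimes 1_Y )$ and applying counitality gives
\begin{equation*}
    \lambda_Y \circ ( e \otimes 1_Y ) \circ \Xi_M( h ) = \app_{X,Y} \circ ( h \otimes 1_X ).
\end{equation*}
This is the uncurried form of $h$, so $h$ is recovered from $\Xi_M( h )$ by the hom-tensor bijection, and $\Xi_M$ is faithful.

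Now $\Xi_M$ is a functor by \cref{Thm:XiFunc}. Hence
\begin{equation*}
    \Xi_M( G( g ) \star G( f ) ) = F( g ) \circ F( f ) = F( g \circ f ) = \Xi_M( G( g \circ f ) ),
\end{equation*}
and faithfulness gives $G( g \circ f ) = G( g ) \star G( f )$. Similarly, $\Xi_M( G( 1_X ) ) = F( 1_X ) = 1_{P \otimes X} = \Xi_M( i_X )$, using \cref{Lem:XiFunc}, so $G( 1_X ) = i_X$. Thus $G$ is an identity-on-objects functor with $F = \Xi_M \circ G$.

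The care needed in the faithfulness step is in the unitor and associator bookkeeping and in checking that the counit cancels the copy of $P$ produced by $d$ on the correct side. I would do these steps in string diagrams, as elsewhere in the paper.
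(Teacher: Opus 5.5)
Your proof is correct, but you obtain functoriality of $G$ by a different route from the paper.

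\textbf{What the paper does.} It defines $G$ exactly as you do. It first extracts from compatibility and counitality the identity \cref{Eqn:GCopyDel}, which in effect says $F(f) = (1_P \otimes \lambda_Y(e\otimes 1_Y)F(f))\circ(d\otimes 1_X)$. It then verifies $G(1_X)=i_X$ and $G(g\circ f)=G(g)\star G(f)$ by explicit string-diagram rewriting of the uncurried forms, concluding ``by uniqueness'' of transposes. Only at the end does it check $F=\Xi_M\circ G$, again from the same identity.

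\textbf{What you do differently.} You reverse the order: you prove $\Xi_M\circ G = F$ first and then get functoriality formally. Your identity $\lambda_Y\circ(e\otimes 1_Y)\circ\Xi_M(h)=\app_{X,Y}\circ(h\otimes 1_X)$ makes each $(\Xi_M)_{X,Y}$ split injective. Hence \cref{Thm:XiFunc} and the functoriality of $F$ transfer directly to $G$.

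\textbf{What each route buys.} Your route avoids redoing the composition computation already contained in \cref{Thm:XiFunc}. Compatibility enters only through $\Xi_M(G(f))=F(f)$. It also yields more than the statement. First, $G$ is unique. Second, combined with your reverse-direction computation, $(\Xi_M)_{X,Y}$ is a bijection onto the comodule homomorphisms from $d\otimes 1_X$ to $d\otimes 1_Y$. Conceptually, under the hom--tensor bijection $\Lambda_M(\mathcal{V})$ is the co-Kleisli category of the comonad $P\otimes(-)$, and $\Xi_M$ is its canonical comparison functor into $\mathcal{V}$. The same faithfulness would also shorten the later ``by uniqueness'' converses in \cref{Thm:ConjAction} and \cref{Thm:TensorAction}. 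The paper's route is more hands-on and does not isolate this structural fact.

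\textbf{On the reverse direction.} The paper does not invoke \cref{Thm:CompDef} there. It unfolds $\Xi_M(G(f))$ and uses coassociativity of $d$, which is your fallback. Present that computation rather than relying on the citation.
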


\begin{proof}
    There are two cases to consider.
    \begin{enumerate}[align=left]
    \item[$\Leftarrow$]
          Assume there exists an identity-on-objects functor $G: \mathcal{U} \to \Lambda_M( \mathcal{V} )$ such that $F = \Xi_M \circ G$.
          Let $f \in \mathcal{U}( X, Y )$.
          Since $F( f ) = \Xi_M( G( f ) )$, then the following equation holds.
          \begin{equation*}
            \includegraphics[valign=c,scale=0.84]{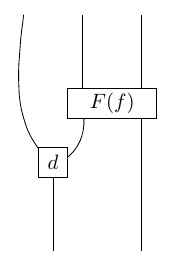}
            =
            \includegraphics[valign=c,scale=0.84]{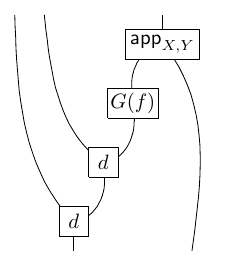}
            =
            \includegraphics[valign=c,scale=0.84]{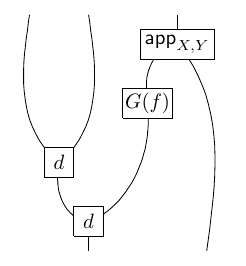}
            =
            \includegraphics[valign=c,scale=0.84]{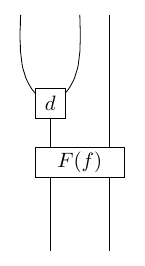}
          \end{equation*}
          Then $F( f )$ is a comodule homomorphism from $d \otimes 1_X$ to $d \otimes 1_Y$.
          Since $f$ was arbitrary, then $F$ is an $M$-compatible functor.
    \item[$\Rightarrow$]
          Assume that $F$ is $M$-compatible.
          Then the following equation holds for each $f \in \mathcal{U}( X, Y )$.
          \begin{equation}
            \includegraphics[valign=c,scale=0.84]{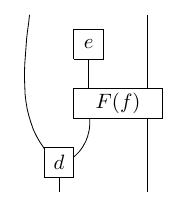}
            =
            \includegraphics[valign=c,scale=0.84]{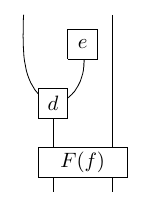}
            =
            \includegraphics[valign=c,scale=0.84]{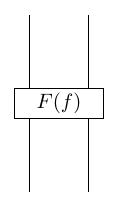}
            \label{Eqn:GCopyDel}
          \end{equation}
          Define an identity-on-objects assignment $G: \mathcal{U} \to \Lambda_M( \mathcal{V} )$ such that $G$ sends each morphism $f \in \mathcal{U}( X, Y )$ to the image of $\rho_X \circ ( e \circ 1_X ) \circ F( f )$ under the hom-tensor adjunction.
          It must be shown that $G$ defines an identity-on-objects functor.
          It suffices to show that $G$ preserves identities and composition.
          First, let $X \in \mathcal{V}_0$.
          Since $F$ is a functor, then the following equation holds.
          \begin{equation*}
            \includegraphics[valign=c,scale=0.84]{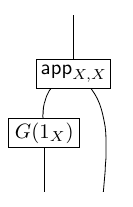}
            =
            \includegraphics[valign=c,scale=0.84]{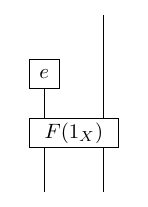}
            =
            \includegraphics[valign=c,scale=0.84]{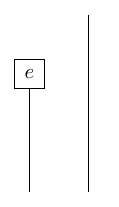}
            =
            \includegraphics[valign=c,scale=0.84]{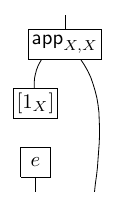}
            =
            \includegraphics[valign=c,scale=0.84]{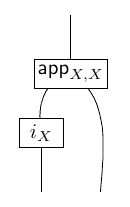}
          \end{equation*}
          Then $G( 1_X ) = i_X$ by uniqueness.
          Since $X$ was arbitrary, then $G$ preserves identities.
          Next, let $f \in \mathcal{U}( X, Y )$ and $g \in \mathcal{U}( Y, Z )$.
          By \cref{Eqn:GCopyDel}, the following equation holds.
          \begin{equation*}
            \includegraphics[valign=c,scale=0.84]{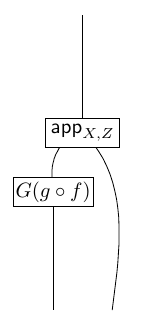}
            =
            \includegraphics[valign=c,scale=0.84]{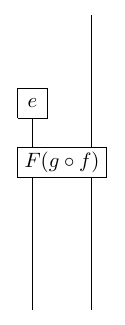}
            =
            \includegraphics[valign=c,scale=0.84]{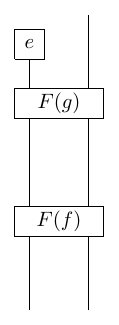}
            =
            \includegraphics[valign=c,scale=0.84]{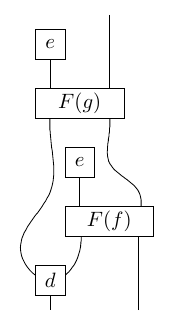}
          \end{equation*}
          It then follows from the definitions of $G( f )$ and $G( g )$ that the following equation holds.
          \begin{equation*}
            \includegraphics[valign=c,scale=0.84]{figs/special/g_mult_4.pdf}
            =
            \includegraphics[valign=c,scale=0.84]{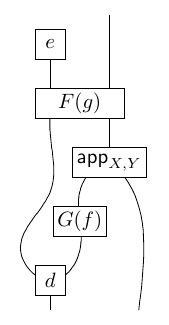}
            =
            \includegraphics[valign=c,scale=0.84]{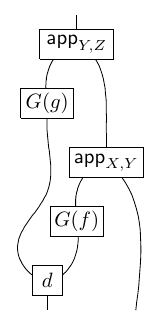}
            =
            \includegraphics[valign=c,scale=0.84]{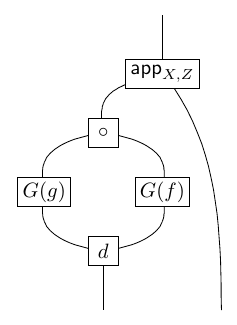}
          \end{equation*}
          Then $G( g \circ f ) = G( g ) \star G( f )$ by uniqueness.
          Since $f$ and $g$ were arbitrary, then $G$ respects composition.
          Then $G$ is a functor.
          Next, it must be shown that $F = \Xi_M \circ G$.
          To do this, let $f \in \mathcal{U}( X, Y )$.
          Then the following equation holds by \cref{Eqn:GCopyDel}.
          \begin{equation*}
            \includegraphics[valign=c,scale=0.84]{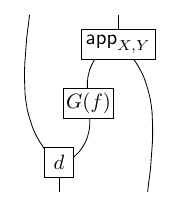}
            =
            \includegraphics[valign=c,scale=0.84]{figs/special/cstruct_lhs.pdf}
            =
            \includegraphics[valign=c,scale=0.84]{figs/special/cstruct_rhs.pdf}
          \end{equation*}
          Then $\Xi_M( G( f ) ) = F( f )$.
          Since $f$ was arbitrary, then $F = \Xi_M \circ G$.
          \endproof
    \end{enumerate}
\end{proof}

Now assume that $F: \mathcal{U} \to \mathcal{U}$ factors through $\Lambda_M( \mathcal{V} )$ by $G: \mathcal{U} \to \Lambda_M( \mathcal{V} )$.
The properties that $F$ must satisfy to be a control functor are of one of two flavours.
\begin{itemize}
\item \textbf{Conjugation Action}.
      $F( h \circ g \circ f ) = ( 1_P \otimes h ) \circ F( g ) \circ ( 1_P \otimes f )$.
\item \textbf{Right Tensor Action}.
      $F( f \otimes g ) = F( f ) \otimes g$.
\end{itemize}
For example, the \textbf{Conjugation Axiom} is a conjugation action and the \textbf{Tensor Axiom} is a left tensor action.
It turns out that both actions can be rephrased in terms of analogous actions on $G$.
These definitions are equivalent, as shown in the following two theorems.

\begin{thm}
    \label{Thm:ConjAction}
    Let $G: \mathcal{U} \to \Lambda_M( \mathcal{V} )$ be an identity-on-objects functor and $F = \Xi_M \circ G$.
    For each $f \in \mathcal{U}( X, Y )$, $g \in \mathcal{U}( Y, Z )$, and $h \in \mathcal{U}( Z, W )$, $G( h \circ g \circ f ) = j( h ) \star G( g ) \star j( f )$ if and only if $F( h \circ g \circ f ) = ( 1_P \otimes h ) \circ F( g ) \circ ( 1_P \otimes f )$.
\end{thm}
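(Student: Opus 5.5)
The plan is to prove both directions by pushing everything through $\Xi_M$, using that $\Xi_M$ is a functor (\cref{Thm:XiFunc}) and that $\Xi_M \circ j = P \otimes ( - )$ (\cref{Thm:XiProps} part (1)). The forward direction needs only these two facts. The reverse direction additionally needs $\Xi_M$ to be faithful, i.e.\ injective on each hom-set $\Lambda_M( \mathcal{V} )( X, Y ) = \mathcal{V}( P, [ X, Y ] )$; I will establish this as a preliminary lemma.

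For the lemma, take $k \in \Lambda_M( \mathcal{V} )( X, Y )$. I will post-compose $\Xi_M( k )$ with $\lambda_Y \circ ( e \otimes 1_Y )$, which discards the output copy of the parameter. Since $e$ acts only on the first tensor factor, it slides past $1_P \otimes \app_{X,Y}$ and $1_P \otimes k \otimes 1_X$ by bifunctoriality and lands on the first leg of $d$. Counitality, $\lambda_P \circ ( e \otimes 1_P ) \circ d = 1_P$, then reduces the composite to $\app_{X,Y} \circ ( k \otimes 1_X )$. This is exactly the transpose of $k$ under the hom-tensor adjunction, so $k$ is recovered from $\Xi_M( k )$. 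Hence $\Xi_M( k ) = \Xi_M( k' )$ implies $k = k'$. This is essentially the string-diagram calculation already used when defining $G$ in the proof of \cref{Thm:CompFunc}, so I expect it to be routine. Its only delicate point is tracking the unitors, and it is the one genuinely new ingredient.

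($\Rightarrow$) Assume $G( h \circ g \circ f ) = j( h ) \star G( g ) \star j( f )$. Since $G$ is identity-on-objects, $G( g ) \in \Lambda_M( \mathcal{V} )( Y, Z )$, so the composite is well typed. Applying $\Xi_M$ and using \cref{Thm:XiFunc} gives $F( h \circ g \circ f ) = \Xi_M( j( h ) ) \circ \Xi_M( G( g ) ) \circ \Xi_M( j( f ) )$. By \cref{Thm:XiProps} part (1) (equivalently \cref{Lem:XiFunc}) this equals $( 1_P \otimes h ) \circ F( g ) \circ ( 1_P \otimes f )$. This is the same computation as \cref{Thm:XiProps} part (2), with $k = G( h \circ g \circ f )$.

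($\Leftarrow$) Assume $F( h \circ g \circ f ) = ( 1_P \otimes h ) \circ F( g ) \circ ( 1_P \otimes f )$. Reading the forward chain backwards, the right-hand side equals $\Xi_M( j( h ) \star G( g ) \star j( f ) )$, while the left-hand side is $\Xi_M( G( h \circ g \circ f ) )$. Both are images under $\Xi_M$ of morphisms in $\Lambda_M( \mathcal{V} )( X, W )$, so faithfulness of $\Xi_M$ yields $G( h \circ g \circ f ) = j( h ) \star G( g ) \star j( f )$. The main obstacle is therefore only the faithfulness lemma; if a direct diagrammatic argument there proves awkward, I would instead define the explicit left inverse $\varphi \mapsto$ (transpose of $\lambda_Y \circ ( e \otimes 1_Y ) \circ \varphi$) and check on string diagrams that it undoes $\Xi_M$.
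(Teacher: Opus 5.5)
Your proposal is correct. The forward direction is exactly the paper's, which just cites \cref{Thm:XiProps} part (2). The converse reaches the same conclusion by a differently organized argument.

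The paper never isolates faithfulness of $\Xi_M$. Instead it applies \cref{Thm:CompDef}: discarding the output copy of $P$ from $F(-) = \Xi_M(G(-))$ with $e$ yields the transpose of $G(-)$. It then substitutes the hypothesis, and reintroduces the enriched composition and the counits diagrammatically to rebuild the transpose of $j(h) \star G(g) \star j(f)$. It concludes by uniqueness of transposes under the hom-tensor adjunction.

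Your lemma is that same discard-and-uncurry identity, $\lambda_Y \circ (e \otimes 1_Y) \circ \Xi_M(k) = \app_{X,Y} \circ (k \otimes 1_X)$. The difference is that you prove it for every $k \in \Lambda_M(\mathcal{V})(X,Y)$, not only for $k$ in the image of $G$. The unitor bookkeeping does go through, using left counitality, naturality of $\alpha$ and $\lambda$, and the coherence $\alpha_{\mathbb{I},A,B} \circ (\lambda_A^{-1} \otimes 1_B) = \lambda_{A \otimes B}^{-1}$.

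Packaging it as faithfulness buys modularity. The converse becomes the forward computation read backwards, with no need to re-derive the transpose of the composite by hand. The same lemma would also immediately settle the ``by uniqueness'' steps in the converses of \cref{Thm:TensorAction} (via \cref{Thm:XiTensor}) and \cref{Thm:DaggerFactor} (via \cref{Lem:XiDag}). The paper's direct chase keeps everything inside one diagrammatic computation tailored to this equation, but repeats work that faithfulness handles uniformly.
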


\begin{proof}
    Let $f \in \mathcal{U}( X, Y )$, $g \in \mathcal{U}( Y, Z )$, and $h \in \mathcal{U}( Z, W )$.
    If $G( h \circ g \circ f ) = j( h ) \star G( g ) \star j( f )$, then $F( h \circ g \circ f ) = ( 1_P \otimes h ) \circ F( g ) \circ ( 1_P \otimes f )$ by part (2) of \cref{Thm:XiProps}.
    Assume instead that $F( h \circ g \circ f ) = ( 1_P \otimes h ) \circ F( g ) \circ ( 1_P \otimes f )$.
    Then the following equation holds by \cref{Thm:CompDef}.
    \begin{center}
        \includegraphics[valign=c,scale=0.84]{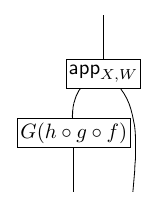}
        =
        \includegraphics[valign=c,scale=0.84]{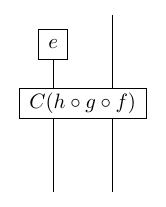}
        =
        \includegraphics[valign=c,scale=0.84]{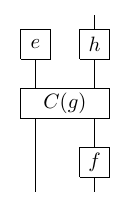}
        =
        \includegraphics[valign=c,scale=0.84]{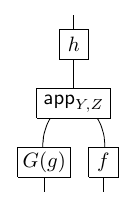}
        =
        \includegraphics[valign=c,scale=0.84]{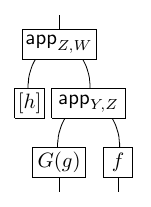}
    \end{center}
    The enriched composition can then be introduced as follows.
    \begin{equation*}
        \includegraphics[valign=c,scale=0.84]{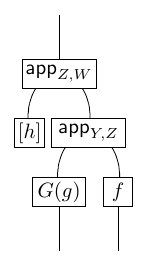}
        =
        \includegraphics[valign=c,scale=0.84]{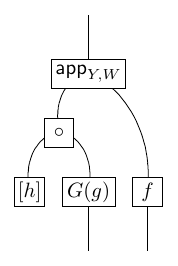}
        =
        \includegraphics[valign=c,scale=0.84]{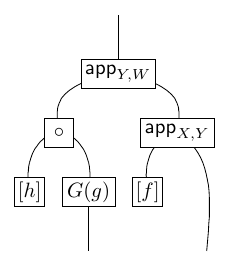}
        =
        \includegraphics[valign=c,scale=0.84]{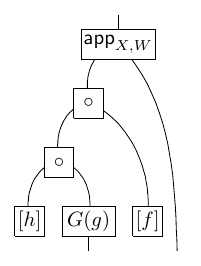}
    \end{equation*}
    Counits can then be introduced as followed.
    \begin{equation*}
        \includegraphics[valign=c,scale=0.84]{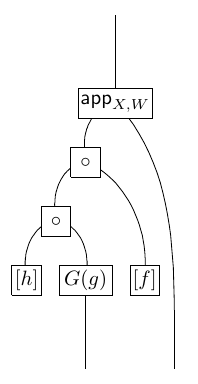}
        =
        \includegraphics[valign=c,scale=0.84]{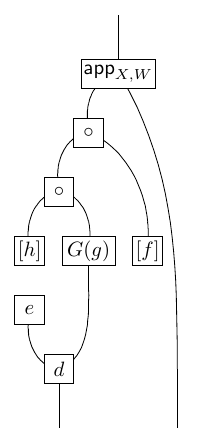}
        =
        \includegraphics[valign=c,scale=0.84]{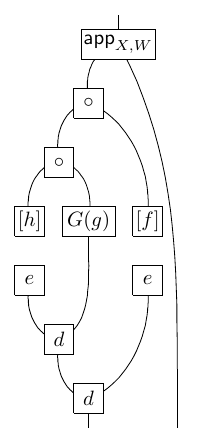}
        =
        \includegraphics[valign=c,scale=0.84]{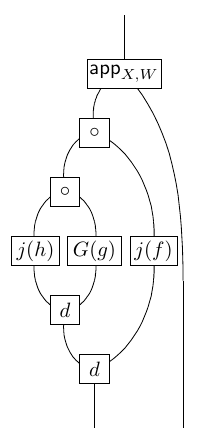}
    \end{equation*}
    Then $G( h \circ g \circ f ) = j( h ) \star G( g ) \star j( f )$ by the uniqueness of $G( h \circ g \circ f )$.
\end{proof}

\begin{thm}
    \label{Thm:TensorAction}
    Assume that $\mathcal{V}$ admits a symmetry, $\mathcal{U}$ is a monoidal subcategory, and $M$ is cocommutative.
    Let $G: \mathcal{U} \to \Lambda_M( \mathcal{V} )$ be an identity-on-objects functor and $F = \Xi_M \circ G$.
    For each $f \in \mathcal{U}( X, Y )$ and $g \in \mathcal{U}( Z, W )$, $G( f \otimes g ) = G( f ) \boxtimes j( g )$ if and only if $F( f \otimes g ) = F( f ) \otimes g$.
\end{thm}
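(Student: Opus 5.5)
The plan mirrors the proof of \cref{Thm:ConjAction}: one direction comes from results already established for $\Xi_M$, and the other from the uniqueness of the morphism $G(-)$ recovered through the counit, as in \cref{Thm:CompDef}.

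\textbf{Forward direction.} Assume $G( f \otimes g ) = G( f ) \boxtimes j( g )$. Applying $\Xi_M$ gives $F( f \otimes g ) = \Xi_M( G( f ) \boxtimes j( g ) )$. To evaluate the right-hand side, I would use the identity $G( f ) \boxtimes j( g ) = ( i_Y \boxtimes j( g ) ) \star ( G( f ) \boxtimes i_Z )$. This is bifunctoriality of $\boxtimes$ in $\Lambda_M( \mathcal{V} )$, which holds because $M$ is cocommutative, so by \cref{Thm:Reparam} the base change is braided and $\boxtimes$ is a genuine monoidal product.

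With this splitting, \cref{Thm:XiFunc} turns $\Xi_M$ of the composite into a composite of two terms.
\begin{itemize}
\item The special case established in the proof of \cref{Thm:XiTensor} gives $\Xi_M( G( f ) \boxtimes i_Z ) = F( f ) \otimes 1_Z$.
\item The second factor is $i_Y \boxtimes j( g ) = j( 1_Y \otimes g )$, since $j$ is a monoidal functor by change of base (\cref{Thm:Retract}). Then \cref{Lem:XiFunc} gives $\Xi_M( j( 1_Y \otimes g ) ) = 1_P \otimes 1_Y \otimes g$.
\end{itemize}
Composing the two yields $F( f \otimes g ) = F( f ) \otimes g$.

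\textbf{Converse.} Assume $F( f \otimes g ) = F( f ) \otimes g$. The approach is to recover $G$ from $F$ by post-composing with the counit, which is valid because $F$ factors through $\Xi_M$. Concretely, \cref{Thm:CompDef} and \eqref{Eqn:GCopyDel} show that $G( k )$ is the transpose, under the hom-tensor adjunction, of $( e \otimes 1 ) \circ F( k )$ (up to unitors).

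\begin{enumerate}
\item Apply this to $k = f \otimes g$. The transpose of $G( f \otimes g )$ is then $( e \otimes 1 ) \circ ( F( f ) \otimes g ) = ( ( e \otimes 1 ) \circ F( f ) ) \otimes g$, which is the uncurried form of $G( f )$ tensored with $g$.
\item Separately, expand the string diagram for $G( f ) \boxtimes j( g )$, namely $\otimes^{\mathcal{V}} \circ ( G( f ) \otimes ( [g] \circ e ) ) \circ d$.
\item Uncurry this using the definition of the enriched tensor in a symmetric monoidal closed category, which is built from $\app$ and $\beta$.
\item Use counitality of $d$ to absorb the extra $e$. This should give the same morphism $P \otimes X \otimes Z \to Y \otimes W$ obtained in step 1.
\end{enumerate}
Uniqueness of transposes then gives $G( f \otimes g ) = G( f ) \boxtimes j( g )$.

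\textbf{Main obstacle.} I expect the hardest step to be the diagrammatic unfolding in step 3: showing that the uncurried enriched tensor $\otimes^{\mathcal{V}}$ applied to $(\varphi,[g])$ equals $\app(\varphi) \otimes g$, with braidings correctly placed. I would first verify this as a small lemma about the internal tensor $[X,Y]\otimes[Z,W]\to[X\otimes Z,Y\otimes W]$, which is essentially the computation already used in the special case of \cref{Thm:XiTensor}, and then reuse it in both directions.
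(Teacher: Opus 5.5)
Your proposal is correct and essentially follows the paper's argument. The converse is exactly the paper's route: recover $G(f\otimes g)$ from $F(f\otimes g)$ through the counit as in \cref{Thm:CompDef}, move $[g]$ out of $\mathbb{I}$ past $X$ by naturality of the symmetry, introduce the enriched tensor, and conclude by uniqueness of transposes.

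Your forward direction is only a mild shortcut of the paper's. The paper applies the full braided formula of \cref{Thm:XiTensor} and slides $g$ through the symmetries. You instead split $G(f)\boxtimes j(g)$ by interchange and use strict monoidality of $j$ together with \cref{Lem:XiFunc}. Your converse also handles an arbitrary $g$ explicitly, whereas the paper's written conclusion only records the case $g = 1_Y$.
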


\begin{proof}
    If $G( f \otimes g ) = G( f ) \boxtimes j( g )$, then the following circuit equation holds.
    \begin{align*}
        \input{circs/comptensor_1}
        &=
        \input{circs/comptensor_2}
        &&
        \emph{(by \cref{Thm:XiTensor})}
        \\
        &=
        \input{circs/comptensor_3}
        &&
        \emph{(by \cref{Thm:XiProps} part (1))}
        \\
        &=
        \input{circs/comptensor_4}
        =
        \input{circs/comptensor_5}
    \end{align*}
    Assume instead that $F( f \otimes g ) = F( f ) \otimes g$.
    Then the following equation holds by \cref{Thm:CompDef}.
    \begin{equation*}
        \includegraphics[valign=c,scale=0.84]{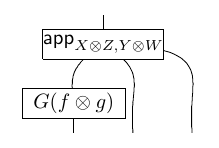}
        =
        \includegraphics[valign=c,scale=0.84]{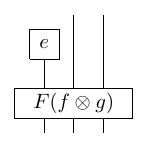}
        =
        \includegraphics[valign=c,scale=0.84]{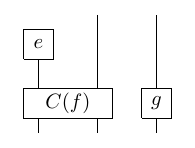}
        =
        \includegraphics[valign=c,scale=0.84]{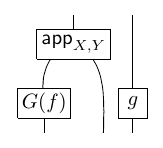}
    \end{equation*}
    Then by braiding with $\mathbb{I}$, the following equality holds.
    \begin{equation*}
        \includegraphics[valign=c,scale=0.84]{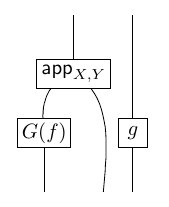}
        =
        \includegraphics[valign=c,scale=0.84]{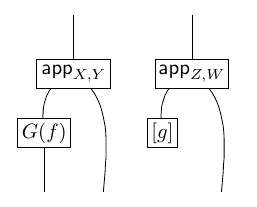}
        =
        \includegraphics[valign=c,scale=0.84]{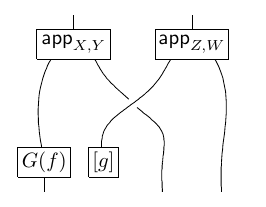}
    \end{equation*}
    The enriched tensor product can then be introduced as follows.
    \begin{equation*}
        \includegraphics[valign=c,scale=0.84]{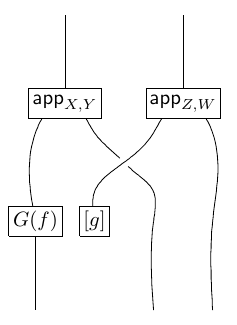}
        =
        \includegraphics[valign=c,scale=0.84]{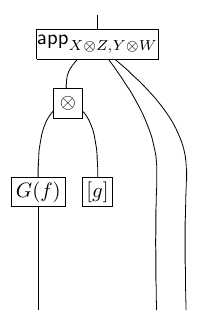}
        =
        \includegraphics[valign=c,scale=0.84]{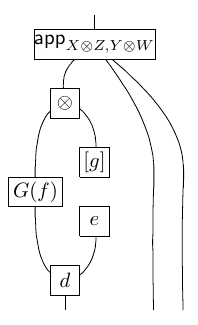}
        =
        \includegraphics[valign=c,scale=0.84]{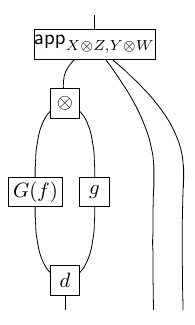}
    \end{equation*}
    Then $G( f \otimes 1_Y ) = G( f ) \boxtimes i_Y$ by the uniqueness of $G( f \otimes 1_Y )$.
\end{proof}

It follows from \cref{Thm:CompFunc} that a control functor factors through $\Lambda_M( \mathcal{V} )$ if and only if it is $M$-compatible.
Using \cref{Thm:ConjAction} and \cref{Thm:TensorAction}, it is then possible to characterize $M$-compatible control functors in terms of their factorizations.
This leads to the notion of an $M$-control structure, as defined in \cref{Def:ControlStruct}.
It is shown in \cref{Thm:ControlStructs} that every $M$-control structure induces an $M$-compatible control functor.
Conversely, \cref{Thm:CompatibleControls} shows that every $M$-compatible control functor arises from a control structure.
The trivial control functor and the standard control functors both arise in this way.

\begin{defi}
    \label{Def:ControlStruct}
    Assume that $\mathcal{V}$ admits a symmetry $\beta$, $\mathcal{U}$ is a symmetric monoidal subcategory of $\mathcal{V}_{\mathbf{endo}}$, and $M$ is cocommutative.
    An \emph{$M$-control structure} is a functor $F: \mathcal{U} \to \Lambda_M( \mathcal{V} )$ such that the following properties hold.
    \begin{enumerate}
    \item The image of $\Xi_M \circ F$ is contained in $\mathcal{U}$.
    \item $F$ is an identity-on-objects functor.
    \item $F( f \otimes 1_Y ) = F( f ) \boxtimes i_Y$ for each $f \in \mathcal{U}( X, X )$ and $Y \in \mathcal{U}_0$.
    \item $F( ( 1_X \otimes \beta_{Y,Z}^{-1} \otimes 1_W ) \circ f \circ ( 1_X \otimes \beta_{Y,Z} \otimes 1_W ) ) = ( i_X \boxtimes b_{Y,Z}^{-1} \boxtimes i_W ) \star F( f ) \star ( i_X \boxtimes b_{Y,Z} \otimes i_W )$ for all $X, Y, Z, W \in \mathcal{U}_0$ and $f \in \mathcal{U}( X \otimes Y \otimes Z \otimes W, X \otimes Y \otimes Z \otimes W )$.
    \item If $C = \Xi_M \circ G$, then $C( C( f ) ) \circ ( \beta_{P,P} \otimes 1_X ) = ( \beta_{P,P} \otimes 1_Y ) \circ C( C( f ) )$ for each $f \in \mathcal{U}( X, X )$.
          In other words, the following equation of circuits is valid.
        \begin{center}
            \input{circs/ctrl_lhs}
            =
            \input{circs/ctrl_rhs}
        \end{center}
    \end{enumerate}
\end{defi}

\begin{thm}
    \label{Thm:ControlStructs}
    Assume that $\mathcal{V}$ admits a symmetry $\beta$, $\mathcal{U}$ is a symmetric monoidal subcategory of $\mathcal{V}_{\mathbf{endo}}$, and $M$ is cocommutative.
    If $F: \mathcal{U} \to \Lambda_M( \mathcal{V} )$ is an $M$-control structure, then $\Xi_M \circ F$ is a control functor.
\end{thm}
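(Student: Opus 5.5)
We set $C = \Xi_M \circ F$ and check the control functor data and axioms one at a time, reducing each to a property of $F$ through the results already established for $\Xi_M$. First, $C$ is a functor, since $F$ is a functor and $\Xi_M$ is one by \cref{Thm:XiFunc}. Property (1) of \cref{Def:ControlStruct} says the image of $C$ lies in $\mathcal{U}$, so $C$ restricts to an endofunctor of $\mathcal{U} \subseteq \mathcal{V}_{\mathbf{endo}}$. Property (2) says $F$ is identity-on-objects, so $C_0(X) = (\Xi_M)_0(X) = P \otimes X$, as the definition of a control functor requires.

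\textbf{Tensor Axiom.} Take $f \in \mathcal{U}(X,X)$ and $Z \in \mathcal{U}_0$. Property (3) gives $F(f \otimes 1_Z) = F(f) \boxtimes i_Z = F(f) \boxtimes j(1_Z)$, since $i_Z = j(1_Z)$. Because $F$ is identity-on-objects, $\mathcal{V}$ is symmetric and $M$ is cocommutative, \cref{Thm:TensorAction} applies with $g = 1_Z$ and yields $C(f \otimes 1_Z) = C(f) \otimes 1_Z$. Alternatively, one can compute directly with \cref{Thm:XiTensor} and \cref{Thm:XiProps}(1): the two braidings cancel, since $\Xi_M(i_Z) = 1_{P \otimes Z}$.

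\textbf{Target Symmetry Axiom.} Write $\sigma = 1_X \otimes \beta_{Y,Z} \otimes 1_W$. We observe that $j$ is a strict monoidal functor sending $\beta$ to $b$, so $j(\sigma) = i_X \boxtimes b_{Y,Z} \boxtimes i_W$, and similarly for $\sigma^{-1}$. Property (4) then reads $F(\sigma^{-1} \circ f \circ \sigma) = j(\sigma^{-1}) \star F(f) \star j(\sigma)$. Applying $\Xi_M$ and using \cref{Thm:XiFunc} together with \cref{Thm:XiProps}(1), equivalently \cref{Thm:XiProps}(2), gives $C(\sigma^{-1} f \sigma) = (1_P \otimes \sigma^{-1}) \circ C(f) \circ (1_P \otimes \sigma)$.

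\textbf{Control Symmetry Axiom.} As transcribed the axiom is a tautology. The intended content is the commutation of $\beta_{P,P} \otimes 1_X$ with the doubly controlled $C(C(f))$, and this is exactly property (5). So this axiom follows immediately.

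The main obstacle is the Target Symmetry step. It needs the identification $j(1_X \otimes \beta_{Y,Z} \otimes 1_W) = i_X \boxtimes b_{Y,Z} \boxtimes i_W$, which depends on $j$ being induced by the counit reparameterization $\Omega(e)$. By \cref{Thm:Retract} and base change, $j$ is a strict braided monoidal functor, and \cref{Cor:UnderlyingShape} gives the explicit forms of $b$ and $\boxtimes$. Some care is also needed with the typo $\otimes i_W$ in property (4), which should be read as $\boxtimes i_W$. If the strictness of $j$ is not available directly, the fallback is to verify the identity diagrammatically: expand $\boxtimes$ and $b$ via \cref{Fig:SetFunctors} and use counitality of $e$ to collapse the copies of $P$.
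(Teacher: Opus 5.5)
Your proposal is correct and follows the paper's proof step for step. The object part $C_0(X)=P\otimes X$ comes from identity-on-objects and \cref{Thm:XiProps}(1), the Tensor Axiom from property (3), Target Symmetry from property (4) via $j(1_X\otimes\beta_{Y,Z}\otimes 1_W)=i_X\boxtimes b_{Y,Z}\boxtimes i_W$ (which the paper asserts without the justification you give), and Control Symmetry from property (5). Your citations are the correct ones: \cref{Thm:TensorAction} for the Tensor Axiom and \cref{Thm:XiProps}(2), the forward direction of \cref{Thm:ConjAction}, for Target Symmetry; the paper's written proof has these two references swapped.
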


\begin{proof}
    Assume that $F: \mathcal{U} \to \Lambda_M( \mathcal{V} )$ is a control structure and let $C = \Xi_M \circ F$.
    Then $C$ is an endofunctor on $\mathcal{U}$ by property (1) of a control structure.
    It remains to be shown that $C$ is a control functor on $\mathcal{U}$.
    \begin{enumerate}
    \item Let $X \in \mathcal{U}_0$.
          Then $F_0( X ) = X$ by property (2) of a control structure.
          It follows that $C_0( X ) = ( \Xi_M )_0( F_0( X ) ) = ( \Xi_M )_0( X ) = P \otimes X$ by part (1) of \cref{Thm:XiProps}.
          Since $X$ was arbitrary, then $C_0( X ) = P \otimes X$ for all $X \in \mathcal{U}_0$.
    \item Let $f \in \mathcal{U}( X, X )$ and $Y \in \mathcal{U}_0$.
          Then $F( f \otimes 1_Y ) = F( f ) \boxtimes i_Y$ by property (3) of a control structure.
          It follows by \cref{Thm:ConjAction} that $C( f \otimes 1_Y ) = C( f ) \otimes 1_Y$.
          Since $f$ and $Y$ were arbitrary, then $C( f \otimes 1_Y ) = C( f ) \otimes 1_Y$ for each $f \in \mathcal{U}( X, X )$ and $Y \in \mathcal{U}_0$.
    \item Let $X, Y, Z, W \in \mathcal{U}_0$ and $f \in \mathcal{U}( X \otimes Y \otimes Z \otimes W, X \otimes Y \otimes Z \otimes W )$.
          Then define $s = i_X \boxtimes b_{Y,Z} \boxtimes i_W$ which satisfies $j( s ) = 1_X \otimes \beta_{Y,Z} \otimes 1_W$.
          It follows by property (4) of a control structure that $F( j( s^{-1} ) \circ f \circ j( s ) ) = s^{-1} \star F( f ) \star s$.
          Then by \cref{Thm:TensorAction}, it follows that $C( j( s^{-1} ) \circ f \circ j( s ) ) = ( 1_P \otimes j( s^{-1} ) ) \circ C( f ) \circ ( 1_P \otimes j( s ) )$.
          Since $f$ was arbitrary, then $C( ( 1_X \otimes \beta_{Y,Z} \otimes 1_W ) \circ f \circ ( 1_X \otimes \beta_{Y,Z} \otimes 1_W ) ) = ( 1_{P \otimes X} \otimes \beta_{Y,Z} \otimes 1_W ) \circ C( f ) \circ ( 1_{P \otimes X} \otimes \beta_{Y,Z} \otimes 1_W )$ for each $X, Y, Z, W \in \mathcal{U}_0$ and $f \in \mathcal{U}( X \otimes Y \otimes Z \otimes W, X \otimes Y \otimes Z \otimes W )$.
    \item If $X \in \mathcal{U}_0$ and $f \in \mathcal{U}( X, X )$, then $C( C( f ) ) \circ ( \beta_{P,P} \otimes 1_X ) = ( \beta_{P,P} \otimes 1_Y ) \circ C( C( f ) )$ by property (5) of a control functor.
    \end{enumerate}
    In conclusion, $C$ is a control functor.
\end{proof}

\begin{thm}
    \label{Thm:CompatibleControls}
    Assume that $\mathcal{V}$ admits a symmetry $\beta$, $\mathcal{U}$ is a symmetric monoidal subcategory of $\mathcal{V}_{\mathbf{endo}}$, and $M$ is cocommutative.
    If $C: \mathcal{U} \to \mathcal{U}$ is a control functor, then there exists an $M$-control structure $F$ such that $C = \Xi_M \circ F$ if and only if $C$ is $M$-compatible.
\end{thm}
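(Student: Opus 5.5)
The plan is to prove the two directions separately, using \cref{Thm:CompFunc} to get the factorization and then \cref{Thm:ConjAction} and \cref{Thm:TensorAction} to transfer the control functor axioms of $C$ to the properties of an $M$-control structure in \cref{Def:ControlStruct}.

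\emph{Forward direction.} Suppose $C = \Xi_M \circ F$ for some $M$-control structure $F$. By property (2) of \cref{Def:ControlStruct}, $F$ is an identity-on-objects functor $\mathcal{U} \to \Lambda_M( \mathcal{V} )$ through which $C$ factors. The $\Leftarrow$ direction of \cref{Thm:CompFunc} then shows directly that $C$ is $M$-compatible. This direction needs no further work.

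\emph{Reverse direction.} Suppose $C$ is an $M$-compatible control functor. By the $\Rightarrow$ direction of \cref{Thm:CompFunc}, there is an identity-on-objects functor $F: \mathcal{U} \to \Lambda_M( \mathcal{V} )$ with $C = \Xi_M \circ F$. Concretely, $F( f )$ is the transpose of $\rho \circ ( e \otimes 1 ) \circ C( f )$. I then check the five properties of \cref{Def:ControlStruct} in turn.
\begin{itemize}
\item Property (1) holds because $\Xi_M \circ F = C$ is an endofunctor on $\mathcal{U}$.
\item Property (2) holds by construction.
\item Property (3) follows from the \textbf{Tensor Axiom} $C( f \otimes 1_Y ) = C( f ) \otimes 1_Y$. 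Apply the reverse implication of \cref{Thm:TensorAction} with $g = 1_Y$, noting that $j( 1_Y ) = i_Y$. This uses that $\mathcal{V}$ is symmetric, $\mathcal{U}$ is monoidal and $M$ is cocommutative, which are exactly the standing hypotheses.
\item Property (4) follows from the \textbf{Target Symmetry Axiom}. Set $\sigma = 1_X \otimes \beta_{Y,Z} \otimes 1_W$. The axiom reads $C( \sigma^{-1} \circ f \circ \sigma ) = ( 1_P \otimes \sigma^{-1} ) \circ C( f ) \circ ( 1_P \otimes \sigma )$, which is an instance of the conjugation action. The reverse implication of \cref{Thm:ConjAction}, with $h = \sigma^{-1}$, $g = f$ and $f = \sigma$, gives $F( \sigma^{-1} \circ f \circ \sigma ) = j( \sigma^{-1} ) \star F( f ) \star j( \sigma )$.
\item Property (5) is literally the \textbf{Control Symmetry Axiom} for $C$, once it is read as the equation $C( C( f ) ) \circ ( \beta_{P,P} \otimes 1 ) = ( \beta_{P,P} \otimes 1 ) \circ C( C( f ) )$. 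Since $C = \Xi_M \circ F$, nothing needs proving.
\end{itemize}

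The main obstacle is property (4). It requires identifying $j( \sigma )$ with $i_X \boxtimes b_{Y,Z} \boxtimes i_W$ in $\Lambda_M( \mathcal{V} )$. By \cref{Cor:UnderlyingShape}, or equivalently by \cref{Thm:Retract}, $j$ is a braided monoidal functor. It is induced by the cocommutative comonoid homomorphism $e$ of \cref{Cor:UniqueConstants}, so it sends $1_X \otimes \beta_{Y,Z} \otimes 1_W$ to $i_X \boxtimes b_{Y,Z} \boxtimes i_W$ up to the strict identifications of objects. I would verify this carefully, including that the associators are sent to identities under these identifications, so that the formula matches property (4) exactly. A smaller care point is that \cref{Thm:ConjAction} is stated for $f, g, h$ in $\mathcal{U}$, so one must note that $\sigma$ lies in $\mathcal{U}$ because $\mathcal{U}$ is a symmetric monoidal subcategory.
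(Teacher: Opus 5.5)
Your proposal is correct and matches the paper's proof step for step. The forward direction is \cref{Thm:CompFunc}, and the reverse direction takes the factorization $F$ from \cref{Thm:CompFunc} and transfers the Tensor, Target Symmetry, and Control Symmetry axioms to properties (3)--(5) of \cref{Def:ControlStruct}. Your use of \cref{Thm:TensorAction} for property (3), and your explicit check that $j(1_X \otimes \beta_{Y,Z} \otimes 1_W) = i_X \boxtimes b_{Y,Z} \boxtimes i_W$ via the monoidality of $j$, are slightly more careful than the paper, which cites \cref{Thm:ConjAction} for property (3) and simply asserts that identification.
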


\begin{proof}
    If there exists an $M$-control structure $F: \mathcal{U} \to \Lambda_M( \mathcal{V} )$ such that $C = \Xi_M \circ F$, then $C$ is $M$-compatible by \cref{Thm:CompFunc}.
    Assume instead that $C$ is $M$-compatible.
    Then by \cref{Thm:CompFunc}, there exists an identity-on-objects functor $F: \mathcal{U} \to \Lambda_M( \mathcal{V} )$ such that $C = \Xi_M \circ F$.
    It remains to be shown that $F$ is a control structure.
    \begin{enumerate}
    \item Since $C = \Xi_M \circ F$, then the image of $\Xi_M \circ F$ is contained in $\mathcal{U}$.
    \item By definition, $F_0( X ) = X$ for each $X \in \mathcal{U}_0$.
    \item Let $f \in \mathcal{U}( X, X )$ and $Y \in \mathcal{U}_0$.
          Then $C( f \otimes 1_Y ) = C( f ) \otimes 1_Y$ by the \textbf{Tensor Axiom} of a control functor.
          It follows by \cref{Thm:ConjAction} that $F( f \otimes 1_Y ) = F( f ) \boxtimes i_Y$.
          Since $f$ and $Y$ were arbitrary, then $F( f \otimes 1_Y ) = F( f ) \boxtimes 1_Y$ for each $f \in \mathcal{U}( X, X )$ and $Y \in \mathcal{U}_0$.
    \item Let $X, Y, Z, W \in \mathcal{U}_0$ and $f \in \mathcal{U}( X \otimes Y \otimes Z \otimes W, X \otimes Y \otimes Z \otimes W )$.
          Then define $s = i_X \boxtimes b_{Y,Z} \boxtimes i_W$ such that $j( s ) = 1_X \otimes \beta_{Y,Z} \otimes 1_W$.
          It follows by the \textbf{Target Symmetry Axiom} of a control functor that $C( j( s^{-1} ) \circ f \circ j( s ) ) = j( s^{-1} ) \circ C( f ) \circ j( s )$.
          Then by \cref{Thm:ConjAction}, it follows that $F( j( s^{-1} ) \circ f \circ j( s ) ) = s^{-1} \star F( f ) \star s$.
          Since $f$ was arbitrary, then $F( ( 1_X \otimes \beta_{Y,Z}^{-1} \otimes 1_W ) \circ f \circ ( 1_X \otimes \beta_{Y,Z} \otimes 1_W ) = ( i_X \boxtimes b_{Y,Z}^{-1} \boxtimes i_W ) \circ F( f ) \circ ( i_X \boxtimes b_{Y,Z} \boxtimes i_W )$ for each $X, Y, Z, W \in \mathcal{U}_0$ and $f \in \mathcal{U}( X \otimes Y \otimes Z \otimes W, X \otimes Y \otimes Z \otimes W )$. 
    \item If $X \in \mathcal{U}_0$ and $f \in \mathcal{U}( X, X )$, then $C( C( f ) ) \circ ( \beta_{P,P} \otimes 1_X ) = ( \beta_{P,P} \otimes 1_Y ) \circ C( C( f ) )$ by the \textbf{Control Symmetry Axiom} of a control functor.
    \end{enumerate}
    In conclusion, $F$ is a control structure.
\end{proof}

\begin{exa}[The Trivial Control Structure]
    \label{Ex:TrivCtrl}
    Assume $\mathcal{V}$ admits a symmetry $\beta$ and $M$ is cocommutative.
    It is easy to see that $j$ is an $M$-control structure.
    Let $C = \Xi_M \circ j$.
    \begin{enumerate}
    \item Clearly the image of $C$ is contained within $\mathcal{V}_{\mathbf{endo}}$ when $C$ is restricted to $\mathcal{V}_{\mathbf{endo}}$.
    \item By construction, $j$ is the identity on objects.
    \item $j( f \otimes 1_Y ) = j( f ) \boxtimes j( 1_Y ) = C( f ) \boxtimes i_Y$ for each $f \in \mathcal{V}( X, X )$ and $Y \in \mathcal{V}_0$.
    \item By a similar argument, $j$ satisfies property (4) of a control structure.
    \item $( \beta_{P,P} \otimes 1_X ) \circ C( C( f ) ) = \beta_{P,P} \otimes f = C( C( f ) ) \circ ( \beta_{P,P} \otimes 1_X )$ for each $f \in \mathcal{V}( X, X )$.
    \end{enumerate}
    Then by \cref{Thm:ControlStructs}, $P \otimes ( - ) = \Xi_M \circ j$ is a control functor.
    It follows from \cref{Thm:CompatibleControls} that $P \otimes ( - )$ is $M$-compatible.
\end{exa}

\begin{exa}[The Standard Control Structure]
    \label{Ex:StdCtrlStruct}
    Let $\mathcal{U}$ be the monoidal subcategory of unitary endomorphisms in $\mathbb{C}\mathbf{FVect}$.
    For each choice of orthonormal basis $\{ \ket{u}, \ket{v} \}$ in $P = \mathbb{C}^2$, there exists a control functor $C: \mathcal{U} \to \mathcal{U}$ such that $C( f ) \circ ( \ket{u} \otimes 1_X ) = \ket{u} \otimes 1_X$ and $C( f ) \circ ( \ket{v} \otimes 1_X ) = \ket{v} \otimes f$.
    That is, each $C( f )$ is the controlled version of $f$ which applies $f$ to the target qubits if and only if the control qubit is in state $\ket{v}$.
    As explained in~\cref{Prop:BasisComonoid}, there exists a comonoid $M = ( P, d, e )$ which copies the basis $\{ \ket{u}, \ket{v} \}$.
    It will be shown that $C( - )$ factors through $\Lambda_M( \mathcal{V} )$ via an $M$-control structure.
    As explained in \cref{Thm:CompatibleControls}, it suffices to show that $C$ is $M$-compatible.
    Then let $f \in \mathcal{U}( X, X )$.
    By linearity, it suffices to show that $( 1_P \otimes C( f ) ) \circ ( d \otimes 1_X ) \circ ( \ket{u} \otimes 1_X ) = ( d \otimes 1_X ) \circ C( f ) \circ ( \ket{u} \otimes 1_X )$ and also $( 1_P \otimes C( f ) ) \circ ( d \otimes 1_X ) \circ ( \ket{v} \otimes 1_X ) = ( d \otimes 1_X ) \circ C( f ) \circ ( \ket{v} \otimes 1_X )$.
    The first equation follows from the following sequence of equalities.
    \begin{equation*}
        \input{circs/compatible_u_1}
        =
        \input{circs/compatible_u_2}
        =
        \input{circs/compatible_u_3}
        =
        \input{circs/compatible_u_4}
        =
        \input{circs/compatible_u_5}
    \end{equation*}
    Then the second equation follows from the following sequence of equalities.
    \begin{equation*}
        \input{circs/compatible_v_1}
        =
        \input{circs/compatible_v_2}
        =
        \input{circs/compatible_v_3}
        =
        \input{circs/compatible_v_4}
        =
        \input{circs/compatible_v_5}
    \end{equation*}
    Then $C$ is $M$-compatible.
    Then by \cref{Thm:CompatibleControls}, there exists a control structure $F: \mathcal{U} \to \Lambda_M( \mathcal{V} )$ such that $C = \Xi_M \circ F$.
    Then by \cref{Thm:ControlStructs}, the functor $F$ is uniquely defined by the family of equations $F( f ) = [ 1_X ] \circ \bra{u} + [ f ] \circ \bra{v}$.
\end{exa}

\subsection{Parameterization and Pointed Control Functors}
It follows from the previous two examples that control structures provide a more restricted generalization of control functors, but still fail to rule out the example of the trivial control functor.
One of the key properties that separate the trivial control functor from the standard control functor is that the trivial control functor is not pointed.
This motivates the study of pointed control structures.
We saw in \cref{Thm:XiProps} that for each $f \in \Lambda_M( \mathcal{V} )( X, Y )$, if $d$ copies $v$, then $\Xi_M( f ) \circ ( v \otimes 1_X ) = v \otimes f_v$.
It turns out that an $M$-control structure $F: \mathcal{U} \to \Lambda_M( \mathcal{V} )$ can induce a $( u, v )$-pointed control functor, even if $u$ and $v$ are not copied by $d$.
However, it will always be the case that for each $f \in \mathcal{U}( X, X )$, the morphism $F( f ) \circ u$ is a scalar multiple of $[ 1_X ]$ and the morphism $F( g ) \circ v$ is a scalar multiple of $[ f ]$ as shown below in \cref{Thm:PointedStruct}.
In the case where both $e \circ u$ and $e \circ v$ are the identity, it follows that $F( f ) \circ u = [ 1_X ]$ and $F( f ) \circ v = [ f ]$.

\begin{nota}
    In this subsection, we further assume that $\mathcal{V}$ is symmetric, $\mathcal{U}$ is a symmetric monoidal subcategory of $\mathcal{V}_{\mathbf{endo}}$, and $M$ is cocommutative.
\end{nota}

\begin{thm}
    \label{Thm:PointedStruct}
    If $F: \mathcal{U} \to \Lambda_M( \mathcal{V} )$ is an $M$-control structure and $\Xi_M \circ F$ is a $( u, v )$-pointed control functor, then the following equations hold for each $f \in \mathcal{U}( X, X )$.
    
    \noindent
    \begin{minipage}{.48\linewidth}
    \begin{equation}
        \label{Eqn:UPoint}
        \includegraphics[valign=c,scale=0.84]{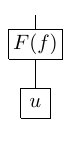}
        =
        \includegraphics[valign=c,scale=0.84]{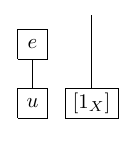}
    \end{equation}
    \end{minipage}%
    \hfill
    \begin{minipage}{.48\linewidth}
    \begin{equation}
        \label{Eqn:VPoint}
        \includegraphics[valign=c,scale=0.84]{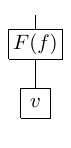}
        =
        \includegraphics[valign=c,scale=0.84]{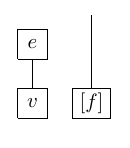}
    \end{equation}
    \end{minipage}

    \noindent
    Moreover, the following statements are equivalent.
    \begin{enumerate}
    \item $e \circ u = 1_{\mathbb{C}} = e \circ v$.
    \item $F( f ) \circ u = [ 1_X ]$ and $F( f ) \circ v = [ f ]$ for each $f \in \mathcal{U}( X, X )$.
    \end{enumerate}
\end{thm}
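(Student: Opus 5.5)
The plan is to extract both equations from the \textbf{Activation Axiom} by discarding the control wire with the counit $e$. Once the control wire is deleted, the copy $d$ inside $\Xi_M$ collapses, so only the parameterized morphism $F(f)$ evaluated at the point remains. I read \cref{Eqn:UPoint} and \cref{Eqn:VPoint} as saying that $F(f)\circ u = [1_X]\circ(e\circ u)$ and $F(f)\circ v = [f]\circ(e\circ v)$, with $e\circ u,\ e\circ v \in \mathcal{V}(\mathbb{I},\mathbb{I})$ acting as scalars.

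First, for \cref{Eqn:UPoint}, let $C = \Xi_M\circ F$ and fix $f\in\mathcal{U}(X,X)$. The \textbf{Activation Axiom} gives $C(f)\circ(u\otimes 1_X) = u\otimes 1_X$. I post-compose both sides with $e\otimes 1_X$ and then with the unitor.
\begin{itemize}
\item On the right-hand side this gives $(e\circ u)\otimes 1_X$, which is the scalar $e\circ u$ acting on $1_X$.
\item On the left-hand side I unfold \cref{Def:ClosedIncl}: $C(f) = (1_P\otimes\app_{X,X})\circ(1_P\otimes F(f)\otimes 1_X)\circ(d\otimes 1_X)$. By bifunctoriality, $e$ slides down the first output wire of $d$, and left counitality $(e\otimes 1_P)\circ d = \lambda_P^{-1}$ removes the copy. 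What remains is $\app_{X,X}\circ((F(f)\circ u)\otimes 1_X)$, up to unitors.
\end{itemize}
The two sides are then morphisms $X\to X$ of the form $\app\circ(k\otimes 1_X)$. By uniqueness in the hom-tensor adjunction $\mathcal{V}(\mathbb{I}\otimes X, X)\cong\mathcal{V}(\mathbb{I},[X,X])$, their transposes agree, so $F(f)\circ u = [1_X]\circ(e\circ u)$. I need the standard fact that transposition commutes with precomposition by a scalar, which follows from naturality of the adjunction. The argument for $v$ is identical, using $C(f)\circ(v\otimes 1_X) = v\otimes f$, and yields $F(f)\circ v = [f]\circ(e\circ v)$. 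Since the displayed equations are diagrammatic, I will present this as a short chain of string-diagram rewrites: counit, then counitality, then the adjunction.

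For the equivalence, (1)$\Rightarrow$(2) is immediate: substituting $e\circ u = 1 = e\circ v$ into the two equations just proved gives (2). For (2)$\Rightarrow$(1), I specialise to $X=\mathbb{I}$ and $f = 1_{\mathbb{I}}$. This is legitimate because $\mathcal{U}$ is a monoidal subcategory, so it contains $\mathbb{I}$ and its identity. Then (2) together with \cref{Eqn:UPoint} gives $[1_{\mathbb{I}}] = [1_{\mathbb{I}}]\circ(e\circ u)$. Under the adjunction, $[1_{\mathbb{I}}]$ corresponds to the unitor, so transposing back yields $1_{\mathbb{I}} = e\circ u$. The same argument with $v$ gives $1_{\mathbb{I}} = e\circ v$.

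The main obstacle is this last cancellation step. One cannot cancel a scalar against $[1_X]$ for an arbitrary $X$; for instance, if $X$ is the zero space, then $[1_X]=0$ and nothing follows. The proof therefore depends on choosing $X=\mathbb{I}$, and I will need to check carefully that transposition sends $[1_{\mathbb{I}}]\circ s$ back to exactly $s$, modulo $\lambda_{\mathbb{I}}=\rho_{\mathbb{I}}$.
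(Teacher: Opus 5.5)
Your proposal is correct and follows the paper's proof. The paper also caps the control wire in the Activation Axiom with $e$, removes $d$ by counitality, and obtains the two scalar-multiple equations by uniqueness of transposes under the hom-tensor adjunction. For the converse it likewise specializes to $f = 1_{\mathbb{I}}$ on the unit object (written $\mathbb{C}$ there) and reads off $e \circ \psi = 1$ for $\psi \in \{u, v\}$. Your cancellation step against $[1_{\mathbb{I}}]$ is sound, since $[1_{\mathbb{I}}] : \mathbb{I} \to [\mathbb{I},\mathbb{I}]$ is the canonical isomorphism.
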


\begin{proof}
    Assume that $F: \mathcal{U} \to \Lambda_M( \mathcal{V} )$ is an $M$-control structure and $\Xi_M \circ F$ is a $( u, v )$-pointed control functor.
    Let $( \psi, g )$ denote either $( u, [ 1_X ] )$ or $( v, [ f ] )$.
    Since $u$ deactivates $F$ and $v$ activates $F$, then the following equation holds.
    \begin{equation*}
        \includegraphics[valign=c,scale=0.84]{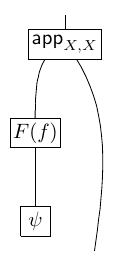}
        =
        \includegraphics[valign=c,scale=0.84]{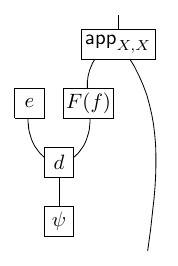}
        =
        \includegraphics[valign=c,scale=0.84]{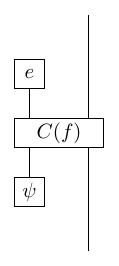}
        =
        \includegraphics[valign=c,scale=0.84]{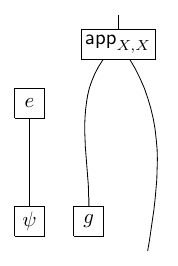}
    \end{equation*}
    Then \cref{Eqn:UPoint} and \cref{Eqn:VPoint} hold by uniqueness.
    Clearly, if $e \circ u = 1_{\mathbb{C}} = e \circ v$, then $F( f ) \circ u = [ 1_X ]$ and $F( f ) \circ v = [ f ]$ for each $f \in \mathcal{U}( X, X )$.
    Assume instead that $F( f ) \circ u = [ 1_X ]$ and $F( f ) \circ v = [ f ]$ for each $f \in \mathcal{U}( X, X )$.
    Then in particular, $F( 1_{\mathbb{C}} ) \circ u = [ 1_{\mathbb{C}} ]$ and $F( 1_{\mathbb{C}} ) \circ v = [ 1_{\mathbb{C}} ]$.
    Then the following equation holds for each $\psi \in \{ u, v \}$.
    \begin{align*}
        1_{\mathbb{C}}
        \;=
        \includegraphics[valign=c,scale=0.84]{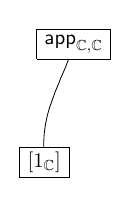}
        =\;
        \includegraphics[valign=c,scale=0.84]{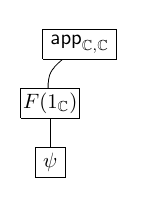}
        =
        \includegraphics[valign=c,scale=0.84]{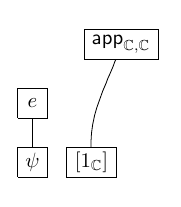}
        =
        e \circ \psi
    \end{align*}
    In conclusion, $e \circ u = 1_{\mathbb{C}} = e \circ v$.
\end{proof}

\begin{exa}[Points for the Standard Control Functor]
    As in \cref{Ex:StdCtrlStruct}, let $\mathcal{U}$ be the subcategory of unitary endomorphisms in $\mathbb{C}\mathbf{FVect}$, $\{ \ket{u}, \ket{v} \}$ be a choice of orthonormal basis for $P = \mathbb{C}^2$, $M$ be the comonoid which copies the basis $\{ \ket{u}, \ket{v} \}$, and $C: \mathcal{U} \to \mathcal{U}$ be the standard control functor with respect to the basis $\{ \ket{u}, \ket{v} \}$.
    As shown in \cref{Ex:StdCtrlStruct}, there exists a control structure $F: \mathcal{U} \to \Lambda_M( \mathcal{V} )$ such that $C = \Xi_M \circ F$.
    It is easily checked that $e \circ \ket{u} = 1 = e \circ \ket{v}$.
    Then by \cref{Thm:PointedStruct}, $F( f ) \circ u = [ 1_X ]$ and $F( f ) \circ v = [ f ]$ for each $f \in \mathcal{U}( X, X )$.
    This is easily verified as follows.
    \begin{align*}
        F( f ) \ket{u} &= [ 1_X ] \braket{u|u} + [ f ] \braket{v|u} = [ 1_X ]
        &
        F( f ) \ket{v} &= [ 1_X ] \braket{u|v} + [ f ] \braket{v|v} = [ f ]
    \end{align*}
\end{exa}

\begin{exa}[The Zero Point]
    If $\mathcal{V}$ has a zero object $\mathbb{O}$, then the notion of being $( u, v )$-pointed can be somewhat problematic.
    In particular, every control functor $F: \mathcal{U} \to \mathcal{U}$ is $( u, v )$ pointed with $u = 0$ and $v = 0$.
    In the case of $\mathbb{C}\mathbf{FVect}$, $u$ and $v$ would both be the linear map sending every vector to zero.
    If $F$ arises from a control structure $G: \mathcal{U} \to \Lambda_M( \mathcal{V} )$, then one could rule out such examples by asking that $F( f ) \circ u = [ 1_X ]$ and $F( f ) \circ v = [ f ]$ for each $f \in \mathcal{U}( X, X )$.
    As shown in \cref{Thm:PointedStruct}, this is true precisely when $e \circ u = 1_{\mathbb{C}} = e \circ v$.
    Moreover, these properties are satisfied by all standard control functors.
\end{exa}

\subsection{Parameterization and Conjugated Control Functors}
A commonly used property of the standard control functor is that it is conjugated.
It should be clear how to rewrite the conjugation axiom in terms of $\Lambda_M( \mathcal{V} )$ using~\cref{Thm:ConjAction}.
However, some care must be taken when generalizing from an endofunctor on $\mathcal{V}_{\mathbf{endo}}$ to an endofunctor on $\mathcal{U}$.
In particular, $\mathcal{V}_{\mathbf{endo}}$ is fixed under conjugation whereas this is not true for a general $\mathcal{U}$.
To this end, we first introduce the notion of a \emph{conjugated $M$-control structure}.
We then show in the maximal case, where $\mathcal{U} = \mathcal{V}_{\mathbf{endo}}$, these two notions coincide.

\begin{nota}
    In this subsection, we further assume that $\mathcal{V}$ is symmetric, $\mathcal{U}$ is a symmetric monoidal subcategory of $\mathcal{V}_{\mathbf{endo}}$, and $M$ is cocommutative.
\end{nota}

\begin{defi}
    If $f \in \mathcal{U}( Y, Y )$, then a \emph{conjugate factorization} of $f$ is a choice of morphisms $g \in \mathcal{V}( X, Y )$, $h \in \mathcal{U}( Y, Y )$, and $k \in \mathcal{V}( Y, X )$ satisfying $f = k \circ h \circ g$ and $k \circ g = 1_X$.
\end{defi}

\begin{defi}
    An $M$-control structure $F: \mathcal{U} \to \Lambda_M( \mathcal{V} )$ is \emph{conjugated} if for each conjugate factorization $( g, h, k )$ of each $f \in \mathcal{U}( X, X )$, the equation $F( f ) = j( k ) \star F( h ) \star j( g )$ holds.
\end{defi}

\begin{lem}
    \label{Lem:ConjFunctor}
    Let be $F: \mathcal{U} \to \Lambda_M( \mathcal{V} )$ a conjugated $M$-control structure and $C = \Xi_M \circ F$.
    If $( g, h, k )$ is a conjugate factorization of $f \in \mathcal{U}( X, X )$, then $C( f ) = ( 1_P \otimes k ) \circ C( h ) \circ ( 1_P \otimes g )$.
\end{lem}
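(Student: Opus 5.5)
This is a direct transfer from $\Lambda_M( \mathcal{V} )$ to $\mathcal{V}$ through the functor $\Xi_M$. Let $( g, h, k )$ be a conjugate factorization of $f \in \mathcal{U}( X, X )$, so $g \in \mathcal{V}( Y, X )$, $h \in \mathcal{U}( X, X )$ and $k \in \mathcal{V}( X, Y )$ (with the object names adjusted to the typing in the definition), with $f = k \circ h \circ g$ and $k \circ g = 1$. Since $F$ is a conjugated $M$-control structure, the definition gives $F( f ) = j( k ) \star F( h ) \star j( g )$ in $\Lambda_M( \mathcal{V} )$.

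Next, apply $\Xi_M$ to both sides. Since $C = \Xi_M \circ F$, the left-hand side becomes $C( f )$. By \cref{Thm:XiFunc}, $\Xi_M$ is a functor, so the right-hand side becomes $\Xi_M( j( k ) ) \circ \Xi_M( F( h ) ) \circ \Xi_M( j( g ) )$. By \cref{Lem:XiFunc} (equivalently part (1) of \cref{Thm:XiProps}), $\Xi_M( j( k ) ) = 1_P \otimes k$ and $\Xi_M( j( g ) ) = 1_P \otimes g$. The middle factor is $C( h )$ by definition.

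Alternatively, the whole computation is exactly part (2) of \cref{Thm:XiProps}, applied with the morphisms $g$, $F( h )$, $k$ and $F( f )$ and the hypothesis $j( k ) \star F( h ) \star j( g ) = F( f )$. Either route yields $C( f ) = ( 1_P \otimes k ) \circ C( h ) \circ ( 1_P \otimes g )$.

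The only point needing care is that $g$ and $k$ are morphisms of $\mathcal{V}$ rather than $\mathcal{U}$. This is harmless, because $j$ and $\Xi_M$ are defined on all of $\mathcal{V}$ and $\Lambda_M( \mathcal{V} )$ respectively, so no property of $\mathcal{U}$ is used beyond $h, f \in \mathcal{U}$. There is no real obstacle here: the condition $k \circ g = 1$ is not even needed in this direction.
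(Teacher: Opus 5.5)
Your argument is correct and is essentially the paper's. The paper cites \cref{Thm:ConjAction}, whose forward direction is proved by exactly part (2) of \cref{Thm:XiProps}, which you invoke directly. Your direct citation is in fact cleaner, because \cref{Thm:ConjAction} is stated only for morphisms of $\mathcal{U}$, while $g$ and $k$ live in $\mathcal{V}$. One cosmetic fix: with $f \in \mathcal{U}( X, X )$ the typing should be $g \in \mathcal{V}( X, Y )$, $h \in \mathcal{U}( Y, Y )$, $k \in \mathcal{V}( Y, X )$, not the reversed names you wrote.
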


\begin{proof}
    Assume that $( g, h, k )$ is a conjugate factorization of $f \in \mathcal{U}( X, X )$.
    This means that $F( k \circ h \circ g ) = F( f ) = j( k ) \star F( h ) \star F( g )$.
    Then $C( f ) = C( k \circ h \circ g ) = ( 1_P \otimes k ) \circ C( h ) \circ ( 1_P \otimes g )$ by~\cref{Thm:ConjAction}.
\end{proof}

\begin{thm}
    \label{Thm:ConjStruct}
    The functor $F: \mathcal{V}_{\mathbf{endo}} \to \Lambda_M( \mathcal{V} )$ is a conjugated $M$-control structure if and only if $\Xi_M \circ F$ is a conjugated control functor.
\end{thm}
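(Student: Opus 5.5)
The plan is to unpack both notions of conjugation and to reduce each to the other using \cref{Thm:ConjAction}. One point needs care. A conjugated control functor is required to satisfy the \textbf{Conjugation Axiom} for all $f \in \mathcal{C}( X, Y )$, $g \in \mathcal{C}( Y, Y )$ and $h \in \mathcal{C}( Y, X )$ with $h \circ f = 1_X$. A conjugated $M$-control structure quantifies over conjugate factorizations $( g, h, k )$, where $g, k$ range over $\mathcal{V}$ and the middle morphism ranges over $\mathcal{U}$. When $\mathcal{U} = \mathcal{V}_{\mathbf{endo}}$, these two families of data coincide: every endomorphism $h \in \mathcal{V}( Y, Y )$ lies in $\mathcal{U}$, and $k \circ h \circ g$ is again an endomorphism of $X$. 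So I would first record that the conjugate factorizations of endomorphisms in $\mathcal{V}_{\mathbf{endo}}$ are exactly the triples appearing in the Conjugation Axiom. This is the only place where maximality of $\mathcal{U}$ is used.

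\textbf{($\Rightarrow$).} Assume $F$ is a conjugated $M$-control structure and set $C = \Xi_M \circ F$. By \cref{Thm:ControlStructs}, $C$ is a control functor. Take a triple $( g, h, k )$ as in the Conjugation Axiom and put $f = k \circ h \circ g$. Then $( g, h, k )$ is a conjugate factorization of $f$, so \cref{Lem:ConjFunctor} gives
\[ C( k \circ h \circ g ) = ( 1_P \otimes k ) \circ C( h ) \circ ( 1_P \otimes g ), \]
which is exactly the Conjugation Axiom.

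\textbf{($\Leftarrow$).} Assume $C = \Xi_M \circ F$ is a conjugated control functor. We still have to know that $F$ is an $M$-control structure; I read the statement as presupposing this, in which case nothing further is needed. Alternatively, if $F$ is merely an identity-on-objects functor, then $C$ is $M$-compatible by \cref{Thm:CompFunc}. The converse direction of \cref{Thm:CompatibleControls} then shows that $F$ satisfies properties (1)--(5): the argument given there only uses that $F$ is an identity-on-objects factorization of $C$ through $\Xi_M$, together with \cref{Thm:ConjAction}.

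Now let $( g, h, k )$ be a conjugate factorization of $f \in \mathcal{V}_{\mathbf{endo}}( X, X )$. The Conjugation Axiom gives
\[ F\text{-image: } C( k \circ h \circ g ) = ( 1_P \otimes k ) \circ C( h ) \circ ( 1_P \otimes g ). \]
Applying the reverse implication of \cref{Thm:ConjAction} then yields
\[ F( f ) = j( k ) \star F( h ) \star j( g ). \]

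The main obstacle is a typing subtlety in invoking \cref{Thm:ConjAction}. That theorem is stated for $f, g, h$ in $\mathcal{U}$, whereas here the outer morphisms $g, k$ are non-endomorphisms of $\mathcal{V}$. I would check that its proof never uses membership of the outer maps in $\mathcal{U}$. It only uses \cref{Thm:CompDef}, the counit laws, and the definition of $j$, which makes sense for any morphism of $\mathcal{V}$. I would therefore apply that argument verbatim with $g, k \in \mathcal{V}_1$.
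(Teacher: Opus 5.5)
Your proposal is correct and follows the paper's own proof. The forward direction treats a Conjugation Axiom triple as a conjugate factorization and applies \cref{Lem:ConjFunctor}; the reverse direction applies the Conjugation Axiom to a conjugate factorization and then the reverse implication of \cref{Thm:ConjAction}. Your extra remarks address two points the paper passes over silently, and your handling of both is sound: that $F$ being an $M$-control structure is presupposed, and that \cref{Thm:ConjAction} must be applied with outer morphisms in $\mathcal{V}$ rather than in $\mathcal{U}$.
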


\begin{proof}
    Let $C = \Xi_M \circ F$.
    There are two cases to consider.
    \begin{enumerate}[align=left]
    \item[$\Rightarrow$]
          Assume that $F$ is a conjugated $M$-control structure.
          Let $f \in \mathcal{C}( X, Y )$, $g \in \mathcal{C}( Y, Y )$, and $h \in \mathcal{C}( Y, X )$ satisfying $h \circ f = 1_X$.
          Then $( f, g, h )$ is a conjugate factorization of $k \circ h \circ g$.
          Then $C( h \circ h \circ f ) = ( 1_P \otimes h ) \circ C( h ) \circ ( 1_P \otimes f )$ by \cref{Lem:ConjFunctor}.
          Since $f$, $g$, and $h$ were arbitrary, then $C$ is a conjugated control functor.
    \item[$\Leftarrow$]
          Assume that $C$ is a conjugated control structure.
          Assume that $( g, h, k )$ is a conjugate factorization of $f \in \mathcal{U}( X, X )$.
          Then $C( k \circ h \circ g ) = ( 1_P \otimes k ) \circ C( h ) \circ C( 1_P \otimes g )$ by the conjugation axiom.
          Then $F( f ) = F( k \circ h \circ g ) = j( k ) \star F( h ) \star j( g )$ by~\cref{Thm:ConjAction}.
          Since $f$ and $( g, h, k )$ were arbitrary, then $F$ is a conjugated $M$-control structure.
          \endproof
    \end{enumerate}
\end{proof}

\subsection{Intrinsic Dagger Structures and Dagger Control Functors}
Another important property of the standard control functor is that it also sends unitary maps to unitary maps.
That is to say, if $f \in \mathcal{V}( X, X )$ and $f \circ f^\dagger = 1_X = f^\dagger \circ f$, then $C( f ) \circ C( f )^\dagger = 1_X = C( f )^\dagger \circ C( f )$.
This follows from the more general fact that the standard control functor is a dagger-functor.
It turns out that if $M$ is the comonoid of a $\dagger$-Frobenius algebra, then $\Lambda_M( \mathcal{V} )$ also inherits a dagger structure $( \ddagger )$ defined in terms of $( \dagger )$ as shown in \cref{Thm:FrobDag}.
Moreover, if a functor $C: \mathcal{U} \to \mathcal{U}$ is compatible with the comonoid of a $\dagger$-Frobenius algebra $M$, then $C$ is a dagger-functor if and only if its factorization through $\Lambda_M( \mathcal{V} )$ is a dagger-functor.
This provides a characterization of $\dagger$-control functors in terms of their factorizations (see \cref{Cor:ConjControl}).

\begin{nota}
    In this subsection, we further assume that $\mathcal{V}$ is symmetric with monoidal dagger structure $\dagger: \mathcal{V} \to \mathcal{V}^{\text{op}}$, $\mathcal{U}$ is a dagger-subcategory of $\mathcal{V}$, and $M$ is cocommutative.
\end{nota}

\begin{defi}
    The \emph{$M$-intrinsic dagger assignment} is the identity-on-objects contravariant assignment $\ddagger: \Lambda_M( \mathcal{V} ) \to \Lambda_M( \mathcal{V} )^{\text{op}}$ that maps each $f \in \Lambda_M( \mathcal{V} )( X, Y )$ to the unique morphism in $\Lambda_M( X, Y )$ for which the following equation of diagrams holds in $\mathcal{V}$.
    \begin{center}
        \includegraphics[valign=c,scale=0.84]{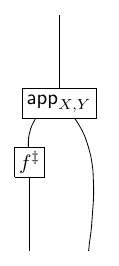}
        =
        \includegraphics[valign=c,scale=0.84]{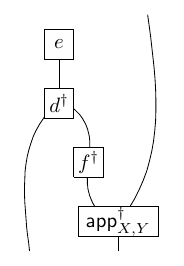}
    \end{center}
\end{defi}

\begin{lem}
    \label{Lem:DagId}
    If $f \in \mathcal{V}( X, Y )$, then $j( f )^\ddagger = j( ( f )^\dagger )$.
\end{lem}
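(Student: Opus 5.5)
The plan is to use the uniqueness clause in the definition of the $M$-intrinsic dagger assignment. By definition, $j(f)^\ddagger$ is the \emph{unique} morphism of $\Lambda_M(\mathcal{V})(Y, X)$ making the defining equation of diagrams hold when its left-hand side is built from $j(f)$. So it suffices to show that the candidate $j(f^\dagger)$ satisfies that same equation, with $j(f)$ as the input.

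I read the defining diagram as follows; this reading is an assumption, since the diagram itself is not reproduced in the text. It says that the $P$-parameterized, uncurried form of the output agrees with the $\dagger$-reflection of the uncurried form of the input, up to the structural maps $d$, $e$ (and possibly their daggers). Equivalently, it says the defining relation is $\Xi_M(f^\ddagger) = \Xi_M(f)^\dagger$, possibly followed by deleting the parameter wire with $e$.

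The steps, in order, are:
\begin{enumerate}
\item Recall that $j(f) = [f] \circ e$ and $j(f^\dagger) = [f^\dagger] \circ e$, by the definition of $j$ as pre-composition with the counit.
\item Substitute $j(f)$ into the left-hand side of the defining diagram. The parameter wire enters $[f]$ only through $e$, so it decouples. The counit law for $(P, d, e)$ then removes the copy $d$, exactly as in the proof of \cref{Lem:XiFunc}; this reduces $\Xi_M(j(f))$ to $1_P \otimes f$.
\item Apply the monoidal dagger axioms $(g \otimes h)^\dagger = g^\dagger \otimes h^\dagger$ and $1_P^\dagger = 1_P$. The reflected diagram becomes $1_P \otimes f^\dagger$, composed with whatever daggered copies of $d$ or $e$ the definition introduces.
\item Apply \cref{Lem:XiFunc} again, this time to $f^\dagger$. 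This shows the same diagram equals $\Xi_M(j(f^\dagger))$ together with the same surrounding structure, so $j(f^\dagger)$ satisfies the defining equation. Uniqueness then gives $j(f)^\ddagger = j(f^\dagger)$.
\end{enumerate}

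The main obstacle is the bookkeeping of the reflected structural maps in step 3. After daggering, the parameter wire may carry $d^\dagger$ and $e^\dagger$ rather than $d$ and $e$. The argument has to confirm that these collapse in the same way $d$ and $e$ did, using the counit law reflected vertically. This is legitimate because a monoidal dagger structure respects vertical reflection of string diagrams, so $(P, d^\dagger, e^\dagger)$ is a monoid satisfying the mirrored unit law.

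The approach would also work if the definition instead routes through the hom-tensor adjunction with $\app$. In that case, the identity $\app_{Y,X} \circ ([f^\dagger] \otimes 1_Y) = ([f]^\dagger \otimes 1_X) \circ \app_{X,Y}^\dagger$ from \cref{Sect:Background} is exactly what converts the reflected $[f]$ into $[f^\dagger]$.
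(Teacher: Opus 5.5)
Your proposal is correct and is essentially the paper's argument: substitute $j(f) = [f] \circ e$ into the defining equation of $(\ddagger)$, use that $(\dagger)$ is monoidal so the parameter wire decouples, and conclude $j(f)^\ddagger = j(f^\dagger)$ by uniqueness. The only difference is the order of the steps. The paper daggers first, writing $j(f)^\dagger = e^\dagger \circ [f]^\dagger$; it then converts $[f]^\dagger$ using $\app_{Y,X} \circ ([f^\dagger] \otimes 1_Y) = ([f]^\dagger \otimes 1_X) \circ \app_{X,Y}^\dagger$ and absorbs $e^\dagger$ into $d^\dagger$ by the reflected counit law, which is the route you describe as the alternative. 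Your main line instead collapses $d$ against $e$ via \cref{Lem:XiFunc} before daggering. This is equivalent: the defining equation is $\Xi_M(f)^\dagger$ with the parameter deleted by $e$, and $\Xi_M(j(f))^\dagger = 1_P \otimes f^\dagger = \Xi_M(j(f^\dagger))$.
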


\begin{proof}
    Since $j( f ) = [ f ] \circ e$ and $( \dagger )$ is functorial, then $j( f )^\dagger = e^\dagger \circ [ f ]^\dagger$.
    Since $( \dagger )$ is monoidal, then also $\app_{Y,X} \circ ( [f^\dagger] \otimes 1_X ) = ( [f]^\dagger \otimes 1_X ) \circ \app_{X,Y}^\dagger$.
    Then the following equation holds.
    \begin{center}
        \includegraphics[valign=c,scale=0.84]{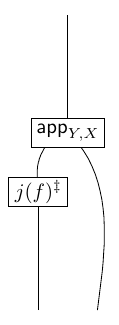}
        =
        \includegraphics[valign=c,scale=0.84]{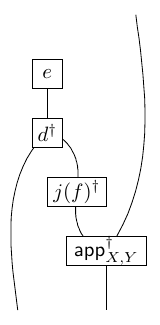}
        =
        \includegraphics[valign=c,scale=0.84]{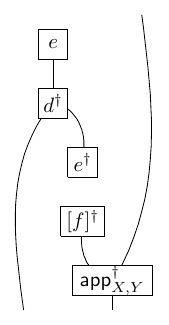}
        =
        \includegraphics[valign=c,scale=0.84]{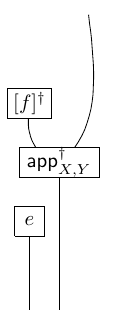}
        =
        \includegraphics[valign=c,scale=0.84]{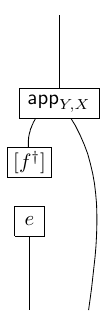}
        =
        \includegraphics[valign=c,scale=0.84]{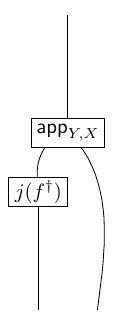}
    \end{center}
    Then by uniqueness, $j( f )^\ddagger = j( f^\dagger )$.
\end{proof}

\begin{lem}
    \label{Lem:DagInv}
    The $M$-intrinsic dagger assignment defines an involution if and only if $P$ is self-adjoint with unit $d \circ e^\dagger$ and counit $e \circ d^\dagger$.
\end{lem}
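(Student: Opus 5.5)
The plan is to work entirely with the uncurried forms of parameterized morphisms, via the hom-tensor adjunction $\mathcal{V}( P, [X,Y] ) \cong \mathcal{V}( P \otimes X, Y )$, since the defining diagram of $( \ddagger )$ is an equation of string diagrams in $\mathcal{V}$. Write $\hat{f} := \app_{X,Y} \circ ( f \otimes 1_X ) \in \mathcal{V}( P \otimes X, Y )$. Reading off the definition, the uncurried form of $f^\ddagger$ is obtained from $\hat{f}^\dagger: Y \to P \otimes X$ by bending the $P$-wire back to an input with the ``cap'' $\epsilon := e \circ d^\dagger: P \otimes P \to \mathbb{I}$:
\begin{equation*}
\widehat{f^\ddagger} = \lambda_X \circ ( \epsilon \otimes 1_X ) \circ \alpha^{-1} \circ ( 1_P \otimes \hat{f}^\dagger ).
\end{equation*}
The key observation is that the vertical reflection of $\epsilon$ is the ``cup'' $\epsilon^\dagger = d \circ e^\dagger: \mathbb{I} \to P \otimes P$. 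This is because $( \dagger )$ is a monoidal dagger structure, so vertical reflections respect string diagrams.

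Next, I compute $\widehat{f^{\ddagger\ddagger}}$ by applying the formula twice. Daggering $\widehat{f^\ddagger}$ reflects the diagram, turning the cap into the cup $d \circ e^\dagger$ and $\hat{f}^\dagger$ back into $\hat{f}$, using $( \hat{f}^\dagger )^\dagger = \hat{f}$. Applying the cap once more then yields
\begin{equation*}
\widehat{f^{\ddagger\ddagger}} = \hat{f} \circ ( s \otimes 1_X ), \qquad s := ( \epsilon \otimes 1_P ) \circ ( 1_P \otimes \epsilon^\dagger ),
\end{equation*}
up to the associator, unitors and, where the wires cross, the symmetry $\beta$. So $f^{\ddagger\ddagger} = f \circ s$ for every parameterized morphism $f$; that is, double dagger is reparameterization by the endomorphism $s$ of $P$. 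This diagrammatic bookkeeping, and in particular tracking which snake appears, is routine but is where I expect the main care to be needed. Cocommutativity of $M$, which gives $\beta_{P,P} \circ d = d$, and hence also the symmetry of $\epsilon$ and $\epsilon^\dagger$, lets me straighten any crossing that appears, so that $s$ really is a zig-zag.

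With this in hand, both directions are short.

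\emph{If} $P$ is self-adjoint with unit $d \circ e^\dagger$ and counit $e \circ d^\dagger$, then the snake equation gives $s = 1_P$. Hence $f^{\ddagger\ddagger} = f$, and $( \ddagger )$ is an involution. The other snake equation follows from the first by symmetry of the cup and cap, or equivalently by taking daggers.

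\emph{Conversely}, suppose $f^{\ddagger\ddagger} = f$ for all $f$. I take $X = \mathbb{I}$, $Y = P$, and the parameterized morphism $f := [\rho_P]$, i.e. the curry of the canonical map $P \otimes \mathbb{I} \to P$. Then $f \circ s = f$ uncurries to $s = 1_P$, which is one snake equation. The second snake equation follows by applying $( \dagger )$ to the first and using the symmetry of $\epsilon$ given by cocommutativity. Thus $( P, P, d \circ e^\dagger, e \circ d^\dagger )$ is an adjunction $P \vdash P$.
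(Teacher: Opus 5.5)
Your proposal is correct and is essentially the paper's argument: the paper likewise extracts a snake equation from involutivity by testing it on the curry of $\rho_P\colon P\otimes\mathbb{I}\to P$ and gets the other snake equation by vertical reflection, and it proves the reverse implication by straightening the zig-zag that appears in $f^{\ddagger\ddagger}$; your identity $f^{\ddagger\ddagger}=f\circ s$ simply does that computation once for both directions. The appeal to cocommutativity (symmetry of $\epsilon$) is unnecessary, because the unit is $\epsilon^\dagger$, so the dagger of one snake equation is literally the other.
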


\begin{proof}
    There are two cases to consider.
    \begin{enumerate}[align=left]
    \item[$\Rightarrow$]
          Assume that $( \ddagger )$ is involutive.
          Let $f: P \to [ \mathbb{C}, P ]$ denote the image of $\rho_P: P \otimes \mathbb{C} \to P$ under the hom-tensor adjunction.
          Since $( \ddagger )$ is involutive, $f = f^{\ddagger\ddagger}$.
          Since $( \dagger )$ is monoidal, then the following equation holds by the definition of $( \ddagger )$ and its vertical reflection.
          \begin{equation*}
            \includegraphics[valign=c,scale=0.84]{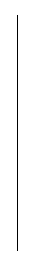}
            =
            \includegraphics[valign=c,scale=0.84]{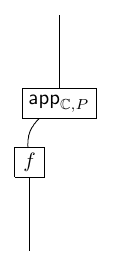}
            =
            \includegraphics[valign=c,scale=0.84]{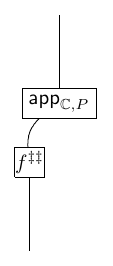}
            =
            \includegraphics[valign=c,scale=0.84]{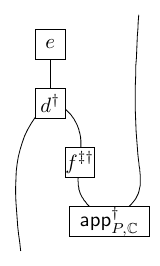}
            =
            \includegraphics[valign=c,scale=0.84]{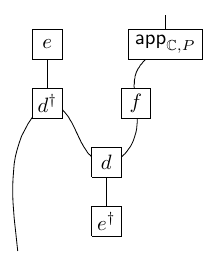}
            =
            \includegraphics[valign=c,scale=0.84]{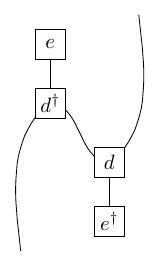}
          \end{equation*}
          Since $( \dagger )$ is monoidal, then the vertical reflection of this equation also holds.
          In conclusion, $P$ is self-adjoint with unit $d \circ e^\dagger$ and counit $e \circ d^\dagger$.
    \item[$\Leftarrow$]
          Assume that $P$ is self-adjoint with unit $d \circ e^\dagger$ and counit $e \circ d^\dagger$.
          Let $f \in \Lambda_M( \mathcal{V} )( X, Y )$.
          Since $( \dagger )$ is monoidal, then the following equation holds by the definition of $( \ddagger )$ and its vertical reflection.
          \begin{center}
            \includegraphics[valign=c,scale=0.84]{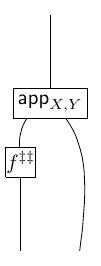}
            =
            \includegraphics[valign=c,scale=0.84]{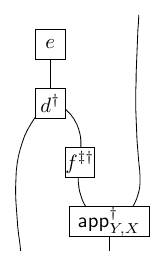}
            =
            \includegraphics[valign=c,scale=0.84]{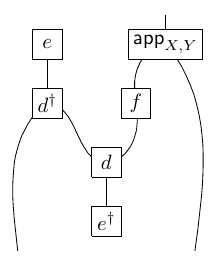}
            =
            \includegraphics[valign=c,scale=0.84]{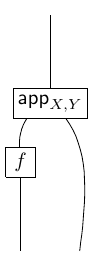}
          \end{center}
          Then by uniqueness, $f = f^{\ddagger\ddagger}$.
          Since $f$ was arbitrary, then $( \ddagger )$ is an involution.
          \endproof
    \end{enumerate}    
\end{proof}

\begin{lem}
    \label{Lem:SimpDag}
    If $f: P \to [ X, Y ]$ and $g: P \to [ Y, Z ]$, then the following equation hold.
    \begin{align*}
        \includegraphics[valign=c,scale=0.84]{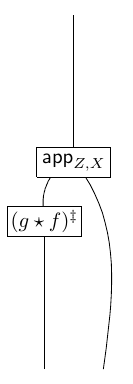}
        &=
        \includegraphics[valign=c,scale=0.84]{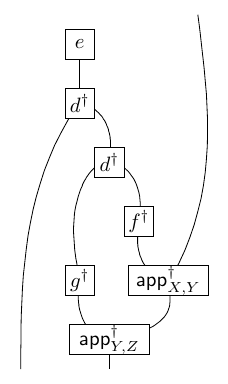}
        &
        \includegraphics[valign=c,scale=0.84]{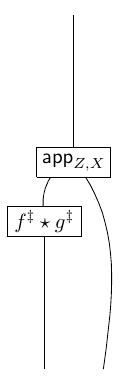}
        &=
        \includegraphics[valign=c,scale=0.84]{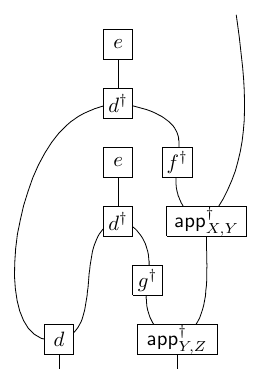}
    \end{align*}
\end{lem}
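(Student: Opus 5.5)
The plan is to derive both equations from the hom-tensor adjunction, which is how the self-enrichment of $\mathcal{V}$ (\cref{Sect:Background}) defines the internal composition $M$ and identity. I read the two diagrams as follows.
\begin{itemize}
\item The first equation unfolds the parameterized composite $g \star f = M_{X,Y,Z} \circ ( g \otimes f ) \circ d$ after evaluation. It states that
\[ \app_{X,Z} \circ ( ( g \star f ) \otimes 1_X ) = \app_{Y,Z} \circ ( g \otimes \app_{X,Y} ) \circ ( 1_P \otimes f \otimes 1_X ) \circ ( d \otimes 1_X ), \]
up to the associator.
\item The second equation is the same statement after the evaluation maps have been replaced by their daggers, i.e.\ in the shape in which they appear in the definition of $( \ddagger )$.
\end{itemize}
I will treat the first as the basic identity and obtain the second from it.

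For the first equation, recall that $M_{X,Y,Z} : [ Y, Z ] \otimes [ X, Y ] \to [ X, Z ]$ is by definition the transpose of $\app_{Y,Z} \circ ( 1 \otimes \app_{X,Y} ) \circ \alpha$. Hence, by the universal property of $\app_{X,Z}$, we have
\[ \app_{X,Z} \circ ( M_{X,Y,Z} \otimes 1_X ) = \app_{Y,Z} \circ ( 1 \otimes \app_{X,Y} ) \circ \alpha. \]
I then precompose with $( ( g \otimes f ) \circ d ) \otimes 1_X$ and use bifunctoriality of $\otimes$ and naturality of $\alpha$. This slides $g$ and $f$ past the evaluation maps and yields the right-hand side directly.

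For the second equation, I apply the monoidal dagger to the first equation and take its vertical reflection. This is legitimate because vertical reflection respects equality of string diagrams, and because $\alpha^\dagger = \alpha^{-1}$ lets the associator reflect to its inverse. The reflected $\app$ nodes then appear as $\app^\dagger$. Where the stated diagram instead contains $[ f^\dagger ]$-type boxes, I will rewrite them with the background identity
\[ \app_{Y,X} \circ ( [ h^\dagger ] \otimes 1_Y ) = ( [ h ]^\dagger \otimes 1_X ) \circ \app_{X,Y}^\dagger, \]
exactly as in the proof of \cref{Lem:DagId}. If the second diagram involves $d^\dagger$ rather than $d$, the same reflection produces it, since $( d \otimes 1_X )^\dagger = d^\dagger \otimes 1_X$.

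The main obstacle is bookkeeping rather than mathematics. One must match the reflected diagram to the exact wire ordering in the statement, which may require the symmetry $\beta$ (available by the subsection's standing assumptions), and must ensure that the dagger of $M$ never needs to be computed directly. I would first try to keep $M$ eliminated in favour of $\app$ before dualizing, so that only the stated dagger identity for $\app$ is used.
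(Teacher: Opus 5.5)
You have misread what the two diagrams assert, so your argument proves only a preliminary step. The lemma belongs to the discussion of the $M$-intrinsic dagger. It is invoked in both directions of \cref{Thm:FrobDag} to put $(g\star f)^\ddagger$ and $f^\ddagger\star g^\ddagger$ into normal form:
\begin{itemize}
\item the first equation rewrites the uncurried form $\app_{Z,X}\circ((g\star f)^\ddagger\otimes 1_Z)$;
\item the second rewrites the uncurried form of $f^\ddagger\star g^\ddagger$.
\end{itemize}
The paper obtains both ``by the definitions of $(\ddagger)$ and $(\star)$'', after evaluating $(g\star f)^\dagger$ by vertical reflection. Your two equations are the $M$-unfolding of $\app_{X,Z}\circ((g\star f)\otimes 1_X)$ and its vertical reflection. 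They never use the defining equation of $\ddagger$ and never mention $f^\ddagger$ or $g^\ddagger$. So they cannot supply what \cref{Thm:FrobDag} needs: two explicit diagrams in $f^\dagger, g^\dagger, \app^\dagger, d, d^\dagger, e$ that differ exactly by a Frobenius-type rearrangement. For the same reason, the obstacle is not bookkeeping; the missing content is the use of the definition of $\ddagger$.

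Your ingredients are nonetheless the right ones.
\begin{itemize}
\item \textbf{First equation.} Start from the defining equation of $(g\star f)^\ddagger$. It builds $\app_{Z,X}\circ((g\star f)^\ddagger\otimes 1_Z)$ from the vertical reflection of $\app_{X,Z}\circ((g\star f)\otimes 1_X)$, pairing the incoming parameter wire with the outgoing one through $e\circ d^\dagger$ (the counit appearing in \cref{Lem:DagInv}). Then substitute the reflection of your unfolding. The parameter outputs of the reflected $g$- and $f$-blocks are then merged by $d^\dagger$ before that pairing.
\item \textbf{Second equation.} No dagger of the unfolding is needed here. Apply your $\app\circ(M\otimes 1)$ identity directly to $f^\ddagger\star g^\ddagger$. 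Then replace $\app_{Z,Y}\circ(g^\ddagger\otimes 1_Z)$ and $\app_{Y,X}\circ(f^\ddagger\otimes 1_Y)$ by their defining diagrams. The result has the input parameter copied by $d$, with each copy paired separately against the parameter output of the reflected $g$- or $f$-block.
\end{itemize}

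Also drop your fallback through $\app_{Y,X}\circ([h^\dagger]\otimes 1_Y)=([h]^\dagger\otimes 1_X)\circ\app_{X,Y}^\dagger$. That identity concerns names of morphisms $h\in\mathcal{V}(X,Y)$, i.e.\ constant families as in \cref{Lem:DagId}. For a general $f\colon P\to[X,Y]$ there is no box $[f^\dagger]$, and $f^\ddagger$ is accessible only through its defining equation.
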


\begin{proof}
    Since $( \dagger )$ is a control functor, then $( g \star f )^\dagger$ can be evaluated by a vertical reflection.
    Then the following equations holds by the definitions of $( \ddagger )$ and $( \star )$.
    \begin{align*}
        \includegraphics[valign=c,scale=0.84]{figs/special/ddag_1_orig.pdf}
        =
        \includegraphics[valign=c,scale=0.84]{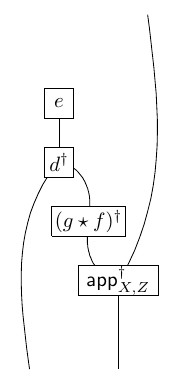}
        =
        \includegraphics[valign=c,scale=0.84]{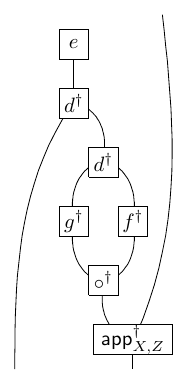}
        =
        \includegraphics[valign=c,scale=0.84]{figs/special/ddag_1_simp.pdf}
    \end{align*}
    \begin{align*}
        \includegraphics[valign=c,scale=0.84]{figs/special/ddag_2_orig.pdf}
        =
        \includegraphics[valign=c,scale=0.84]{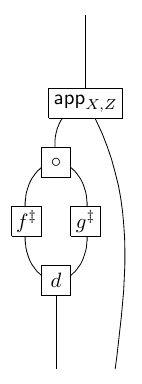}
        =
        \includegraphics[valign=c,scale=0.84]{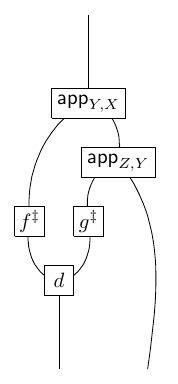}
        =
        \includegraphics[valign=c,scale=0.84]{figs/special/ddag_2_simp.pdf}
    \end{align*}
    This concludes the proof.
\end{proof}

\begin{thm}
    \label{Thm:FrobDag}
    The $M$-intrinsic dagger assignment is a dagger structure on $\Lambda_M( \mathcal{V} )$ if and only if $M$ is the comonoid of a $\dagger$-Frobenius algebra.
\end{thm}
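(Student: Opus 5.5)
The plan is to unpack ``dagger structure'' into three separate conditions on $(\ddagger)$ and match each one against a consequence of the $\dagger$-Frobenius law for $A = ( P, d^\dagger, d, e^\dagger, e )$. $(\ddagger)$ is identity-on-objects by definition. It remains to check:
\begin{enumerate}
\item it preserves identities, $i_X^\ddagger = i_X$;
\item it is contravariantly functorial, $( g \star f )^\ddagger = f^\ddagger \star g^\ddagger$;
\item it is involutive, $f^{\ddagger\ddagger} = f$.
\end{enumerate}
Identities are free and do not depend on $M$. Since $i_X = j( 1_X )$, \cref{Lem:DagId} gives $i_X^\ddagger = j( 1_X^\dagger ) = j( 1_X ) = i_X$. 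Involutivity is handled entirely by \cref{Lem:DagInv}: $(\ddagger)$ is an involution if and only if $P$ is self-adjoint with unit $d \circ e^\dagger$ and counit $e \circ d^\dagger$. So the real content is in contravariant functoriality, together with relating self-adjointness to the Frobenius law.

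For the direction ($\Leftarrow$), assume $( P, d^\dagger, d, e^\dagger, e )$ is a $\dagger$-Frobenius algebra. By the discussion around \cref{Fig:Frob}, the Frobenius law makes $P$ self-adjoint with unit $d \circ e^\dagger$ and counit $e \circ d^\dagger$ (the snake equations follow by applying the extended Frobenius law and then (co)unitality). \cref{Lem:DagInv} then gives involutivity. For contravariance, I would use \cref{Lem:SimpDag} to write both $( g \star f )^\ddagger$ and $f^\ddagger \star g^\ddagger$ as diagrams in $\mathcal{V}$ built from $f^\dagger$, $g^\dagger$, $\app^\dagger$, $d$, $d^\dagger$, $e$ and $e^\dagger$. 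The first diagram contains the reflected copy $d^\dagger$ coming from $g \star f$, adjacent to the cup $d \circ e^\dagger$ introduced by $\ddagger$. The second contains $d$ together with two cups. The Frobenius law $( d^\dagger \otimes 1 ) \circ ( 1 \otimes d ) = d \circ d^\dagger$, combined with coassociativity and cocommutativity of $M$, should slide these pieces into one another. Uniqueness in the definition of $\ddagger$ then gives the equality.

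For the direction ($\Rightarrow$), assume $(\ddagger)$ is a dagger structure. Involutivity and \cref{Lem:DagInv} give the self-adjunction with unit $d \circ e^\dagger$ and counit $e \circ d^\dagger$. The Frobenius law itself must be extracted from contravariance. Following the trick in the proof of \cref{Lem:DagInv}, I would instantiate contravariance at \emph{universal} morphisms: take $X = Y = Z = \mathbb{I}$ (or $P$) and take $f$, $g$ to be the curryings of $\rho_P$-type maps $P \otimes \mathbb{I} \to P$. With these choices the evaluation maps cancel, and the two sides of $( g \star f )^\ddagger = f^\ddagger \star g^\ddagger$ become pure composites of $d$, $d^\dagger$, $e$ and $e^\dagger$. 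Uncurrying both sides and capping off the extra wire with the self-adjunction should leave exactly one of the forms of the extended Frobenius law in \cref{Fig:Frob:Law}. The mirrored form then follows by vertical reflection, since $(\dagger)$ is monoidal. Since $( P, d^\dagger, e^\dagger )$ is automatically a monoid, this makes $A$ a $\dagger$-Frobenius algebra with comonoid $M$.

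I expect the main obstacle to be this extraction step in ($\Rightarrow$). One has to choose $f$ and $g$ so that the resulting identity is literally the Frobenius law, rather than a weaker equation that holds only after composing with some map that might not be invertible. If the naive choice $X = Y = Z = \mathbb{I}$ loses information, I would instead take $Y = P$ and let $f$ and $g$ be the curryings of $1_{P \otimes P}$ and of the multiplication $d^\dagger$. Then the self-adjunction obtained from involutivity can be used to bend wires and recover the full equation between $( d^\dagger \otimes 1 ) \circ ( 1 \otimes d )$ and $d \circ d^\dagger$.
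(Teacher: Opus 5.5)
Your treatment of identities (via \cref{Lem:DagId}), of involutivity (via \cref{Lem:DagInv}), and of the direction ($\Leftarrow$) is the same as the paper's. The gap is the step you flagged yourself. In ($\Rightarrow$) you never exhibit an instance of contravariance that actually yields the Frobenius law, and your first candidate cannot do so. Curryings of $\rho_P$ are morphisms $\mathbb{I}\to P$ of $\Lambda_M(\mathcal{V})$, so two of them are not composable. If you genuinely take $X=Y=Z=\mathbb{I}$, then $f$ and $g$ are just effects $P\to\mathbb{I}$. In that case $(g\star f)^\ddagger=f^\ddagger\star g^\ddagger$ only says that two morphisms $P\otimes P\otimes P\to\mathbb{I}$ agree after precomposition with $1_P\otimes s\otimes t$ for states $s,t$ of $P$. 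From that you can conclude the Frobenius law only if product states separate morphisms out of $P\otimes P\otimes P$. This holds in $\mathbb{C}\mathbf{FVect}$, but it is not available in a general symmetric monoidal closed dagger category.

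Your fallback can be repaired, but not as written. With the currying of $1_{P\otimes P}$ applied first the composite is ill-typed, and the resulting equation is not literally the Frobenius law. Take $a\in\Lambda_M(\mathcal{V})(P,P)$ to be the currying of $d^\dagger$ and $b\in\Lambda_M(\mathcal{V})(P,P\otimes P)$ the currying of $1_{P\otimes P}$. Put $\epsilon=e\circ d^\dagger$, $\eta=d\circ e^\dagger$, $K=(\epsilon\otimes 1_P)\circ(1_P\otimes d)$ and $\mu=(1_P\otimes\epsilon)\circ(d\otimes 1_P)$. Suppressing unitors and using associativity of $d^\dagger$, the equation $(b\star a)^\ddagger=a^\ddagger\star b^\ddagger$ uncurries to $K\circ(d^\dagger\otimes 1_P)=K\circ(\mu\otimes 1_P)$. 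To finish you must do three things:
\begin{enumerate}
\item bend the last input with $\eta$;
\item observe that $(K\otimes 1_P)\circ(1_P\otimes\eta)=d$, by coassociativity and a snake equation;
\item cancel the split monomorphism $d$ to obtain $d^\dagger=\mu$.
\end{enumerate}
From $d^\dagger=\mu$, the equation $d\circ d^\dagger=(1_P\otimes d^\dagger)\circ(d\otimes 1_P)$ follows by coassociativity, and the mirrored equation follows by applying $\dagger$.

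The paper obtains $d^\dagger=\mu$ with no cancellation step. It curries $d^\dagger$ and $1_P\otimes e$ (the map that returns the parameter and discards the input). For these, the two sides of $(g\star f)^\ddagger=f^\ddagger\star g^\ddagger$ collapse to $d^\dagger$ and $\mu$ respectively, using associativity of $d^\dagger$ and the snake equations. It then identifies $\mu$ with $\delta^\dagger$, where $\delta$ is the Frobenius comultiplication that \cref{Prop:NonDegenForm} attaches to the monoid $(P,d^\dagger,e^\dagger)$ and the nondegenerate form $e\circ d^\dagger$. From this it concludes $\delta=d$.
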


\begin{proof}
    There are two cases to consider.
    \begin{enumerate}[align=left]
    \item[$\Rightarrow$]
          Assume that $( \ddagger )$ defines a dagger structure on $\Lambda_M( \mathcal{V} )$.
          Since $( \ddagger )$ is a dagger structure, then by \cref{Lem:DagInv} $P$ is self-adjoint with unit $d \circ e^\dagger$ and counit $e \circ d^\dagger$.
          Then by~\cref{Prop:NonDegenForm}, there exists a unique comonoid $( P, \delta, e )$ together with which $( P, d^\dagger, e^\dagger )$ forms a Frobenius algebra.
          Since $d \circ e^\dagger$ is the unit to $e \circ d^\dagger$, then $\delta$ is defined by the following equation.
          \begin{equation}
            \includegraphics[valign=c,scale=0.84]{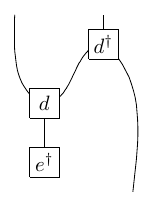}
            =
            \includegraphics[valign=c,scale=0.84]{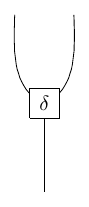}
            =
            \includegraphics[valign=c,scale=0.84]{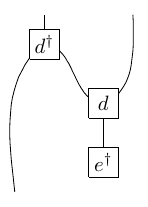}
            \label{Eqn:DeltaDef}
          \end{equation}
          Next, let $f: P \to [ P, P ]$ denote the adjoint to $d^\dagger$ and $g: P \to [ P, P ]$ denote the adjoint to $1_P \otimes e$.
          Then by \cref{Lem:SimpDag}, then following pair of equations hold.
          \begin{align*}
            \includegraphics[valign=c,scale=0.84]{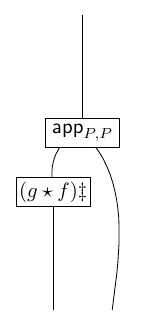}
            &=
            \includegraphics[valign=c,scale=0.84]{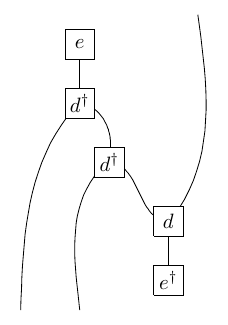}
            &
            \includegraphics[valign=c,scale=0.84]{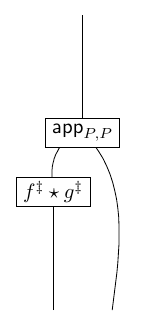}
            &=
            \includegraphics[valign=c,scale=0.84]{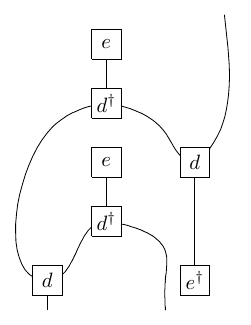}
          \end{align*}
          Since $( \ddagger )$ is a dagger structure, then $( g \star f )^\ddagger = f^\ddagger \star g^\ddagger$ and the following equation holds.
          \begin{equation*}
            \includegraphics[valign=c,scale=0.84]{figs/special/ddag_p1_1.pdf}
            =
            \includegraphics[valign=c,scale=0.84]{figs/special/ddag_p1_2.pdf}
            =
            \includegraphics[valign=c,scale=0.84]{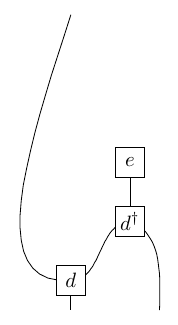}
            =
            \includegraphics[valign=c,scale=0.84]{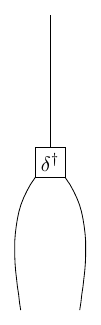}
          \end{equation*}
          Combining this with \cref{Eqn:DeltaDef}, the following equation holds.
          \begin{equation*}
            \includegraphics[valign=c,scale=0.84]{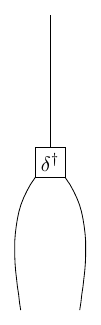}
            =
            \includegraphics[valign=c,scale=0.84]{figs/special/ddag_p1_1.pdf}
            =
            \includegraphics[valign=c,scale=0.84]{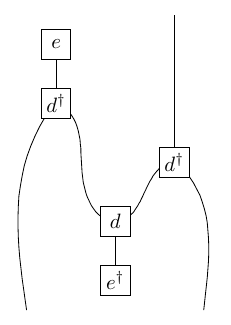}
            =
            \includegraphics[valign=c,scale=0.84]{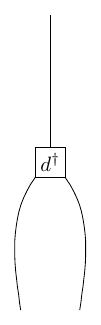}
          \end{equation*}
          Then $\delta^\dagger = d^\dagger$, which implies $\delta = d$.
          Hence $M$ is the comonoid of a $\dagger$-Frobenius algebra.
    \item [$\Leftarrow$]
          Assume that $M$ is the comonoid of a $\dagger$-Frobenius algebra.
          First, it must be shown that $( \ddagger )$ is a functor.
          Let $X \in \mathcal{V}_0$.
          Then by \cref{Lem:DagId}, $j( 1_X ) = j( 1_X^\dagger ) = j( 1_X )^\ddagger$.
          Since $X$ was arbitrary and $j( 1_X ) = i_X$, then $( \ddagger )$ preserves identities.
          It remains to be shown that $( \ddagger )$ reverses composition.
          Let $f \in \Lambda_M( \mathcal{V} )( X, Y )$ and $g \in \Lambda_M( \mathcal{V} )( Y, Z )$.
          Since $( P, d, e )$ is the comonoid of a $\dagger$-Frobenius algebra, then the following equation holds.
          \begin{equation*}
            \includegraphics[valign=c,scale=0.84]{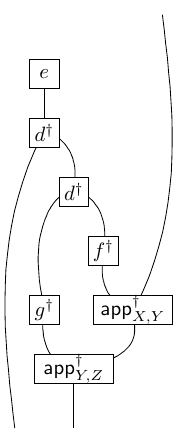}
            =
            \includegraphics[valign=c,scale=0.84]{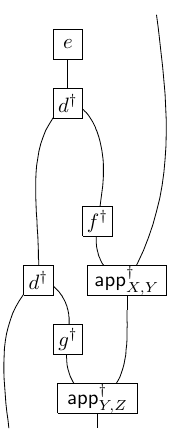}
            =
            \includegraphics[valign=c,scale=0.84]{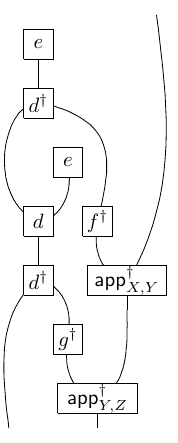}
            =
            \includegraphics[valign=c,scale=0.84]{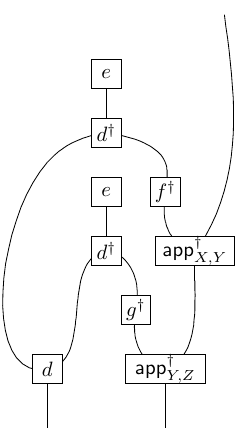}
          \end{equation*}
          Then by \cref{Lem:SimpDag}, the following equation holds.
          \begin{equation*}
            \includegraphics[valign=c,scale=0.84]{figs/special/ddag_1_orig.pdf}
            =
            \includegraphics[valign=c,scale=0.84]{figs/special/ddag_1_simp.pdf}
            =
            \includegraphics[valign=c,scale=0.84]{figs/special/ddag_2_simp.pdf}
            =
            \includegraphics[valign=c,scale=0.84]{figs/special/ddag_2_orig.pdf}
          \end{equation*}
          Then by uniqueness, $( g \star f )^\ddagger = f^\ddagger \star g^\ddagger$.
          Since $f$ and $g$ were arbitrary, then $( \ddagger )$ preserves composition.
          Then $( \ddagger )$ is a functor.
          Moreover, since $M$ is a Frobenius algebra, then as in~\cref{Fig:Frob}, $P$ is self-adjoint with unit $d \circ e^\dagger$ and counit $e \circ d^\dagger$.
          Then $( \ddagger )$ is an involution by \cref{Lem:DagInv}.
          In conclusion, $( \ddagger )$ is a dagger structure.
          \endproof
    \end{enumerate}
\end{proof}

\begin{lem}
    \label{Lem:XiDag}
    If $M$ is the comonoid of a commutative $\dagger$-Frobenius algebra, then $\Xi_M$ is a dagger functor with respect to the $M$-intrinsic dagger assignment.
\end{lem}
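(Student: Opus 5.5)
We must show that for every $f \in \Lambda_M( \mathcal{V} )( X, Y )$ we have $\Xi_M( f^\ddagger ) = \Xi_M( f )^\dagger$. Here $\Xi_M$ is already a functor by \cref{Thm:XiFunc}. By \cref{Thm:FrobDag}, $( \ddagger )$ is a genuine dagger structure on $\Lambda_M( \mathcal{V} )$, so the statement is meaningful. The plan is to compute both sides as string diagrams in $\mathcal{V}$ and reduce one to the other using the Frobenius law, cocommutativity, and the defining equation of $( \ddagger )$.

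First, write out $\Xi_M( f )^\dagger$ by vertically reflecting the diagram of \cref{Def:ClosedIncl}. Since $( \dagger )$ is monoidal, we get
\[
\Xi_M( f )^\dagger = ( d^\dagger \otimes 1_X ) \circ ( 1_P \otimes f^\dagger \otimes 1_X ) \circ ( 1_P \otimes \app_{X,Y}^\dagger ).
\]
Here $d^\dagger$ is the multiplication $m$ of the $\dagger$-Frobenius algebra. Next, write out $\Xi_M( f^\ddagger )$ as
\[
( 1_P \otimes \app_{Y,X} ) \circ ( 1_P \otimes f^\ddagger \otimes 1_Y ) \circ ( d \otimes 1_Y ).
\]
The defining equation of $( \ddagger )$ lets us replace the subdiagram $\app_{Y,X} \circ ( f^\ddagger \otimes 1_Y )$, with the $P$-input, by a diagram involving $f^\dagger$ and $\app_{X,Y}^\dagger$. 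This is a reflected evaluation in which the $P$-wire is bent around using the self-duality of $P$, whose cup is $d \circ e^\dagger$ and cap is $e \circ d^\dagger$ (these are available by \cref{Lem:DagInv}). Substituting produces a diagram in which the copy $d$ feeds one leg into this bent $P$-wire, which terminates in $e \circ d^\dagger$.

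The key step is then a Frobenius rearrangement. The composite of $d$ with the cap $e \circ m$ on one leg can be rewritten by the extended Frobenius law of \cref{Fig:Frob:Law} together with counitality. The result is a single multiplication $m = d^\dagger$ applied after $f^\dagger$, which is exactly the shape of $\Xi_M( f )^\dagger$. Cocommutativity (equivalently commutativity of $m$) is used to swap the order of the two $P$-legs when the bent wire ends up on the wrong side. Commutativity of the Frobenius algebra is needed because $\Xi_M$ keeps the parameter on the left of the output.

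I expect the main obstacle to be this middle step: tracking which leg of $d$ is fed into the bent wire, and checking that the yanking and Frobenius moves align correctly. If direct rewriting stalls, I will first establish the identity on the special case $f = j( g )$, where \cref{Lem:DagId} and \cref{Lem:XiFunc} give $\Xi_M( j(g)^\ddagger ) = 1_P \otimes g^\dagger = ( 1_P \otimes g )^\dagger$ immediately, and use this as a template. Failing that, I will verify the general equation by precomposing with the copyable states via the separation afforded by uniqueness under the hom-tensor adjunction.
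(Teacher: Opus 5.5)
Your proposal is correct and is essentially the paper's argument. The paper likewise expands $\Xi_M( f^\ddagger )$, substitutes the defining equation of $( \ddagger )$, and uses the Frobenius law together with counitality to collapse $( 1_P \otimes ( e \circ d^\dagger ) ) \circ ( d \otimes 1_P )$ to $d^\dagger$, which yields the vertical reflection of $\Xi_M( f )$, i.e.\ $\Xi_M( f )^\dagger$. Your fallback plans are not needed, and the copyable-state argument would only work in concrete settings such as $\mathbb{C}\mathbf{FVect}$, not in a general $\mathcal{V}$.
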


\begin{proof}
    Assume that $M$ is a $\dagger$-Frobenius algebra.
    Then the following equation holds.
    \begin{equation*}
        \includegraphics[valign=c,scale=0.84]{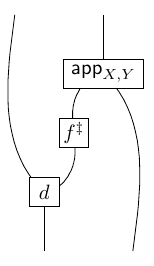}
        =
        \includegraphics[valign=c,scale=0.84]{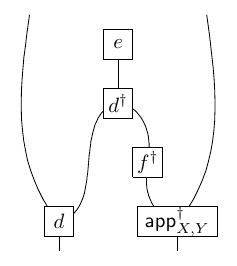}
        =
        \includegraphics[valign=c,scale=0.84]{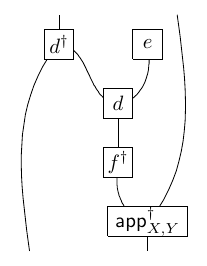}
        =
        \includegraphics[valign=c,scale=0.84]{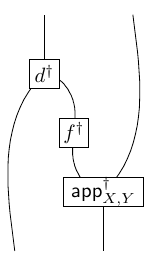}
    \end{equation*}
    Since $( \dagger )$ is a monoidal contravariant functor, then $\Xi( f^\ddagger ) = \Xi( f )^\dagger$.
    Since $f$ was arbitrary, then $\Xi_M$ is a dagger-functor.
\end{proof}

\begin{thm}
    \label{Thm:DaggerFactor}
    Assume $M$ is the comonoid from a commutative $\dagger$-Frobenius algebra.
    The functor $F$ is a dagger-functor with respect to the $M$-intrinsic dagger assignment if and only if the functor $\Xi_M \circ F$ is a dagger-functor.
\end{thm}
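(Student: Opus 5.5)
The argument reduces both directions to \cref{Lem:XiDag} together with an injectivity property of $\Xi_M$. Here $F: \mathcal{U} \to \Lambda_M( \mathcal{V} )$ is an identity-on-objects functor, $\mathcal{U}$ is a dagger-subcategory, and by \cref{Thm:FrobDag} the assignment $( \ddagger )$ is a genuine dagger structure on $\Lambda_M( \mathcal{V} )$, since $M$ is the comonoid of a $\dagger$-Frobenius algebra. Let $C = \Xi_M \circ F$.

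\emph{Forward direction.} Assume $F( f^\dagger ) = F( f )^\ddagger$ for every $f \in \mathcal{U}( X, Y )$. Then
\begin{equation*}
    C( f^\dagger ) = \Xi_M( F( f^\dagger ) ) = \Xi_M( F( f )^\ddagger ) = \Xi_M( F( f ) )^\dagger = C( f )^\dagger ,
\end{equation*}
where the third equality is \cref{Lem:XiDag}, which applies because $M$ comes from a commutative $\dagger$-Frobenius algebra. So $C$ is a dagger-functor.

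\emph{Converse.} Assume $C( f^\dagger ) = C( f )^\dagger$. By \cref{Lem:XiDag} this gives
\begin{equation*}
    \Xi_M( F( f^\dagger ) ) = \Xi_M( F( f ) )^\dagger = \Xi_M( F( f )^\ddagger ) .
\end{equation*}
It therefore suffices to show that $\Xi_M$ is faithful, meaning injective on each hom-set $\mathcal{V}( P, [ X, Y ] )$. I would establish this directly. For $g \in \mathcal{V}( P, [ X, Y ] )$, postcompose $\Xi_M( g )$ with $e \otimes 1_Y$ followed by the unitor. Counitality of $d$ then collapses $( e \otimes 1 ) \circ ( 1_P \otimes \app ) \circ ( 1_P \otimes g \otimes 1 ) \circ ( d \otimes 1 )$ to $\app_{X,Y} \circ ( g \otimes 1_X )$. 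This is the same manipulation already used in \cref{Thm:CompFunc}, where $G$ was recovered from $F$ by applying $e$. The hom-tensor adjunction then recovers $g$ uniquely from $\app_{X,Y} \circ ( g \otimes 1_X )$. Hence $\Xi_M( g ) = \Xi_M( g' )$ implies $g = g'$, and so $F( f^\dagger ) = F( f )^\ddagger$.

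The main obstacle is this faithfulness step. The string-diagram check that the counit cancels the copy inside $\Xi_M$ requires care with unitors and with which leg of $d$ is deleted, since left versus right counitality matters. Because $M$ is cocommutative, either leg may be deleted. Once faithfulness is in place, both directions are one-line consequences of \cref{Lem:XiDag}.
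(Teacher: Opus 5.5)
Your proof is correct. The forward direction coincides with the paper's: compose $F$ with the dagger-functor $\Xi_M$ of \cref{Lem:XiDag}.

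The converse is organised differently. You apply \cref{Lem:XiDag} a second time to rewrite $C(f)^\dagger$ as $\Xi_M(F(f)^\ddagger)$, and then cancel $\Xi_M$ using a separately proved faithfulness statement. The paper does not pass through \cref{Lem:XiDag} here. It applies $e \otimes 1$ to both sides of $C(f^\dagger) = C(f)^\dagger$:
\begin{itemize}
\item on the left, counitality yields $\app \circ (F(f^\dagger) \otimes 1)$;
\item on the right, after vertically reflecting $C(f) = \Xi_M(F(f))$, the composite $e \circ d^\dagger$ appears, and one reads off literally the diagram defining $F(f)^\ddagger$.
\end{itemize}
Uniqueness in that definition concludes.

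Both arguments rest on the same mechanism: delete the output copy of $P$, then use the hom-tensor adjunction. They buy different things.
\begin{itemize}
\item The paper's converse uses only counitality and the definition of $(\ddagger)$, not the Frobenius law. So the implication from ``$C$ is a dagger-functor'' to $F(f^\dagger) = F(f)^\ddagger$ actually holds for an arbitrary comonoid $M$.
\item Your route passes through \cref{Lem:XiDag}, so it spends the $\dagger$-Frobenius hypothesis in both directions. In exchange, you isolate a reusable lemma, faithfulness of $\Xi_M$. That makes the converse a one-line formal consequence, and it would equally collapse the ``by uniqueness'' steps in \cref{Thm:ConjAction} and \cref{Thm:TensorAction}.
\end{itemize}

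One minor correction: cocommutativity plays no role in your faithfulness step. The output wire of $\Xi_M(g)$ is always the left leg of $d$. Deleting it uses only $(e \otimes 1_P) \circ d = \lambda_P^{-1}$, which every comonoid satisfies.
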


\begin{proof}
    Let $C = \Xi_M \circ F$.
    There are two cases to consider.
    \begin{enumerate}[align=left]
    \item[$\Rightarrow$]
          Assume that $F$ is a dagger-functor.
          By \cref{Lem:XiDag}, $\Xi_M$ is a dagger-functor as well.
          Since dagger-functors are closed under composition, then $C$ is a dagger-functor.
    \item[$\Leftarrow$]
          Assume that $C$ is a dagger-functor.
          Let $f \in \mathcal{U}( X, X )$.
          Then $C( f^\dagger ) = C( f )^\dagger$.
          Since $( \dagger )$ is monoidal, then the following equation holds by a vertical reflection of $C( f )$.
          \begin{equation*}
            \includegraphics[valign=c,scale=0.84]{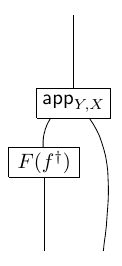}
            =
            \includegraphics[valign=c,scale=0.84]{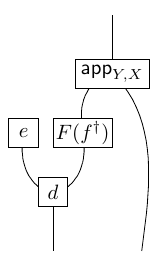}
            =
            \includegraphics[valign=c,scale=0.84]{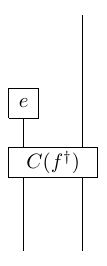}
            =
            \includegraphics[valign=c,scale=0.84]{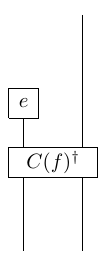}
            =
            \includegraphics[valign=c,scale=0.84]{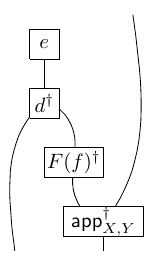}
            =
            \includegraphics[valign=c,scale=0.84]{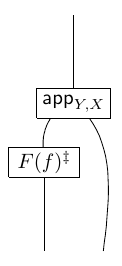}
          \end{equation*}
          Then $F( f^\dagger ) = F( f )^\ddagger$ by uniqueness.
          Since $f$ was arbitrary, $F$ is a dagger-functor.
          \endproof
    \end{enumerate}
\end{proof}

\begin{cor}
    \label{Cor:ConjControl}
    Assume that $\mathcal{U}$ is a symmetric monoidal dagger-subcategory and $M$ is the comonoid from a commutative $\dagger$-Frobenius algebra.
    If $F$ is an $M$-control structure, the following statements are equivalent.
    \begin{enumerate}
    \item $F: \mathcal{U} \to \Lambda_M( \mathcal{V} )$ is a dagger-functor with respect to the $M$-intrinsic dagger structure.
    \item $\Xi_M \circ F: \mathcal{U} \to \mathcal{U}$ is a $\dagger$-control functor.
    \end{enumerate}
\end{cor}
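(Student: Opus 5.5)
The corollary follows from \cref{Thm:DaggerFactor}, once we recall that a $\dagger$-control functor is by definition a control functor that is also a dagger-endofunctor. The standing hypotheses of this subsection ($\mathcal{V}$ symmetric with a monoidal dagger, $M$ cocommutative) and of \cref{Def:ControlStruct} ($\mathcal{U}$ a symmetric monoidal subcategory of $\mathcal{V}_{\mathbf{endo}}$) are all in force, since $\mathcal{U}$ is assumed to be a symmetric monoidal dagger-subcategory. So the plan is to split statement (2) into two conjuncts and handle each separately.

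First, since $F$ is an $M$-control structure, \cref{Thm:ControlStructs} shows that $C := \Xi_M \circ F$ is a control functor on $\mathcal{U}$. This holds unconditionally, independent of either (1) or (2). Consequently (2) is equivalent to the single condition that $C$ is a dagger-functor on $\mathcal{U}$. Here $\mathcal{U}$ carries the restriction of $\dagger$, which is well defined because $\mathcal{U}$ is a dagger-subcategory, and property (1) of \cref{Def:ControlStruct} guarantees that $C$ lands in $\mathcal{U}$.

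Second, I would check that (1) is meaningful. Because $M$ is the comonoid of a commutative $\dagger$-Frobenius algebra, \cref{Thm:FrobDag} shows that the $M$-intrinsic assignment $(\ddagger)$ is a genuine dagger structure on $\Lambda_M(\mathcal{V})$. Thus "dagger-functor with respect to the $M$-intrinsic dagger structure" makes sense. Then \cref{Thm:DaggerFactor}, applied to the identity-on-objects functor $F$, gives exactly that $F$ is a dagger-functor if and only if $\Xi_M \circ F$ is one. Combining this with the first step yields (1) $\Leftrightarrow$ (2).

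The only step requiring care is confirming that the hypotheses of \cref{Thm:DaggerFactor} match those here. In particular, its backward direction uses the uniqueness of hom-tensor transposes and \cref{Thm:CompDef} to recover $F(f^\dagger)$ from $C(f^\dagger) = C(f)^\dagger$. This needs $F$ to factor $C$ through $\Lambda_M(\mathcal{V})$ as an identity-on-objects functor, which is property (2) of an $M$-control structure. It also needs $\mathcal{U}$ to be closed under $\dagger$, so that $f^\dagger \in \mathcal{U}$ and $F(f^\dagger)$ is defined; this is supplied by the dagger-subcategory assumption. I expect no further obstacle.
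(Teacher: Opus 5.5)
Your proposal is correct and follows the paper's proof: \cref{Thm:ControlStructs} makes $\Xi_M \circ F$ a control functor unconditionally, and \cref{Thm:DaggerFactor} transfers the dagger-functor property between $F$ and $\Xi_M \circ F$ in both directions. Your extra check via \cref{Thm:FrobDag} that $(\ddagger)$ is a genuine dagger structure is left implicit in the paper and is harmless; one small correction is that $\mathcal{U} \subseteq \mathcal{V}_{\mathbf{endo}}$ comes from $F$ being an $M$-control structure, not from the dagger-subcategory hypothesis.
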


\begin{proof}
    Let $C = \Xi_M \circ F$.
    There are two cases to consider.
    \begin{enumerate}[align=left]
    \item[$\Rightarrow$]
          Assume that $F$ is a dagger-functor with respect to $( \ddagger )$.
          Since $F$ is an $M$-control structure, then $C$ is a control functor by \cref{Thm:ControlStructs}.
          Since $F$ is a dagger-functor with respect to $( \ddagger )$, then $C$ is a $\dagger$-control functor by \cref{Thm:DaggerFactor}.
    \item[$\Leftarrow$]
          Assume that $C$ is a $\dagger$-control functor.
          Then $F$ is a dagger-functor with respect to $( \ddagger )$ by~\cref{Thm:DaggerFactor}.
          This concludes the first part of the proof.
          \endproof
    \end{enumerate}
\end{proof}

\begin{exa}[Dagger Control Functors]
    The standard control functor from~\cref{Ex:StdCtrlStruct} and the trivial control functor from~\cref{Ex:TrivCtrl} are both $\dagger$-control functors.
\end{exa}

We conclude our discussion of intrinsic dagger assignments with a remark on conjugation.
As noted in~\cref{Thm:ConjStruct}, the conjugation axiom can be restated in terms of $\Lambda_M( \mathcal{V} )$ using conjugated $M$-control structures.
This can be further refined when $M$ is the comonoid from a commutative $\dagger$-Frobenius algebra.
First notice that when working with control functors and dagger-categories, the conjugation axiom is often invoked in the special case where $h = f^\dagger$, so that $f^\dagger \circ f = 1_X$.
In this case, the conjugation axiom enforces the condition that $C( f^\dagger \circ g \circ f ) = ( 1_P \otimes f^\dagger ) \circ C( g ) \circ ( 1_P \otimes f )$.
It follows from~\cref{Thm:ConjStruct} that this is equivalent to asking $F( f^\dagger \circ g \circ f ) = j( f^\dagger ) \star F( g ) \star j( f )$.
However, if $M$ is the comonoid of a commutative $\dagger$-Frobenius algebra, this equation can be rewritten in terms of the $M$-intrinsic dagger assignment.

\begin{cor}
    \label{Cor:DagConj}
    Assume that $\mathcal{U}$ is a symmetric monoidal subcategory, $M$ is the comonoid from a commutative $\dagger$-Frobenius algebra, $F$ is an $M$-control structure, and $C = \Xi_M \circ F$.
    For each $f \in \mathcal{V}( X, Y )$ and $g \in \mathcal{U}( Y, Y )$, $C( f^\dagger \circ g \circ f ) = ( 1_P \otimes f^\dagger ) \circ C( g ) \circ ( 1_P \otimes f )$ if and only if $F( g^\dagger \circ f \circ g ) = j( g )^\ddagger \star F( f ) \star j( g )$.
\end{cor}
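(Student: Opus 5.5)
We read the statement with its variables matched to the displayed control equation: for $f \in \mathcal{V}( X, Y )$ and $g \in \mathcal{U}( Y, Y )$, the condition on $F$ is $F( f^\dagger \circ g \circ f ) = j( f )^\ddagger \star F( g ) \star j( f )$. The plan is to split the equivalence into two steps. First, the control-functor condition is equivalent to $F( f^\dagger \circ g \circ f ) = j( f^\dagger ) \star F( g ) \star j( f )$. Second, $j( f^\dagger )$ is replaced by $j( f )^\ddagger$.

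For the first step, I would invoke \cref{Thm:ConjAction} with the triple $( f, g, h ) := ( f, g, f^\dagger )$, using $F$ in the role of the identity-on-objects functor $G$. This gives exactly
\begin{equation*}
    F( f^\dagger \circ g \circ f ) = j( f^\dagger ) \star F( g ) \star j( f )
    \iff
    C( f^\dagger \circ g \circ f ) = ( 1_P \otimes f^\dagger ) \circ C( g ) \circ ( 1_P \otimes f ).
\end{equation*}
The applicability of \cref{Thm:ConjAction} is the one point requiring care, since $f$ and $f^\dagger$ are only assumed to lie in $\mathcal{V}$, not in $\mathcal{U}$. Inspecting its proof resolves this.
\begin{itemize}
\item The forward direction uses part (2) of \cref{Thm:XiProps}, which is already stated for $f, h$ in $\mathcal{V}$.
\item The backward direction uses \cref{Thm:CompDef} to rewrite $F( f^\dagger \circ g \circ f )$ as the counit applied to $C( f^\dagger \circ g \circ f )$, and then manipulates string diagrams in $\mathcal{V}$. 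This needs only that $f^\dagger \circ g \circ f$ and $g$ lie in $\mathcal{U}$ so that $C$ is defined on them. Here $g \in \mathcal{U}$ by hypothesis, and $f^\dagger \circ g \circ f \in \mathcal{U}$ is implicit in the statement, since $C$ is applied to it.
\end{itemize}
So the argument of \cref{Thm:ConjAction} goes through verbatim with $f, h \in \mathcal{V}$. I expect checking this (rather than citing the theorem blindly) to be the main obstacle.

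For the second step, I would apply \cref{Lem:DagId}, which gives $j( f )^\ddagger = j( f^\dagger )$. That lemma only uses the defining equation of $( \ddagger )$, its uniqueness clause, and monoidality of $( \dagger )$. Hence it holds under the standing assumptions; we do not need the full dagger-structure conclusion of \cref{Thm:FrobDag}, although the $\dagger$-Frobenius hypothesis guarantees it anyway. Substituting $j( f )^\ddagger$ for $j( f^\dagger )$ in the first equivalence yields the corollary.
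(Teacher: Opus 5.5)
Your proposal is correct and matches the paper's proof: it uses the same typo-corrected reading of the statement, obtains the equivalence with $F( f^\dagger \circ g \circ f ) = j( f^\dagger ) \star F( g ) \star j( f )$ from \cref{Thm:ConjAction}, and then replaces $j( f^\dagger )$ by $j( f )^\ddagger$ using \cref{Lem:DagId}. You also check something the paper skips: that the argument of \cref{Thm:ConjAction} still works when $f$ and $f^\dagger$ lie only in $\mathcal{V}$, and that $f^\dagger \circ g \circ f \in \mathcal{U}$ is implicitly required for $C$ to be applied to it.
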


\begin{proof}
    Let $f \in \mathcal{V}( X, Y )$ and $g \in \mathcal{U}( Y, Y )$.
    There are two cases to consider.
    \begin{enumerate}[align=left]
    \item[$\Rightarrow$]
          Assume that $C( f^\dagger \circ g \circ f ) = ( 1_P \otimes f^\dagger ) \circ C( g ) \circ ( 1_P \otimes f )$.
          Then $F( f^\dagger \circ g \circ f ) = j( f^\dagger ) \star F( g ) \star j( f )$ by \cref{Thm:ConjAction}.
          Then $F( f^\dagger \circ g \circ f ) = j( f )^\ddagger \star F( g ) \star j( f )$ by \cref{Lem:DagId}.
    \item[$\Leftarrow$]
          Assume that $F( f^\dagger \circ g \circ f ) = j( f )^\ddagger \star F( g ) \star j( f )$.
          Then $F( f^\dagger \circ g \circ f ) = j( f^\dagger ) \star F( g ) \star j( f )$ by \cref{Lem:DagId}.
          Then $C( f^\dagger \circ g \circ f ) = ( 1_P \otimes f^\dagger ) \circ C( g ) \circ ( 1_P \otimes f )$ by \cref{Thm:ConjAction}.
          \endproof
    \end{enumerate}
\end{proof}

\begin{exa}[Ancilla Initialization]
    In quantum computing, ancilla initialization corresponds to the linear map $f = \ket{0} \otimes 1_V$.
    The adjoint to $f$ is the projection $f^\dagger: \bra{0} \otimes 1_V$ and is referred to as ancilla termination.
    Ancilla initialization and termination are often used in quantum circuit synthesis.
    Quantum circuit synthesis asks to find a circuit $C$ such that $\sem{C} = g$ for a given unitary $g: V \to V$.
    It is sometimes more efficient to find a circuit $C$ such that $\sem{C} = h$ and $g = f^\dagger \circ h \circ f$, as in~\cite{Barenco_1995}.
    This is a $( f, h, f^\dagger )$ factorization of $g$.
    Since the standard control functor is conjugated~\cite{Delorme2026}, then this decomposition extends to a decomposition $C( g ) = f^\dagger \circ C( h ) \circ f$.
    If $F$ is the control structure associated with $C$, then we know from~\cref{Cor:DagConj} that $F( f^\dagger \circ h \circ f ) = j( f )^\ddagger \circ F( h ) \circ j( f )$ as well.
\end{exa}

%% file: circs/xij_lhs.tex
\begin{quantikz}[row sep=0.5em]
    & \gate[2]{\Xi_M( j( f ) )} & \\
    &                           &
\end{quantikz}

%% file: circs/xij_rhs.tex
\begin{quantikz}[row sep=0.5em]
    & \ghost{f} & \\
    & \gate{f}  &
\end{quantikz}

%% file: circs/xicomm_lhs.tex
\begin{quantikz}[row sep=0.5em]
    &              & \gate[2]{\Xi_M( g )} &              & \\
    & \gate{\,f\,} &                      & \gate{\,h\,} &
\end{quantikz}

%% file: circs/xicomm_rhs.tex
\begin{quantikz}[row sep=0.5em]
    & \gate[2]{\Xi_M( k )} & \\
    &                       &
\end{quantikz}

%% file: circs/xicopy_lhs.tex
\begin{quantikz}[row sep=0.5em,wire types={n,q}]
    & \gate{\,v\vphantom{f_v}\,} & \gate[2]{\Xi_M( f )}\setwiretype{q} & \\
    &                            &                                     &
\end{quantikz}

%% file: circs/xicopy_rhs.tex
\begin{quantikz}[row sep=0.5em,wire types={n,q}]
    & \gate{\,v\vphantom{f_v}\,} & \setwiretype{q} \\
    & \gate{f_v}                 &
\end{quantikz}

%% file: circs/xitensor_lhs.tex
\begin{quantikz}[row sep=0.5em]
    \ghost{H} & \gate[3]{\Xi_M( f \boxtimes g )} & \\
    \ghost{H} &                                  & \\
    \ghost{H} &                                  &
\end{quantikz}

%% file: circs/xitensor_rhs.tex
\begin{quantikz}[row sep=0.5em]
    \ghost{H} & \gate[2]{\Xi_M( f )} & & \gate[2]{\Xi_M( g )} & & \\
    \ghost{H} & & \permute{2,1} & & \unpermute{2,1} & \\
    \ghost{H} & & & & &
\end{quantikz}

%% file: circs/comptensor_1.tex
\scalebox{0.73}{\begin{quantikz}[row sep=0.5em]
    \ghost{H} & \gate[3]{F( f \otimes g )} & \\
    \ghost{H} &                              & \\
    \ghost{H} &                              &
\end{quantikz}}

%% file: circs/comptensor_2.tex
\scalebox{0.73}{\begin{quantikz}[row sep=0.5em]
    \ghost{H} & \gate[2]{F( f )} & & \gate[2]{\Xi_M ( j( g ) )} & & \\
    \ghost{H} & & \permute{2,1} & & \unpermute{2,1} & \\
    \ghost{H} & & & & &
\end{quantikz}}

%% file: circs/comptensor_3.tex
\scalebox{0.73}{\begin{quantikz}[row sep=0.5em]
    \ghost{H} & \gate[2]{F( f )} & & & & \\
    \ghost{H} & & \permute{2,1} & \gate{\,g\,} & \unpermute{2,1} & \\
    \ghost{H} & & & & &
\end{quantikz}}

%% file: circs/comptensor_4.tex
\scalebox{0.73}{\begin{quantikz}[row sep=0.5em]
    \ghost{H} & \gate[2]{F( f )} & & & & \\
    \ghost{H} & & \permute{2,1} & & \unpermute{2,1} & \\
    \ghost{H} & \gate{\,g\,} & & & &
\end{quantikz}}

%% file: circs/comptensor_5.tex
\scalebox{0.73}{\begin{quantikz}[row sep=0.5em]
    \ghost{H} & \gate[2]{F( f )} & \\
    \ghost{H} & & \\
    \ghost{H} & \gate{\,g\,} &
\end{quantikz}}

%% file: circs/ctrl_lhs.tex
\scalebox{0.73}{\begin{quantikz}[row sep=0.5em]
    \ghost{H} & \permute{2,1} & \gate[3]{C( C( f ) )} & \\
    \ghost{H} &               &                       & \\
    \ghost{H} &               &                       &
\end{quantikz}}

%% file: circs/ctrl_rhs.tex
\scalebox{0.73}{\begin{quantikz}[row sep=0.5em]
    \ghost{H} & \gate[3]{C( C( f ) )} & \permute{2,1} & \\
    \ghost{H} &                       &               & \\
    \ghost{H} &                       &               &
\end{quantikz}}

%% file: circs/compatible_u_1.tex
\scalebox{0.73}{\begin{quantikz}[row sep=0.5em,wire types={n,n,q}]
    \ghost{H} &          & \gate[2]{d}     & \setwiretype{q} & \\
    \ghost{H} & \gate{u} & \setwiretype{q} & \gate[2]{C(f)}  & \\
    \ghost{H} &          &                 &                 &
\end{quantikz}}

%% file: circs/compatible_u_2.tex
\scalebox{0.73}{\begin{quantikz}[row sep=0.5em,wire types={n,n,q}]
    \ghost{H} & \gate{u} & \setwiretype{q}                & \\
    \ghost{H} & \gate{u} & \gate[2]{C(f)} \setwiretype{q} & \\
    \ghost{H} &          &                                &
\end{quantikz}}

%% file: circs/compatible_u_3.tex
\scalebox{0.73}{\begin{quantikz}[row sep=0.5em,wire types={n,n,q}]
    \ghost{H} & \gate{u} & \setwiretype{q} \\
    \ghost{H} & \gate{u} & \setwiretype{q} \\
    \ghost{H} &          & \setwiretype{q}
\end{quantikz}}

%% file: circs/compatible_u_4.tex
\scalebox{0.73}{\begin{quantikz}[row sep=0.5em,wire types={n,n,q}]
    \ghost{H} &          & \gate[2]{d}     & \setwiretype{q} \\
    \ghost{H} & \gate{u} & \setwiretype{q} & \\
    \ghost{H} &          &                 &
\end{quantikz}}

%% file: circs/compatible_u_5.tex
\scalebox{0.73}{\begin{quantikz}[row sep=0.5em,wire types={n,n,q}]
    \ghost{H} &          &                                & \gate[2]{d} & \setwiretype{q} \\
    \ghost{H} & \gate{u} & \gate[2]{C(f)} \setwiretype{q} &             & \\
    \ghost{H} &          &                                &             &
\end{quantikz}}

%% file: circs/compatible_v_1.tex
\scalebox{0.73}{\begin{quantikz}[row sep=0.5em,wire types={n,n,q}]
    \ghost{H} &          & \gate[2]{d}     & \setwiretype{q} & \\
    \ghost{H} & \gate{v} & \setwiretype{q} & \gate[2]{C(f)}  & \\
    \ghost{H} &          &                 &                 &
\end{quantikz}}

%% file: circs/compatible_v_2.tex
\scalebox{0.73}{\begin{quantikz}[row sep=0.5em,wire types={n,n,q}]
    \ghost{H} & \gate{v} & \setwiretype{q}                & \\
    \ghost{H} & \gate{v} & \gate[2]{C(f)} \setwiretype{q} & \\
    \ghost{H} &          &                                &
\end{quantikz}}

%% file: circs/compatible_v_3.tex
\scalebox{0.73}{\begin{quantikz}[row sep=0.5em,wire types={n,n,q}]
    \ghost{H} & \gate{v} & \setwiretype{q} \\
    \ghost{H} & \gate{v} & \setwiretype{q} \\
    \ghost{H} & \gate{f} & \setwiretype{q}
\end{quantikz}}

%% file: circs/compatible_v_4.tex
\scalebox{0.73}{\begin{quantikz}[row sep=0.5em,wire types={n,n,q}]
    \ghost{H} &          & \gate[2]{d}     & \setwiretype{q} \\
    \ghost{H} & \gate{v} & \setwiretype{q} & \\
    \ghost{H} & \gate{f} &                 &
\end{quantikz}}

%% file: circs/compatible_v_5.tex
\scalebox{0.73}{\begin{quantikz}[row sep=0.5em,wire types={n,n,q}]
    \ghost{H} &          &                                & \gate[2]{d} & \setwiretype{q} \\
    \ghost{H} & \gate{v} & \gate[2]{C(f)} \setwiretype{q} &             & \\
    \ghost{H} &          &                                &             &
\end{quantikz}}

%% file: sections/conclusion.tex
\section{Conclusion and Future Work}

In this paper, we introduced a new approach to parameterized quantum circuits based on enriched category theory.
This approach allows for parameterized maps with additional structure, such as parameterized maps which are continuous in their parameters.
It was shown how many seemingly unrelated constructions from quantum computing, such as control functors and shared entanglement, can be seen as instances of parameterized matrices which are linear in their parameters.
To better understand parameterized quantum circuits, the case of parameterization over a Cartesian monoidal category was studied, and many standard circuit transformations were recovered from first principles.
To better understand quantum control, the case of parameterization over a self-enriched category was studied, yielding new theorems relating control functors and Frobenius algebras in this special case.
There are several directions for future work, such as characterizing control functors in the case of compact closed categories, accounting for special Frobenius algebras, and developing a linear dependent type theory for parameterized quantum circuits.
It may also prove fruitful to explore applications to quantum communication and temporal verification.
The authors would also like to explore connections to $\mathcal{V}$-graded categories as defined in~\cite{LucyshynWright2026} and to generalize $M$-control structures to braided monoidal categories using the more general definitions found in~\cite{KongZheng2018}.